\documentclass[11pt]{article}
\usepackage{graphicx}
\usepackage{pgf,tikz,pgfplots}
\usepackage{mathrsfs}
\usepackage{mathpazo}
\usepackage{bbold}
\usepackage{tikz-cd}
\usepackage{amsmath}
\usepackage{mathtools}
\usepackage{tikz}
\usetikzlibrary{decorations.pathmorphing}
\usepackage{mathdots}
\usepackage{verbatim}
\usepackage{cancel}
\usepackage{siunitx}
\usepackage{array}
\usepackage{multirow}
\usepackage{hyperref}
\usepackage{setspace}

\usepackage{latexsym,amssymb}
\usepackage{amsthm}

\hypersetup{
	colorlinks=true,
	linkcolor=black,
	filecolor=magenta,      
	urlcolor=green,
}

\usetikzlibrary{arrows}
\newcommand{\C}{\mathbb{C}}

\newcommand{\R}{\mathbb{R}}

\newcommand{\Z}{\mathbb{Z}}

\newcommand{\CE}{Chevalley-Eilenberg }

\newcommand{\ddr}{\mathrm{d}}
\newcommand{\fr}{\mathrm{fr}}
\newcommand{\derham}{\mathrm{dR}}

\newcommand\sbullet[1][.5]{\mathbin{\vcenter{\hbox{\scalebox{#1}{$\bullet$}}}}}

\DeclareMathOperator{\Tr}{Tr}
\DeclareMathOperator{\Hom}{Hom}

\numberwithin{equation}{section}

\newtheorem{definition}{Definition}[section]
\newtheorem{theorem}{Theorem}[section]

\newtheorem{lemma}{Lemma}[section]
\newtheorem{prop}{Proposition}[section]

\newtheorem{conj}{Conjecture}[section]
\newtheorem{remark}{Remark}[section]

\usepackage{euscript}

\usepackage{bm}

\tikzset{cross/.style={cross out, draw=black, fill=none, minimum size=2*(#1-\pgflinewidth), inner sep=0pt, outer sep=0pt}, cross/.default={3pt}}

\tikzset{crosss/.style={cross out, draw=black, fill=none, minimum size=2*(#1-\pgflinewidth), inner sep=0pt, outer sep=0pt}, crosss/.default={2pt}}

\begin{document}
	
	\title{Factorization Algebras and Quantum Groups from Generalized Poisson Sigma Models}
\date{\vspace{-3ex}}
	\author{Keyou Zeng}


	\maketitle
	\begin{abstract}
In this work we introduce and study a family of holomorphic--topological field theories, which we call generalized Poisson sigma models. These theories are higher-dimensional analogues of the two-dimensional Poisson sigma model, with target data encoded by shifted chiral Poisson structures. We investigate their relationship with deformation quantizations of holomorphic--topological factorization algebras. Along the way, we give a systematic construction of extended objects, including interfaces, enriched boundaries and defects based on relevant notions in derived algebraic geometry. We employ Koszul-duality methods to study boundary algebras, yielding various versions of quantum groups. We illustrate the general framework through a range of examples, including twists of supersymmetric gauge theories as well as examples beyond the supersymmetric origin.
	\end{abstract}
	\setcounter{tocdepth}{1}
	\begin{spacing}{0.9}
    	\tableofcontents 
	\end{spacing}	
	\section{Introduction}
	Poisson sigma model sits at a crossroads between deformation theory, geometry, and quantum field theory. It is a two-dimensional topological field theory whose classical formulation encodes the geometry of Poisson manifolds, and whose (perturbative) quantization recovers Kontsevich's deformation quantization formula. In recent work \cite{khan2025poisson}, a $3d$ holomorphic-topological version of the Poisson sigma model was introduced and shown to be intimately related to chiral/vertex algebras. In this paper, we further generalize this construction to higher dimensions. Before detailing our construction, we first motivate why such generalizations are fruitful.
	
	\subsection{Why study Poisson sigma model}
	\paragraph{factorization algebra}
The notion of a factorization algebra, in its various formulations \cite{costello2017factorization,costello2021factorization,beilinson2025chiral,may2006geometry,lurie2017higher}, has deep roots in the study of quantum field theory. Many approaches to QFT are built around a Lagrangian, i.e.\ a local action functional. However, it is clear that the (perturbative) bulk observable algebra of a Lagrangian field theory only gives rise to a limited class of factorization algebras. In particular, the bulk algebra of a free (BV) theory is always of \emph{symplectic/Weyl} type \cite{gwilliam2012fac,Gaiotto:2024gii}, in the sense that it is inherently tied to the nondegenerate pairing used to define the theory. Interacting theories are obtained by (inner) deformations of such free theories, and typically access only a limited class of factorization algebras.

In \cite{butson2016degenerate}, D. Butson and P. Yoo proposed a broader framework through the introduction of degenerate classical field theories. In their formulation, the usual language of shifted symplectic geometry in classical (BV) field theory is generalized to shifted Poisson geometry. A degenerate classical field theory on $M$ naturally gives rise to a classical factorization algebra on $M$. However, such theories typically do not admit a Lagrangian description on $M$, so standard QFT techniques cannot be directly applied to their analysis.

A potential approach to the quantization of degenerate classical field theories was also proposed in \cite{butson2016degenerate}. Given a degenerate classical field theory on a manifold $M$, the authors constructed a classical BV field theory on $M \times \R_{\geq 0}$, referred to as the universal bulk theory, which admits a Lagrangian formulation. This universal bulk theory admits a canonical boundary condition such that the classical factorization algebra of the bulk theory, when pushed forward to the boundary, is isomorphic to that of the original degenerate theory \cite{rabinovich2022classical}. One can hope that the problem of quantizing a degenerate theory can be recast as the quantization of its universal bulk theory. Therefore, by studying the universal bulk theories of one dimension higher, we gain access to a much broader construction of factorization algebras.

In fact, the universal bulk theories and the generalized Poisson sigma models we present here are closely related. On the one hand, generalized Poisson sigma models can be regarded as a special class of universal bulk theories associated with holomorphic--topological factorization algebras. On the other hand, as implied in \cite{butson2016degenerate}, universal bulk theories can also be viewed as Poisson sigma models for $1$-shifted Poisson (i.e.\ $P_0$) structures.

\paragraph{Holomorphic-topological theories}
Perhaps the earliest formulation and application of the above ideas is due to M.~Kontsevich, in his solution to the deformation quantization problem for Poisson manifolds \cite{Kontsevich:1997vb}. It is well known that both symplectic and Poisson structures arise naturally in classical mechanics, and hence admit a one-dimensional physical interpretation. But only for a symplectic structure can one associate a $1d$ action functional. For a Poisson structure, the appropriate field-theoretic model is two-dimensional: the Poisson sigma model \cite{Cattaneo:1999fm,Cattaneo:2001bp}. By studying the algebra of operators on the $1d$ boundary of this $2d$ theory, one recovers Kontsevich’s deformation quantization formula.

In this work, rather than pursuing the general framework of degenerate theories as in \cite{butson2016degenerate}, we focus on holomorphic--topological field theories. These theories naturally emerge from holomorphic--topological twists of supersymmetric field theories, and have recently been studied in \cite{budzik2024semi,Gaiotto:2024gii}. They combine features of both topological field theories (leading to $E_n$-algebras) and holomorphic/chiral field theories (leading to chiral algebras). In this setting, the natural analogue of a Poisson structure is a \emph{shifted chiral Poisson structure}. A key technical advantage of holomorphic--topological theories is their favorable perturbative behavior: it is shown in \cite{budzik2023feynman,wang2024factorization} that their Feynman integrals have particularly nice UV behaviour, and theories with two or more topological directions are anomaly-free. In particular, the boundary algebra of our generalized Poisson sigma models can be computed explicitly by Feynman diagrams. This gives strong hope for a complete perturbative analysis and for an explicit mathematical construction (or potential obstruction) of deformation quantizations of shifted chiral Poisson structures.
	
\paragraph{bulk-boundary relation}
We have discussed the benefits and necessity of studying factorization algebra from a theory of one dimension higher. Conversely, one may also look one dimension lower: for a quantum field theory with at least one topological direction, the study of boundary algebras often yields valuable information about the structure of the bulk theory. In favorable cases, the boundary data can even determine, or reconstruct, the bulk theory uniquely. Phenomena of this kind occur in many contexts beyond the scope of this paper and are commonly referred to as the bulk--boundary correspondence. For example, in condensed matter physics, $(2+1)$-dimensional topological orders are often organized by modular tensor categories \cite{turaev2016quantum} of bulk anyons. In many setups, such categories can be recovered as centers of suitable fusion categories or boundary algebraic data. In the example of Chern--Simons theory, the modular tensor category may be realized as the category of modules of the boundary Kac-Moody algebra. 

Generalized Poisson sigma models are particularly well suited for studying the bulk--boundary correspondence, since in our setup the theory is specified by the data of its boundary algebra. In the purely topological case, our construction may be viewed as a field-theoretic incarnation of the (higher) Deligne conjecture. The bulk theory governs the deformation theory of the boundary algebra, and its bulk local operators recover the $E_d$ Hochschild cohomology (i.e., the derived center) of the boundary algebra, or, semiclassically, its derived Poisson center. In the holomorphic--topological case, we will show that the bulk algebra also gives rise to a chiral version of Poisson cohomology, which generalizes the variational Poisson cohomology defined in \cite{de2013variational,bakalov2019operadic}. Thus, we hope that our field theory construction will shed new light on the deformation theory of holomorphic(--topological) factorization algebras \cite{Butson:2026vmn}, and on holomorphic--topological analogues of the Deligne conjecture, which to our knowledge have not yet been studied systematically in the literature.

\paragraph{A cousin of SymTFT} A paradigm closely related to the bulk--boundary correspondence is that of the SymTFT \cite{Freed:2022qnc} of a QFT, which encodes its generalized symmetries \cite{Gaiotto:2014kfa}. Roughly speaking, in a $d$-dimensional QFT (say on $X$), a $q$-form symmetry (with coefficients in $G$) is an assignment of operators associated with a $(d-q-1)$-dimensional closed manifold $M^{d-q-1}$ in $X$ and an element $g\in G$: $U_{g}(M^{(d-q-1)})$. Such an assignment has the topological property that $U_{g}(M^{(d-q-1)}) = U_{g}(M'^{(d-q-1)})$ whenever $M^{(d-q-1)}$ and $M'^{(d-q-1)}$ are homologous via a $(d-q)$-dimensional manifold. Therefore, we can characterize the space of all $q$-form symmetry operators by the compactly supported cohomology group $H^{q+1}_c(X;G)$ after applying Poincar\'e duality.

An inspiring reformulation of a $q$-form symmetry, as noted in \cite{Gwilliam:2025vdu,lurie2017higher}, is to consider the assignment
\begin{equation}
\begin{aligned}
	\mathrm{Open}(X) &\to \mathbf{Top}\\
	U &\mapsto \mathrm{Map}_c(U;B^{q+1}G)\,,
\end{aligned}
\end{equation}
or more generally the assignment $\mathrm{Map}_c(-;Y)$ for $Y$ a topological space. The upshot is that the assignment $\mathrm{Map}_c(-;Y)$ is a pre-factorization algebra, and it becomes a factorization algebra if $Y$ satisfies a certain ``$d$-connective'' condition. In this language, a SymTFT can be defined as a $(d+1)$-dimensional TQFT whose boundary algebra realizes the factorization algebra $\mathrm{Map}_c(-;B^{q+1}G)$. Though the factorization algebras we work with in this paper are not of the form $\mathrm{Map}_c(-;Y)$, the construction of generalized Poisson sigma models follows the same spirit.

Generalized symmetries have proven to be a powerful tool for studying QFTs: they distill topological information from otherwise complicated theories. The perspective of factorization algebras makes it clearer why SymTFT is a useful construction---for the reasons we have discussed, it is not likely that we can always realize the factorization algebra of a generalized symmetry intrinsically from the bulk of a TQFT, but we have a better chance to realize it from the boundary. Moreover, the SymTFT provides a natural framework to encode extended topological defects and their fusion, linking, etc., through bulk construction. Many operations on the QFT, such as gauging, can be described through manipulations of the SymTFT. We will see constructions in the same spirit for the generalized Poisson sigma model. For example, various boundary defects and module structures can be realized, very concretely, as defects in the bulk that end on the boundary.

\subsection{Koszul duality and quantum group}
Even before the language of $E_n$-algebras and factorization algebras was invented, mathematicians and physicists had been studying these higher dimensional algebraic structures under the name \emph{quantum groups}. Beyond being an associative algebra, their essential feature is the appearance of a monoidal (or braided monoidal) representation category, governed by a coproduct compatible with the product. 

From a mathematical perspective, Koszul duality provides a clean conceptual bridge between quantum groups and (higher) factorization algebras. Roughly speaking, Koszul duality is a functor that sends an (augmented) $E_n$-algebra to an $E_n$-coalgebra (and, after linear duality, to another $E_n$-algebra). By Dunn additivity, one may view an $E_n$-algebra as an $E_k$-algebra object internal to $E_{n-k}$-algebras. Combining this viewpoint with Koszul duality suggests that the corresponding $E_k$ Koszul dual should be an $E_k$-algebra object internal to $E_{n-k}$-\emph{coalgebras} \cite{ginot2012higher}. In low degrees, this recovers the familiar quantum-group structures: the $E_1$ Koszul dual of an $E_2$-algebra naturally carries a bialgebra structure (and, under standard hypotheses, a Hopf algebra structure). For $n=3$, one obtains quasitriangular Hopf algebras, which in turn produce braided monoidal representation categories.

From a physical perspective, these quantum group structures and their representation categories are closely tied to extended objects in QFT, such as line defects. In particular, the ways in which these extended defects can be moved around one another or fused together depend on the dimensionality of the underlying spacetime, and the resulting structures are encoded by the (braided) monoidal structures, possibly together with other higher coherent data such as associators. 

The first precise realization from QFT to quantum group is due to K. Costello and collaborators, in their works on Yangian and $4d$ Chern-Simons theory \cite{Costello:2013zra,Costello:2017dso}. There, the mathematical notion of Koszul duality is explained in quantum field theory as a universal defect. It is the universal object that one can couple with the original QFT in an anomaly free way. This beautiful paradigm has inspired and influenced many works in recent years, such as twisted holography \cite{Costello:2016mgj,Costello:2017fbo,Costello:2018zrm} and new quantum group-like structures \cite{dimofte2025line}.

If we have convinced the reader that boundary algebras give access to a broader class of factorization algebras, then one should likewise expect them to give access to a broader class of “quantum groups”. For example, recent work \cite{Dimofte:2024bwe} provides a general framework for constructing Hopf algebras from $3d$ topological field theories, which the authors call spark algebras. They showed that the spark algebra of a $3d$ TQFT is equivalent to the $E_1$ Koszul dual of its boundary algebra. As an example, in this work we will provide a construction of the quantization of (quasi)-Lie bialgebras from a $3d$ topological field theory, and also Yangian from the boundary of a holomorphic-topological twist of the $5d$ $\mathcal{N} = 2$ super-Yang-Mills theory.

Extending this story to higher-dimensional cases naturally leads to the study of higher $E_k$ Koszul dual. At present, however, the literature contains relatively few examples of $E_k$ Koszul dual for $k\ge 2$, and the study of chiral Koszul duality is also limited \cite{Gui:2022pnx}. Although we will not pursue these directions in this paper, we hope that such constructions may provide a new perspective on higher integrability beyond the Yang-Baxter equation, such as the tetrahedron equation and, more generally, $n$-simplex equations.

	\subsection{A universal formula for tree-level Feynman diagram}
In this section, we preview a key technical ingredient of the paper, which ensures that our construction of generalized Poisson sigma models has the desired properties. The guiding expectation is that the boundary algebra should recover the given (shifted chiral) Poisson structure, and verifying this requires an analysis of the relevant Feynman diagrams. In the bulk--boundary system, the propagator is constructed by the reflection method, and it vanishes when both endpoints lie on the boundary. Consequently, interactions between boundary operators must be mediated by bulk vertices. There is actually a universal bulk-to-boundary Feynman diagram at the tree-level, which is shown on the left of Figure~\ref{fig:uni_feyn}.
	
\begin{figure}[h]
    \centering
    \begin{tikzpicture}[scale=0.8]
        \node[left] at (-2,0) {$\displaystyle\int_{\mathbb{R}_{t\geq 0}\times M_z}$};
        
        \def\planeA{ (-2,2) -- (2,3.6) -- (2,-2) -- (-2,-3.6) -- cycle};
        \fill[fill=gray!10] \planeA;
        \fill[fill=gray!10] (2,3.6) -- (5,3.6) -- (5,-2) -- (2,-2) -- cycle;
        \fill[fill=gray!10] (2,-2) -- (5,-2) -- (1,-3.6) -- (-2,-3.6) -- cycle;
        \fill[fill=gray!20] (1,-3.6) -- (-2,-3.6) -- (-2,2) -- (1,2) -- cycle;
        \fill[fill=gray!20] (-2,2) -- (1,2) --  (5,3.6) -- (2,3.6) -- cycle;
        
        \draw[thick] \planeA;
        \draw[dashed] (-2,2) -- (1,2);
        \draw[dashed] (2,3.6) -- (5,3.6);
        \draw[dashed] (2,-2) -- (5,-2);
        \draw[dashed] (-2,-3.6) -- (1,-3.6);
        
        \node at (2.3,2.8) {$ \mathbb{R}_{\geq 0}\times M$};
        \filldraw[black] (0,-1.4) circle (1pt);
		\filldraw[black] (-0.5,1.2) circle (1pt);
        \node[above] at (0,-1.4) {$O_1$};
        \node[below] at (-0.5,-1.4) {$(0,x)$};
        
        \node[left] at (-0.6,1.4) {$O_2$};
        \node[below] at (-0.4,1.1) {$(0,y)$};
        
        \draw[decorate, decoration={snake, amplitude=0.5mm, segment length=3mm}] (-0.5,1.2) -- (3.2,0) ;
        \draw[decorate, decoration={snake, amplitude=0.5mm, segment length=3mm}] (3.2,0) -- (0,-1.4);
        
        \filldraw[black] (3.2,0) circle (1pt);
        \node[right] at (3.2,0) {$(t,z)$};

        \node at (1.3,1.1) {$P_{\mathcal{E}^+}$};
        \node at (1.5,-1.2) {$P_{\mathcal{E}^+}$};

        \node at (6.5, 0) {\Large $=$};

        \begin{scope}[xshift=10cm]
            \def\planeB{ (-2,2) -- (2,3.6) -- (2,-2) -- (-2,-3.6) -- cycle};
            \filldraw[thick,fill=gray!10] \planeB;
            \node at (1.3,2.8) {$M$};
            
            \filldraw[black] (0,-1.4) circle (1pt);
            \node[right] at (0,-1.4) {$O_1$};
            
            \node[left] at (-0.6,1.4) {$O_2$};
            \draw[decorate, decoration={snake, amplitude=0.4mm, segment length=1.5mm}] (-0.5,1.2) -- (0,-1.4);
            \filldraw[black] (-0.5,1.2) circle (1pt);
            
            \node at (1,0) {$P_{\mathcal{E}}(x,y)$};
        \end{scope}
    \end{tikzpicture}
    \caption{(Left) The universal tree-level bulk-to-boundary Feynman diagram. The boundary points are fixed while the bulk vertex is integrated over the half-space. (Right) Propagator of the theory on $M$.}
    \label{fig:uni_feyn}
\end{figure}

Motivated by this, we consider a more general setup in Section~\ref{sec:uni_bdy_int}. Specifically, we work in a setting where the relevant propagator is constructed from an elliptic complex $(\mathcal{E}^{\sbullet},Q)$ on a manifold $M$. We assume that the associated Laplacian operator $\Delta$ has a well-behaved heat kernel $e^{-t\Delta}$, and we let $P_{\mathcal{E}}$ denote the corresponding propagator. We then consider the elliptic complex $\mathcal{E}^+ = (\Omega^{\sbullet}_{\R_{\geq 0}}\boxtimes \mathcal{E}^{\sbullet},d_{\derham} + Q )$ on the manifold $\R_{\geq 0}\times M$, which also admits a heat kernel and a propagator $P_{\mathcal{E}^+}$, after imposing suitable boundary conditions. In fact, as we will explain in more detail in Section~\ref{sec:uni_bdy_int}, when $\mathcal{E}$ is associated to some classical field theory, the propagator $P_{\mathcal{E}^+}$ is the propagator of its universal bulk theory in $\R_{\geq 0} \times M $. We prove the following formula
	\begin{equation}\label{eq:bulk_to_bdy}
		\int_{M_z}\int_{t \geq 0}P_{\mathcal{E}^+}(0,x,t,z)P_{\mathcal{E}^+}(0,y,t,z) = P_{\mathcal{E}}(x,y)\,.
	\end{equation}
The power of this result lies in its generality, even for those theories without an explicit expression for the propagator. However, in the specific cases of interest in this paper, we take $\mathcal{E}$ to be the elliptic complex $(\Omega^{0,\bullet}_{\C^n}\boxtimes\Omega^{\bullet}_{\R^m}, \bar{\partial} + d_{\derham})$. In this case, the propagator $P_{\mathcal{E}}$ has a known expression and is called the generalized Bochner-Martinelli kernel. As we will discuss later, the boundary Poisson bracket arises after a further integration over a boundary sphere applied to the Feynman diagram above. After doing so, it will be easy to check that the (shifted chiral) Poisson bracket on the boundary algebra exactly matches the one used to define the theory.
	
The proof of equation~\eqref{eq:bulk_to_bdy} relies simply on the semigroup law $e^{-t_1\Delta}e^{-t_2\Delta}=e^{-(t_1+t_2)\Delta}$ for the heat kernel. One therefore expects generalizations of \eqref{eq:bulk_to_bdy} involving more than one intermediate point. Indeed, we find in Section~\ref{sec:uni_cor_int} that if we place $M$ at a corner---that is, we consider the propagator for the elliptic complex $(\mathcal{E}^{\sbullet}\boxtimes\Omega^{\sbullet}_{(\R_{\geq 0})^2},Q +d_{\derham})$ on $M\times (\R_{\geq 0})^2$, and impose appropriate corner conditions, then the tree-level Feynman integral with two intermediate bulk integrations again reproduces the propagator $P_{\mathcal{E}}$ on $M$ (up to some constant). This is illustrated in Figure~\ref{fig:uni_feyn_cor}. Guided by the semigroup law identity  $\prod_{i = 1}^n e^{-t_i\Delta}=e^{-(\sum_{i =  1}^nt_i)\Delta}$, we formulate in Section~\ref{sec:uni_cor_int} a further extension to diagrams with multiple consecutive intermediate interaction. We conjecture that the corresponding iterated bulk integrals of propagator on $M\times (\R_{\geq 0})^n$ collapse to the same propagator on $M$, up to some overall constant.
\begin{figure}[h!]
	\centering
		\begin{tikzpicture}[scale=0.7]
		\node[left] at (-1,0) {$\displaystyle\int_{s_i,t_i\geq 0}\int_{M_{z_1}\times M_{z_2}}$};
		\fill[fill = gray!10] (2,3.6) -- (5.5,3.6) -- (5.5,-2) -- (2,-2) -- cycle;
		\fill[fill = gray!10] (5.5,-2) -- (2,-2) -- (-0.7,-3.6) -- (2.8,-3.6)-- cycle;
		\fill[fill = gray!10] (5.5,3.6) -- (2,3.6) -- (-0.7,2) -- (2.8,2)-- cycle;
		\fill[fill = gray!10] (-0.7,2) -- (2.8,2) -- (2.8,-3.6) -- (-0.7,-3.6) -- cycle;
		\draw [dashed] (-0.7,2) -- (2,3.6) -- (2,-2) -- (-0.7,-3.6) -- (-0.7,2);
		\draw [dashed] (2,3.6) -- (5.5,3.6) -- (5.5,-2) -- (2,-2);
		\draw [dashed]  (2,-2) -- (-0.7,-3.6) -- (2.8,-3.6) --(5.5,-2);
		\draw [dashed]  (2,3.6) -- (-0.7,2) -- (2.8,2)-- (5.5,3.6) ;
		\draw [dashed] (2.8,2) -- (2.8,-3.6);
		\draw[thick]  (2,-2) -- (2,3.6);
		\node at (2.4,4) {$(\R_{\geq 0})^2 \times M $};
		\filldraw[black] (2,-1.4) circle (1pt);
		\node[left] at (2,-1.4) {$O_1$};
		\node[left] at (2,-1) {$(0,0,x)$};
		\filldraw[black] (2,2.5) circle (1pt);
		\node[left] at (2,2.5) {$O_2$};
		\node[right] at (2,2.7) {$(0,0,y)$};
		\draw[decorate, decoration={snake, amplitude=0.5mm, segment length=3mm}] (2,2.5) -- (3.5,1.5);
		\draw[decorate, decoration={snake, amplitude=0.5mm, segment length=3mm}] (3.5,1.5)-- (3.8,-1) ;
		\draw[decorate, decoration={snake, amplitude=0.5mm, segment length=3mm}]  (3.8,-1) -- (2,-1.4);
		\filldraw[black] (3.5,1.5) circle (1pt);
		\filldraw[black] (3.8,-1) circle (1pt);
		\node[right] at (3.8,-1.2) {$(s_1,s_2,z_1)$};
		\node[right] at (3.5,1.5) {$(t_1,t_2,z_2)$};
		\node[left] at (7.8,0.5) {$=\quad\quad$};
		\node[right] at (9.5,0) {$P_{\mathcal{E}}(x,y)$};
		\draw[thick] (9,3.6) -- (9,-2);
		\node at (9.5,3.2) {$M$};
		\filldraw[black] (9,-1.4) circle (1pt);
		\node[left] at (9,-1.4) {$O_1$};
		\filldraw[black] (9,2.5) circle (1pt);
		\node[left] at (9,2.5) {$O_2$};
		\draw[decorate, decoration={snake, amplitude=0.4mm, segment length=1.5mm}] (9,2.5) .. controls (9.2,1) and (9.2,-0.5) .. (9,-1.4);
	\end{tikzpicture}
	\caption{The universal tree-level bulk-to-corner Feynman diagram, which reproduces the propagator of the theory on $M$.}
	\label{fig:uni_feyn_cor}
\end{figure}

It will be interesting to apply these identities to the study of corner algebras and higher-codimension corner algebras in quantum field theory. As a concrete example, \cite{Gaiotto:2017euk} studies the corner algebra arising in a twist of $4d$ $\mathcal{N}=4$ super Yang--Mills theory. The Feynman integral formulas suggest a systematic way to generalize these considerations and study the algebraic structures on higher-codimension corners. In the spirit of the cobordism hypothesis \cite{lurie2008classification,Baez:1995xq}, one may hope that once one descends far enough, the resulting algebraic data could determine the theory uniquely.
	
	\subsection{Summary of results}
	Section~\ref{sec:top_PSM} treats the purely topological construction: we build the $(d+1)$-dimensional Poisson sigma model from (derived) $P_d$-data, explain its AKSZ interpretation, and discuss various constructions of extended objects. We also introduce some examples by deforming BF theory in various dimensions. Section~\ref{sec:top_bdy} analyzes boundary algebras, their morphisms and modules mainly from tree-level Feynman diagrams. We  briefly explain how quantum corrections produce deformation quantizations. We also describe the relationship between the bulk theory and higher Deligne conjecture. Section~\ref{sec:Koszul} studies the Koszul duality construction of boundary and bulk algebras in some examples. We realize various versions of quantized universal enveloping algebras through our QFT construction. Section~\ref{sec:HT_PSM} treats the holomorphic-topological version of Poisson sigma model and studies its boundary shifted chiral Poisson structure. In Section~\ref{sec:SUSYtwist} we consider examples of HT Poisson sigma models from twisted supersymmetric theories and study some of their properties. In particular, we find a construction of Yangian through a twisted $5d$ $\mathcal{N} = 2$ super Yang-Mills theory. Section~\ref{sec:P0_PSM} explains how $1$-dimensional Poisson sigma model leads to BV quantization. We will also prove the universal tree-level bulk--boundary (and corner) integral identities.

	\subsection{Conventions and notation}
	As this paper extensively employs the Batalin--Vilkovisky (BV) formalism to describe gauge theories, we begin by establishing our conventions regarding gradings and signs.
	
	A graded vector space $V^{\sbullet} = \bigoplus_{n \in \Z}V^n$ is a direct sum of vector spaces. The graded tensor product and the space of graded homomorphisms are defined such that
	\begin{equation}
		(V\otimes W)^{n} = \bigoplus_{k \in \Z}V^{n-k}\otimes W^{k},\qquad \Hom(V,W)^n = \prod_{k \in \Z}\Hom(V^{k},W^{n+k})\,.
	\end{equation}
	In particular, the dual space $V^{\vee} = \Hom(V,\C)$ has graded components $(V^{\vee})^n  = (V^{-n})^{\vee}$. We follow the standard Koszul sign rule: exchanging two homogeneous elements $v$ and $w$ introduces a sign of $(-1)^{|v||w|}$, where $|v|$ denotes the degree of $v$.
	 
	For a graded vector space $V$, its $k$-shifted version $V[k]$ has graded components
	\begin{equation}
		(V[k])^{n} = V^{n+k}\,.
	\end{equation}
	
	For a smooth manifold $M$, we denote its complexified de Rham complex by $(\Omega^{\sbullet}(M),d)$. Similarly, for a complex manifold $X$, we denote its Dolbeault complex by $(\Omega^{\sbullet,\sbullet}(X),\bar{\partial})$. On the product $M\times X$, we consider the mixed complex $\Omega^{\sbullet,\sbullet}(M\times X) = \Omega^{\sbullet}(M)\otimes \Omega^{\sbullet,\sbullet}(X)$ equipped with the total differential $d + \bar{\partial}$. Its bi-grading is given by
	\begin{equation}
		\Omega^{p,q}(M\times X) = \bigoplus_{k = 0}^{q}\Omega^{k}(M)\otimes \Omega^{p,q-k}(X)\,,
	\end{equation}
	where $p$ is the holomorphic degree along $X$, and $q$ is the total anti-holomorphic and de Rham degree. 
	We will use $\Omega^{\sbullet}_M$ and $\Omega^{\sbullet,\sbullet}_{X}$ to denote the corresponding sheaves of differential graded complexes on $M$ and $X$, respectively.
	
	\section{Topological Poisson sigma model}
	\label{sec:top_PSM}
	In this section, we present our construction of topological Poisson sigma models from shifted Poisson algebras. We first review the emergence of shifted Poisson structures in topological field theory. Then, we detail the construction of the Poisson sigma model and its various extended objects. These physical constructions share a common underlying principle: anomaly-free conditions, or classical master equations, of various sorts. For example, we will demonstrate that the classical master equation of the bulk theory precisely corresponds to the Jacobi identity of the shifted Poisson structure. Furthermore, the anomaly-free conditions of the coupled systems—comprising the bulk and its interfaces or defects—naturally give rise to the algebraic notions of morphisms and modules.

	\subsection{Secondary product in TQFT}
\label{sec:sec_bulk}
In this section, we briefly review the algebraic structure on local operators in a (cohomological) TQFT. We follow \cite{beem2020secondary} and emphasize the physical interpretation of secondary operations.

A basic structure in any QFT is the operator product, arising from the collision of two local operators:
\begin{equation}
	\mathcal{O}_1(x)\mathcal{O}_2(y),\quad \text{as $x$ approaches $y$}\,.
\end{equation}
In a $1$-dimensional topological theory, this product is associative. In spacetime dimension $d>1$, the product becomes (graded) commutative, since there is enough room to move one operator past the other without forcing a collision.

If the topological theory is of cohomological type---as often happens for topologically twisted supersymmetric field theories---then one also has topological descent: from a local operator $\mathcal{O}$ one obtains a hierarchy of descendants $\mathcal{O}^{(k)}$ that can be integrated over $k$-cycles in spacetime. From this viewpoint, operations on local operators are naturally labeled by homology classes of cycles in configuration spaces of points, which is the origin of the $E_d$-algebra structure.

For instance, the configuration space of two points in $\R^d$ is homotopy equivalent to the sphere $S^{d-1}$. The fundamental class of $S^{d-1}$ therefore defines a bilinear operation on local operators, called the \emph{secondary product}. Physically, one first descends $\mathcal{O}_2$ to a $(d-1)$-form $\mathcal{O}_2^{(d-1)}$, considers its OPE with $\mathcal{O}_1(x)$, and integrates over the sphere surrounding $x$. This is illustrated in Figure~\ref{fig:sec_pro}

\begin{figure}[h!]
	\centering
	\begin{tikzpicture}[scale = 0.5]
		\shade[ball color=gray!10, opacity=0.3] (0,0) circle (3cm);
		\draw[gray!50, line width=0.5pt] (0,0) circle (3cm);
		\draw[gray!50] (-3,0) arc (180:360:3cm and 0.7cm);
		\draw[gray!50, dashed] (-3,0) arc (180:0:3cm and 0.7cm);
		\draw[gray!50, dashed] (0,0) ellipse (0.9cm and 3cm);
		\filldraw[black] (0,0) circle (2pt);
		\node[above] at (0,0) {$\mathcal{O}_1$};
		\node[above] at (-3.5,0) {$\displaystyle\oint\mathcal{O}_2^{(d-1)}$};
	\end{tikzpicture}
	\caption{Secondary product from integration over a sphere.}
	\label{fig:sec_pro}
\end{figure}

It is shown in \cite{beem2020secondary} that the cohomology classes of local operators form a (strict) $P_d$-algebra, also called a (strict) $(1-d)$-shifted Poisson algebra. We have a primary product that comes from the usual OPE, i.e.\ the collision of two local operators, and a secondary product that arises from descent. Concretely, a $P_d$-algebra consists of a graded vector space $A$ endowed with:
\begin{enumerate}
	\item a primary product: a degree-zero graded commutative product $\cdot: A\otimes A \to A$;
	\item a secondary product: a degree-$(1-d)$ Lie bracket $\{-,-\}:A\otimes A \to A[1-d]$.
\end{enumerate}
These operations are required to satisfy the graded Leibniz rule
\begin{equation}
	\{a,b\cdot c\} = \{a,b\}\cdot c + (-1)^{|b|(|a| + d-1)}\,b\cdot\{a,c\}\,.
\end{equation}

From a mathematical point of view, local operators in a $d$-dimensional topological theory form an $E_d$-algebra: its operations are parametrized by chains on configuration spaces of points in $\mathbb{R}^d$. Passing to (co)homology replaces these chain-level operations by the corresponding operations on (co)homology. A fundamental theorem in higher algebra asserts that for $d\ge 2$, the homology operad $H_{\sbullet}(E_d)$ is isomorphic to the $P_d$ operad, which governs $P_d$-algebras. Therefore, the (co)homology of an $E_d$-algebra carries a canonical $P_d$-algebra structure.

Although it is well understood how $P_d$ and $E_d$-algebras arise from the bulk local operators of a $d$-dimensional TQFT, this perspective is somewhat restrictive. As explained in the introduction, only a limited class of $P_d$-algebras can be realized as the bulk algebra of a $d$-dimensional theory. In this paper, we overcome this limitation by going one dimension higher: we construct a $(d+1)$-dimensional TQFT associated to a given $P_d$-algebra. As we will show in Section~\ref{sec:bdy_sec}, the secondary product on the boundary of this $(d+1)$-dimensional theory precisely recovers the original $P_d$-algebra. This boundary perspective provides a much more flexible framework for constructing $P_d$- and $E_d$-algebras from physical models.

	\subsection{Basic construction}
	\label{sec:top_Pois_sig}
	Having discussed the origin of (strict) $P_d$ algebras from $d$-dimensional TQFTs, we now build a $d+1$-dimensional TQFT from a given $P_d$ algebra. In this section, we consider a slightly more general notion called a derived $P_d$-algebra by generalizing the Lie structure to an $L_\infty$-structure.

	\begin{definition}
	A derived $P_d$ algebra is a graded commutative algebra $(A,\cdot)$, together with an $L_\infty$ algebra structure on $A[d-1]$, i.e. a collection of $n$-ary brackets $\{\dots\}_n: (A[d-1])^{\otimes n} \to A[d-1+2-n]$ for each $n\geq 1$ satisfying the generalized skew-symmetry and Jacobi identity. Furthermore, the following Leibniz rule is satisfied, i.e. for each $n\geq 1$ and $a_1,\dots,a_{n-1} \in A$, the map $a \mapsto \{a_1,a_2,\dots,a_{n-1},a\}_n$ is a graded derivation. 
	\end{definition}

	\begin{remark}
	Notice that in many works one finds a different notion of “homotopy $P_d$-algebra,” in which the product is also commutative up to homotopy. From an operadic point of view, this is the more natural definition. Our notion of a derived $P_d$-algebra is not a cofibrant replacement of the strict $P_d$ operad. Nevertheless, it is more convenient for QFT constructions and is also commonly used in derived algebraic geometry \cite{calaque2017shifted,bandiera2020shifted}. We also refer to \cite{melani2016poisson} for a comparison of these notions.
	\end{remark}

A derived $P_d$-algebra typically arises as the "semiclassical limit" of an $E_d$-algebra, rather than from cohomology. More precisely, we consider the situation of a (flat) one-parameter family of $E_d$-algebras, denoted $\EuScript{A}_{\hbar}$. If we further require that the specialization at $\hbar=0$, $\EuScript{A}_{0} = \EuScript{A}_{\hbar}/\hbar\EuScript{A}_{\hbar}$, is a strict dg commutative algebra (viewed as an $E_d$-algebra via the map $E_d \to E_{\infty}$), then we would expect $\EuScript{A}_0$ to acquire an induced derived $P_d$-algebra structure from the first-order (in $\hbar$) part of the $E_d$-algebra structure. By formality, working with a derived $P_d$-algebra does not encode substantially more information than working with a strict $P_d$-algebra, but it does allow for richer physical constructions, as we will see later.

	We call a $P_d$ algebra freely generated if the underlying graded commutative algebra $(A,\cdot)$ can be identified with the algebra of polynomial functions on a graded vector space $L$. In other words, we have
	\begin{equation}
		A \cong \mathcal{O}(L) = \C[x^1,x^2,\dots,x^m]\,.
	\end{equation}
By the Leibniz rule, the $L_\infty$ brackets of a freely generated $P_d$-algebra are uniquely determined by their values on the generators. We denote
	\begin{equation}
		\Pi^{i_1i_2\dots i_n}(x) = \{x^{i_1},x^{i_2},\dots,x^{i_n}\} \in \mathcal{O}(L)\,.
	\end{equation}
The degrees of the polynomials $\Pi^{i_1i_2\dots i_n}(x)$ are constrained by the degrees of the brackets. For example, $\Pi^{ij}(x)$ has degree $\deg x^i + \deg x^j + 1 - d$. More generally, we have	
	\begin{equation}
		\deg\Pi^{i_1i_2\dots i_n}(x) = \sum_{i = 1}^n \deg x^{i} + 1 - (n-1)d\,.
	\end{equation}

	Given a freely generated $P_d$ algebra $(\mathcal{O}(L),\cdot,\{\dots\})$, we associate a $(d+1)$-dimensional BV field theory as follows. The fields consist of
	\begin{equation}
\begin{aligned}
			&\boldsymbol{\phi} \in \Omega^{\sbullet}(\R^{d+1})\otimes L\,,\\
	&\boldsymbol{\eta} \in \Omega^{\sbullet}(\R^{d+1})\otimes L^\vee[d]\,.
\end{aligned}
	\end{equation}
The classical BV bracket is defined by
\begin{equation}
	\{\boldsymbol{\phi}^i(x),\boldsymbol{\eta}_j(y)\}_{BV} = \delta^i_j\delta(x-y) \mathrm{dVol}_{\R^{d+1}}\,.
\end{equation}
 The BV action functional is given by
 \begin{equation}\label{eq:act_topPSM}
 	S =  \int_{\R^{d+1}}\boldsymbol{\eta}_id_{\derham}\boldsymbol{\phi}^i + \sum_{n \geq 1}\sum_{i_1,\dots, i_n}\frac{1}{n!}\Pi^{i_1i_2\dots i_n}(\boldsymbol{\phi})\boldsymbol{\eta}_{i_1}\dots\boldsymbol{\eta}_{i_n}\,.
 \end{equation}
As a classical BV field theory, we must further check that the classical master equation $\{S,S\}_{BV}=0$ is satisfied. Equivalently, the classical BV--BRST differential $Q=\{S,-\}_{BV}$ squares to zero. Essentially, this is guaranteed by the $L_\infty$ Jacobi identity. We will prove this in the next section.

We also highlight an important generalization of the above construction, obtained by allowing a curving---i.e.\ a $0$-ary bracket---in the $L_\infty$ structure on $A$. This curving modifies the $L_\infty$ Jacobi identities and introduces an extra term $\Pi(\boldsymbol{\phi})$ to the action functional. Mathematically, curved $L_\infty$-algebras are subtler than their uncurved counterparts: for instance, the differential no longer squares to zero, meaning that standard homological tools such as cohomology and quasi-isomorphisms are not directly applicable. From a physical perspective, the curving shifts the Maurer--Cartan equation, which, as we will see, signals an anomaly on the boundary.

Despite these subtleties, a curved $L_\infty$-algebra still satisfies a coherent system of higher Jacobi identities, which is sufficient to consistently define the corresponding Poisson sigma model. These constructions provide examples where the boundary theory is anomalous while the bulk theory remains anomaly-free.

\subsection{Relationship with AKSZ formalism}
\label{sec:AKSZ}
In this section, we extend the previous construction to a more geometric setting using the AKSZ formalism \cite{Alexandrov:1995kv}. The key geometric input is the notion of a shifted Poisson manifold (or shifted Poisson stack), which can be viewed as the geometric counterpart of a shifted Poisson algebra. Given such a shifted Poisson manifold, one can twist the differential on its shifted cotangent bundle so that it encodes the original Poisson data. This shifted symplectic manifold will serve as the target space of our AKSZ sigma model. The passage from a shifted Poisson manifold to its twisted shifted cotangent bundle as a shifted symplectic manifold is a standard construction in derived algebraic geometry called shifted symplectic realization or shifted Lagrangian thickening. We will briefly explain this construction in the language of $Q$-manifolds, which avoids developing the full machinery of derived algebraic geometry.

We begin by recalling the notions of a $Q$-manifold (or dg manifold) and a $(1-d)$-shifted Poisson manifold. A $Q$-manifold is a graded manifold $\mathcal{M}$ equipped with a cohomological vector field $Q$ of degree $+1$ satisfying $[Q,Q]=0$. Equivalently, $Q$ defines a differential on the sheaf of graded commutative algebras $\mathcal{O}_{\mathcal{M}}$.

 As a recurring example in this paper, given a Lie algebra $\mathfrak{g}$, we can define a $Q$-manifold $B\mathfrak{g}$ whose underlying manifold is a point, and whose functions $\mathcal{O}(B\mathfrak{g})$ are given by the Chevalley--Eilenberg cochain complex $\mathrm{CE}^{\sbullet}(\mathfrak{g}) = (\wedge^{\sbullet}\mathfrak{g}^{\vee}, d_{\mathrm{CE}})$.

 Given a $Q$ manifold $\mathcal{M}$, we introduce the notion of $(1-d)$-shifted polyvector fields on $\mathcal{M}$. Let $T_{\mathcal{M}}$ be the sheaf of vector fields on $\mathcal{M}$, we define the algebra of $(1-d)$-shifted polyvector fields as follows
\begin{equation}
	\mathrm{Pol}(\mathcal{M},d-1) = \Gamma(\mathcal{M},\mathrm{Sym}(T_{\mathcal{M}}[-d]))\,.
\end{equation}
This space has two gradings. The first is the cohomological grading and the other is the weight grading, with sections of $T_{\mathcal{M}}$ having weight $1$. In addition to the cdga structure, $\mathrm{Pol}(\mathcal{M},d-1)$ also possesses the structure of the Schouten bracket of degree $-d$ and weight $-1$, denoted by $[-,-]$.

A (derived) $(1-d)$-shifted Poisson structure is a series $\pi = \pi_2 + \pi_3 + \dots$ in $\mathrm{Pol}(\mathcal{M},d-1)$ that satisfies the Maurer–Cartan equation
\begin{equation}
	Q\pi + \frac{1}{2}[\pi,\pi] = 0\,,
\end{equation}
where $\pi_k$ is of degree $d+1$ and weight $k$.  

A graded symplectic form $\omega$ of degree $n$ on $\mathcal{M}$ is a nondegenerate closed 2-form of degree $n$. A $QP$-manifold of degree $n$ is a $Q$-manifold $(\mathcal{M}, Q)$ equipped with a degree $n$ symplectic structure $\omega$, such that 
\begin{equation}
	\mathcal{L}_Q\omega = 0\,.
\end{equation}

Given a $(d-1)$-shifted Poisson manifold $\mathcal{M}$, we define a $QP$-manifold of degree $d$ by considering the shifted cotangent bundle $T^*[d]\mathcal{M}$ and define a $QP$ structure on it. First, we notice the following identification between functions on $T^*[d]\mathcal{M}$ and $(d-1)$-shifted polyvector fields on $\mathcal{M}$: 
\begin{equation}
	\Gamma(T^*[d]\mathcal{M},\mathcal{O}_{T^*[d]\mathcal{M}}) = \Gamma(\mathcal{M},\mathrm{Sym}(T[-d]\mathcal{M}))\,.
\end{equation}
Therefore, the shifted Poisson structure $\pi$ defines a differential $Q + [\pi,-]$ of degree $1$ on $\mathrm{Pol}(\mathcal{M},d-1)$, which is also a derivation with respect to the graded commutative product. As a result, we can use $\pi$ to twist the original  differential on $T^*[d]\mathcal{M}$, which gives us a new dg-manifold that we denote $T_{\pi}^*[d]\mathcal{M}$

The shifted cotangent bundle $T^*[d]\mathcal{M}$ has a standard $d$-shifted symplectic structure. Together with the twisted differential introduced above, this equips $T_{\pi}^*[d]\mathcal{M}$ with the structure of a $QP$-manifold of degree $d$. Let $X$ be a $(d+1)$-dimensional manifold. We consider the following mapping space
\begin{equation}
	\mathrm{Map}(X_{\mathrm{dR}},T_{\pi}^*[d]\mathcal{M})\,,
\end{equation}
which is naturally a QP manifold of degree $-1$. Thus the standard AKSZ construction \cite{Alexandrov:1995kv,Ikeda:2012pv} produces a $(d+1)$-dimensional topological field theory. In particular, for $d = 1$, this is the AKSZ description of the two dimensional Poisson sigma model. For $\mathcal{M} = \mathrm{Spec}(A)$ where $A$ is a $P_d$ algebra, it is easy to check that the above construction naturally gives rise to the Poisson sigma model described in \eqref{eq:act_topPSM}.

\subsection{Poisson morphism and interfaces}
\label{sec:inter_from_pois_mor}
Having studied $P_d$-algebras and their corresponding topological sigma models, we now turn to the study of morphisms between $P_d$-algebras and their field theory constructions. From a physical perspective, it is natural to expect a morphism between two (derived) $P_d$-algebras to correspond to an interface between their respective Poisson sigma models\footnote{We thank Davide Gaiotto for suggesting this idea and the construction in this section}. Later, we will also demonstrate that merging an interface with a boundary realizes the evaluation of the morphism on the boundary algebra, and that the fusion of two interfaces realizes the composition of morphisms.

First, we introduce the notion of a morphism between two (derived) $P_d$-algebras, following \cite{bandiera2020shifted}.
\begin{definition}
Let $A$ and $B$ be two $P_d$-algebras. A Poisson morphism from $A$ to $B$ consists of a collection of maps $f_n: A[d-1]^{\otimes n} \to B[d-n]$ for each $n\geq 1$, such that:
\begin{itemize}
    \item The collection of maps $(f_n)_{n\geq 1}$ form an $L_\infty$-morphism between the underlying $L_\infty$-algebras of $A[d-1]$ and $B[d-1]$.
    \item The following relation is satisfied for all $n \geq 1$ and $a_1, \dots, a_{n-1}, b,c \in A$:
    \begin{equation}\label{eq:Pois_mor_mulder}
		\begin{aligned}
			&f_n(a_1,\dots,a_{n-1},bc) = \\
			&\sum_{i=1}^{n}\sum_{\sigma \in \mathrm{Sh}(i-1,n-i)}(\pm) f_i(a_{\sigma(1)},\dots,a_{\sigma(i-1)},b)f_{n-i+1}(a_{\sigma(i)},\dots,a_{\sigma(n-1)},c)\,.
		\end{aligned}
	\end{equation}
\end{itemize}
\end{definition}
\begin{remark}
	For $n = 1$, the axiom \eqref{eq:Pois_mor_mulder} asserts that $f_1 \colon A \to B$ is a morphism of the underlying graded commutative algebras. For $n = 2$, it requires $f_2 \colon A\otimes A \to B[-d]$ to be an $f_1$-biderivation. Explicitly, we have
	\begin{equation}\label{eq:mor_bider}
		f_2(ab, c) = f_1(a)f_2(b, c) \pm f_2(a, c)f_1(b)\,,
	\end{equation}
	where the implicit sign is determined by the standard Koszul sign rule.
\end{remark}

We focus on the case where $A$ and $B$ are freely generated, i.e. $A = \C[x^1,\dots,x^l]$ and $B = \C[\widetilde{x}^1,\dots,\widetilde{x}^{m}]$. In this case, a morphism from $A$ to $B$ is completely determined by the images of the generators $\{x^i\}$ of $A$. We denote
\begin{equation}
		F^{i_1,\dots,i_n}(\widetilde{x}) = f_n(x^{i_1},\dots,x^{i_n}) \in B\,.
\end{equation}
Here on the right-hand side we evaluate the $n$-ary map $f_n$ on the generators $x$'s of $A$, which give us an element $F^{i_1,\dots,i_n}(\widetilde{x}) \in B$ as a polynomial of the generators $\widetilde{x}$. The collection of polynomials $F^{i_1,\dots,i_n}(\widetilde{x})$ then completely determine the morphism from $A$ to $B$. The condition that $(f_n)_{n\geq 1}$ is a $L_\infty$ morphism can then be translated into a set of differential equations on the polynomials $F^{i_1,\dots,i_n}(\widetilde{x})$. The first few equations are
\begin{equation}\label{eq:Linf_mor}
	\begin{aligned}
		&\widetilde{\Pi}^{i}(\widetilde{x})\frac{\partial F^k(\widetilde{x})}{\partial \widetilde{x}^i} = \Pi^{k}(F(\widetilde{x}))\\
		&\widetilde{\Pi}^{i}(\widetilde{x})\frac{\partial F^{kl}(\widetilde{x})}{\partial \widetilde{x}^i} + \widetilde{\Pi}^{ij}(\widetilde{x})\frac{\partial F^{k}(\widetilde{x})}{\partial \widetilde{x}^i}\frac{\partial F^{l}(\widetilde{x})}{\partial \widetilde{x}^j} = \Pi^{kl}(F(\widetilde{x})) \pm \left(F^{il}(\widetilde{x})\frac{\partial \Pi^k(x)}{\partial x^i} + F^{jl}(\widetilde{x})\frac{\partial \Pi^k(x)}{\partial x^j}\right)\Big|_{x = F(\widetilde{x})}\\
		&\cdots
	\end{aligned}
\end{equation}

Given such data, we can construct an interface between the Poisson sigma models defined by $A$ and $B$. We place the Poisson sigma model $\mathcal{T}_A$ associated with $A$ on the half-space $\R_{\geq 0}\times \R^d$, with boundary condition $\boldsymbol{\phi}=0$, and the Poisson sigma model $\mathcal{T}_B$ associated with $B$ on the half-space $\R_{\leq 0}\times \R^d$, with boundary condition $\widetilde{\boldsymbol{\eta}}=0$. We then construct an interface between these two theories by placing the following coupling on $\{0\}\times \R^d$:
\begin{equation}\label{eq:int_action}
	S_{interface} = \int_{\{0\}\times \R^{d}} \sum_{n\geq 1}\sum_{i_1,\dots,i_n}\frac{1}{n!}F^{i_1,\dots,i_n}(\widetilde{\boldsymbol{\phi}})\boldsymbol{\eta}_{i_1}\dots \boldsymbol{\eta}_{i_n} \,.	
\end{equation}
Here, the left and right theories induce two different BRST differential $Q$ and $\widetilde{Q}$ on the interface. The (semi-classical) anomaly cancellation condition then requires that
\begin{equation}\label{eq:int_anomaly_free}
	(Q + \widetilde{Q})\exp(S_{interface}) = 0\,.
\end{equation}
A careful reader might notice that there could also be an anomaly coming from the $\{\boldsymbol{\phi} = 0\}$ boundary condition of theory $\mathcal{T}_A$. This is indeed the case, and we can check that the anomaly is given by
\begin{equation}\label{eq:bdy_ano_A}
	\sum_{n \geq 1}\sum_{i_1,\dots, i_n}\frac{1}{n!}\Pi^{i_1i_2\dots i_n}(\boldsymbol{\phi})|_{\boldsymbol{\phi}=0}\boldsymbol{\eta}_{i_1}\dots\boldsymbol{\eta}_{i_n}\,.
\end{equation}
This will be incorporated in the later computation. The goal of the remainder of this section is to illustrate that this anomaly cancellation condition is equivalent to the $L_\infty$ morphism condition on $(F^{i_1,\dots,i_n})_{n\geq 1}$.

We will organize the results by the order of $\boldsymbol{\eta}$ and verify the first few orders. At first order in $\boldsymbol{\eta}$, the BRST differential of the left theory $\widetilde{Q}$ acts as a single contraction between $\widetilde{\Pi}^i(\widetilde{\boldsymbol{\phi}})\widetilde{\boldsymbol{\eta}}_i$ and $F^k(\widetilde{\boldsymbol{\phi}})\boldsymbol{\eta}_k$, which gives us $\widetilde{\Pi}^i(\widetilde{\boldsymbol{\phi}})\frac{\partial F^k(\widetilde{\boldsymbol{\phi}})}{\partial \widetilde{\boldsymbol{\phi}}^i}\boldsymbol{\eta}_k$. For the BRST anomaly from the right theory, we can contract $\Pi^k(\boldsymbol{\phi})\boldsymbol{\eta}_k$ with multiple $F^i(\widetilde{\boldsymbol{\phi}})\boldsymbol{\eta}_i$. Recall that the boundary condition for the right theory is defined by $\boldsymbol{\phi}=0$, therefore, summing them together gives us
\begin{equation}
	\sum_{n\geq 0}\frac{1}{n!}F^{i_1}(\widetilde{\boldsymbol{\phi}})\dots F^{i_n}(\widetilde{\boldsymbol{\phi}}) \frac{\partial^n \Pi^{k}(\boldsymbol{\phi})}{\partial \boldsymbol{\phi}^{i_1}\dots \partial \boldsymbol{\phi}^{i_n}}\Big|_{\boldsymbol{\phi}=0}\boldsymbol{\eta}_k = \Pi^k(F(\widetilde{\boldsymbol{\phi}}))\boldsymbol{\eta}_k\,,
 \end{equation} 
where the $n = 0$ term actually comes from the boundary anomaly \eqref{eq:bdy_ano_A} of $\mathcal{T}_A$. This gives us the first anomaly cancellation equation
\begin{equation}
	\widetilde{\Pi}^i(\widetilde{\boldsymbol{\phi}})\frac{\partial F^k(\widetilde{\boldsymbol{\phi}})}{\partial \widetilde{\boldsymbol{\phi}}^i} = \Pi^k(F(\widetilde{\boldsymbol{\phi}}))
\end{equation}

At the next order in $\boldsymbol{\eta}$, we have more types of contributions. For the BRST anomaly from the left theory, the first type comes from the single contraction of $\widetilde{\Pi}^i(\widetilde{\boldsymbol{\phi}})\widetilde{\boldsymbol{\eta}}_i$ with $\frac{1}{2}F^{kl}(\widetilde{\boldsymbol{\phi}})\boldsymbol{\eta}_k\boldsymbol{\eta}_l$, which gives us $\frac{1}{2}\widetilde{\Pi}^{i}(\widetilde{\boldsymbol{\phi}})\frac{\partial F^{kl}(\widetilde{\boldsymbol{\phi}})}{\partial \widetilde{\boldsymbol{\phi}}^i}\boldsymbol{\eta}_k\boldsymbol{\eta}_l$. We also have the contribution from the double contraction of $\frac{1}{2}\widetilde{\Pi}^{ij}(\widetilde{\boldsymbol{\phi}})\widetilde{\boldsymbol{\eta}}_i\widetilde{\boldsymbol{\eta}}_j$ with two $F^{k}(\widetilde{\boldsymbol{\phi}})\boldsymbol{\eta}_k$'s. This is depicted in the left diagram of Figure~\ref{fig:int_anomaly}. It contributes $\frac{1}{2}\widetilde{\Pi}^{ij}(\widetilde{\boldsymbol{\phi}})\frac{\partial F^k(\widetilde{\boldsymbol{\phi}})}{\partial \widetilde{\boldsymbol{\phi}}^i}\frac{\partial F^l(\widetilde{\boldsymbol{\phi}})}{\partial \widetilde{\boldsymbol{\phi}}^j}\boldsymbol{\eta}_k\boldsymbol{\eta}_l$. Thus the left anomaly reads
\begin{equation}
	\frac{1}{2}\widetilde{\Pi}^{i}(\widetilde{\boldsymbol{\phi}})\frac{\partial F^{kl}(\widetilde{\boldsymbol{\phi}})}{\partial \widetilde{\boldsymbol{\phi}}^i}\boldsymbol{\eta}_k\boldsymbol{\eta}_l + \frac{1}{2}\widetilde{\Pi}^{ij}(\widetilde{\boldsymbol{\phi}})\frac{\partial F^k(\widetilde{\boldsymbol{\phi}})}{\partial \widetilde{\boldsymbol{\phi}}^i}\frac{\partial F^l(\widetilde{\boldsymbol{\phi}})}{\partial \widetilde{\boldsymbol{\phi}}^j}\boldsymbol{\eta}_k\boldsymbol{\eta}_l\,.
\end{equation}

\begin{figure}[h!]
\centering
\begin{tikzpicture}[scale=0.65]
    \def\planeA{ (-2,2) -- (2,3.6) -- (2,-2) -- (-2,-3.6) -- cycle};
    \draw[dashed] (-1,3.6) -- (5,3.6);
    \draw[dashed] (-1,-2) -- (5,-2);
    \draw[dashed] (-5,-3.6) -- (1,-3.6);
	\filldraw[black] (-3,-0.3) circle (1pt);
	\draw[dashed, shift={(-3,0)}] \planeA;
	\draw[decorate, decoration={snake, amplitude=0.5mm, segment length=3mm}] (-0.1,-1.4) -- (-3,-0.3);
	\draw[decorate, decoration={snake, amplitude=0.5mm, segment length=3mm}] (0.6,1.2) -- (-3,-0.3);
	\node at (-3.6,-0.8) {$\scriptstyle\frac{1}{2}\widetilde{\Pi}^{ij}(\widetilde{\phi})\widetilde{\eta}_i\widetilde{\eta}_j$};
	\fill[fill=gray!10,opacity=0.8] \planeA;
    \draw[thick] \planeA;
	\node at (1.1,1.5) {$\scriptstyle F^i(\widetilde{\phi})\eta_i$};
	\node at (0.3,-1.8) {$\scriptstyle F^i(\widetilde{\phi})\eta_i$};
	\filldraw[black] (-0.1,-1.4) circle (1pt);
	\filldraw[black] (0.6,1.2) circle (1pt);

	\draw[dashed, shift={(3,0)}] \planeA;
	\draw[dashed] (-5,2) -- (1,2);
    \node at (2,4) {$ \mathbb{R}\times M$};
	\node at (-3,2.4) {$\scriptstyle\mathcal{T}_B$};
	\node at (3.6,2.4) {$\scriptstyle\mathcal{T}_A$};

\begin{scope}[xshift=12cm]
    \def\planeA{ (-2,2) -- (2,3.6) -- (2,-2) -- (-2,-3.6) -- cycle};
	\draw[dashed, shift={(-3,0)}] \planeA;
	\draw[dashed] (-1,-2) -- (5,-2);
	\fill[fill=gray!10,opacity=0.8] \planeA;
	\draw[thick] \planeA;
    \draw[dashed] (-1,3.6) -- (5,3.6);
	\draw[dashed] (-5,2) -- (1,2);
    \draw[dashed] (-5,-3.6) -- (1,-3.6);
	\filldraw[black] (-1.5,-1.4) circle (1pt);
	\filldraw[black] (-1.3,-0.2) circle (1pt);
	\filldraw[black] (-0.5,1.2) circle (1pt);
	\filldraw[black] (0.8,1.5) circle (1pt);
	\node at (-1.3,-2.5) {$\cdots$};
	\filldraw[black] (3,-0.2) circle (1pt);
	\draw[decorate, decoration={snake, amplitude=0.5mm, segment length=3mm}] (-1.5,-1.4) -- (3,-0.2);
	\draw[decorate, decoration={snake, amplitude=0.5mm, segment length=3mm}] (-1.3,-0.2) -- (3,-0.2);
	\draw[decorate, decoration={snake, amplitude=0.5mm, segment length=3mm}] (-0.5,1.2) -- (3,-0.2);
	\draw[decorate, decoration={snake, amplitude=0.5mm, segment length=3mm}] (0.8,1.5) -- (3,-0.2);
	\node at (-1.5,-1.8) {$\scriptstyle F^i(\widetilde{\phi})\eta_i$};
	\node at (-1.3,-0.7) {$\scriptstyle F^i(\widetilde{\phi})\eta_i$};
	\node at (-0.8,0.8) {$\scriptstyle F^i(\widetilde{\phi})\eta_i$};
	\node at (0.7,2) {$\scriptstyle F^i(\widetilde{\phi})\eta_i$};
	\node at (3.8,0.3) {$\scriptstyle\frac{1}{2}\Pi^{ij}(\phi)\eta_i\eta_j$};

	\draw[dashed, shift={(3,0)}] \planeA;
    \node at (2,4) {$ \mathbb{R}\times M$};
	\node at (-3,2.4) {$\scriptstyle \mathcal{T}_B $};
	\node at (3.6,2.4) {$\scriptstyle \mathcal{T}_A$};
\end{scope}
\end{tikzpicture}
\caption{Tree level interface anomaly cancellation between left and right theories. These two diagrams correspond to strict $P_d$ algebra case.}
\label{fig:int_anomaly}
\end{figure}

For the BRST anomaly from the right theory, the first contribution comes from the contraction between $\frac{1}{2}\Pi^{kl}(\boldsymbol{\phi})\boldsymbol{\eta}_k\boldsymbol{\eta}_l$ and multiple $F^i(\widetilde{\boldsymbol{\phi}})\boldsymbol{\eta}_i$. This is illustrated in the right diagram of Figure~\ref{fig:int_anomaly}. We obtain the following expression:
\begin{equation}
	\frac{1}{2}\sum_{n \geq 0}\frac{1}{n!}\sum_{i_1,\dots,i_n}F^{i_1}(\widetilde{\boldsymbol{\phi}}) F^{i_2}(\widetilde{\boldsymbol{\phi}}) \cdots F^{i_n}(\widetilde{\boldsymbol{\phi}})\frac{\partial^n \Pi^{kl}(\boldsymbol{\phi})}{\partial \boldsymbol{\phi}^{i_1}\partial \boldsymbol{\phi}^{i_2}\cdots \partial \boldsymbol{\phi}^{i_n}}\Big|_{\boldsymbol{\phi} = 0}\boldsymbol{\eta}_k\boldsymbol{\eta}_l = \frac{1}{2}\Pi^{kl}(F(\widetilde{\boldsymbol{\phi}}))\boldsymbol{\eta}_k\boldsymbol{\eta}_l\,.
\end{equation}
Another possibility is to consider the contraction between a $\Pi^{k}(\boldsymbol{\phi})\boldsymbol{\eta}_k$ term from the right theory, with a single $F^{il}(\boldsymbol{\phi})\boldsymbol{\eta}_i\boldsymbol{\eta}_l$ and multiple $F^{j}(\boldsymbol{\phi})\boldsymbol{\eta}_j$ on the interface. This contributes to the following expression:
\begin{equation}
\begin{aligned}
		&\frac{1}{2}\sum_{n \geq 0}\frac{1}{n!}\sum_{j_1,\dots,j_n}F^{j_1}(\widetilde{\boldsymbol{\phi}}) \cdots F^{j_n}(\widetilde{\boldsymbol{\phi}})F^{il}(\widetilde{\boldsymbol{\phi}})\frac{\partial^{n+1} \Pi^{k}(\boldsymbol{\phi})}{\partial \boldsymbol{\phi}^{j_1}\cdots \partial \boldsymbol{\phi}^{j_n}\partial \boldsymbol{\phi}^{i}}\Big|_{\boldsymbol{\phi} = 0}\boldsymbol{\eta}_k\boldsymbol{\eta}_l\\
		& = \frac{1}{2}F^{il}(\widetilde{\boldsymbol{\phi}})\frac{\partial \Pi^{k}(\boldsymbol{\phi})}{\partial \boldsymbol{\phi}^{i}}\Big|_{\boldsymbol{\phi} = F(\widetilde{\boldsymbol{\phi}})}\boldsymbol{\eta}_k\boldsymbol{\eta}_l\,.
\end{aligned}
\end{equation}
Collecting all terms with $\boldsymbol{\eta}_k\boldsymbol{\eta}_l$ from left and right contributions, we obtain the second equation in \eqref{eq:Linf_mor}. Though it would be tedious to match equation \eqref{eq:int_anomaly_free} with the whole tower of $L_\infty$ relations, one should expect an easier proof by rewriting the data of a (shifted) Poisson algebra and morphism in terms of square-zero derivations, such as in \cite{bandiera2020shifted}.

We also comment on the possibility of considering morphism of curved $P_d$-algebras. In this case, a morphism will also need to include a curving term $f_0: \C \to B[d]$. It will correspond to an interaction term $\int F(\widetilde{\boldsymbol{\phi}})$ on the interface. The anomaly cancellation will be more complicated in this case. For example, we can contract multiple $F(\widetilde{\boldsymbol{\phi}})$ terms with $\widetilde{\Pi}^{i_1\cdots i_n}(\widetilde{\boldsymbol{\phi}})\boldsymbol{\eta}_{i_1}\cdots \boldsymbol{\eta}_{i_n}$, which contributes to the anomaly, and one can observe that these contributions exactly correspond to how the curving $f_0$ enters the Jacobi identities.

\subsection{(Pointed) Poisson module and line defect}
\label{sec:pmod_line}
In this section, we present a general construction of line defects in the topological Poisson sigma model. This construction relies on the concept of a (pointed) Poisson module over a shifted Poisson algebra \footnote{We thank Davide Gaiotto again for suggesting this idea.}. 

We begin by defining a Poisson module over a (derived) $P_d$-algebra.
\begin{definition}
Let $A$ be a $P_d$-algebra. A Poisson module over $A$ consists of a differential graded module $(M,d_M)$ over the dg-commutative algebra underlying $A$, together with, for each $n\ge 1$, an $n$-ary operation $\{\cdots\}_{M,n}:(A[d-1])^{\otimes (n-1)}\otimes M\to M$, such that $\{\cdot\}_{M,1}=d_M$, and such that these operations make $M$ into an $L_\infty$-module over $A[d-1]$. Moreover, these operations satisfy the Leibniz rule: for each $n\geq 1$ and $a_1,\dots,a_{n-1}\in A$, the map $m \mapsto \{a_1,\dots,a_{n-1},m\}_{M,n}$ is a derivation of the underlying commutative $A$-module structure on $M$. Similarly, for fixed $a_1,\dots,a_{n-2}\in A$ and $m\in M$, the map $a \mapsto \{a_1,\dots,a_{n-2},a,m\}_{M,n}$ is a graded derivation.
\end{definition}
\begin{remark}
We can also rephrase this definition in terms of Lie--Rinehart algebras. Given a $P_d$-algebra $A$, one can show that the pair $(A,\Omega^1_{A}[d-1])$ carries the structure of a (homotopy) Lie--Rinehart algebra. Then a Poisson module in our sense is equivalently a module over the Lie--Rinehart pair $(A,\Omega^1_{A}[d-1])$.
\end{remark}

For simplicity, we will be interested in free Poisson modules, in the sense that $M$ is free as a module over the commutative algebra underlying $A$. In other words, as an $A$-module we can identify $M = A\otimes M_0$ for some vector space $M_0$. Geometrically, a Poisson module can be regarded as a quasi-coherent $\mathcal{O}_X$-module equipped with a flat Poisson connection, where $X = \mathrm{Spec}\,A$. A free Poisson module is simply a (locally) free $\mathcal{O}_X$-module endowed with such a connection.

As before, we consider the case where the $P_d$ algebra $A$ is freely generated, i.e. $A = \C[x^i]$. Then a free Poisson module $M$ over $A$ is completely determined, by the Leibniz rule, by the operations on the generators $\{x^i\}$ of $A$ and a basis $\{m^a\}$ of $M_0$. For each $n\geq 1$, we can write
\begin{equation}
	\{x^{i_1},x^{i_2},\dots,x^{i_{n-1}},m^a\}_{M,n} = \sum_{b} (\Omega^{i_1i_2\dots i_{n-1}}(x))^{a}_bm^b
\end{equation}
satisfying certain compatibility conditions coming from the $L_\infty$ module structure. The first few equations are
\begin{equation}\label{eq:pmod_jac}
	\begin{aligned}
		&\Pi^i(x)\frac{\partial \Omega(x)^{a}_b}{\partial x^i} =  \Omega(x)^a_c\Omega(x)^{c}_b\\
		&\Pi^j(x)\frac{\partial \Omega^i(x)^{a}_b}{\partial x^j}
		+\Pi^{ji}(x)\frac{\partial \Omega(x)^{a}_b}{\partial x^j}
		-\Omega^j(x)^{a}_b\frac{\partial \Pi^i(x)}{\partial x^j}
		= \Omega(x)^a_c\Omega^i(x)^c_b+\Omega^i(x)^a_c\Omega(x)^c_b\\
		&\dots
	\end{aligned}
\end{equation}
The set of polynomials $\Omega^{i_1i_2\dots i_{n-1}}(x) \in \mathcal{O}(L)\otimes \mathrm{End}(M_0)$ then completely determines the Poisson module structure.

Given such a free Poisson module $M$ over $A$, we can construct a line defect in the associated topological Poisson sigma model as follows. The action of this line defect is defined by the following path-ordered exponential:
\begin{equation}
	P\exp\left(\int_{\R} \mathcal{L} \right), \quad \text{where} \quad \mathcal{L} = \sum_{n\geq 1}\sum_{i_1,\dots,i_{n-1}}\frac{1}{(n-1)!}\Omega^{i_1i_2\dots i_{n-1}}(\boldsymbol{\phi})\boldsymbol{\eta}_{i_1}\dots \boldsymbol{\eta}_{i_{n-1}}\,.
\end{equation}
For this line defect to be well-defined, it must be closed under the BRST differential $Q$, which requires
\begin{equation}
	QP\exp\left(\int_{\R}\mathcal{L}\right) = 0 \,.
\end{equation}
To verify this, we first note that the action of the BRST differential on $\mathcal{L}$ is given by
\begin{equation}
	Q\mathcal{L} = \mathrm{d}_{\derham}\mathcal{L} + \{S^{int},\mathcal{L}\} \,.
\end{equation}
Using this, the BRST variation of the path-ordered exponential can be computed as
\begin{equation}
			QP\exp\left(\int_{\R}\mathcal{L}\right) = \sum_{n\geq 0}\sum_{i = 1}^n\int_{t_1 \leq \dots \leq t_n}\mathcal{L}(t_1)\cdots \left(\mathrm{d}_{\derham}\mathcal{L}(t_i) + \{S^{int},\mathcal{L}\}(t_i) \right) \cdots \mathcal{L}(t_n)
\end{equation}
Assuming $\lim_{t \to \pm \infty}\mathcal{L}(t) = 0$, we can use integration by part to compute that 
\begin{equation}
\begin{aligned}
	 			&\sum_{i = 1}^n\int_{t_1 \leq \dots \leq t_n}\mathcal{L}(t_1)\cdots \mathrm{d}_{\derham}\mathcal{L}(t_i)\cdots \mathcal{L}(t_n) =  \int_{\partial \{t_1 \leq \dots \leq t_n\}} \mathcal{L}(t_1)\cdots\mathcal{L}(t_n) \\
			& = \sum_{i = 1}^{n-1}\int_{t_1 \leq \dots \leq t_{n-1}}\mathcal{L}(t_1)\cdots \left(\mathcal{L}\cdot \mathcal{L}\right)(t_{i}) \cdots \mathcal{L}(t_{n-1})
\end{aligned}
\end{equation}

Therefore, the line defect is (classically) anomaly free if and only if the following condition is satisfied
\begin{equation}\label{eq:ano_free_line}
	\mathcal{L}\cdot \mathcal{L} + \{S^{int},\mathcal{L}\} = 0\,.
\end{equation}
For $\mathcal{L} = \sum_{n\geq 1}\sum_{i_1,\dots,i_{n-1}}\frac{1}{(n-1)!}\Omega^{i_1i_2\dots i_{n-1}}(\boldsymbol{\phi})\boldsymbol{\eta}_{i_1}\dots \boldsymbol{\eta}_{i_{n-1}}$ we can check this equation is equivalent to the $L_\infty$ Jacobi-identities \eqref{eq:pmod_jac}.

\subsection{Coisotropic module and enriched boundary}
\label{sec:enriched_bdy_coiso}
In this section, we consider a different notion of module over a shifted Poisson algebra called (shifted) coisotropic structure. The idea of coisotropic structure in the derived setting is based on the Poisson additivity:
\begin{equation}
\mathrm{Alg}_{P_d} = \mathrm{Alg}(\mathrm{Alg}_{P_{d-1}})\,,
\end{equation}
proved independently by Rozenblyum and Safronov \cite{safronov2018braces}. Roughly speaking, one may view a $P_d$-algebra $A$ as an associative algebra object in $\mathrm{Alg}_{P_{d-1}}$. A coisotropic structure on a morphism $\varphi: A\to C$ is then a left $A$-module structure on $C$ internal to category $\mathrm{Alg}_{P_{d-1}}$.

In the setting of (non-shifted) Poisson manifold and $2d$ Poisson sigma model, it is known that coisotropic submanifolds label boundary conditions (branes) of the Poisson sigma model \cite{cattaneo2004coisotropic}. The goal of this section is to provide a general construction of enriched boundary conditions for the higher dimensional Poisson sigma model we introduced, based on the notion of shifted coisotropic structure. 

First of all, we provide a definition of shifted coisotropic structure following \cite{Safronov2017pr}. Let $C$ be a $P_{d-1}$-algebra, and let $\pi$ denote its Poisson bivector. We first recall the definition of the (derived) Poisson center of $C$. Given a commutative algebra $C$, the algebra of shifted polyvector fields is defined as
\begin{equation}
\mathrm{Pol}(C,d-2) := \mathrm{Hom}_C\!\bigl(\mathrm{Sym}_C(\Omega_C^1[d-1]),\,C\bigr)\,.
\end{equation}
Here the commutative product is induced by multiplication of polyvector fields, the Lie bracket is the (shifted) Schouten bracket. We define the (derived) Poisson center $\mathcal{Z}(C)$ of $C$ as $\mathrm{Pol}(C,d-2)$ and equipped with the differential as the sum of the internal differential and the adjoint action $[\pi,-]$. In particular, $\mathcal{Z}(C)$ carries a natural $P_d$-algebra structure. There is also a canonical morphism of dg commutative algebras $\mathcal{Z}(C)\to C$ given by projection.

Let now $A$ be a $P_d$-algebra and let $\varphi \colon A\to C$ be a morphism of the underlying dg commutative algebras.

\begin{definition}
A shifted coisotropic structure on the morphism $\varphi\colon A \to C$ of dg commutative algebra consists of a $P_{d-1}$-algebra structure on $C$ together with a lift
\begin{equation}
\begin{tikzcd}
 ~  & \mathcal{Z}(C) \arrow[d] \\
A \arrow[dashed, ur, "\widetilde \varphi"]\arrow[r, "\varphi"']  & C
\end{tikzcd}
\end{equation}
such that $\widetilde \varphi\colon A\to \mathcal{Z}(C)$ is a morphism of $P_d$-algebras.
\end{definition}

By allowing $\widetilde{\varphi}$ to have $L_\infty$ components, this definition generalizes naturally to the setting where $A$ and $C$ are derived $P_d/P_{d-1}$-algebras. For simplicity, however, we only consider the strict case in this section, for which the morphism $\widetilde{\varphi}$ satisfies:
\begin{equation}\label{eq:cond_coiso1}
[\pi, \widetilde{\varphi}(x)] = 0,\quad \text{ for any }x \in A\,,
\end{equation}
\begin{equation}\label{eq:cond_coiso2}
\widetilde{\varphi}(\{x,x'\}_A) = [\widetilde{\varphi}(x), \widetilde{\varphi}(x')],\quad \text{ for any }x,x'\in A\,.
\end{equation}

We focus on the case where both $A = \C[x^i]$ and $C = \C[y^k]$ are free polynomial algebras. In this setting, the Poisson center $\mathcal{Z}(C)$ can be identified with the algebra of polyvector fields $\mathcal{Z}(C) = \C[y^k,\partial_{y^k}]$, and the morphism $\widetilde \varphi$ can be expanded as:
\begin{equation}
\widetilde \varphi(x^i) = \sum_{n\geq 0}\sum_{k_1,\dots,k_{n}}\Phi^{i,k_1 \dots k_{n}}(y)\partial_{y^{k_1}}\dots \partial_{y^{k_{n}}}\,.
\end{equation}
Using the Schouten bracket $[\partial_{y^k}, y^j] = \delta^j_k$, we can expand the equations \eqref{eq:cond_coiso1},\eqref{eq:cond_coiso2} to obtain a tower of relations for the polynomials $\{\Phi^{i,k_1 \dots k_{n}}(y)\}$.

Given a coisotropic structure as above, we build a $d+1$ dimensional Poisson sigma model $\mathcal{T}_A$ associated with the algebra $A$ together with an enriched boundary condition. This enriched boundary condition is defined by a $d$ dimensional Poisson sigma model $\mathcal{T}_C$ associated with the algebra $C$ coupled with $\mathcal{T}_A$ on the boundary. We denote the bulk fields of the two theories by $(\boldsymbol{\phi},\boldsymbol{\eta})$ and $(\boldsymbol{\theta}, \boldsymbol{\chi})$ respectively, and denote their Poisson structures by $\Pi^{ij}(\boldsymbol{\phi})$, $\pi^{kl}(\boldsymbol{\theta})$. Then we consider placing the theory $\mathcal{T}_A$ on $\R_{\leq 0}\times \R^d  $ with boundary condition $\boldsymbol{\phi} = 0$, and couple it with the theory $\mathcal{T}_C$ placed on $\{t=0\}\times\R^d $ with the following coupling
\begin{equation}
S_{enrich-bdy} = \int_{\R^d} \sum_{n\geq 0}\frac{1}{n!}\Phi^{i,k_1k_2\dots k_{n}}(\boldsymbol{\theta})\boldsymbol{\eta}_i \boldsymbol{\chi}_{k_1}\dots \boldsymbol{\chi}_{k_n}\,.
\end{equation}

Now we study possible (classical) anomaly cancellation for this coupled theory, which includes bulk-boundary anomaly and boundary-boundary anomaly. Such a mechanism is also known as anomaly inflow in the literature. We can organize the results by the degree of $\boldsymbol{\eta}$ and $\boldsymbol{\chi}$ fields. $\boldsymbol{\eta}^0$ term can only come from $\{S_C,S_C\}_{BV}$, which vanishs since $\pi^{kl}(\boldsymbol{\theta})$ is Poisson. At the first order in $\boldsymbol{\eta}$, we have the following equation
\begin{equation}
	\{S_C,S_{enrich-bdy}\}_{BV} = 0\,.
\end{equation}
It is easy to check that this equation is equivalent to \eqref{eq:cond_coiso1}, as the BV bracket on fields is computed in the same way as the Schouten bracket $[,]$ on the generators. 

At second order in $\boldsymbol{\eta}$, bulk-boundary terms contribute to the anomaly. For example, the $\boldsymbol{\eta}^2$ term can arise from the contraction of a bulk interaction $\frac{1}{2}\Pi^{ij}(\boldsymbol{\phi})\boldsymbol{\eta}_i\boldsymbol{\eta}_j$ with multiple boundary couplings $\Phi^i(\boldsymbol{\theta})\boldsymbol{\eta}_i$, and also from the contraction between two boundary couplings $\Phi^i(\boldsymbol{\theta})\boldsymbol{\eta}_i$ and $\Phi^{j,k}(\boldsymbol{\theta})\boldsymbol{\eta}_j\boldsymbol{\chi}_k$. These two types of contributions are illustrated in Figure~\ref{fig:enriched_bdy_anomaly}.

\begin{figure}[h!]
\centering
\begin{tikzpicture}[scale=0.65]
   \def\planeA{ (-2,2) -- (2,3.6) -- (2,-2) -- (-2,-3.6) -- cycle};
	\draw[dashed, shift={(-4.5,0)}] \planeA;
	\draw[dashed] (-2.5,-2) -- (2,-2);
	\fill[fill=gray!10,opacity=0.8] \planeA;
	\draw[thick] \planeA;
    \draw[dashed] (-2.5,3.6) -- (2,3.6);
	\draw[dashed] (-6.5,2) -- (-2,2);
    \draw[dashed] (-6.5,-3.6) -- (-2,-3.6);
	\filldraw[black] (-1.5,-1.4) circle (1pt);
	\filldraw[black] (-1.3,-0.2) circle (1pt);
	\filldraw[black] (0.5,1.3) circle (1pt);
	\node at (-1.3,-2.5) {$\cdots$};
	\filldraw[black] (-3,-0.2) circle (1pt);
	\draw[decorate, decoration={snake, amplitude=0.5mm, segment length=3mm}] (-1.5,-1.4) -- (-3,-0.2);
	\draw[decorate, decoration={snake, amplitude=0.5mm, segment length=3mm}] (-1.3,-0.2) -- (-3,-0.2);
	\draw[decorate, decoration={snake, amplitude=0.5mm, segment length=3mm}] (0.5,1.3) -- (-3,-0.2);
	\node[right] at (-1.5,-1.8) {$\scriptstyle \Phi^i(\theta)\eta_i$};
	\node[right] at (-1.3,-0.7) {$\scriptstyle \Phi^i(\theta)\eta_i$};
	\node[right] at (0.5,1.5) {$\scriptstyle \Phi^i(\theta)\eta_i$};
	\node at (-3.8,0.3) {$\scriptstyle\frac{1}{2}\Pi^{ij}(\phi)\eta_i\eta_j$};

    \node at (2,4) {$ \mathbb{R}_{\leq 0}\times \R^d$};
	\node at (-3,2.4) {$\scriptstyle \mathcal{T}_A $};

\begin{scope}[xshift=12cm]
   \def\planeA{ (-2,2) -- (2,3.6) -- (2,-2) -- (-2,-3.6) -- cycle};
	\draw[dashed, shift={(-4.5,0)}] \planeA;
	\draw[dashed] (-2.5,-2) -- (2,-2);
	\fill[fill=gray!10,opacity=0.8] \planeA;
	\draw[thick] \planeA;
    \draw[dashed] (-2.5,3.6) -- (2,3.6);
	\draw[dashed] (-6.5,2) -- (-2,2);
    \draw[dashed] (-6.5,-3.6) -- (-2,-3.6);
	\filldraw[black] (-1.5,-1.4) circle (1pt);
	\filldraw[black] (0.5,1) circle (1pt);
	\draw[decorate, decoration={snake, amplitude=0.5mm, segment length=3mm}] (0.5,1) -- (-1.5,-1.4);
	\node[right] at (-1.5,-1.8) {$\scriptstyle \Phi^{i,k}(\theta)\eta_i\chi_k$};
	\node[right] at (0.3,1.5) {$\scriptstyle \Phi^i(\theta)\eta_i$};

    \node at (2,4) {$ \mathbb{R}_{\leq 0}\times \R^d$};
	\node at (-3,2.4) {$\scriptstyle \mathcal{T}_A $};
\end{scope}
\end{tikzpicture}
\caption{Anomaly cancellation (tree-level) between bulk-boundary and boundary-boundary terms (a.k.a an anomaly inflow).} 
\label{fig:enriched_bdy_anomaly}
\end{figure}
The computation of the bulk boundary contribution is analogous to what we did in Section \ref{sec:inter_from_pois_mor}, which gives us $\Pi^{ij}(\Phi(\boldsymbol{\theta}))\boldsymbol{\eta}_i\boldsymbol{\eta}_j$. We obtain the following equation:
\begin{equation}
	\sum_{i,j}\Pi^{ij}(\Phi(\boldsymbol{\theta}))\boldsymbol{\eta}_i\boldsymbol{\eta}_j = \{\sum_{i}\Phi^i(\boldsymbol{\theta})\boldsymbol{\eta}_i, \sum_{j,k}\Phi^{j,k}(\boldsymbol{\theta})\boldsymbol{\eta}_j\boldsymbol{\chi}_k\}_{BV}\,.
\end{equation}
At the next order $\boldsymbol{\eta}^2\boldsymbol{\chi}$, we can obtain the following equation between the bulk-boundary and boundary-boundary anomaly:
\begin{equation}
\begin{aligned}
	    &\sum_{i_0,i_1,i_2,k}\Phi^{i_0,k}\frac{\partial}{\partial\phi_{i_0}}\Pi^{i_1i_2}(\Phi(\boldsymbol{\theta}))\boldsymbol{\eta}_{i_1}\boldsymbol{\eta}_{i_2}\boldsymbol{\chi}_k \\
		&= \{\sum_{j,k}\Phi^{j,k}(\boldsymbol{\theta})\boldsymbol{\eta}_j\boldsymbol{\chi}_k, \sum_{i,l}\Phi^{i,l}(\boldsymbol{\theta})\boldsymbol{\eta}_i\boldsymbol{\chi}_l\}_{BV} + \{\sum_{i}\Phi^i(\boldsymbol{\theta})\boldsymbol{\eta}_i, \sum_{j,k,l}\Phi^{j,kl}(\boldsymbol{\theta})\boldsymbol{\eta}_j\boldsymbol{\chi}_k\boldsymbol{\chi}_l\}_{BV}\,.
\end{aligned}
\end{equation}
One can check that going to higher order in $\boldsymbol{\chi}$ gives a tower of equations which is equivalent to \eqref{eq:cond_coiso2}. 

Finally, we comment on the possibility of allowing $A,C$ to be derived $P_d/P_{d-1}$ algebra, and also allowing $\widetilde{\varphi}$ to have $L_\infty$ components: $A^{\otimes n}\to \mathcal{Z}(C)$. Such generalization will lead to couplings of the form 
\begin{equation}
	\Phi^{i_1,\dots,i_n, k_1,\dots,k_m}(\boldsymbol{\theta})\boldsymbol{\eta}_{i_1}\dots \boldsymbol{\eta}_{i_n} \boldsymbol{\chi}_{k_1}\dots \boldsymbol{\chi}_{k_m}\,.
\end{equation}
The corresponding anomaly cancellation condition between bulk-boundary and boundary-boundary terms will be more complicated, but we expect that it will still be equivalent to $L_\infty$ morphism conditions.

\subsection{Other extended modules and bulk defects}
\label{sec:ex_defect_modules}

In fact, the data of a shifted Poisson algebra together with a module, discussed in the previous two subsections, should be viewed as the simplest instance of a more general notion: the (classical limit of a) locally constant factorization algebra on stratified space \cite{ayala2017factorization}. For example, a locally constant factorization algebra on $\{\R^0\subset \R^d\}$ is precisely an $E_d$-algebra together with a pointed module. The Poisson module structure discussed above is the classical ($P_d$) shadow of such a point defect.

More generally, for $1\le d'<d$, one can place an honest $d'$-dimensional defect supporting its own $E_{d'}$-algebra, together with coherent mixed operations describing how bulk observables interact with the defect. The general theory of locally constant factorization algebras on stratified manifolds is developed in \cite{ayala2017factorization}. In particular, for the stratification $\{\R^{d'}\subset \R^d\}$, there is a corresponding two-colored operad $\mathrm{Disk}^{\fr}_{d'\subset d}$. In codimension $\ge 2$ (i.e.\ $d-d'\ge 2$), \cite{idrissi2022formality} gives an explicit configuration-space model for this operad, called the complementarily constrained (little) disks operad $\mathrm{CD}_{d'd}$. \footnote{There is a closely related operad called the extended Swiss-Chess operad $\mathrm{ESC}_{d',d}$ studied in \cite{willwacher2017non}. However, it is subtly different from $\mathrm{CD}_{d'd}$ in how the bulk operator can approach the defect.}

For the purpose of giving a classical field theory construction, we consider the homology operad $\mathrm{cd}_{d'd}:=H_{\sbullet}(\mathrm{CD}_{d'd})$. The operad $\mathrm{cd}_{d'd}$ admits a very explicit description \cite{idrissi2022formality}. When $d-d'\ge 2$ and $d'\ge 1$, an algebra over $\mathrm{cd}_{d'd}$ is equivalent to the data of a tuple $(A', A,c,\delta)$ consisting of:
\begin{itemize}
\item a $P_{d'}$-algebra $A'$ and a $P_d$-algebra $A$;
\item an algebra morphism $c:A\to A'$ landing in the (strict) Poisson center $Z(A')$;
\item a derivation $\delta:A\to A'[1-d + d']$ (with respect to $c$), again landing in $Z(A')$.
\end{itemize}
These two operations have a precise topological interpretation in terms of the normal linking sphere $S^{d-d'-1}$ of the defect: the morphism $c$ corresponds to the degree-zero class, while $\delta$ corresponds to the fundamental class of the linking sphere.

We first consider the simplest example, when there is no additional defect at all. For simplicity, we work with a strict $P_d$-algebra $A$. We denote by $A^{(0)}$ the $P_d$-algebra which, as a commutative algebra, is $A$, but with zero Poisson bracket. Then we naturally get a tuple $(A^{(0)},A,\mathrm{id},0)$ as a $\mathrm{cd}_{d'd}$-algebra. This structure comes from the fact that given an $E_d$-algebra, we naturally get an $E_{d'}$-algebra from the embedding $i: \R^{d'} \to \R^d$, and also a $\mathrm{Disk}_{d'\subset d}$-algebra. 

As a next step, we can consider coupling $(A^{(0)},A,\mathrm{id},0)$ to another $d'$-dimensional system; the coupled $\mathrm{cd}_{d'd}$-algebra should look like $(A^{(0)}\otimes B, A,\mathrm{id}_{A},\delta)$, where $B$ is a $P_{d'}$-algebra. Here, $\mathrm{id}_{A}:A \to A^{(0)}\otimes B$ is the map that sends $a \in A$ to $a\otimes 1_B$. $\delta: A \to A^{(0)}\otimes B[1-d+d']$ is a map that lands in the Poisson center of $A^{(0)}\otimes B$.

As in the previous section, we will consider the case where both $A = \C[x^i]$ and $B = \C[y^k]$ are freely generated polynomial algebras. We denote the Poisson structures on $A$ and $B$ by $\Pi^{ij}(x)$ and $\pi^{kl}(y)$, respectively. The remaining data of a $\mathrm{cd}_{d'd}$-algebra is fully determined by the map $\delta: A \to A^{(0)}\otimes B[1-d+d']$. Since $\delta$ is an $\mathrm{id}_A$-derivation, it is further determined by its values on the generators of $A$. We consider the map $\delta$ of the following form:
\begin{equation}
	D^i(y) = \delta(x^i) \in A^{(0)}\otimes B = \C[x^i,y^k]
\end{equation}
The requirement that $\delta$ lands in the center implies that:
\begin{equation}\label{eq:defect_center}
	\pi^{kl}(y)\frac{\partial D^{i}(y)}{\partial y^l} = 0\,.
\end{equation}

From the above data we can construct a coupled theory on $\{\R^{d'+1}\subset \R^{d+1}\}$ as follows. The shifted Poisson algebras $A$ and $B$ give rise to two individual Poisson sigma models
\begin{equation}
	S_d + S_{d'} = \int_{\R^{d+1}}\boldsymbol{\eta}_i\mathrm{d}_{\derham} \boldsymbol{\phi}^i + \frac{1}{2}\Pi^{ij}(\boldsymbol{\phi})\boldsymbol{\eta}_i\boldsymbol{\eta}_j + \int_{\R^{d'+1}}\boldsymbol{\chi}_k\mathrm{d}_{\derham} \boldsymbol{\theta}^k + \frac{1}{2}\pi^{kl}(\boldsymbol{\theta})\boldsymbol{\chi}_k\boldsymbol{\chi}_l\,.
\end{equation}
The map $\delta$ gives rise to the following interaction term
\begin{equation}
	\label{eq:ex_defect_coup}
S_{d'd} = \int_{\R^{d'+1}} D^i(\boldsymbol{\theta})\boldsymbol{\eta}_i|_{\R^{d'+1}}\,.
\end{equation}
Since $S_d$ and $S_{d'}$ already satisfy the classical Master equation, we only need to check the equation for cross terms. Moreover, we have $\{S_d,S_{d'd}\} = 0$; therefore, the coupled system satisfies the classical Master equation if and only if
\begin{equation}
 \{S_{d'},S_{d'd}\} = 0\,.
\end{equation}
It is straightforward to check that this condition is equivalent to \eqref{eq:defect_center}. 

Here, we have only considered the construction of a $\mathrm{cd}_{d'd}$-algebra without higher operations. It would be interesting to generalize this notion to a proper derived or homotopy version compatible with our definition of a derived $P_d$-algebra, and to incorporate more general defects and couplings. On the one hand, it is shown in \cite{ayala2017factorization} that a $\mathrm{Disk}_{d'\subset d}$-algebra is equivalent to a triple $(A', A, \alpha)$ consisting of an $E_{d'}$-algebra $A'$, an $E_d$-algebra $A$, and a morphism of $E_{d'+1}$-algebras:
\begin{equation}
	\alpha: \int_{S^{d-d'-1}} A \to HH^{\bullet}_{E_{d'}}(A')\,,
\end{equation}
where $HH^{\bullet}_{E_{d'}}(A')$ denotes the $E_{d'}$-Hochschild cohomology of $A'$. The classical counterpart of Hochschild cohomology is the (derived) Poisson center considered in the previous section. However, we do not know a classical counterpart for the factorization homology $\int_{S^{d-d'-1}} A$. On the other hand, one could write down the most general physical interactions 
\begin{equation}
	D^{i_1\dots i_n,j_1\dots j_m}(\boldsymbol{\theta}, \boldsymbol{\phi})\boldsymbol{\eta}_{i_1}\dots \boldsymbol{\eta}_{i_n}\boldsymbol{\chi}_{j_1}\dots \boldsymbol{\chi}_{j_m}
\end{equation}
and require the classical master equation (anomaly cancellation) to hold. This approach might, in turn, lead to a feasible mathematical definition. We leave this analysis to future work. Additionally, it would be interesting to study more complicated stratified factorization algebras and their corresponding defect quantum field theories, which we also defer to future investigations.

\subsection{Deformations of BF theory}
\label{sec:deformations-bf}
In this and the next subsections, we present some examples of higher dimensional Poisson sigma model. We start from topological BF theory and analyze its potential deformations. 

Let $\mathfrak{g}$ be a Lie algebra and consider the dg manifold $\mathcal{M}=B\mathfrak{g}$. As mentioned in the previous section, $B\mathfrak{g}$ has underlying manifold a point, but its ring of functions $\mathcal{O}(B\mathfrak{g})$ is given by the Chevalley--Eilenberg cochain complex $\mathrm{CE}^{\bullet}(\mathfrak{g}) = (C^{\sbullet}(\mathfrak{g}),d_{\mathrm{CE}})$ with the trivial $(d-1)$-shifted Poisson structure by taking the Poisson tensor to vanish. The corresponding Poisson sigma model is then the AKSZ sigma model with target the QP-manifold $T^*[d]B\mathfrak{g}$.

In the BV formalism, this theory has fields
\begin{equation}
\begin{aligned}
	&\boldsymbol{A} \in \Omega^{\sbullet}(\R^{d+1})\otimes \mathfrak{g}[1]\,,\\
	&\boldsymbol{B} \in \Omega^{\sbullet}(\R^{d+1})\otimes\mathfrak{g}^{\vee}[d-1]\,.
\end{aligned}
\end{equation}
The BV action functional is the standard BF action:
\begin{equation}
	S  = \int_{\Sigma}\boldsymbol{B}(d\boldsymbol{A} + \frac{1}{2}[\boldsymbol{A},\boldsymbol{A}])\,.
\end{equation}

Deformations of BF theory, in the Poisson sigma model sense, are classified by shifted Poisson structures on $B\mathfrak{g}$. When $\mathfrak{g}$ is a classical (i.e.\ degree-$0$) Lie algebra, such shifted Poisson structures are strongly constrained by simple degree considerations. In fact, a complete classification is given in \cite{safronov2023shifted} (in the setting of $BG$). For $d>3$, every $P_d$-structure on $B\mathfrak{g}$ is trivial. For $d=3$, the space of $P_3$-structures is equivalent to $\mathrm{Sym}^2(\mathfrak{g})^G$, while for $d=2$, the space of $P_2$-structures on $B\mathfrak{g}$ is equivalent to the space of quasi-Lie bialgebra structures on $\mathfrak{g}$.

One can view the classification above as a “no-go” theorem for Poisson deformations of BF theory: starting from a classical Lie algebra, only a very limited class of deformations can occur. A natural way to bypass this restriction is to replace $\mathfrak{g}$ by a dg Lie algebra, or more generally an $L_\infty$-algebra. Before doing so, however, let us first discuss in more detail the few possibilities that arise in the classical case.

We first consider the $3d$ theory arising from a quasi-Lie bialgebra. The precise definition of a quasi-Lie bialgebra will be reviewed in Appendix~\ref{sec:bialgebra}. Roughly speaking, a quasi-Lie bialgebra is a triple $(\mathfrak{g},\delta,\phi)$, where
\begin{itemize}
\item $\mathfrak{g}$ is a Lie algebra with bracket $[-,-]$;
\item $\delta$ is a cobracket $\delta:\mathfrak{g}\to \wedge^2\mathfrak{g}$, which is a $1$-cocycle for the adjoint action of $\mathfrak{g}$;
\item $\phi \in \wedge^3\mathfrak{g}$ is an associator measuring the failure of $\delta$ to satisfy the co-Jacobi identity.
\end{itemize}
Given a quasi-Lie bialgebra $(\mathfrak{g},\delta,\phi)$, we can construct a nontrivial (derived) $P_2$-algebra structure on $\mathrm{CE}^{\sbullet}(\mathfrak{g}) = (\mathrm{Sym}(\mathfrak{g}[1])^{\vee},d_{CE})$. First, we take the linear dual of the cobracket $\delta$ with a degree shift to obtain a map $\delta^\vee\colon (\mathfrak{g}[1])^{\vee}\otimes (\mathfrak{g}[1])^{\vee}\to (\mathfrak{g}[1])^{\vee}[-1]$. This can be extended to a bilinear bracket $\{-,-\}\colon \mathrm{CE}^{\sbullet}(\mathfrak{g})\otimes \mathrm{CE}^{\sbullet}(\mathfrak{g})\to \mathrm{CE}^{\sbullet}(\mathfrak{g})[-1]$ by the Leibniz rule. Next, we extend the (shifted) linear dual of the associator, $\phi^\vee\colon (\mathfrak{g}[1])^{\vee}\otimes (\mathfrak{g}[1])^{\vee}\otimes (\mathfrak{g}[1])^{\vee}\to \C[-3]$, to a 3-ary bracket $\{-,-,-\}\colon \mathrm{CE}^{\sbullet}(\mathfrak{g})^{\otimes 3}\to \mathrm{CE}^{\sbullet}(\mathfrak{g})[-3]$, again by the Leibniz rule. One can check that $(\mathrm{CE}^{\sbullet}(\mathfrak{g}),d_{\mathrm{CE}},\{-,-\},\{-,-,-\})$ defines a (derived) $P_2$-algebra, with the higher Jacobi identities following from the quasi-Lie bialgebra axioms of $\mathfrak{g}$.

The $3d$ Poisson sigma model corresponding to $(\mathrm{CE}^{\sbullet}(\mathfrak{g}),d_{\mathrm{CE}},\{-,-\},\{-,-,-\})$ has the same field content as the $3d$ BF theory.
 \begin{equation*}
\begin{aligned}
	&\boldsymbol{A} \in \Omega^{\sbullet}(\Sigma)\otimes \mathfrak{g}[1]\,\\
	&\boldsymbol{B} \in \Omega^{\sbullet}(\Sigma)\otimes\mathfrak{g}^{\vee}[1]\,.
\end{aligned}
\end{equation*}
The action functional is the following deformation of the BF action
\begin{equation}\label{eq:BF_Lie_bidef}
	S  = \int_{\Sigma}\langle \boldsymbol{B},d\boldsymbol{A} + \frac{1}{2}[\boldsymbol{A},\boldsymbol{A}]\rangle + \frac{1}{2}\langle \delta(\boldsymbol{A}), \boldsymbol{B}\wedge \boldsymbol{B}\rangle + \frac{1}{3}\langle\phi,\boldsymbol{B}\wedge \boldsymbol{B}\wedge \boldsymbol{B} \rangle\,.
\end{equation}

This theory has appeared in the literature in various guises and in a range of contexts; see, for example, \cite{roytenberg2007aksz,roytenberg2002structure,Dupuis:2020ndx,Chen:2022opt,cabrera2022dimensional}. A cleaner reformulation, noted in \cite{Chen:2022opt}, is that it can be rewritten as an ordinary $3d$ Chern--Simons theory. To explain this, we first recall the classical double $D(\mathfrak{g})$ of a quasi-Lie bialgebra $\mathfrak{g}$. As a vector space, $D(\mathfrak{g}) = \mathfrak{g}\oplus\mathfrak{g}^{\vee}$ and the Lie bracket is given by
\begin{equation}
	[X+f,Y+g] = ([X,Y] + \mathrm{ad}_fY -\mathrm{ad}_gX + \iota_{f\wedge g}\phi) + (\delta^{\vee}(f\wedge g) + \mathrm{ad}_Xg - \mathrm{ad}_Yf)\,.
\end{equation}
It is then straightforward to see that the deformed BF action \eqref{eq:BF_Lie_bidef} coincides with the Chern--Simons action for the Lie algebra $D(\mathfrak{g})$. There is, however, an important distinction between this theory and the Chern--Simons theories more commonly studied, namely those associated with a semisimple Lie algebra. In the semisimple case, Chern--Simons theory does not admit topological boundary conditions, but only chiral ones. By contrast, the theory constructed from $D(\mathfrak{g})$ does admit interesting topological boundary conditions. We will return to this point in Section~\ref{sec:q_Liebi} and provide a precise connection with quantum group.

For deformations of $4$-dimensional BF theory, we consider $P_3$-structures on $B\mathfrak{g}$, which are classified by $\mathrm{Sym}^2(\mathfrak{g})^{G}$. Given an element $t \in \mathrm{Sym}^2(\mathfrak{g})^{G}$, we obtain a nontrivial $P_3$ bracket by setting
\begin{equation}\label{eq:P3_pair}
	\{\xi,\eta\} = \langle t,\xi\otimes\eta \rangle \quad\quad \text{for } \xi,\eta \in \mathfrak{g}[1]^{\vee}
\end{equation}
and extending it to $\mathrm{CE}^{\sbullet}(\mathfrak{g})$ by the Leibniz rule. The corresponding theory has field content
\begin{equation}
	\begin{aligned}
		&\boldsymbol{A}\in \Omega^{\sbullet}(\R^4)\otimes \mathfrak{g}[1]\,,\\
		&\boldsymbol{B}\in \Omega^{\sbullet}(\R^4)\otimes\mathfrak{g}^{\vee}[2]\,.
	\end{aligned}
\end{equation}
The action functional is given by
\begin{equation}
	S  = \int_{\Sigma}\langle \boldsymbol{B},d\boldsymbol{A} + \frac{1}{2}[\boldsymbol{A},\boldsymbol{A}]\rangle + \frac{1}{2}\langle t,\boldsymbol{B}\wedge\boldsymbol{B}\rangle\,.
\end{equation}
This theory has again appeared in the literature under various names, for instance as $4d$ BF theory with a cosmological term; see, e.g., \cite{Cattaneo:1995tw,Baez:1995ph}. It is also closely related to the Crane--Yetter--Broda TQFT \cite{Crane:1994ji}. Moreover, as noted in \cite{costello2013notes}, it can be identified with the Kapustin--Witten twist of $4d$ $\mathcal{N}=4$ super Yang--Mills theory \cite{Kapustin:2006pk}. We will revisit these connections in Section \ref{sec:4dN4SYM}.

In higher dimensions, there are no longer any nontrivial Poisson deformations of BF theory associated with a Lie algebra $\mathfrak{g}$ concentrated in degree $0$. To bypass this constraint, one must allow $\mathfrak{g}$ to be genuinely graded---for instance, a super Lie algebra, or more generally an $L_\infty$-algebra. As an example, it was observed in \cite{pimenov2015shifted} that an $n$-shifted Lie bialgebra structure on $\mathfrak{g}$ induces an $(n-1)$-shifted Poisson structure on $\mathrm{CE}^{\sbullet}(\mathfrak{g})$. We will review the definition of an $n$-shifted Lie bialgebra in Appendix~\ref{sec:bialgebra}. Roughly speaking, an $n$-shifted Lie bialgebra consists of a graded Lie algebra $(\mathfrak{g},[-,-])$ together with a cobracket $\delta: \mathfrak{g}[n] \to \mathfrak{g}[n]\otimes \mathfrak{g}[n]$ satisfying the cocycle condition, such that the dual map $\delta^\vee$ defines a Lie bracket on $(\mathfrak{g}[n])^{\vee}$.

As before, we obtain a $(n-1)$-shifted Poisson bracket $\{-,-\}$ on $\mathrm{CE}^{\sbullet}(\mathfrak{g})$ by extending the shifted dual of the cobracket via the Leibniz rule. One can check that the Jacobi identity for $\delta^\vee$ implies the Jacobi identity for $\{-,-\}$, while the cocycle condition is equivalent to the compatibility between $\{-,-\}$ and the \CE differential. In this way, an $n$-shifted Lie bialgebra structure on $\mathfrak{g}$ induces a $P_{2-n}$-structure on $\mathrm{CE}^{\sbullet}(\mathfrak{g})$, and hence produces deformations of $(3-n)$-dimensional topological BF theory (for $n \leq 2$).

As an example, a $1$-shifted Lie bialgebra structure on $\mathfrak{g}$ induces a ($P_1$) Poisson structure on $\mathrm{CE}^{\sbullet}(\mathfrak{g})$. The corresponding $2$-dimensional Poisson sigma model is studied in \cite{Niu:2025kgk}, where the authors also propose a construction of the quantization of $1$-shifted Lie bialgebras.

\subsection{Theory associated with Lie bialgebroid}
We have discussed examples associated with $B\mathfrak{g}$, which are naturally gauge theories. An important question is how to incorporate matter fields into this construction. In this section, we consider a convenient way to achieve this by generalizing Lie algebras to Lie algebroids. In particular, we will discuss the $3d$ Poisson sigma model associated with a Lie bialgebroid and explore its relationship with the Courant algebroid sigma models that are often studied in the literature.

We first recall the definition of a Lie algebroid.
\begin{definition}
	A Lie algebroid is a triple $(\mathcal{L}, [\cdot, \cdot], \rho)$, where $\mathcal{L} \to M$ is a vector bundle over a smooth manifold $M$, $[\cdot, \cdot]$ is a Lie bracket on $\Gamma(\mathcal{L})$, and $\rho: \mathcal{L} \to TM$ is the anchor map, satisfying the Leibniz rule for all sections $X, Y \in \Gamma(\mathcal{L})$ and functions $f \in C^\infty(M)$:
	\begin{equation}
		[X, fY] = f[X, Y] + (\rho(X) \cdot f) Y\,.
	\end{equation}
\end{definition}

Lie algebroid structures have an equivalent characterization in terms of $Q$-manifolds. 
\begin{prop}[\cite{Vaintrob_1997}]\label{prop:Lie_algebroid_eqiv}
	Let $\mathcal{L} \to M$ be a vector bundle. The following three classes of structures are in a natural one-to-one correspondence:
	\begin{enumerate}
		\item Lie algebroid structures on $\mathcal{L} \to M$;
		\item $Q$-manifold structures on $\mathcal{L}[1]$;
		\item $(-1)$-shifted linear Poisson structures on $\mathcal{L}^*[1]$ (with $0$ differential).
	\end{enumerate}
\end{prop}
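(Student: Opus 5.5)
The plan is to establish the two equivalences (1)$\Leftrightarrow$(2) and (1)$\Leftrightarrow$(3) separately. Each goes through the same device: the structure in question is a derivation (or biderivation), so it is determined by its values on generators. In every case the generators are functions on $M$ together with linear fibre coordinates.

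\textbf{(1)$\Leftrightarrow$(2).} Functions on $\mathcal{L}[1]$ are $\Gamma(M,\wedge^{\sbullet}\mathcal{L}^\vee)$, generated by $C^\infty(M)$ in degree $0$ and $\Gamma(\mathcal{L}^\vee)$ in degree $1$. A degree $+1$ vector field $Q$ is a derivation, so it is fixed by two pieces of data:
\begin{equation*}
Qf \in \Gamma(\mathcal{L}^\vee), \qquad Q\xi \in \Gamma(\wedge^2\mathcal{L}^\vee).
\end{equation*}
Given a Lie algebroid, I would set $\langle Qf, X\rangle=\rho(X)f$. I would define $Q\xi$ by the Cartan formula $(Q\xi)(X,Y)=\rho(X)\xi(Y)-\rho(Y)\xi(X)-\xi([X,Y])$, and extend $Q$ to all of $\wedge^{\sbullet}\mathcal{L}^\vee$ as the Lie algebroid (Chevalley--Eilenberg) differential.

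Conversely, given $Q$, I recover the anchor from $Q$ on degree $0$ functions. I recover the bracket through the derived bracket $\iota_{[X,Y]}=[[\iota_X,Q],\iota_Y]$, where $\iota_X$ is the degree $-1$ vector field given by contraction. The Leibniz rule then follows from $Q$ being a derivation.

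The key step is to show that $Q^2=0$ is equivalent to the axioms:
\begin{itemize}
\item $Q^2f=0$ is equivalent to $\rho$ being a bracket homomorphism, $\rho([X,Y])=[\rho X,\rho Y]$;
\item $Q^2\xi=0$ is equivalent to the Jacobi identity.
\end{itemize}
It suffices to check these on generators, because $Q^2=\tfrac12[Q,Q]$ is again a derivation. The computation is a local coordinate check. I would write
\begin{equation*}
Q=\rho^i_a(x)\xi^a\partial_{x^i}-\tfrac12 c^a_{bc}(x)\xi^b\xi^c\partial_{\xi^a},
\end{equation*}
where $[e_b,e_c]=c^a_{bc}e_a$, and then expand $Q^2$.

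\textbf{(1)$\Leftrightarrow$(3).} Functions on $\mathcal{L}^*[1]$ are $\Gamma(M,\mathrm{Sym}(\mathcal{L}[-1]))$, generated by $C^\infty(M)$ and $\Gamma(\mathcal{L})$ placed in degree $1$. A $(-1)$-shifted Poisson bracket has degree $+1$. Being \emph{linear} means that the bracket has weight $-1$ in the fibre grading. Degree and weight counting then forces
\begin{equation*}
\{f,g\}=0,\qquad \{X,f\}=\rho(X)f,\qquad \{X,Y\}=[X,Y]
\end{equation*}
on generators, and the general bracket is obtained by the shifted Leibniz rule. This is the shifted analogue of the Lie--Poisson structure on $\mathcal{L}^*$, i.e. the Schouten-type bracket.

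In this direction:
\begin{itemize}
\item graded skew-symmetry of the bracket gives skew-symmetry of $[\cdot,\cdot]$;
\item the Leibniz rule of the Poisson bracket reproduces the algebroid Leibniz rule;
\item the Jacobi identity on the triples $(X,Y,Z)$ and $(X,Y,f)$ gives, respectively, Jacobi for $[\cdot,\cdot]$ and compatibility of the anchor with the bracket;
\item the remaining triples hold trivially, by weight.
\end{itemize}

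\textbf{Main obstacle.} The delicate part is sign bookkeeping under the degree shifts: checking that the Koszul signs in the $(-1)$-shifted Leibniz rule and in the derived bracket reproduce ordinary antisymmetry rather than symmetry. I would handle this uniformly through Legendre/AKSZ-type duality. Concretely, $T^*[2]\mathcal{L}[1]\cong T^*[2]\mathcal{L}^*[1]$, and under this identification a $Q$ on $\mathcal{L}[1]$ corresponds to its Hamiltonian function, which is fibrewise linear. On the other side, that same function is a linear Poisson bivector on $\mathcal{L}^*[1]$, and in both pictures the vanishing of the self-bracket $\{\Theta,\Theta\}=0$ is the same equation. This yields (2)$\Leftrightarrow$(3) directly, with consistent signs, and so completes the triangle.
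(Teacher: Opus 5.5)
Your argument is essentially the paper's. The paper only cites Vaintrob and, for (1)$\Leftrightarrow$(2), writes $Q=\rho^i_a c^a\partial_{x^i}+\tfrac12 f^c_{ab}c^ac^b\partial_{c^c}$ and reads off that the $c^2$ terms of $Q^2=0$ give compatibility of the anchor with the bracket and the $c^3$ terms give Jacobi. For (3), later in the Lie bialgebroid discussion, it writes the coordinate bracket $\{x^i,c^a\}=\varrho^{ia}$, $\{c^a,c^b\}=g^{ab}_c c^c$ and asserts that its Jacobi identity is equivalent to the algebroid axioms. Your generator-level check (Cartan formula, derived bracket $[[\iota_X,Q],\iota_Y]$, Jacobi on the triples $(X,Y,Z)$ and $(X,Y,f)$) is a coordinate-free version of the same verification. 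Your shortcut for (2)$\Leftrightarrow$(3) through the symplectomorphism $T^*[2]\mathcal{L}[1]\cong T^*[2]\mathcal{L}^*[1]$ is a correct extra that the paper does not use here.

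There is one error to fix: the $(-1)$-shifted bracket on $\mathcal{L}^*[1]$ has degree $-1$, not $+1$.
\begin{itemize}
\item In the paper's convention a $(1-d)$-shifted Poisson algebra is a $P_d$-algebra with bracket of degree $1-d$. Here $d=2$, so the bracket is the Gerstenhaber--Schouten bracket on $\Gamma(\wedge^{\bullet}\mathcal{L})$.
\item Your own formulas force this: $X$, $Y$ and $[X,Y]$ all sit in degree $1$, and $\rho(X)f$ sits in degree $0$.
\item The fibre coordinates of $\mathcal{L}^*[1]$ sit in degree $1$, so on functions the polynomial weight coincides with cohomological degree. A bracket of degree $+1$ and weight $-1$ would therefore vanish identically, and your counting step as written is inconsistent.
\end{itemize}
With degree $-1$ the counting goes through, and linearity is then automatic for a strict bracket. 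This also matches your $T^*[2]$ picture, where a degree-$3$ bivector $\Theta$ gives $\{a,b\}=\pm\{\{\Theta,a\},b\}$ of degree $3-2-2=-1$.
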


By choosing local coordinates $x^i$ on the base $M$ and a local frame $e_a$ for the sections of $\mathcal{L}$, we can introduce odd coordinates $c^a$ as fiber coordinates of $\mathcal{L}[1]$. A general degree $1$ vector field $Q$ on this space takes the form:
\begin{equation}\label{eq:Liealgebroid_def}
	Q = \rho^i_a(x) c^a \frac{\partial}{\partial x^i} + \frac{1}{2} f^c_{ab}(x) c^a c^b \frac{\partial}{\partial c^c}\,.
\end{equation}
One can verify that the condition $Q^2 = 0$ encodes the structural identities of a Lie algebroid. The $c^3$ terms in $Q^2=0$ enforce the Jacobi identity for the bracket $f^c_{ab}$, while the $c^2$ terms ensure that the anchor $\rho$ preserves the bracket.

We denote by $M/\mathcal{L}$ the $Q$-manifold associated with the Lie algebroid structure. Locally, $M/\mathcal{L}$ has coordinates $\{x^i,c^a\}$ and is equipped with the differential \eqref{eq:Liealgebroid_def}. We can equip it with the trivial shifted Poisson structure off any degree, which allows us to define a BF-type theory associated with the Lie algebroid in any dimension. The BV action functional of this theory can be written as
\begin{equation}\label{eq:act_algebroidBF}
	S = \int \boldsymbol{B}_a (d\boldsymbol{A}^a + \frac{1}{2} f^a_{bc}(\boldsymbol{\phi})\boldsymbol{A}^b \wedge \boldsymbol{A}^c) + \boldsymbol{\eta}_i ( d\boldsymbol{\phi}^i + \rho^i_a(\boldsymbol{\phi})\boldsymbol{A}^a )\,.
\end{equation}

Deformations of this BF coupled with matter theory can be obtained by studying non-trivial shifted Poisson structures on $M/\mathcal{L}$. For $1$-shifted/$P_2$ structure, it is classified in \cite{safronov2023shifted} that $1$-shifted Poisson structure on $M/\mathcal{L}$ are equivalent to quasi-Lie bialgebroid structure on $\mathcal{L}$. We briefly explain this for Lie bialgebroid. 

A Lie bialgebroid is a pair of Lie algebroids $(\mathcal{L}, \mathcal{L}^*)$ on dual vector bundles satisfying a compatibility condition. Using the equivalences of Proposition~\ref{prop:Lie_algebroid_eqiv}, we can encode the Lie algebroid structure on $\mathcal{L}$ by the $Q$-manifold $M/\mathcal{L}$, and the Lie algebroid structure on the dual bundle $\mathcal{L}^*$ by a $(-1)$-shifted Poisson structure $\Pi$ on $M/\mathcal{L}$ \cite{roytenberg2002structure}. The compatibility condition for the Lie bialgebroid is equivalently expressed as the compatibility between the differential and the Poisson structure : $[Q,\Pi] = 0$. 

We illustrate this using local coordinates. Let $x^i$ be the same local coordinates on $M$, and let $e^a$ be the dual frame for the sections of $\mathcal{L}^*$. Locally, we denote the anchor of $\mathcal{L}^*$ by $\varrho^{ia}(x)$ and the dual bracket by $g^{ab}_c(x)$. The $(-1)$-shifted Poisson bracket on $M/\mathcal{L}$ associated with the dual Lie algebroid structure on $\mathcal{L}^*$ is then given by:
\begin{equation}
	\begin{aligned}
		\{x^i,x^j\} &= 0\,,\\
		\{x^i,c^a\} &= \varrho^{ia}(x)\,,\\
		\{c^a,c^b\} &= g^{ab}_c(x)c^c\,.
	\end{aligned}
\end{equation}
One can verify that the Jacobi identity for this Poisson bracket is equivalent to the Jacobi identity of the dual bracket $g^{ab}_c$, together with its compatibility with the dual anchor map $\varrho^{ia}(x)$. Furthermore, the compatibility between the differential $Q$ and the Poisson structure $\Pi$ translates precisely to the compatibility condition between the two Lie algebroid structures.

This leads to the following deformation to the action \eqref{eq:act_algebroidBF}:
\begin{equation}
	 \frac{1}{2}g^{ab}_c(\boldsymbol{\phi}) \boldsymbol{A}^c \boldsymbol{B}_a\boldsymbol{B}_b + \boldsymbol{\eta}_i \varrho^{ia}(\boldsymbol{\phi})\boldsymbol{B}_a\,.
\end{equation}

As an AKSZ sigma model, we can describe this $3d$ theory using the mapping space $\mathrm{Map}(X^3_{\derham}, T_{\Pi}^*[2](M/\mathcal{L}))$, where $T_{\Pi}^*[2](M/\mathcal{L})$ is the shifted cotangent bundle of $M/\mathcal{L}$ twisted by the Poisson structure on $M/\mathcal{L}$. In fact, we can identify this shifted symplectic structure on $T_{\Pi}^*[2](M/\mathcal{L})$ with a Courant algebroid structure on $\mathcal{L}\oplus \mathcal{L}^*$. Thus, our $3d$ theory can be viewed as the Courant algebroid sigma model studied in \cite{roytenberg2007aksz}. The splitting of the Courant algebroid as $\mathcal{L} \oplus \mathcal{L}^*$ defines a canonical Dirac structure, which corresponds to the natural transverse boundary conditions. 
 
\section{Boundary algebras, their modules, morphisms, and centers}
 \label{sec:top_bdy}

 In this section, we study boundary algebras of the topological Poisson sigma model, together with the related construction of extended objects, on the half space $\R_{\geq 0}\times \R^d$. Unless otherwise specified, we impose the boundary condition
\begin{equation}
	\boldsymbol{\eta}|_{\{0\}\times \R^d} = 0\,.
\end{equation}
We first show how the derived $P_d$-algebra structure is reproduced from boundary Feynman diagram computations. We then analyze the construction of morphisms and modules for the boundary algebra. In particular, morphisms are realized by merging interfaces with the boundary, while ending various bulk defects on the boundary reproduces the corresponding notions of modules for the shifted Poisson algebra.

\subsection{Boundary secondary bracket}
\label{sec:bdy_sec}
In this section, we first analyze the secondary bracket of the boundary algebra. In particular, we show that the tree-level computation of the boundary algebra can reproduce the original shifted Poisson algebra that we used to construct the theory. 
 
 By choosing the boundary condition $\boldsymbol{\eta} = 0$, the perturbative boundary local operators are generated by the functionals on the fields $\boldsymbol{\phi}^a(x)$, subject to the BRST differential
 \begin{equation}
 	Q\boldsymbol{\phi}^a = d_{\derham} \boldsymbol{\phi}^a + \Pi^{a}(\boldsymbol{\phi})\,.
 \end{equation}
This differential can be obtained by restricting the bulk BRST differential $\{S,-\}_{BV}$ to the boundary. We will also provide a bulk-to-boundary computation in the next section. For now, we first pass to de Rham cohomology, so that the space of boundary local operators can be identified with the polynomial algebra $\C[\phi^a]$. Here we identify the symbol $\phi^a$ with the functional that evaluates the $0$-form component of the field $\boldsymbol{\phi}^a(x)$ at $x=0$.

One piece of the algebraic structure is given by the OPE, as one operator collides with the other. At the classical level, this product structure is identified with the (graded) commutative product on the polynomial algebra $\C[\phi^a]$.
 
The first quantum effect is given by a tree-level diagram, with two boundary operators mediated by a bulk interaction $\frac{1}{2}\Pi^{ij}(\phi)\eta_i\eta_j$. For $d>1$, we can count the form degree of the propagator and find that such a diagram must be further integrated over a boundary sphere $S^{d-1}$, and thus contributes to the secondary bracket. For $d = 1$, no further integration is required, and the bulk integration alone gives rise to the Poisson structure.
\begin{figure}
	\centering
	  	\begin{tikzpicture}[scale=0.8]
  		\def\plane{ (-2,2) -- (2,3.6) -- (2,-2) -- (-2,-3.6) -- cycle};
  		\fill[fill=gray!10] \plane;
  		\fill[fill=gray!10] (2,3.6) -- (5,3.6) -- (5,-2) -- (2,-2) -- cycle;
  		\fill[fill=gray!10] (2,-2) -- (5,-2) -- (1,-3.6) -- (-2,-3.6) -- cycle;
  		\fill[fill=gray!20] (1,-3.6) -- (-2,-3.6) -- (-2,2) -- (1,2) -- cycle;
  		\fill[fill=gray!20] (-2,2) -- (1,2) --  (5,3.6) -- (2,3.6) -- cycle;
  		\draw[thick] \plane;
  		\draw[dashed] (-2,2) -- (1,2);\draw[dashed] (2,3.6) -- (5,3.6);\draw[dashed] (2,-2) -- (5,-2);\draw[dashed] (-2,-3.6) -- (1,-3.6);
  		\node at (2.1,2.8) {$ \R_{\geq 0}\times \R^d$};
  		\filldraw[black] (0,0) circle (1.5pt);
  		\node[above] at (0,0) {$O_1$};
  		\node[left] at (-0.7,1.4) {$ \displaystyle\oint_{{\scriptscriptstyle S^{\scriptscriptstyle d-1}}}\!\!\! O_2$};
  		\draw[dashed] (1.3,0) arc (0:365:1.3 and 1.8);
  		\draw[decorate, decoration={snake, amplitude=0.5mm, segment length=3mm}] (-0.8,1.45) -- (3.2,0) ;
  		\draw[decorate, decoration={snake, amplitude=0.5mm, segment length=3mm}] (3.2,0) -- (0,0);
  		\filldraw[black] (3.2,0) circle (1.5pt);
  		\filldraw[black] (-0.8,1.44) circle (1.5pt);
		\node at (3.8,-0.35) {$\scriptstyle \frac{1}{2}\Pi^{ij}(\phi)\eta_i\eta_j$};
  	\end{tikzpicture}
	\caption{Boundary secondary bracket mediated by a bulk interaction.}
\end{figure}

We choose the standard Euclidean metric on $\R^{d+1}$, which gives rise to the propagator on $\R^{d+1}$ without the boundary:
 \begin{equation}\label{eq:top_propa}
 		P^{\R^{d+1}}(x,y) = \frac{\Gamma(\frac{d+1}{2})}{\pi^{\frac{d+1}{2}}}\frac{1}{|x-y|^{d+1}} \left(\sum_{i = 0}^{d}(-1)^{i}(x_i-y_i)\left(\prod_{j\neq i}\ddr(x_j - y_j)\right)\right)\,.
 \end{equation}
Now we consider the half space defined by $x_0 \geq 0$. We write $x = (x_1,\dots,x_d)$ and $\boldsymbol{x} = (x_0,x)$. The propagator in the presence of a boundary can be obtained by the reflection technique. We denote $P^{\R^{d+1}}_{\partial}(x_0,x,y_0,y) = \langle \boldsymbol{\phi}(\boldsymbol{x})\boldsymbol{\eta}(\boldsymbol{y})\rangle$, after stripping out the indices. Since the $\boldsymbol{\eta}$ field is chosen to vanish on the boundary, the propagator $P^{\R^{d+1}}_{\partial}$ is given by
 \begin{equation}\label{eq:top_prop_withbdy}
 	P^{\R^{d+1}}_{\partial}(x_0,x,y_0,y) = \frac{1}{2}\left(P^{\R^{d+1}}(x_0,x,y_0,y) - P^{\R^{d+1}}(x_0,x,-y_0,y)\right)\,.
 \end{equation}
In our simplest Feynman diagram, one of the points lies on the boundary, i.e.\ $x_0=0$. In this case, the propagator simplifies to
 \begin{equation}
 	P^{\R^{d+1}}_{\partial}(0,x,y_0,y) = \frac{-\Gamma(\frac{d+1}{2})/\pi^{\frac{d+1}{2}}}{(y_0^2 + |x-y|^2)^{\frac{d+1}{2}}}(y_0\ddr^d(x-y) + \ddr y_0\sum_{i=1}^d(-1)^i (x_i-y_i)\prod_{\substack{j\geq1\\j\neq i}}\ddr(x_j-y_j))	\,.
 \end{equation}
In the Feynman diagram, the two $\eta$ fields in the bulk interaction are contracted with the two $\phi$ fields on the boundary. We thus obtain the following formula for the boundary bracket:
 \begin{equation}\label{eq:bdy_bracket_int}
 	\{O_1,O_2\} = \frac{\partial \mathcal{O}_1}{\partial \phi^i}(0)\oint_{S^{d-1}_x}\frac{\partial \mathcal{O}_2}{\partial \phi^j}(x)\int_{\R_{y_0\geq 0}\times \R^d_y}P^{\R^{d+1}}_{\partial}(0,0,y_0,y)P^{\R^{d+1}}_{\partial}(0,x,y_0,y)\Pi^{ij}(\phi)(y_0,y)\,.
 \end{equation}
We can Taylor expand $\Pi^{ij}(\phi)(\boldsymbol{y}) = \Pi^{ij}(\phi)(0) + y_k\partial_{y_k}\Pi^{ij}(\phi)(0) + \dots$, and similarly expand $\frac{\partial \mathcal{O}_2}{\partial \phi^j}(x) = \frac{\partial \mathcal{O}_2}{\partial \phi^j}(0) + \dots$. However, since we have passed to de Rham cohomology for the boundary algebra, spatial derivative of $\phi$—and hence spatial derivative of $\Pi^{ij}(\phi)$ and $\frac{\partial \mathcal{O}_2}{\partial \phi^j}$—can be omitted. Therefore, we only need to compute the following integral:
  \begin{equation}\label{eq:top_integral}
 	\oint_{S^{d-1}_x}\int_{\R_{y_0\geq 0}\times \R^d_y}P^{\R^{d+1}}_{\partial}(0,0,y_0,y)P^{\R^{d+1}}_{\partial}(0,x,y_0,y)\,.
 \end{equation}
The integration over the bulk $\R_{\geq 0}\times \R^d$ can be computed using a more general statement we proved in Theorem \ref{thm:uni_bdy}. It in fact gives us the bulk propagator of a $d$-dimensional topological theory in our normalization \eqref{eq:top_propa}, up to an overall constant $1/2$. We will also provide a separate computation in Appendix \ref{sec:top_Feyn_int}. We have 
 \begin{equation}
  \int_{\R_{\geq0}\times \R^d}P^{\R^{d+1}}_{\partial}(0,x,y_0,y)P^{\R^{d+1}}_{\partial}(0,0,y_0,y) = \frac{1}{2}P^{\R^d}(x;0) = \frac{\Gamma(\frac{d}{2})}{2\pi^{\frac{d}{2}}}\frac{\sum_{i=1}^d(-1)^{i-1}x_i\prod_{j\neq i}\ddr x_j}{|x|^d}\,.
 \end{equation}
The integral of $(\sum_{i=1}^d(-1)^{i-1}x_i\prod_{j\neq i}\ddr x_j)/|x|^{d}$ over the sphere is simply the area of the unit sphere $S^{d-1}$, which is $2\pi^{\frac{d}{2}}/\Gamma(\frac{d}{2})$. We find that the integral \eqref{eq:top_integral} evaluates to $1$, and we therefore obtain the following boundary bracket:
 \begin{equation}
 	\{O_1,O_2\} = \Pi^{ij}(\phi)\frac{\partial \mathcal{O}_1}{\partial \phi^i}\frac{\partial \mathcal{O}_2}{\partial \phi^j}\,.
 \end{equation}
 This is exactly the Poisson bracket we start with to build the theory.

\subsection{Boundary higher brackets}
\label{sec:bdy_higher}
In our construction of the generalized Poisson sigma model, the notion of derived $P_d$-algebra is used, in which $L_\infty$-algebra structure could appear. We have seen that the $2$-ary bracket can be obtained from a tree-level bulk–boundary Feynman diagram. A natural question is whether the higher brackets can also be recovered from the boundary algebra. The goal of this section is to show that this is indeed the case.

Recall that the $2$-ary bracket is obtained from a tree-level bulk–boundary Feynman diagram followed by an integration over the boundary sphere. To reproduce the higher brackets, it is natural to take the tree-level diagram with one bulk and $n$ boundary insertions. The remaining task is to identify the corresponding integration cycle in the boundary configuration space. Abstractly, this question is equivalent to finding a dg-operad morphism:
\begin{equation}\label{eq:Linf_to_Ed}
	\mathcal{L}ie_{\infty}[d-1] \to E_d\,.
\end{equation}
Finding an explicit formula for such a morphism depends on the model we choose for $E_n$. A commonly used model is given by the (chains on the) Fulton–MacPherson compactification of configuration space, denoted $FM_n$. In this model, an explicit construction of the operad morphism can be found in \cite{markarian2017weyl}. The image of the $n$-ary bracket is given by the top-dimensional fundamental stratum of $FM_n$, which is
\begin{equation}
	\mathrm{Conf}_{n}(\R^d)/(\R^d\rtimes\R_{>0})\,.
\end{equation}
Here, the group $\R_{>0}\rtimes\R^d$ acts on configuration space of points by translation and scaling. For a representative of this quotient, we can chose the following contour
\begin{equation}
	\gamma_n = \{(\boldsymbol{x}_1,\dots,\boldsymbol{x}_n)\in (\R^d)^{\times n}\mid \boldsymbol{x}_1 = 0,\|\boldsymbol{x}_2 - \boldsymbol{x}_1\| = 1,\boldsymbol{x}_i \neq \boldsymbol{x}_j\}\,.
\end{equation}
This suggests that boundary higher brackets can be computed using the Feynman diagram in Figure~\ref{fig:Top_bdy_higherbracket}, yielding:
\begin{equation}\label{eq:top_higherbracket}
\{\mathcal{O}_1,\dots,\mathcal{O}_n\} = 	\Pi^{i_1,\dots,i_n}(\phi)\left(\prod_{i}\frac{\partial\mathcal{O}_k}{\partial \phi^{i_k}}\right)\int_{\gamma_n}\int_{\R_{y_0\geq 0}\times \R^d_y}\prod_{i = 1}^nP_{\partial}(0,x_i,y_0,y_i)\,.
\end{equation}
\begin{figure}[h]
	\centering
	 	\begin{tikzpicture}[scale=0.8]
		\def\plane{ (-2,2) -- (5,3.6) -- (5,-2) -- (-2,-3.6) -- cycle};
		\fill[fill=gray!10] \plane;
		
		\fill[fill=gray!10] (5,3.6) -- (8,3.6) -- (8,-2) -- (5,-2) -- cycle;
		\fill[fill=gray!10] (5,-2) -- (8,-2) -- (2,-3.6) -- (-2,-3.6) -- cycle;
		
		\fill[fill=gray!20] (2,-3.6) -- (-2,-3.6) -- (-2,2) -- (2,2) -- cycle;
		\fill[fill=gray!20] (-2,2) -- (2,2) --  (8,3.6) -- (5,3.6) -- cycle;
		
		\draw[thick] \plane;
		
		\draw[dashed] (-2,2) -- (2,2);
		\draw[dashed] (5,3.6) -- (8,3.6);
		\draw[dashed] (5,-2) -- (8,-2);
		\draw[dashed] (-2,-3.6) -- (2,-3.6);
		
		\node at (3.5,2.8) {$ \R_{\geq 0}\times \R^d$};

		\node[left] at (-0.9,1.4) {$ \displaystyle\oint_{\gamma_n}$};
		\draw[dashed] (1.3,0) arc (0:365:1.3 and 1.8);
		
		\draw[decorate, decoration={snake, amplitude=0.5mm, segment length=3mm}] (0,0) -- (6,-1);
		\draw[decorate, decoration={snake, amplitude=0.5mm, segment length=3mm}] (1.25,0.5) -- (6,-1) ;
		\draw[decorate, decoration={snake, amplitude=0.5mm, segment length=3mm}] (2.9,0.25) -- (6,-1) ;
		\draw[decorate, decoration={snake, amplitude=0.5mm, segment length=3mm}] (4.2,0.8) -- (6,-1) ;
		\filldraw[black] (0,0) circle (1.5pt);
		\node[above] at (0,0) {$x_1$};
		\filldraw[black] (6,-1) circle (1.5pt);
		\filldraw[black] (1.25,0.5) circle (1.5pt);
		\node[above] at (1.25,0.5) {$x_2$};
		\node at (2.9,0.5) {$\cdots$};
		\filldraw[black] (4.2,0.8) circle (1.5pt);
		\node[above] at (4.2,0.8) {$x_n$};
		\node at (7.5,-0.5) {$\scriptstyle \frac{1}{n!}\Pi^{i_1\dots i_n}(\phi)\eta_{i_1}\dots\eta_{i_n}$};
	\end{tikzpicture}
	\caption{Tree-level contribution to boundary higher $L_\infty$ brackets}
	\label{fig:Top_bdy_higherbracket}
\end{figure}
In this expression, we have omitted terms with spatial derivatives of operators as in the previous section. To compute the integral, we can first integrate $x_i$ for $i \geq 2$:
\begin{equation}\label{eq:int_prop_bdy}
\begin{aligned}
		\int_{\R^d_{x}}P_{\partial}(0,x,y_0,y) &= \frac{\Gamma(\frac{d+1}{2})}{\pi^{\frac{d+1}{2}}}\frac{2\pi^{\frac{d}{2}}}{\Gamma(\frac{d}{2})}\int_0^\infty dr\frac{y_0r^{d-1}}{(y_0^2 + r^2)^{\frac{d+1}{2}}}\\
	&= \frac{\Gamma(\frac{d+1}{2})}{\pi^{\frac{1}{2}}\Gamma(\frac{d}{2})}\int_0^\infty dt\frac{t^{\frac{d-2}{2}}}{(1 + t)^{\frac{d+1}{2}}}\\
	&= 1\,.
\end{aligned}
\end{equation}
In the last line of the above equation, we used the integral representation of the beta function $B(\alpha_1,\alpha_2) = \int _{0}^{\infty}t^{\alpha_1-1}(1+t)^{-(\alpha_1+ \alpha_2)}dt$, together with the identity $B(\alpha_1,\alpha_2) = \Gamma(\alpha_1)\Gamma(\alpha_2)/\Gamma(\alpha_1+\alpha_2)$.

As a result, the integration in \eqref{eq:top_higherbracket} reduce to the integration of \eqref{eq:top_integral} with only two boundary points, which we have computed to be $1$. Hence we find that the boundary higher bracket is given by
\begin{equation}
	\{\mathcal{O}_1,\dots,\mathcal{O}_n\} = 	\Pi^{i_1,\dots,i_n}(\phi)\left(\prod_{k}\frac{\partial\mathcal{O}_k}{\partial \phi^{i_k}}\right)
\end{equation}
as expected from our construction.

\begin{figure}[h!]
	\centering
	\begin{tikzpicture}[scale=0.8]
		\def\plane{ (-2,2) -- (2,3.6) -- (2,-2) -- (-2,-3.6) -- cycle};
		
		\fill[fill=gray!10] \plane;
		\fill[fill=gray!10] (2,3.6) -- (5,3.6) -- (5,-2) -- (2,-2) -- cycle;
		\fill[fill=gray!10] (2,-2) -- (5,-2) -- (1,-3.6) -- (-2,-3.6) -- cycle;
		\fill[fill=gray!20] (1,-3.6) -- (-2,-3.6) -- (-2,2) -- (1,2) -- cycle;
		\fill[fill=gray!20] (-2,2) -- (1,2) --  (5,3.6) -- (2,3.6) -- cycle;
		
		\draw[thick] \plane;
		
		\draw[dashed] (-2,2) -- (1,2); \draw[dashed] (2,3.6) -- (5,3.6); 
		\draw[dashed] (2,-2) -- (5,-2); \draw[dashed] (-2,-3.6) -- (1,-3.6);
		
		\node at (2.1,2.8) {$ \mathbb{R}_{\geq 0}\times \mathbb{R}^d$};
		\filldraw[black] (0,0) circle (1.5pt);
		\node[above] at (0,0) {$\mathcal{O}$};
		
		\draw[dashed] (1.3,0) arc (0:365:1.3 and 1.8);
		
		
		\begin{scope}
			\clip (0,1.8) arc (90:-90:2.3 and 1.8) -- (0,-1.8) -- cycle;
			\shade[ball color=gray!10, opacity=0.4] (1,0) circle (2.2);
		\end{scope}
		
		\begin{scope}
			\clip (0,1.8) arc (90:270:1.3 and 1.8) -- (0,-1.8) -- cycle;
			\shade[ball color=gray!10, opacity=0.4] (0.9,0) circle (2.2);
		\end{scope}
		
		\draw[thin] (0,1.8) arc (90:-90:2.3 and 1.8);
		
		\draw[very thin, gray!50] (2.3,0) arc (0:-180:1.8 and 0.5);
		\draw[very thin, dashed, gray!50] (1.3,0.4) ..controls (1.8,0.3) and (2.17,0.3) .. (2.2,0);
		\node at (3.1,-0.5) {$\scriptstyle \Pi^{i}(\phi)\eta_{i}$};
		\draw[decorate, decoration={snake, amplitude=0.5mm, segment length=3mm}] (0,0) -- (2.21,-0.6);
		\filldraw[black] (2.21,-0.6) circle (1.5pt);
	\end{tikzpicture}
	\caption{Boundary differential by integration over a hemisphere.}
		\label{fig:hemisphere}
\end{figure}

Finally, we analyze a special case of the boundary bracket not covered above, namely the $1$-ary bracket, which is the differential. This case is not captured by the construction in (\ref{eq:Linf_to_Ed}), which accounts only for operations of arity at least two. In the pure bulk setup, the differential is computed by integrating over spheres surrounding the bulk insertion. In the bulk–boundary setting, the analogous contribution should be obtained by integrating over the hemisphere $HS^{d}$. For the tree-level differential, this amounts to a single boundary insertion connected by a propagator to a bulk interaction vertex, with the bulk vertex integrated over the hemisphere; see Figure~(\ref{fig:hemisphere}).

The computation is similar to the previous cases. The relevant integral is given by integration of the propagator $P_{\partial}(\boldsymbol{0},\boldsymbol{y})$ over the hemisphere. We can identify $P_{\partial}(\boldsymbol{0},\boldsymbol{y}) = \frac{\Gamma(\frac{d+1}{2})}{\pi^{\frac{d+1}{2}}} \mathrm{Vol}_{S^{d}}(\boldsymbol{y})$, where $\mathrm{Vol}_{S^{d}}$ is the volume form of the unit $d$-sphere in $\R^{d+1}$. Therefore, the integral over the hemisphere gives half of the area of the unit sphere, which is $\frac{1}{2} \cdot \frac{2\pi^{\frac{d+1}{2}}}{\Gamma(\frac{d+1}{2})}$, and hence we find that the boundary differential is given by
\begin{equation}
	d\mathcal{O} = \Pi^{i}(\phi)\frac{\partial \mathcal{O}}{\partial \phi^i}
\end{equation}
as expected.

\subsection{Boundary algebra and deformation quantization}
In the previous section, we saw how tree-level boundary Feynman diagrams give rise to a $P_d$ algebra. In this section, we illustrate how full quantum corrections lead to a deformation quantization of the $P_d$ algebra into an $E_d$ algebra.

Recall that the little $d$-cubes operad $\{\mathcal{E}_d(n)\}_{n\geq 0}$ can be described as rectilinear embeddings of copies of $d$-dimensional cubes into another $d$-dimensional cube: $\mathcal{E}_d(n) = \mathrm{Rect}(\amalg_{i = 1}^n(-1,1)^d,(-1,1)^d)$. It is known that $\mathcal{E}_d(n)$ is homotopy equivalent to the configuration space of $n$ points in $\R^d$. Therefore, to model an $E_d$ algebra in chain complexes, we can define an $E_d$ algebra as a differential graded vector space $A$ equipped with operations
\begin{equation}
	C_{\sbullet}(\mathrm{Conf}_{n}(\R^d))\otimes A^{\otimes n} \to A \,.
\end{equation}

The goal of this section is to provide a construction of an $E_d$ algebra $A_\hbar$ over $\C[[\hbar]]$, where the semiclassical limit $A_{0}$ can be identified with the commutative algebra $\C[\phi^i]_{i =1,\dots,N}$ equipped with the shifted Poisson structure $\{\Pi^{i_1 i_2 \dots i_k}\}_{k\geq 1}$. We will show how Feynman diagrams in the Poisson sigma model can provide us with such a construction, but we leave a careful analysis and rigorous proofs to future work.

\begin{definition}
	An admissible graph with $n$ boundary points, $m$ bulk points, and $k$ edges is a graph $\Gamma = (V_{\Gamma},E_{\Gamma})$ that satisfies the following conditions:
	\begin{enumerate}
	\item It contains $n+m$ vertices, labelled by $\{1,\dots,n+m\}$, and $k$ edges, labelled by $\{1,\dots,k\}$. We also give an orientation to each edge, i.e., two functions $s,t: E_\Gamma \to V_{\Gamma}$.
	\item It has no multiple edges between any two vertices and no self-loops on any vertex. 
	\item Any vertex can be connected by a path to a vertex with index $1\leq i \leq n$.
	\item For every edge $e$, $s(e) \in \{n+1,\dots,n+m\}$. 
	\item For every vertex $v$, we denote by $\mathrm{star}(v) = \{e \in E_{\Gamma} : s(e) = v\}$ the set of edges originating from $v$. We also denote $S_v = \#\mathrm{star}(v)$ and label the edges in $\mathrm{star}(v)$ by $e^{v}_{i}$ for $i=1,\dots,S_v$. 
	\end{enumerate}
\end{definition}
We denote by $G_n$ the set of admissible graphs with $n$ boundary points. To each such graph, we associate a differential form on $\mathrm{Conf}_{n,m}(\R^d\times \R_{\geq 0})$ as follows:
\begin{equation}
	\widetilde{\Omega}_{\Gamma} = \prod_{e \in E_\Gamma}P(x_{s(e)},x_{t(e)})\,,
\end{equation}
where $P$ is the propagator \eqref{eq:top_prop_withbdy} in the $(d+1)$-dimensional Poisson sigma model in the presence of a boundary. Note that we have a projection $p: \mathrm{Conf}_{n,m}(\R^d\times \R_{\geq 0}) \to \mathrm{Conf}_{n}(\R^d)$. We define
\begin{equation}
	\Omega_{\Gamma} = p_{*}\widetilde{\Omega}_{\Gamma}
\end{equation}
to be the integration of the form $\widetilde{\Omega}_{\Gamma}$ over all bulk vertices. To each graph $\Gamma$ with parameters $(n,m,k)$, we also associate a map 
\begin{equation}
	\Phi_{\Gamma}: A^{\otimes n} \to A\,.
\end{equation}

This map is constructed as follows. For any collection of elements $f_1, \dots, f_n \in A$, we define
\begin{equation}
	\Phi_\Gamma(f_1\otimes\dots\otimes f_n) = \sum_{I:E_\Gamma \to \{1,\dots,N\}}\Phi_{\Gamma,I}(f_1\otimes\dots\otimes f_n)
\end{equation}
as a sum over all maps $I$ that assign to each edge $e \in E_{\Gamma}$ an index $I(e) \in \{1,\dots,N\}$. Here, $\Phi_{\Gamma,I}(f_1\otimes\dots\otimes f_n)$ is defined by
\begin{equation}
\begin{aligned}
		&\Phi_{\Gamma,I}(f_1\otimes\dots\otimes f_n)  =  \\
		&\hbar^m\prod_{v \in \{1,\dots,n\}}\left(\Big(\prod_{t(e) = v} \frac{\partial}{\partial \phi^{I(e)}}\Big)f_v \right)\times  \prod_{v \in \{n+1,\dots,n+m\}}\left(\Big(\prod_{t(e) = v} \frac{\partial}{\partial \phi^{I(e)}}\Big) \Pi^{I(e^v_{1})I(e^v_{2})\dots I(e^{v}_{S_v})} \right)\,.
\end{aligned}
\end{equation}
Thus, we obtain a map $
	C_{\sbullet}(\mathrm{Conf}_{n}(\R^d))\otimes A^{\otimes n} \to A \,.
$ defined by
\begin{equation}
	(C, f_i)\in C_{\sbullet}(\mathrm{Conf}_{n}(\R^d))\otimes A^{\otimes n} \mapsto \sum_{\Gamma \in G_n}\int_{C}\Omega_{\Gamma}\Phi_{\Gamma}(f_1\otimes\dots\otimes f_n).
\end{equation}
We emphasize that the formula written above is only formal, as the differential form $\Omega_{\Gamma}$ and its integral may diverge and fail to be well-defined. A rigorous treatment of this formula would require either working with the Fulton-MacPherson compactification of $\mathrm{Conf}_{n,m}(\R^d\times \R_{\geq 0})$ or using techniques developed in \cite{budzik2023feynman,wang2024factorization} for Feynman integration.

Finally, we comment on the relationship between our deformation quantization formula and the formality theorem for the $E_{d+1}$ operad, specifically the construction in \cite{kontsevich1999operads}. This relationship is analogous to that between $E_2$ formality and the deformation quantization of Poisson ($P_1$) algebras. From a quantum field theory perspective, the formality theorem for the $E_{d+1}$ operad is equivalent to the statement that a $(d+1)$-dimensional classical field theory (satisfying the classical master equation) is anomaly-free at the quantum level. Hence formality implies the existence of a deformation quantization for the boundary $P_d$ algebra, while an explicit formula for the deformation quantization relies on the analysis of bulk-boundary Feynman diagrams in the Poisson sigma model.

\subsection{Morphisms of boundary algebras}

In this section, we study the interface construction in Section~\ref{sec:inter_from_pois_mor} associated with a Poisson morphism. The goal is to recover the corresponding $L_\infty$ morphism via quantum field theory manipulations. Concretely, we will show that evaluating the morphism can be realized by merging the boundary with the interface, and that composing morphisms can be realized by merging two consecutive interfaces.

We begin with the simplest case, where the $L_\infty$ morphism has only its first component $f_1:A=\C[x^i]\to B=\C[\widetilde{x}^i]$. In this case, $f_1$ is an algebra morphism, determined by a collection of polynomials $\{f_1(x^i)=F^i(\widetilde{x})\}$ in $B$.

We place the theory $\mathcal{T}_A$ on the slab $-L\le t\le 0$ and the theory $\mathcal{T}_B$ on $(-\infty,-L)$. We put boundary conditions $\boldsymbol{\eta}=0$ for $\mathcal{T}_A$ at $t=0$ and the interface defined by $f_1$ at $t=-L$. Recall that on the interface $t=-L$ we impose the boundary condition $\boldsymbol{\phi}=0$ for $\mathcal{T}_A$ and the boundary condition $\widetilde{\boldsymbol{\eta}}=0$ for $\mathcal{T}_B$. The interface coupling is given by
\begin{equation}
\exp\left(\int_{\R^d}F^i(\widetilde{\boldsymbol{\phi}})\boldsymbol{\eta}_i\right)\,.
\end{equation}

For $\mathcal{T}_A$ on $-L\le t\le 0$, the boundary conditions $\boldsymbol{\phi}=0$ and $\boldsymbol{\eta}=0$ at the two ends are transverse. Hence, upon merging the interface with the boundary (i.e.\ taking $L\to 0$), the $\mathcal{T}_A$ degrees of freedom become trivial, leaving the theory $\mathcal{T}_B$ with boundary condition $\widetilde{\boldsymbol{\eta}}=0$ at $t=0$. Therefore, we expect that boundary operators of $\mathcal{T}_A$ inserted before the merge are transported to boundary operators of $\mathcal{T}_B$ after the merge. This defines a morphism $A\to B$, and we will show that this QFT construction leads to the same morphism $f_1$ that we used to define the interface.

\begin{figure}[h]
\centering
\begin{tikzpicture}[scale=0.65]
    \def\planeA{ (-2,2) -- (2,3.6) -- (2,-2) -- (-2,-3.6) -- cycle};
    \fill[fill=gray!10,opacity=0.8] \planeA;
    \draw[thick] \planeA;
    \draw[dashed] (-5,2) -- (1,2);
    \draw[dashed] (-1,3.6) -- (5,3.6);
    \draw[dashed] (-1,-2) -- (5,-2);
    \draw[dashed] (-5,-3.6) -- (1,-3.6);
	\draw[decorate, decoration={snake, amplitude=0.5mm, segment length=3mm}] (-1.5,-1.4) -- (3,-0.2);
	\draw[decorate, decoration={snake, amplitude=0.5mm, segment length=3mm}] (-1.3,-0.2) -- (3,-0.2);
	\draw[decorate, decoration={snake, amplitude=0.5mm, segment length=3mm}] (-0.5,1.2) -- (3,-0.2);
	\draw[decorate, decoration={snake, amplitude=0.5mm, segment length=3mm}] (0.8,1.5) -- (3,-0.2);
	\node at (-1.5,-1.8) {$\scriptstyle F^i(\widetilde{\phi})\eta_i$};
	\node at (-1.3,-0.7) {$\scriptstyle F^i(\widetilde{\phi})\eta_i$};
	\node at (-0.8,0.8) {$\scriptstyle F^i(\widetilde{\phi})\eta_i$};
	\node at (0.7,2) {$\scriptstyle F^i(\widetilde{\phi})\eta_i$};
	\node at (3.2,2.4) {$\scriptstyle\mathcal{T}_A$};
	\fill[fill=gray!10,opacity=0.8,shift={(3,0)}] \planeA;
	\node at (3.8,0.3) {$\scriptstyle \mathcal{O}(\phi)$};
	\draw[dashed, shift={(-3,0)}] \planeA;
	\draw[thick,shift={(3,0)}] \planeA;
	\filldraw[black] (3,-0.2) circle (1.5pt);
		\filldraw[black] (-1.5,-1.4) circle (1pt);
	\filldraw[black] (-1.3,-0.2) circle (1pt);
	\filldraw[black] (-0.5,1.2) circle (1pt);
	\filldraw[black] (0.8,1.5) circle (1pt);
	\node at (-1.3,-2.5) {$\cdots$};
    \node at (2,4) {$ \mathbb{R}\times M$};
	\node at (-3,2.4) {$\scriptstyle\mathcal{T}_B$};
	\node at (6.3,0) {$\Rightarrow$};

\begin{scope}[xshift=12cm]
    \def\planeA{ (-2,2) -- (2,3.6) -- (2,-2) -- (-2,-3.6) -- cycle};
	\fill[fill=gray!10,shift={(3,0)}] \planeA;
	\draw[thick,shift={(3,0)}] \planeA;
    \draw[dashed] (-3.2,2) -- (1,2);
    \draw[dashed] (0.8,3.6) -- (5,3.6);
    \draw[dashed] (0.8,-2) -- (5,-2);
    \draw[dashed] (-3.2,-3.6) -- (1,-3.6);
	\filldraw[black] (3,-0.2) circle (1pt);
	\node at (3.8,0.3) {$\scriptstyle \mathcal{O}(F(\widetilde{\phi}))$};
	\draw[dashed, shift={(-1.2,0)}] \planeA;
    \node at (2,4) {$ \mathbb{R}\times M$};
	\node at (-1,2.4) {$\scriptstyle \mathcal{T}_B $}; 
\end{scope}
\end{tikzpicture}
\caption{Merging the boundary with the interface, with the insertion of a boundary operator.}
\label{fig:int_bdy_merge}
\end{figure}

We consider the interaction between the right boundary operators at $t = 0$ and the interface coupling at $t = -L$. Since the theory $\mathcal{T}_A$ now has two-sided boundaries, we need to apply the reflection technique very carefully to get the propagators. However, the trick is that for a tree-level analysis, we only need to consider propagators with endpoints on the left and right boundaries. In this case, no reflection is needed so the propagator should stay the same as the standard propagator on $\mathbb{R}^{d+1}$. 

As a first example, we consider the Feynman diagram in Figure \ref{fig:int_bdy_merge}. It gives us the following operator after merging:
\begin{equation}
	\sum_{n\geq 0}\frac{1}{n!}F^{i_1}(\widetilde{\phi})\dots F^{i_n}(\widetilde{\phi})\frac{\partial^n \mathcal{O}(\phi)}{\partial \phi^{i_1} \dots \partial \phi^{i_n}}\Big|_{\phi=0} \left(\prod_{i=1}^n\int_{\mathbb{R}^d} P(-L,x_i;0,0)\right)= \mathcal{O}(F(\widetilde{\phi}))\,.
\end{equation}
This is exactly the expression after we apply the algebra morphism $f_1$ to the operator $\mathcal{O}(\phi)$.

Next we consider the case where the $L_\infty$ morphism has first two non-vanishing components $f_1:A \to B$ and $f_2: A^{\otimes 2} \to B$. In this case, the interface coupling is given by
\begin{equation}
\exp\left(\int_{\mathbb{R}^d}F^i(\widetilde{\boldsymbol{\phi}})\boldsymbol{\eta}_i + \frac{1}{2}F^{ij}(\widetilde{\boldsymbol{\phi}})\boldsymbol{\eta}_i\boldsymbol{\eta}_j\right)\,.
\end{equation}
Since the map $f_2$ is of degree $d$, it is natural to consider a boundary configuration of two boundary operators, where one is fixed and the other is integrated over the entire $d$-dimensional boundary. We then study the interaction between such boundary configuration and the interface coupling. A typical tree-level Feynman diagram is depicted in Figure~\ref{fig:int_bdy2_merge}. Note that the interaction term $F^{ij}(\widetilde{\boldsymbol{\phi}})\boldsymbol{\eta}_i\boldsymbol{\eta}_j$ now connects two boundary operators. However, by counting the degrees of the propagators, we find that at most one $F^{ij}(\widetilde{\boldsymbol{\phi}})\boldsymbol{\eta}_i\boldsymbol{\eta}_j$ term can interact with the boundary, which is unlike the $F^i(\widetilde{\boldsymbol{\phi}})\boldsymbol{\eta}_i$ interaction where the number is unconstrained.

\begin{figure}[h]
\centering
\begin{tikzpicture}[scale=0.65]
    \def\planeA{ (-2,2) -- (2,3.6) -- (2,-2) -- (-2,-3.6) -- cycle};
    \fill[fill=gray!10,opacity=0.8] \planeA;
    \draw[thick] \planeA;
    \draw[dashed] (-5,2) -- (1,2);
    \draw[dashed] (-1,3.6) -- (5,3.6);
    \draw[dashed] (-1,-2) -- (5,-2);
    \draw[dashed] (-5,-3.6) -- (1,-3.6);
	\draw[decorate, decoration={snake, amplitude=0.5mm, segment length=3mm}] (0.8,1.5) -- (2.9,1);
	\draw[decorate, decoration={snake, amplitude=0.5mm, segment length=3mm}] (0.1,0.8) -- (2.9,1);
	\draw[decorate, decoration={snake, amplitude=0.5mm, segment length=3mm}] (0.1,0.8) -- (2.1,-1.4);
	\draw[decorate, decoration={snake, amplitude=0.5mm, segment length=3mm}] (-0.8,-0.2) -- (2.1,-1.4);
	\draw[decorate, decoration={snake, amplitude=0.5mm, segment length=3mm}] (-1.5,-1.4) -- (2.1,-1.4);
	\node at (-1.5,-1.8) {$\scriptstyle F^i(\widetilde{\phi})\eta_i$};
	\node at (-1.3,-0.7) {$\scriptstyle F^i(\widetilde{\phi})\eta_i$};
	\node at (-1,0.8) {$\scriptstyle F^{ij}(\widetilde{\phi})\eta_i\eta_j$};
	\node at (0.7,2) {$\scriptstyle F^i(\widetilde{\phi})\eta_i$};
	\node at (3.2,2.4) {$\scriptstyle\mathcal{T}_A$};
	\fill[fill=gray!10,opacity=0.8,shift={(3,0)}] \planeA;
	\draw[dashed, shift={(-3,0)}] \planeA;
	\draw[thick,shift={(3,0)}] \planeA;
	\filldraw[black] (2.9,1) circle (1pt);
	\filldraw[black] (2.1,-1.4) circle (1pt);
	\node at (4,1) {$  \scriptstyle {\displaystyle\int_{\R^d}}\mathcal{O}_2(\phi)$};
	\node at (2.9,-1.4) {$\scriptstyle \mathcal{O}_1(\phi)$};
	\filldraw[black] (-1.5,-1.4) circle (1pt);
	\filldraw[black] (-0.8,-0.2) circle (1pt);
	\filldraw[black] (0.1,0.8) circle (1pt);
	\filldraw[black] (0.8,1.5) circle (1pt);
	\node at (-1.3,-2.5) {$\cdots$};
    \node at (2,4) {$ \mathbb{R}\times M$};
	\node at (-3,2.4) {$\scriptstyle\mathcal{T}_B$};
	\node at (6.3,0) {$\Rightarrow$};

\begin{scope}[xshift=12cm]
    \def\planeA{ (-2,2) -- (2,3.6) -- (2,-2) -- (-2,-3.6) -- cycle};
	\fill[fill=gray!10,shift={(3,0)}] \planeA;
	\draw[thick,shift={(3,0)}] \planeA;
    \draw[dashed] (-3.2,2) -- (1,2);
    \draw[dashed] (0.8,3.6) -- (5,3.6);
    \draw[dashed] (0.8,-2) -- (5,-2);
    \draw[dashed] (-3.2,-3.6) -- (1,-3.6);
	\filldraw[black] (3,-0.2) circle (1pt);
	\node at (4,0.3) {$\scriptstyle F^{ij}(\widetilde{\phi})\partial_i\mathcal{O}_1(F(\widetilde{\phi}))\partial_j\mathcal{O}_2(F(\widetilde{\phi}))$};
	\draw[dashed, shift={(-1.2,0)}] \planeA;
    \node at (2,4) {$ \mathbb{R}\times M$};
	\node at (-1,2.4) {$\scriptstyle \mathcal{T}_B $}; 
\end{scope}
\end{tikzpicture}
\caption{Merging the boundary with the interface, with the insertion of two boundary operators.}
\label{fig:int_bdy2_merge}
\end{figure}
Evaluation of the Feynman diagrams is straightforward. Summing over the contributions, we obtain the following formula
\begin{equation}
\begin{aligned}
&\sum_{n,m\geq 0}\frac{1}{n!}F^{ij}(\widetilde{\phi})F^{i_1}(\widetilde{\phi})\dots F^{i_n}(\widetilde{\phi})\frac{\partial^{n+1} \mathcal{O}_1(\phi)}{\partial \phi^i \partial \phi^{i_1} \dots \partial \phi^{i_n}}\Big|_{\phi=0} F^{j_1}(\widetilde{\phi})\dots F^{j_m}(\widetilde{\phi})\frac{\partial^{m+1} \mathcal{O}_2(\phi)}{\partial \phi^j \partial \phi^{j_1} \dots \partial \phi^{j_m}}\Big|_{\phi=0} \\
&=  F^{ij}(\widetilde{\phi})\frac{\partial \mathcal{O}_1}{\partial \phi^i}\Big|_{\phi = F(\widetilde{\phi})}\frac{\partial \mathcal{O}_2}{\partial \phi^j}\Big|_{\phi = F(\widetilde{\phi})}\,.
\end{aligned}
\end{equation} 
This expression matches that of evaluating $f_2$ on $(\mathcal{O}_1,\mathcal{O}_2)$, as $f_2$ is a $f_1$ -biderivation \eqref{eq:mor_bider}.

Finally, we consider merging two interfaces. We start with three $P_d$-algebras $A,B,C$ and two ($L_\infty$) morphisms $f_{\bullet}:A^{\otimes \bullet}\to B$ and $g_{\bullet}:B^{\otimes \bullet}\to C$. The QFT setup is given by three Poisson sigma models $\mathcal{T}_A, \mathcal{T}_B, \mathcal{T}_C$ separated by two interfaces defined by $f$ and $g$ respectively. 

For simplicity, we assume that both $f$ and $g$ only have their first non-vanishing components, i.e., two algebra morphisms $f_1:A\to B$ and $g_1:B\to C$. We denote the fields of the three theories $\mathcal{T}_A, \mathcal{T}_B, \mathcal{T}_C$ by $(\boldsymbol{\phi}^i,\boldsymbol{\eta}_i)$, $(\widetilde{\boldsymbol{\phi}}^i,\widetilde{\boldsymbol{\eta}}_i)$ and $(\widetilde{\widetilde{\boldsymbol{\phi}}}_i,\widetilde{\widetilde{\boldsymbol{\eta}}}_i)$ respectively. We write the defect interactions that correspond to $f$ and $g$ as 
\begin{equation}
\exp\left(\int_{\mathbb{R}^d} F^i(\widetilde{\boldsymbol{\phi}})\boldsymbol{\eta}_i \right), \quad \exp\left(\int_{\mathbb{R}^d} G^j(\widetilde{\widetilde{\boldsymbol{\phi}}})\widetilde{\boldsymbol{\eta}}_j \right)\,.		
\end{equation}

The tree-level contraction between the two defect interactions is similar to what happens in the case of merging a boundary with an interface. Each $\widetilde{\boldsymbol{\phi}}$ field in the first interface is contracted with the $\widetilde{\boldsymbol{\eta}}$ fields of the second interface, this is illustrated in Figure \ref{fig:int_merge}.

\begin{figure}[h]
\centering
\begin{tikzpicture}[scale=0.65]
    \def\planeA{ (-2,2) -- (2,3.6) -- (2,-2) -- (-2,-3.6) -- cycle};
	\draw[dashed] (-5,-3.6) -- (1,-3.6);
	\draw[dashed] (-1,-2) -- (5,-2);
	\draw[dashed, shift={(-3,0)}] \planeA;
	\fill[fill=gray!10,opacity=0.8,shift={(-1.2,0)}] \planeA;
    \draw[thick,shift={(-1.2,0)}] \planeA;  
	\filldraw[black] (-1.5,0.8) circle (1pt);
	\filldraw[black] (-0.2,1.5) circle (1pt);
	\filldraw[black] (-2.5,-1.4) circle (1pt);
	\filldraw[black] (-2.3,-0.2) circle (1pt);
	\filldraw[black] (-1.5,-1.8) circle (1pt);
	\draw[decorate, decoration={snake, amplitude=0.5mm, segment length=3mm}] (-0.2,1.5)  -- (0.7,1);
	\draw[decorate, decoration={snake, amplitude=0.5mm, segment length=3mm}] (-1.5,0.8)  -- (0.7,1);
	\draw[decorate, decoration={snake, amplitude=0.5mm, segment length=3mm}] (-2.5,-1.4)  -- (-0.1,-1.4);
	\draw[decorate, decoration={snake, amplitude=0.5mm, segment length=3mm}] (-2.3,-0.2)  -- (-0.1,-1.4);
	\draw[decorate, decoration={snake, amplitude=0.5mm, segment length=3mm}] (-1.5,-1.8)  -- (-0.1,-1.4);
	\node at (0,1.9) {$\scriptstyle G^j(\widetilde{\widetilde{\phi}})\widetilde{\eta}_j$};
	\node at (-2.3,1.4) {$\scriptstyle G^j(\widetilde{\widetilde{\phi}})\widetilde{\eta}_j$};
	\node at (-3.2,0.2) {$\scriptstyle G^j(\widetilde{\widetilde{\phi}})\widetilde{\eta}_j$};
	\node at (-2.5,-2) {$\scriptstyle G^j(\widetilde{\widetilde{\phi}})\widetilde{\eta}_j$};
	\fill[fill=gray!10,opacity=0.8,shift={(1.2,0)}] \planeA;
    \draw[thick,shift={(1.2,0)}] \planeA;
	\filldraw[black] (0.7,1) circle (1pt);
	\filldraw[black] (-0.1,-1.4) circle (1pt);
	\draw[dashed] (-5,2) -- (1,2);
    \draw[dashed] (-1,3.6) -- (5,3.6);
	\node at (1.8,1) {$\scriptstyle F^i(\widetilde{\phi})\eta_i$};
	\node at (0.9,-1.4) {$\scriptstyle F^i(\widetilde{\phi})\eta_i$};	
	\draw[dashed, shift={(3,0)}] \planeA;
    \node at (2,4) {$ \mathbb{R}\times M$};
	\node at (-3,2.4) {$\scriptstyle\mathcal{T}_C$};
	\node at (-0.6,2.4) {$\scriptstyle\mathcal{T}_B$};
	\node at (3.6,2.4) {$\scriptstyle\mathcal{T}_A$};
	\node at (6.3,0) {$\Rightarrow$};

\begin{scope}[xshift=12cm]
    \def\planeA{ (-2,2) -- (2,3.6) -- (2,-2) -- (-2,-3.6) -- cycle};
    \fill[fill=gray!10] \planeA;
    \draw[thick] \planeA;
    \draw[dashed] (-5,2) -- (1,2);
    \draw[dashed] (-1,3.6) -- (5,3.6);
    \draw[dashed] (-1,-2) -- (5,-2);
    \draw[dashed] (-5,-3.6) -- (1,-3.6);
	\filldraw[black] (-0.1,1) circle (1pt);
	\filldraw[black] (-0.9,-1.4) circle (1pt);
	\node at (1.2,1) {$\scriptstyle F^i(G(\widetilde{\widetilde{\phi}}))\eta_i$};
	\node at (0.4,-1.4) {$\scriptstyle F^i(G(\widetilde{\widetilde{\phi}}))\eta_i$};	
	\draw[dashed, shift={(-3,0)}] \planeA;
	\draw[dashed, shift={(3,0)}] \planeA;
    \node at (2,4) {$ \mathbb{R}\times M$};
	\node at (-3,2.4) {$\scriptstyle \mathcal{T}_C$};
	\node at (3.6,2.4) {$\scriptstyle \mathcal{T}_A$};
\end{scope}
\end{tikzpicture}
\caption{Merging of two interfaces}
\label{fig:int_merge}
\end{figure}

We see that each $F^i(\widetilde{\boldsymbol{\phi}})\boldsymbol{\eta}_i$ is contracted with multiple $G^j(\widetilde{\widetilde{\boldsymbol{\phi}}})\widetilde{\boldsymbol{\eta}}_j$ terms, effectively replacing $\widetilde{\boldsymbol{\phi}}$ with $G(\widetilde{\widetilde{\boldsymbol{\phi}}})$. Hence after merging the two interfaces, we obtain a single interface between $\mathcal{T}_A$ and $\mathcal{T}_C$ with coupling
\begin{equation}
\exp\left( \int_{\mathbb{R}^d} F^i(G(\widetilde{\widetilde{\boldsymbol{\phi}}}))\boldsymbol{\eta}_i \right)\,.
\end{equation}
This QFT construction matches the composition $g\circ f$ of the two algebra morphisms $f$ and $g$. It is tedious to analyze the case with higher $L_\infty$ components, but the Feynman diagram formula should essentially lead to an expression of the composition of the two $L_\infty$ morphisms. 

\subsection{Boundary modules from bulk defects}
In Sections~\ref{sec:pmod_line},\ref{sec:enriched_bdy_coiso} and~\ref{sec:ex_defect_modules}, we introduced several constructions of defects of various dimensions in our generalized Poisson sigma model; these constructions are based on the various notion of Poisson modules in different codimensions. In this section, we consider ending these defects on the boundary that we have discussed. The goal is to show that these QFT construction can reproduce the structure of Poisson modules by considering the mixed operation of boundary operators with the boundary defect operators in an appropriate way. 

First, we consider the example of a line defect that ends at a point on the boundary. For a pointed module, it is natural to expect the corresponding operation to be constructed in the same way as the boundary secondary bracket, i.e.\ by integrating the boundary operators over the linking sphere of the defect endpoint. However, there are some important distinctions: the tree-level diagram is now mediated by the defect interaction on the line. Moreover, the line defect we consider is Wilson type -- it has no dynamical fields, and thus no propagator of its own\footnote{One could also consider a line defect defined by coupling to a separate $1d$ topological theory. However, this does not change the computation here, as the propagator of a $1d$ topological theory is simply the sign function. Besides, one can integrate out the $1d$ theory, which would yield a line defect given by a path-ordered exponential.}. Consequently, for the action of one boundary operator on the module, there is only a single bulk propagator that connects the boundary operator to the line. More generally, we expect the full $L_\infty$ module action to be constructed by connecting the defect interaction with each boundary operator via a propagator and integrating over the contour $\gamma_n$, as in Section~\ref{sec:bdy_higher}, with the defect endpoint fixed. The corresponding Feynman diagram is depicted in Figure~\ref{fig:bdy_pmodule}.

\begin{figure}[h!]
	\centering
	  	\begin{tikzpicture}[scale=0.8]
  		\def\plane{ (-2,2) -- (2,3.6) -- (2,-2) -- (-2,-3.6) -- cycle};
  		\fill[fill=gray!10] \plane;
  		\fill[fill=gray!10] (2,3.6) -- (5,3.6) -- (5,-2) -- (2,-2) -- cycle;
  		\fill[fill=gray!10] (2,-2) -- (5,-2) -- (1,-3.6) -- (-2,-3.6) -- cycle;
  		\fill[fill=gray!20] (1,-3.6) -- (-2,-3.6) -- (-2,2) -- (1,2) -- cycle;
  		\fill[fill=gray!20] (-2,2) -- (1,2) --  (5,3.6) -- (2,3.6) -- cycle;
  		\draw[thick] \plane;
  		\draw[dashed] (-2,2) -- (1,2);\draw[dashed] (2,3.6) -- (5,3.6);\draw[dashed] (2,-2) -- (5,-2);\draw[dashed] (-2,-3.6) -- (1,-3.6);
  		\node at (3,4) {$ \R_{\geq 0}\times \R^d$};
  		\filldraw[black] (0,0) circle (1.5pt);
  		\node[above] at (0,0) {$M_0$};
  		\node[left] at (-0.7,1.4) {$ \displaystyle\oint_{{\scriptscriptstyle \gamma_{n}}}$};
  		\draw[dashed] (0.9,0) arc (0:365:0.9 and 1.4);
  		\draw[decorate, decoration={snake, amplitude=0.5mm, segment length=3mm}] (0.7,0.9) -- (2.8,0) ;	
		\draw[decorate, decoration={snake, amplitude=0.5mm, segment length=3mm}] (1.2,1.4) -- (2.8,0);
  		\filldraw[black] (2.8,0) circle (1.5pt);
  		\filldraw[black] (0.7,0.9) circle (1.5pt);
		\filldraw[black] (1.2,1.4) circle (1.5pt);
		\node at (1.5,1.8) {$\cdots$};
		\draw (0,0) -- (4,0);
		\node[below] at (4,0) {$\scriptstyle \frac{1}{(n-1)!} \Omega^{i_1 \dots i_{n-1}}\eta_{i_1}\dots\eta_{i_{n-1}}$};
  	\end{tikzpicture}
	\caption{Pointed module defined by a line defect ending on the boundary.}
	\label{fig:bdy_pmodule}
\end{figure}

From this Feynman diagram, we can compute the $L_\infty$ module operation as:
\begin{equation}
	\{\mathcal{O}_1,\dots,\mathcal{O}_{n-1},m\} = \Omega^{i_1\dots i_{n-1}}\frac{\partial \mathcal{O}_1}{\partial \phi^{i_1}}\cdots\frac{\partial \mathcal{O}_{n-1}}{\partial \phi^{i_{n-1}}} \int_{(0,x_1,\dots,x_{n-1})\in \gamma_n}\int_{t \in [0,\infty)}\prod_{i =1}^{n-1} P(0,x_i;t,0)\,.
\end{equation}
This integral can be evaluated using the same method as in Section~\ref{sec:bdy_higher}: we first integrate over $x_i$ for $i \geq 2$. By applying \eqref{eq:int_prop_bdy}, we can reduce the remaining integration to
\begin{equation}
	\int_{|x_1|  = 1}\int_{t \in [0,\infty)} P(0,x_1;t,0) = 1\,.
\end{equation}
We find that 
\begin{equation}
	\{\mathcal{O}_1,\dots,\mathcal{O}_{n-1},-\} = \Omega^{i_1\dots i_{n-1}}\frac{\partial \mathcal{O}_1}{\partial \phi^{i_1}}\cdots\frac{\partial \mathcal{O}_{n-1}}{\partial \phi^{i_{n-1}}}\,.
\end{equation}
This is exactly the expression for the $L_\infty$ module operations defined by the structure constants $\Omega^{i_1\dots i_{n-1}}$ of the Poisson module.

Next we consider the example of a coisotropic module. In Section~\ref{sec:enriched_bdy_coiso} we provide a construction of enriched boundary condition from coisotropic module structure $\widetilde{\varphi}: A \to \mathcal{Z}(C)$. Here, we consider the intersection of the enriched boundary with the canonical boundary condition $\{\boldsymbol{\eta} = 0\}$ at a corner. Recall that the enriched boundary condition is defined by coupling the boundary condition $\{\boldsymbol{\phi} = 0\}$ with the Poisson sigma model $\mathcal{T}_C$ associated with $C$. Let the fields of $\mathcal{T}_C$ be $(\boldsymbol{\theta},\boldsymbol{\chi})$, then the coupling is given by
\begin{equation}
\int_{\R^d} \sum_{n\geq 0}\frac{1}{n!}\Phi^{i,k_1k_2\dots k_{n}}(\boldsymbol{\theta})\boldsymbol{\eta}_i \boldsymbol{\chi}_{k_1}\dots \boldsymbol{\chi}_{k_n}\,.
\end{equation}
We choose the boundary condition for $\mathcal{T}_C$ to be $\boldsymbol{\chi} = 0$. So the corner algebra is generated by the zero form component of the fields $\boldsymbol{\theta}$.  We see that the boundary and corner algebras separately give rise to the pair $(A,C)$ of $P_d$ and $P_{d-1}$ algebra. We focus on how the mixed operation at the tree-level give rise to the coisotropic structures.

\begin{figure}[h!]
	\centering
	  	\begin{tikzpicture}[scale=0.8]
  		\def\plane{ (-2,2) -- (2,3.6) -- (2,-2) -- (-2,-3.6) -- cycle};
  		\fill[fill=gray!10] \plane;
  		\fill[fill=gray!10] (2,3.6) -- (7,3.6) -- (7,-2) -- (2,-2) -- cycle;
  		\fill[fill=gray!10] (2,-2) -- (7,-2) -- (-2,-3.6) -- cycle;
  		\fill[fill=gray!20] (-2,2)  --  (7,3.6) -- (2,3.6) -- cycle;
  		\draw[thick] \plane;
		\draw[thick] (2,3.6) -- (7,3.6) -- (7,-2) -- (2,-2) -- cycle;
		\draw[ultra thick] (2,3.6) -- (2,-2);

  		\node at (2.2,4) {$ \R^{d-1}$};
		\node at (0,3) {$ \{0\}\times\R_{\geq 0}\times\R^{d-1}$};
		\node at (5,4) {$\R_{\geq 0} \times\{0\}\times\R^{d-1}$};
		\filldraw[black] (0.5,-1) circle (1.5pt);
		\filldraw[black] (4,1) circle (1.5pt);
		\draw[decorate, decoration={snake, amplitude=0.5mm, segment length=3mm}] (0.5,-1) -- (4,1) ;
		\draw[decorate, decoration={snake, amplitude=0.5mm, segment length=1.8mm}] (2,1.8) -- (4,1) ;
		\draw[decorate, decoration={snake, amplitude=0.5mm, segment length=1.8mm}] (2,1) -- (4,1) ;
		\draw[decorate, decoration={snake, amplitude=0.5mm, segment length=1.8mm}] (2,-1) -- (4,1) ;
		\filldraw[black] (2,1.8) circle (1.5pt);
		\filldraw[black] (2,1) circle (1.5pt);
		\filldraw[black] (2,-1) circle (1.5pt);
		\node[right] at (4,1) {$\scriptstyle \frac{1}{n!}\Phi^{i,k_1\dots,k_n}(\theta)\eta_{i}\chi_{k_1}\dots \chi_{k_n}$};
		\node at (6,3) {$\scriptstyle \mathcal{T}_C$};
  	\end{tikzpicture}
	\caption{Coisotropic module defined by intersecting the boundary with an enriched boundary.}
	\label{fig:bdy_coisotropicmodule}
\end{figure}

We consider the tree-level Feynman diagram depicted in Figure~\ref{fig:bdy_coisotropicmodule}. This diagram connects one boundary operator to several corner operators through the coupling on the enriched boundary. There are two types of propagators. The first is $P^{(d+1)}$, the propagator for the $(d+1)$-dimensional theory, which connects the boundary operator to the enriched boundary coupling. The second is $P^{(d)}$, the propagator for the $d$-dimensional theory, which connects the corner operators to the enriched boundary coupling. Counting the form degrees of these propagators shows that the boundary operator is kept at a fixed position, the enriched boundary coupling is integrated over the entire $d$-dimensional half-boundary, and the corner operators are integrated over the whole $(d-1)$-dimensional corner. 
\begin{equation}
	\int_{\boldsymbol{y} \in \R_{\geq 0}\times\{0\}\times\R^{d-1}} P^{\R^{d+1}}(\boldsymbol{x},\boldsymbol{y})\prod_{i = 1}^n\int_{u_i \in \R^{d-1}}P^{\R^{d}}(\tilde{\boldsymbol{y}},u_i)\,.
\end{equation}
Here $\boldsymbol{x} = (0,x_1,\dots,x_d)$, $\boldsymbol{y} = (y_1,0,y_2,\dots,y_{d})$, and we denote $\tilde{\boldsymbol{y}} = (y_1,\dots,y_{d})$.

This integral is easy to compute, and we find that the Feynman diagram leads to the following mixed boundary corner operations:
\begin{equation}
    \phi^i\otimes \mathcal{O}_1(\theta)\otimes \dots \otimes \mathcal{O}_n(\theta)  \mapsto \sum_{k_1, \dots k_n}  \Phi^{i,k_1\dots k_n}(\theta)\frac{\partial \mathcal{O}_1}{\partial \theta^{k_1}}\cdots\frac{\partial \mathcal{O}_n}{\partial \theta^{k_n}} \,.
\end{equation}

Finally, we consider the extended defect construction of other codimensions discussed in Section~\ref{sec:ex_defect_modules}. Recall from Section~\ref{sec:ex_defect_modules} that such a system is defined by a $\mathrm{cd}_{d'd}$ algebra $(A^{(0)}\otimes B,A,c,\delta)$. It consists of a $(d+1)$ dimensional theory with fields $(\boldsymbol{\phi}^i,\boldsymbol{\eta}_i)$ and a $(d'+1)$ dimensional defect theory with fields $(\boldsymbol{\theta}^i,\boldsymbol{\chi}_i)$, together with the coupling \eqref{eq:ex_defect_coup} between the two theories. The defect coupling is specified by the operation $\delta:A\to B[1-d + d']$ associated with the fundamental class of the normal linking sphere $S^{d-d'-1}$ of the defect. We apply this picture to the QFT construction as follows. First, we end the $(d'+1)$-dimensional bulk defect on the boundary. Next, we connect boundary local operators with the defect interaction via a bulk propagator, integrating the boundary operators over the normal linking sphere of the boundary defect and the defect interaction over the entire bulk defect locus. The corresponding Feynman diagram is depicted in Figure~\ref{fig:bdy_exmodule}.

\begin{figure}[h!]
	\centering
	  	\begin{tikzpicture}[scale=0.8]
  		\def\plane{ (-2,2) -- (2,3.6) -- (2,-2) -- (-2,-3.6) -- cycle};
  		\fill[fill=gray!10] \plane;
  		\fill[fill=gray!10] (2,3.6) -- (5,3.6) -- (5,-2) -- (2,-2) -- cycle;
  		\fill[fill=gray!10] (2,-2) -- (5,-2) -- (1,-3.6) -- (-2,-3.6) -- cycle;
  		\fill[fill=gray!20] (1,-3.6) -- (-2,-3.6) -- (-2,2) -- (1,2) -- cycle;
  		\fill[fill=gray!20] (-2,2) -- (1,2) --  (5,3.6) -- (2,3.6) -- cycle;
  		\draw[thick] \plane;
		\draw[thick] (-2,-0.8) -- (2,0.8);
  		\draw[dashed] (-2,2) -- (1,2);
		\draw[dashed] (2,3.6) -- (5,3.6);
		\draw[dashed] (2,-2) -- (5,-2);
		\draw[dashed] (-2,-3.6) -- (1,-3.6);
  		\node at (2.5,4) {$ \R_{\geq 0}\times \R^d$};
  		\node[left] at (-0.4,1.4) {$ \displaystyle\oint_{{\scriptscriptstyle S^{d-d'-1}}}$};
  		\draw[dashed] (0.9,1.2) arc (60:150:0.9 and 1.4);
		\draw[dashed] (1.2,-0.6) arc (-30:-120:0.9 and 1.4);
		\filldraw[fill=gray!60, fill opacity=0.5,dashed] (-2,-0.8) -- (3,-0.8) -- (7,0.8) -- (2,0.8) -- cycle;
		\filldraw[black] (0.2,1.3) circle (1.5pt);
		\filldraw[black] (3,0) circle (1.5pt);
		\draw[decorate, decoration={snake, amplitude=0.5mm, segment length=3mm}] (0.2,1.3) -- (3,0) ;	
		\node at (6.3,1.2) {$ \R_{\geq 0}\times \R^{d'}$};
		\node[below] at (4,0) {$\scriptstyle D^i(\theta)\eta_i$};
  	\end{tikzpicture}
	\caption{Boundary extended module defined by an extended bulk defect ending on the boundary.}
	\label{fig:bdy_exmodule}
\end{figure}

This Feynman diagram leads to the following expression for the operation $\delta:A\to A'[1-d+d']$:
\begin{equation}
	\delta(\mathcal{O}) =  D^i(\theta) \frac{\partial \mathcal{O}}{\partial \phi^i} \int_{x \in S^{d-d'-1}}\int_{(y_0,y)\in \mathbb{H}^{d'+1}} P(0,x;y_0,y) \,.
\end{equation}
The integral can be easily computed to give $1$, we find that this QFT operation gives us the expression
\begin{equation}
	\delta(\mathcal{O}) =  D^i(\theta) \frac{\partial \mathcal{O}}{\partial \phi^i}\,.
\end{equation}
This is exactly the expression of the corresponding $cd_{d'd}$ structure.

\subsection{Bulk algebra and Deligne conjecture}
So far, we have mainly focused on the algebraic structure on the boundary. In this section we turn to the bulk algebra. As we have seen in the previous discussion, the bulk theory governs the deformation theory of the boundary algebra. Therefore, we expect the bulk algebra to realize the $E_d$ Hochschild cohomology of the boundary algebra, which naturally carries an $E_{d+1}$-algebra structure. In other words, we expect our Poisson sigma model to provide a field-theoretic realization of the higher Deligne conjecture.

First, we consider the simplest case in which the bulk theory is associated with a vanishing Poisson structure. In this case the boundary algebra is the graded commutative algebra $\C[\phi^i]$ with trivial $P_d$ structure, and the bulk theory is a free topological theory with fields $\boldsymbol{\phi}^i,\boldsymbol{\eta}_i$ and action functional $\int_{\R^{d+1}}\boldsymbol{\eta}_id_{\derham}\boldsymbol{\phi}^i$. In this free theory, the bulk local operators form the graded commutative algebra $\C[\phi^i,\eta_i]$, equipped with a (strict) $P_{d+1}$ bracket, which is the bulk secondary product discussed in Section~\ref{sec:sec_bulk}. Since the theory is free, the bulk secondary bracket receives contributions only from the tree-level diagram with two bulk insertions connected by a single propagator. The resulting bracket is
\begin{equation}\label{eq:Pol_SN_bracket}
\{\phi^i,\eta_j\}=\delta^i_j\,.
\end{equation}

On the other hand, the $E_d$ Hochschild cohomology of the commutative algebra $A=\C[\phi^i]$, together with the trivial $P_{d+1}$ (and an induced $E_{d+1}$) structure, admits a simple description via the higher Hochschild--Kostant--Rosenberg (HKR) theorem \cite{calaque2015triviality,ginot2017}. Specifically, there is an equivalence of $P_{d+1}$-algebras
\begin{equation}
HH_{E_d}(A)\simeq \mathrm{Sym}_{A}(T_A[-d])\,.	
\end{equation}
The $P_{d+1}$-structure on the right-hand side is given by the Schouten--Nijenhuis bracket. In our case, we can identify $\mathrm{Sym}_{A}(T_A[-d])$ with $\C[\phi^i,\eta_i]$, and the Schouten--Nijenhuis bracket agrees with \ref{eq:Pol_SN_bracket}. Thus, in this simplest case, the bulk algebra is identified with the $E_d$ Hochschild cohomology of the boundary algebra.

Now we add a nontrivial (derived) Poisson structure $\pi$ to this construction, so the bulk theory becomes interacting. The bulk algebra still has underlying graded commutative algebra $\C[\phi^i,\eta_i]$, but it is now equipped with a nontrivial differential
\begin{equation}
d_\pi=\{\pi,-\}\,.
\end{equation}
Notice that the complex $(\mathrm{Sym}_{A}(T_A[-d]),\{\pi,-\})$ can be identified with the (higher) Lichnerowicz--Poisson complex of the Poisson algebra $(\C[\phi^i],\pi)$. In particular, the cohomology of the bulk algebra computes the Poisson cohomology $H_{P_d}^{\sbullet}(A)$ of $A=(\C[\phi^i],\pi)$.

On the other hand, \cite{ginot2017} shows that the $E_d$ Hochschild cohomology of a $P_d$-algebra can be computed by a spectral sequence whose $E_1$-page is precisely the Poisson cohomology complex. In our situation, this identifies the associated graded of $\mathrm{HH}_{E_d}(A)$ with the bulk complex $(\C[\phi^i,\eta_i],d_\pi)$, up to the higher differentials in the spectral sequence. From the field-theoretic perspective, however, one expects the semiclassical description to be exact: a remarkable identity for bulk Feynman integrals first discovered in \cite{Gaiotto:2024gii}, and also in \cite{balduf2025combinatorial,wang2024factorization} (for $d+1\ge 2$) implies that loop corrections in the bulk vanish in these topological theories.

\section{(Quasi-)Hopf algebra from $3d$ and $4d$ theories}
\label{sec:Koszul}
In this section, we study the examples of Poisson sigma models in $3d$ and $4d$ introduced in Section~\ref{sec:deformations-bf}. We then apply the QFT construction of Koszul duality as a universal defect, and study the various quantum group structures that arise in these theories.
\subsection{Koszul duality and defect}
We begin by recalling the definition of Koszul duality, first from the mathematical perspective, and then introducing the physical construction.

Recall that for a dg augmented associative ($E_1$) algebra $A$, the Bar construction $\mathrm{Bar}(A)$ is the complex that computes $k\otimes_A^{\mathbb{L}}k$:
\begin{equation}
	\mathrm{Bar}(A) = \bigoplus_{n\geq 0} (A[1])^{\otimes n},\quad\quad d = d_{A} + d_{\mathrm{Bar}}\,,
\end{equation}
where $d_{A}$ is the internal differential induced by that of $A$, and $d_{\mathrm{Bar}}$ encodes the multiplication of $A$. The Bar complex is naturally equipped with a coalgebra structure as the free tensor coalgebra. From a factorization algebra point of view, $\mathrm{Bar}(A)$ computes the factorization homology on the pointed circle, where the point is labeled by the augmentation $A \to k$:
\begin{equation}
	\int_{S^1_*} (A,k) = \int_{D^1}A\otimes_{\int_{S^0}\!\!A}^{\mathbb{L}}k =  k\otimes_A^{\mathbb{L}}k\,.
\end{equation}
Here the coalgebra structure arises geometrically from the pinch map $S^1_* \to S^1_* \vee S^{1}_*$.

Under certain finiteness conditions, the $E_1$ Koszul dual of $A$ is then defined as
\begin{equation}
	\mathrm{KD}_{E_1}(A) := \mathrm{Bar}(A)^{\vee},
\end{equation}
which naturally carries an $E_1$-algebra structure. It is also often denoted $\mathrm{KD}_{E_1}(A) = A^!$.

More generally, by Dunn additivity $E_n \simeq E_1^{\otimes n}$, an $E_n$-algebra can be viewed as having $n$ compatible (up to homotopy) $E_1$-algebra structures. Thus we have the iterated Bar construction given by iterating the Bar construction $n$ times, once for each of its $E_1$-structures:
\begin{equation}
	 \mathrm{Bar}^{(n)}(A) := \mathrm{Bar}(\dots \mathrm{Bar}(A))\,.
\end{equation}
Equivalently \cite{Ayala_Francis_2021,lurie2017higher}, the iterated bar construction $\mathrm{Bar}^{(n)}(A)$ computes the factorization homology on the pointed $n$-sphere:
\begin{equation}
	\int_{S^n_*} (A,k) = \int_{D^n}A\otimes_{\int_{S^{n-1}}\!\!A}^{\mathbb{L}}k = \mathrm{Bar}^{(n)}(A)\,.
\end{equation}
It acquires an $E_n$-coalgebra structure from the pinch map $\mathcal{E}_n(r)\times S^n_* \to \bigvee_{i = 1}^rS^n_*$, obtained by identifying $S^n \cong [-1,1]^n/\partial [-1,1]^n$ and collapsing the complement of the $r$ embedded little cubes to the basepoint. 

Under certain finiteness condition, we define the $E_n$-Koszul dual of $A$ as the linear dual of the iterated Bar construction:
\begin{equation}
	\mathrm{KD}_{E_n}(A) := \mathrm{Bar}^{(n)}(A)^{\vee}\,.
\end{equation}
It naturally acquires an $E_n$-algebra structure from the dual $E_n$-coalgebra structure.

For an $E_n$-algebra, one can also take the $E_k$ Koszul dual with $k\leq n$. We view an $E_n$-algebra as an $E_k$-algebra object in the category of $E_{n-k}$-algebras; the $k$-th iterated Bar construction then yields an $E_k$-coalgebra object in the category of $E_{n-k}$-algebras. After taking the linear dual, we find that $\mathrm{KD}_{E_k}(A) = (\mathrm{Bar}^{(k)}(A))^{\vee}$ is an $E_k$-algebra object in the category of $E_{n-k}$-coalgebras \cite{ginot2012higher}. In low degrees, this reproduces familiar quantum-group structures: for $n = 2,\, k=1$, the $E_1$ Koszul dual of an $E_2$-algebra carries a bialgebra structure, and under standard hypotheses, a Hopf algebra structure. For $n = 3,\, k=1$, the $E_1$ Koszul dual of an $E_3$-algebra carries a quasitriangular Hopf algebra structure.

From the perspective of quantum field theory, the Koszul dual admits a beautiful realization as the algebra of a \emph{universal defect} \cite{Costello:2017dso,costello2023boundary}. Consider a $d$-dimensional topological field theory whose bulk algebra is an $E_d$-algebra $A$. For $A$ to be augmented, the theory must also admit a boundary condition at infinity such that the compactified theory is trivial. Compactification of the theory then provide us the augmentation map $A \to \C$. A universal defect is the most general quantum system that can be consistently coupled to the original theory. It is characterized by two properties: ``consistency'' means that the coupling preserves the gauge invariance of the coupled system, and ``universality'' means that the space of all consistent couplings with any other defect system is represented by the universal one. The equivalence between the mathematical and physical definitions follows from the universal property of the Bar construction. For example, for an associative algebra we have\footnote{This property is more often stated as the twisting isomorphism $\mathrm{Tw}(C,A) \cong \mathrm{Hom}_{\text{dg coalg}}(C,\mathrm{Bar}(A))$ for any dg coalgebra $C$; dualizing yields the displayed formula.}
\begin{equation}
	MC(A\otimes -) \cong \mathrm{Hom}_{\text{dg alg}}(\mathrm{Bar}(A)^{\vee}, - )\,.
\end{equation}

The coalgebra structure on the Koszul dual has a natural physical interpretation via defect fusion. Given a universal $k$-dimensional defect with algebra $A^!$, consider placing two copies of it in parallel inside the bulk. As the two defects are brought close together, they fuse into a single effective defect, as illustrated in Figure~\ref{fig:defect_fusion}. The fused defect can be understood as a coupling between the bulk theory and $A^!\otimes A^{!}$. By the universal property of the Koszul dual defect, we obtain a coproduct $\Delta\colon A^{!} \to A^{!}\otimes A^{!}$. In higher codimension, $r$ parallel defects can approach each other along any direction in the transverse $\R^{n-k}$, so that the space of configurations is $\mathrm{Conf}_r(\R^{n-k})$. This is the physical origin of the $E_{n-k}$-coalgebra structure on the Koszul dual algebra. We refer to \cite{Gaiotto:2020dsq,Oh:2021wes,Ishtiaque:2024orn} for many examples of QFT construction of coproduct from defect fusion.

\begin{figure}[h]
	\centering
	\begin{tikzpicture}[scale=0.6]
		\def\planeA{ (-2,2) -- (2,3.6) -- (2,-2) -- (-2,-3.6) -- cycle};
		\fill[fill=gray!10] \planeA;
		\draw[thick] \planeA;
		\fill[fill=gray!10, shift={(3,0)}] \planeA;
		\draw[thick, shift={(3,0)}] \planeA;

		\node at (8,0) {\Large $\Rightarrow$};

		\begin{scope}[xshift=13cm]
			\def\planeA{ (-2,2) -- (2,3.6) -- (2,-2) -- (-2,-3.6) -- cycle};
			\fill[fill=gray!10] \planeA;
			\draw[thick] \planeA;

		\end{scope}
		\node at (8,-5) {\Large $\overset{\Delta}{\Leftarrow}$};
		\node at (0,-5) {$A^!\otimes A^!$};
		\node at (13,-5) {$A^!$};
	\end{tikzpicture}
	\caption{The coproduct on the Koszul dual algebra from the fusion of two parallel universal defects.}
	\label{fig:defect_fusion}
\end{figure}

There have been many works studying the Koszul dual of the bulk algebra of a TQFT (or holomorphic-topological QFT). However, as we hope to have convinced the reader, boundary algebras give us access to a much larger class of factorization algebras than bulk algebra of a QFT. Therefore, we expect to see more quantum group structures from the Koszul dual of boundary algebra. This approach has already been taken in \cite{Dimofte:2024bwe}, where a more geometric perspective on the Hopf algebra structure is provided. One of the most important examples in this paper will be the quantized universal enveloping algebra of a quasi-Lie bialgebra, which to our knowledge has not been constructed from a QFT perspective before. We will also revisit some familiar examples, like the Drinfeld--Jimbo quantum groups and the Yangian, in the following sections.

\subsection{Boundary algebras from quasi-Lie bialgebra}
In this and the next subsections, we study the example of $3d$ Poisson sigma model associated with a quasi-Lie bialgebra $\mathfrak{g}$, introduced in Section~\ref{sec:deformations-bf}. Before discussing the Koszul dual algebra, we first try to understand the boundary algebras. We focus on the following two boundary conditions 
\begin{equation}
	\begin{aligned}
		&\mathcal{N} (\text{Neumann}):\quad  \boldsymbol{B}|_{\text{bdy}} = 0\,,\\
		&\mathcal{D} (\text{Dirichlet}):\quad  \boldsymbol{A}|_{\text{bdy}} = 0\,.
	\end{aligned}
\end{equation}
The two corresponding boundary algebras are denoted by $A_{\mathcal{N}}$ and $A_{\mathcal{D}}$, respectively. The computation of these boundary algebras follows from our general description in Section~\ref{sec:bdy_sec}, so we will present the results without giving details on the Feynman diagram computations. 

As a first step, we consider pure $3d$ BF theory without the cobracket deformation. The boundary algebra structures can be computed using the Feynman diagrams shown in Figure~\ref{fig:DirichletNeumann}. Under the $\mathcal{N}$ boundary condition, the boundary algebra is given by the Chevalley-Eilenberg cochain complex $\mathrm{CE}^{\sbullet}(\mathfrak{g})$, with the wedge product and the \CE differential $d_{CE}$. It has the natural graded commutative algebra structure, and is equipped with the trivial $P_2$ bracket.

Under the $\mathcal{D}$ boundary condition, the boundary $P_2$-algebra is, as a graded commutative algebra, $\mathrm{Sym}(\mathfrak{g}[-1])$. The boundary bracket is determined on generators by
\begin{equation}\label{eq:P2_Lie}
\{sx,sy\} = s[x,y],\qquad x,y\in\mathfrak{g}\,,
\end{equation}
where $s:\mathfrak{g}\to \mathfrak{g}[-1]$ denotes the suspension map. This boundary algebra is known to correspond to the $E_2$ universal enveloping algebra $U_{E_2}(\mathfrak{g})$ of $\mathfrak{g}$. To see this, we use the universal property of the $E_d$ universal enveloping algebra $U_{E_d}(\mathfrak{g})$: for any $E_d$-algebra $A$, we have:
\begin{equation}
\mathrm{Hom}_{E_d\text{-alg}}(U_{E_d}(\mathfrak{g}),A) \cong \mathrm{Hom}_{\text{Lie}}(\mathfrak{g},A[d-1])\,,
\end{equation}
where the right-hand side denotes the set of Lie algebra homomorphisms from $\mathfrak{g}$ to the underlying Lie algebra of $A[d-1]$. Passing to cohomology, we can identify $U_{E_2}(\mathfrak{g})$ with the free Gerstenhaber algebra generated by $\mathfrak{g}$, which is exactly $\mathrm{Sym}(\mathfrak{g}[-1])$ equipped with the shifted bracket \eqref{eq:P2_Lie}. We can also understand the relationship between the two boundary algebra as a $E_2$ Koszul dual: $\mathrm{KD}_{E_2}(A_{\mathcal{D}}) = A_{\mathcal{N}}$. 

\begin{figure}[h]
	\centering
		\begin{tikzpicture}[scale=0.6]
		\def\plane{ (-2,2) -- (2,3.6) -- (2,-2) -- (-2,-3.6) -- cycle};
		\fill[fill=gray!10] \plane;
		\fill[fill=gray!10] (2,3.6) -- (5,3.6) -- (5,-2) -- (2,-2) -- cycle;
		\fill[fill=gray!10] (2,-2) -- (5,-2) -- (1,-3.6) -- (-2,-3.6) -- cycle;
		\fill[fill=gray!20] (1,-3.6) -- (-2,-3.6) -- (-2,2) -- (1,2) -- cycle;
		\fill[fill=gray!20] (-2,2) -- (1,2) --  (5,3.6) -- (2,3.6) -- cycle;
		\draw[thick] \plane;
		\draw[dashed] (-2,2) -- (1,2);\draw[dashed] (2,3.6) -- (5,3.6);\draw[dashed] (2,-2) -- (5,-2);\draw[dashed] (-2,-3.6) -- (1,-3.6);
		\node at (1.5,2.8) {$ B|_{\text{bdy}} = 0$};
		\filldraw[black] (0,0) circle (1.5pt);
		\draw[decorate, decoration={snake, amplitude=0.5mm, segment length=2.5mm}] (0,0) -- (2.5,0.3);
		\filldraw[black] (2.5,0.3) circle (1.5pt);
		\draw[decorate, decoration={snake, amplitude=0.5mm, segment length=2.5mm}] (2.5,0.3) -- (3.5,1);
		\draw[decorate, decoration={snake, amplitude=0.5mm, segment length=2.5mm}] (2.5,0.3) -- (3.5,-1);

			\def\planeFlipped{ (10,2) -- (14,3.6) -- (14,-2) -- (10,-3.6) -- cycle};
			
			\fill[fill=gray!10] \planeFlipped;
			
			\fill[fill=gray!10] (10,-3.6) -- (7,-3.6) -- (7,2) -- (10,2) -- cycle;
			
			\fill[fill=gray!10] (7,2) -- (10,2) -- (14,3.6) -- (11,3.6) -- cycle;
			
			\fill[fill=gray!20] (11,3.6) -- (14,3.6) -- (14,-2) -- (11,-2) -- cycle;
			
			\fill[fill=gray!20] (11,-2) -- (14,-2) -- (10,-3.6) -- (7,-3.6) -- cycle;

			\draw[thick] \planeFlipped;
			\draw[dashed] (7,2) -- (10,2);
			\draw[dashed] (11,3.6) -- (14,3.6);
			\draw[dashed] (11,-2) -- (14,-2);
			\draw[dashed] (7,-3.6) -- (10,-3.6);
			
			\node at (13.6, 2.8) {$ A|_{\text{bdy}} = 0$};
			  		\draw[dashed] (13.3,0) arc (0:365:1.3 and 1.8);
			\filldraw[black] (9.5,-0.5) circle (1.5pt);
			\filldraw[black] (12,0) circle (1.5pt);
			\filldraw[black] (11,-1.2) circle (1.5pt);
			\draw[decorate, decoration={snake, amplitude=0.5mm, segment length=2.5mm}] (9.5,-0.5) -- (11,-1.2);
			\draw[decorate, decoration={snake, amplitude=0.5mm, segment length=2.5mm}] (9.5,-0.5) -- (12,0);
			\draw[decorate, decoration={snake, amplitude=0.5mm, segment length=2.5mm}] (9.5,-0.5) -- (8,-0.3);
	\end{tikzpicture}
	\caption{ Feynman diagrams for the $\mathcal{N}$ (left) and $\mathcal{D}$ (right) boundary algebras.}
		\label{fig:DirichletNeumann}
\end{figure}

Next, we consider the theory associated with a Lie bialgebra $(\mathfrak{g},\delta)$, without the associator $\phi$. The cobracket induces a deformation $\frac{1}{2}\langle \delta(\boldsymbol{A}), \boldsymbol{B}\wedge \boldsymbol{B}\rangle$ to the action \eqref{eq:BF_Lie_bidef}, which introduces additional structure on both boundary algebras. 

Under the $\mathcal{D}$ boundary condition, the tree-level boundary $P_2$-algebra is $\mathrm{Sym}(\mathfrak{g}[-1])$ with the bracket \eqref{eq:P2_Lie}, and is now equipped with an additional differential induced by the cobracket. On generators, this differential acts as
\begin{equation}\label{eq:diff_cobracket}
	d(sx_c) = \frac{1}{2}\delta^{ab}_c\, (sx_a)(sx_b)\,,
\end{equation}
where $\delta^{ab}_c$ are the structure constants of the cobracket, defined by $\delta(x_c) = \delta^{ab}_c\, x_a\wedge x_b$. This differential can be identified with the Chevalley--Eilenberg differential of the dual Lie algebra $(\mathfrak{g}^{\vee},\delta^{\vee})$.

Under the $\mathcal{N}$ boundary condition, the cobracket $\delta$ induces a nontrivial $P_2$ bracket on $\mathrm{CE}^{\sbullet}(\mathfrak{g})$. This is precisely the $P_2$-algebra used to define the bulk theory, as discussed in Section~\ref{sec:deformations-bf}. Furthermore, it is related to the $\mathcal{D}$ boundary algebra by interchanging the roles of the Lie bracket $[-,-]$ and the cobracket $\delta$.

Finally, we analyze the effect of adding a non-vanishing associator $\phi \in \wedge^3\mathfrak{g}$, so that we have a quasi-Lie bialgebra $(\mathfrak{g},\delta,\phi)$. We will see that the interaction term $\langle\phi,\boldsymbol{B}\wedge \boldsymbol{B}\wedge \boldsymbol{B}\rangle$ plays very different roles in the two boundary conditions. For the $\mathcal{N}$ boundary condition, it is immediate from our general prescription that the non-zero $\phi$ induces a $3$-ary bracket on $C^{\sbullet}(\mathfrak{g})$. This yields precisely the derived $P_2$-algebra introduced in Section~\ref{sec:deformations-bf}.

For the $\mathcal{D}$ boundary condition, the interaction term $\langle\phi,\boldsymbol{B}\wedge \boldsymbol{B}\wedge \boldsymbol{B}\rangle$ cannot contribute to any operations on the boundary algebra, since the latter is generated entirely by the $\boldsymbol{B}$-fields. Instead, it induces a boundary anomaly. This can be seen by analyzing the BV master equation for the action on the half-space, which yields
\begin{equation}
\{S,S\}_{BV}
= \int_{\R_{\geq 0}\times\R^2} d_{\derham}\langle\phi,\boldsymbol{B}\wedge \boldsymbol{B}\wedge \boldsymbol{B}\rangle
= \int_{{0}\times\R^2}\langle\phi,\boldsymbol{B}\wedge \boldsymbol{B}\wedge \boldsymbol{B}\rangle\,.
\end{equation}
Although the presence of an anomaly typically indicates an obstruction to the existence of an honest boundary algebra, one can still package the full algebraic structure—including the anomaly—into a coherent mathematical object, namely a ``curved $E_2$-algebra.'' In the present example, because the cobracket $\delta$ no longer satisfies the co-Jacobi identity (the failure being controlled by the associator $\phi$), the differential \eqref{eq:diff_cobracket} no longer squares to zero. This leaves us with a curved $P_2$-algebra. The existence of such a consistent algebraic structure underlying an anomalous boundary will become clearer after taking the $E_1$ Koszul dual in the next section, which we expect to produce a dual quasi-Hopf algebra structure.

\subsection{Quantized universal enveloping algebra via Koszul duality}
\label{sec:q_Liebi}
In this section, we extend our study of the $3d$ Poisson sigma model associated with a quasi-Lie bialgebra. Our focus here is the $E_1$ Koszul dual of the bulk and boundary algebras. We will employ the field theoretic interpretation of the $E_1$ Koszul dual as universal line defects and present relevant Feynman diagram computations.

For simplicity, we first consider the case where the associator $\phi$ vanishes, so that we are dealing with an ordinary Lie bialgebra. We begin with the Koszul dual $A^!_{\mathcal{N}}$ of the $\mathcal{N}$ boundary algebra, where the field $\boldsymbol{A}$ is non-vanishing. The most general construction for a line defect is obtained by coupling the one-form fields $\{A^a\}$ to the generators $\{\rho_a\}$ of $A^!_{\mathcal{N}}$ and integrating along the line:
\begin{equation}
	P\exp\left(\int_{L}\rho_aA^a\right)\,,
\end{equation}
where $L$ denotes a line embedded within the boundary $\{0\} \times \mathbb{R}^2$. The generators $\{\rho_a\}$ can be interpreted as the degrees of freedom of an independent quantum system; the specific microscopic details of this system are not important to us, as we are concerned only with the gauge invariance of the coupled system. The algebraic relations governing $\{\rho_a\}$ should be uniquely constrained by this requirement, which give us the algebra $A^!_{\mathcal{N}}$.

Note that under the $\mathcal{N}$ boundary condition, the gauge transformation of the boundary field $A$ behaves as a connection one-form $\delta A^a = dc^a + f^{a}_{bc}c^bA^c$. Therefore, the tree-level computation of the gauge variation for the above line defect is identical to the analysis in \cite{Costello:2020jbh}, yielding the commutation relations:
\begin{equation}
	[\rho_a,\rho_b] = f_{ab}^c\rho_c\,.
\end{equation}
This relation can also be derived using the condition for the gauge invariance of the line defect given in \eqref{eq:ano_free_line}.

\begin{figure}[h]
	\centering
		\begin{tikzpicture}[scale=0.8]
		\def\plane{ (-2,2) -- (2,3.6) -- (2,-2) -- (-2,-3.6) -- cycle};
		\fill[fill=gray!10] \plane;
		\fill[fill=gray!10] (2,3.6) -- (5,3.6) -- (5,-2) -- (2,-2) -- cycle;
		\fill[fill=gray!10] (2,-2) -- (5,-2) -- (1,-3.6) -- (-2,-3.6) -- cycle;
		\fill[fill=gray!20] (1,-3.6) -- (-2,-3.6) -- (-2,2) -- (1,2) -- cycle;
		\fill[fill=gray!20] (-2,2) -- (1,2) --  (5,3.6) -- (2,3.6) -- cycle;
		\draw[thick] \plane;
		\draw[dashed] (-2,2) -- (1,2);\draw[dashed] (2,3.6) -- (5,3.6);\draw[dashed] (2,-2) -- (5,-2);\draw[dashed] (-2,-3.6) -- (1,-3.6);
		\draw (1.2,2.5) -- (1.2,-2);
		\draw (-0.8,1.8) -- (-0.8,-2.7);
		\node at (2.1,2.8) {$ \R_{\geq 0}\times \R^2$};
		\filldraw[black] (1.2,-0.8) circle (1.5pt);
		\node[left] at (1.2,-0.8) {$\scriptstyle \rho_bA^b$};
		\draw[decorate, decoration={snake, amplitude=0.5mm, segment length=2.5mm}] (-0.8,1.3) -- (3,0.7) -- (1.2,-0.8);
		\filldraw[black] (3,0.7) circle (1.5pt);
		\filldraw[black] (-0.8,1.3) circle (1.5pt);
		\node[left] at (-0.8,1.3) {$\scriptstyle \rho_aA^a$};
		\draw[decorate, decoration={snake, amplitude=0.5mm, segment length=2.5mm}] (3,0.7) -- (3.8,0.7);
		\node[above] at (3,0.7) {$\scriptstyle \delta^{ab}_c B_aB_bA^c$};
	\end{tikzpicture}
	\caption{First quantum correction to the fusion of two boundary line defects.}
	\label{fig:Liebi_fusion_line}
\end{figure}

As a next step, we consider the coproduct of $A^!_{\mathcal{N}}$, which is realized by the fusion of two parallel line defects in our physical setup. The leading quantum correction to this fusion process is depicted in Figure \ref{fig:Liebi_fusion_line}.

The corresponding Feynman integral is given by
\begin{equation}
	\delta^{ab}_c(\rho_a\otimes \rho_b)\int_{(y_0,y)\in \mathbb{H}^3}\int_{x \in L_0}\int_{x' \in L_1}P(0,x,y_0,y)P(0,x',y_0,y)A^c\,,
\end{equation}
where the parallel lines are defined as $L_0 = \{(0, 0, t) \mid t \in \R \}$ and $L_1 = \{(0, \epsilon, t') \mid t' \in \R \}$, separated by a distance $\epsilon$. We suppress the Lie algebra indices for a moment. After simplifying the differential form, we find the integrand to be:
\begin{equation}
	\frac{\epsilon}{(2\pi^2)^2}\frac{y_0dtdt'dy_0dy_1}{(y_0^2 + y_1^2 + (t - y_2)^2)^{\frac{3}{2}}(y_0^2 + (y_1-\epsilon)^2 + (t' - y_2)^2)^{\frac{3}{2}}}A_{2}dy_2\,.
\end{equation}
The integration over $t$ and $t'$ can be performed straightforwardly, yielding:
\begin{equation}
	\frac{\epsilon}{\pi^2}\frac{y_0dy_0dy_1}{(y_0^2 + y_1^2)(y_0^2 + (y_1-\epsilon)^2 )}A_{2}dy_2\,.
\end{equation}
We then Taylor expand the connection $A_{2}(y_0, y_1, y_2) = A_2(0, 0, y_2) + y_0 \partial_{y_0} A_2(0, 0, y_2) + y_1 \partial_{y_1} A_2(0, 0, y_2) + \cdots$. As is standard in this topological theory, we discard the derivative terms as they are $Q$-exact; the careful reader may also verify that these terms do not contribute to the final results. Consequently, the integral simplifies to:
\begin{equation}
	\left(\int_{L_0}A\right) \int_0^\infty y_0 \int_{\R}dy_1 \frac{\epsilon}{\pi^2}\frac{y_0}{(y_0^2 + y_1^2)(y_0^2 + (y_1-\epsilon)^2 )}\,.
\end{equation}
The integral over $y_0,y_1$ is evaluated in Appendix \ref{sec:top_Feyn_int}, giving us $\frac{1}{2}\mathrm{sign}(\epsilon)$.

To summarize, the fusion of two boundary line defects has the following leading correction, after restoring the Lie algebra factors:
\begin{equation}
	\frac{\hbar}{2}\mathrm{sign}(\epsilon)\int_{L_0 }\delta^{ab}_c(\rho_a\otimes \rho_b)A^ c\,.
\end{equation}
From this, we identify the coproduct up to first order as:
\begin{equation}
	\Delta(\rho_c) = \rho_c\otimes 1 + 1\otimes \rho_c + \frac{1}{2}\delta^{ab}_c \hbar \rho_a\otimes \rho_b\,.
\end{equation}
Unlike BF theory, which features only one-loop Feynman diagrams, this deformed theory admits higher-loop corrections at all orders. Thus both the product and coproduct of $\mathcal{A}_{\mathcal{N}}^{!}$ computed above should be further corrected at higher order. Directly computing and organizing the resulting higher-loop diagrams is highly challenging. However, we can view the algebra $A^!_{\mathcal{N}}$, before incorporating quantum corrections, as the universal enveloping algebra $U(\mathfrak{g})$ equipped with its standard cocommutative coproduct $\Delta(t_c) = t_c \otimes 1 + 1 \otimes t_c$. The leading tree-level correction arising from the Lie cobracket $\delta$ as derived above is known to equip $U(\mathfrak{g})$ with a co-Poisson Hopf algebra structure. The fully deformed algebra $U_{\hbar}(\mathfrak{g})$, incorporating all higher-loop corrections, is then expected to be a quantization of this co-Poisson Hopf algebra—namely, the quantized universal enveloping (QUE) algebra of the Lie bialgebra $(\mathfrak{g},\delta)$. The existence of such a quantization has been established by Etingof and Kazhdan in \cite{EK1996QLBI,EK1998QLBII}. We are thus led to the following conjecture:
\begin{conj}
	The $E_1$ Koszul dual of the boundary algebra, under the $\mathcal{N}$ boundary condition, is a quantized universal enveloping algebra $U_{\hbar}(\mathfrak{g},\delta)$ of the Lie bialgebra $(\mathfrak{g},\delta)$.
\end{conj}

It is also natural to investigate the Koszul dual $A^!_\mathcal{D}$ of the boundary algebra under the $\mathcal{D}$ boundary condition. No new computations are required: as we have seen in the previous section, the $\mathcal{D}$ boundary condition effectively interchanges the roles of the Lie bracket and cobracket via linear duality. Accordingly, the $E_1$ Koszul dual under the $\mathcal{D}$ boundary condition is expected to be the dual Hopf algebra (in an appropriate sense) of $U_{\hbar}(\mathfrak{g},\delta)$, in which the product and coproduct are interchanged. In the quantum group literature \cite{Drinfeld1987QG,Gavarini2002QD}, this relationship is referred to as the quantum duality principle.

Next, we consider the case of a quasi-Lie bialgebra with a non-zero $\phi$, which induces a term $\langle\phi,\boldsymbol{B}\wedge \boldsymbol{B}\wedge \boldsymbol{B}\rangle$ in the action. The quantization of a quasi-Lie bialgebra leads to the notion of a quasi-Hopf algebra, which we review in Appendix \ref{sec:bialgebra}. In a quasi-Hopf algebra, the coproduct $\Delta$ is no longer coassociative; its failure to be coassociative is characterized by a (co)associator $\Phi$. In our QFT construction using the universal line defect, we can also give a diagrammatic interpretation of this associator. From a representation-theoretic perspective, the associator $\Phi$ measures the failure of strict associativity of the tensor product in the category of line operators. Specifically, it specifies the isomorphism 
\begin{equation}
	\Phi_{V_1,V_2,V_3}: (V_1 \otimes V_2) \otimes V_3 \xrightarrow{\approx} V_1 \otimes (V_2 \otimes V_3)\,,
\end{equation}
which is natural in $V_1, V_2, V_3$ and satisfies the standard coherence (pentagon) identity.

Diagrammatically, we consider three lines on the boundary, labeled by representations $V_i$ for $i=1,2,3$. The two parenthesizations correspond to two different ways of grouping the lines. The isomorphism $\Phi_{V_1,V_2,V_3}$ interpolates between the two ways of groupings, and is represented by the diagram 
\begin{tikzpicture}[scale =  0.4]
	\draw[->] (0, 2.5) -- (0, 0);
    \draw[->] (0.2, 2.5) to[out=270, in=90] (1.4, 0);
    \draw[->] (1.6, 2.5) -- (1.6, 0);
\end{tikzpicture}.
Then the associator $\Phi$ is constructed by placing the same diagram on the boundary, with each line being the universal line defect. This leads us to consider the tree-level diagram in Figure \ref{fig:associator_bdy}, which gives us the following integral
\begin{equation}
 \phi^{abc} \rho_a\otimes\rho_b\otimes\rho_c  \int_{y \in \mathbf{H}^3} \int_{x_i \in L_i} P(x_1,y) P(x_2,y) P(x_3,y)\,.
\end{equation}

\begin{figure}[h]
	\centering
		\begin{tikzpicture}[scale=0.8]
		\def\plane{ (-2,2) -- (2,3.6) -- (2,-2) -- (-2,-3.6) -- cycle};
		\fill[fill=gray!10] \plane;
		\fill[fill=gray!10] (2,3.6) -- (5,3.6) -- (5,-2) -- (2,-2) -- cycle;
		\fill[fill=gray!10] (2,-2) -- (5,-2) -- (1,-3.6) -- (-2,-3.6) -- cycle;
		\fill[fill=gray!20] (1,-3.6) -- (-2,-3.6) -- (-2,2) -- (1,2) -- cycle;
		\fill[fill=gray!20] (-2,2) -- (1,2) --  (5,3.6) -- (2,3.6) -- cycle;
		\draw[thick] \plane;
		\draw[dashed] (-2,2) -- (1,2);\draw[dashed] (2,3.6) -- (5,3.6);\draw[dashed] (2,-2) -- (5,-2);\draw[dashed] (-2,-3.6) -- (1,-3.6);
		\node at (2.3,3) {$ \R_{\geq 0}\times \R^2$};
	\filldraw[black] (-1.2,0.4) circle (1.5pt);
	\draw[thick] (-1.2,-2.4) -- (-1.2,1.6);
	\filldraw[black] (1.4,2.3) circle (1.5pt);
	\draw[thick] (1.4,-1.5) -- (1.4,2.5);
	\filldraw[black] (0.6,-0.2) circle (1.5pt);
	\draw[thick] plot [smooth] coordinates {(-0.8,1.8) (-0.6,0.4)(-0.3,0.1) (0,0) (0.2,-0.1) (0.9,-0.4) (1.1,-1.6)};
	\filldraw[black] (3,0.5) circle (1.5pt);
	\draw[decorate, decoration={snake, amplitude=0.5mm, segment length=3mm}] (1.4,2.3) -- (3,0.5);
	\draw[decorate, decoration={snake, amplitude=0.5mm, segment length=3mm}] (-1.2,0.4) -- (3,0.5);
	\draw[decorate, decoration={snake, amplitude=0.5mm, segment length=3mm}] (0.6,-0.2) -- (3,0.5);
\end{tikzpicture}
\caption{Associator diagram on the boundary}
\label{fig:associator_bdy}
\end{figure}
This Feynman integral will be computed in the next subsection, giving us 
\begin{equation}
	\Phi = 1 + \phi^{abc} \rho_a\otimes\rho_b\otimes\rho_c + \dots \in (A^!_\mathcal{N})^{\otimes 3}\,.
\end{equation}

We notice that with the first-order quantum correction to the coproduct $\Delta$ and associator $\Phi$, the full quasi-Hopf algebra incorporating all-order correction can be understood as a quantization of the quasi-Lie bialgebra $(\mathfrak{g},\delta,\phi)$. The existence of the quantization of a quasi-Lie bialgebra has also been established in \cite{EnriquezHalbout2010QLB}. It is thus natural to conjecture that the algebra $A^!_\mathcal{N}$ is a quasi-Hopf QUE algebra $U_{\hbar}(\mathfrak{g},\delta,\phi)$.

It is an interesting question to consider the Koszul dual $A^!_\mathcal{D}$ of the $\mathcal{D}$ boundary algebra in the presence of the associator $\phi$. As discussed in the previous section, the $\mathcal{D}$ boundary condition itself is anomalous, which poses challenges when defining defects. Mathematically, the corresponding $E_2$ algebra is curved, also making the definition of Koszul duality subtle. However, assuming the duality between the two boundary Koszul dual algebras still holds, we can expect  $A^!_\mathcal{D}$ to be the linear dual (in an appropriate sense) of the quasi-Hopf algebra $U_\hbar(\mathfrak{g}, \delta, \phi)$. Such an algebra is called a coquasi-Hopf algebra; its product is not associative, and the failure of associativity is measured by a coassociator, which in this case is the dual of $\Phi$. From a physical perspective, it is natural to expect that when coupling the boundary algebra $A_\mathcal{D}$ to the coquasi-Hopf algebra $A^!_\mathcal{D}$, the boundary anomaly will be canceled by this failure of associativity. We leave a detailed analysis of this phenomenon to future work.

\subsection{Quasi-triangular (quasi)-Hopf algebras and Drinfeld associator}
\label{sec:associator}
In this section, we move to one dimension higher and consider certain $E_3$ algebra and its Koszul dual, which is expected to be quasi-triangular Hopf algebra. We start by presenting a QFT construction related to a theorem proved in \cite{safronov2023shifted}, which analyzes the map
\begin{equation}\label{eq:Pois3to2}
\mathrm{Pois}(BG,3)\to \mathrm{Pois}(BG,2)\;.\footnote{Here we denote the space of (derived) $P_n$ structure on $BG$ by $\mathrm{Pois}(BG,n)$ while the same space is denoted $\mathrm{Pois}(BG,n-1)$ in \cite{safronov2023shifted}.}
\end{equation}
This map is the classical shadow of the forgetful functor $\mathrm{Alg}_{E_3}\to \mathrm{Alg}_{E_2}$. In the QFT setting, it arises by restricting local operators from $3d$ to $2d$. The theorem of \cite{safronov2023shifted} states that an element $t\in \mathrm{Sym}^2(\mathfrak{g})^G \cong \mathrm{Pois}(BG,3)$ is sent to the $P_2$-structure corresponding to the quasi-Lie bialgebra $(\mathfrak{g},\delta=0,\phi)$, where
\begin{equation}
\phi=-\frac{1}{6}[t_{12},t_{23}]\,.
\end{equation}
Beyond providing a QFT interpretation of this result, our goal is to combine it with the discussion of the previous section to understand its consequences for the ($E_1$) Koszul dual algebra, and to explain how this associator manifests itself as the (classical limit of) Drinfeld associator of the quantum group $U_\hbar(\mathfrak{g})$.

We describe two QFT perspectives on the map \eqref{eq:Pois3to2}. The first uses the boundary of the $4d$ theories constructed from the pair $(\mathfrak{g},t)$. As reviewed in Section~\ref{sec:deformations-bf}, the data $(\mathfrak{g},t)$ determines a $P_3$-algebra structure on $C^{\sbullet}(\mathfrak{g})$ and hence a corresponding Poisson sigma model. We then ask the resulting $P_2$ structure when we restrict boundary operators from the $3d$ boundary to a fixed $2d$ plane. At tree-level, there can be no nontrivial binary operation for degree reasons. However, there can be nontrivial ternary operation obtained by integrating over the boundary contour inside the $2d$ slice
\begin{equation}
	\gamma_3^{(2)} = \{(x_1,x_2,x_3) \in (\R^2)^3 \subset (\R^3)^3 \mid x_1 = 0,\ |x_2 - x_1| = 1,\ x_3 \neq x_1,x_2\}\,.
\end{equation}

By carefully analyzing the form degree of the Feynman diagram, we find the diagram in Figure~\ref{fig:bdy3dto2} that might contribute to the boundary operation.
\begin{figure}[h!]
	\centering
	 	\begin{tikzpicture}[scale=0.7]
		\def\plane{ (-2,2) -- (4,3.6) -- (4,-2) -- (-2,-3.6) -- cycle};
		\fill[fill=gray!10] \plane;
		
		\fill[fill=gray!10] (4,3.6) -- (7,3.6) -- (7,-2) -- (4,-2) -- cycle;
		\fill[fill=gray!10] (4,-2) -- (7,-2) -- (1,-3.6) -- (-2,-3.6) -- cycle;
		
		\fill[fill=gray!20] (1,-3.6) -- (-2,-3.6) -- (-2,2) -- (1,2) -- cycle;
		\fill[fill=gray!20] (-2,2) -- (1,2) --  (7,3.6) -- (4,3.6) -- cycle;
		
		\draw[thick] \plane;
		
		\draw[dashed] (-2,2) -- (1,2);
		\draw[dashed] (4,3.6) -- (7,3.6);
		\draw[dashed] (4,-2) -- (7,-2);
		\draw[dashed] (-2,-3.6) -- (1,-3.6);
		
		\node at (3.5,3.8) {$ \R_{\geq 0}\times \R^3$};
		
		\node[left] at (-1.8,-0.5) {$ \displaystyle\oint_{\gamma_3^{(2)}}$};
		\node at (-0.8,-1) {$\scriptstyle \R^2 \subset \R^3$};
		\draw[dashed] (-2,-0.8) -- (4,0.8);
		\draw[dashed] (-2,-0.7) -- (4,0.9);
		\draw[dashed] (-2,-0.6) -- (4,1);
		
		\draw[decorate, decoration={snake, amplitude=0.5mm, segment length=3mm}] (0,-0.1) -- (1.35,-1.2) ; 
		\draw[decorate, decoration={snake, amplitude=0.5mm, segment length=3mm}] (1.35,-1.2) -- (2.7,-1);
		\draw[decorate, decoration={snake, amplitude=0.5mm, segment length=3mm}] (1.25,0.2) -- (2.7,-1) ;
		\draw[decorate, decoration={snake, amplitude=0.5mm, segment length=3mm}] (3.2,0.7) -- (3.5,-0.6);
		\draw[decorate, decoration={snake, amplitude=0.5mm, segment length=3mm}] (2.7,-1) -- (3.5,-0.6);
		\filldraw[black] (0,-0.1) circle (1.5pt);
		\node[left] at (0,-0.1) {$\scriptstyle c^a$};
		\filldraw[black] (2.7,-1) circle (1.5pt);
		\node[right] at (2.7,-1) {$\scriptstyle f_{de}^b$};
		\filldraw[black] (1.25,0.2) circle (1.5pt);
		\node[above] at (1.25,0.2) {$\scriptstyle c^b$};
		\filldraw[black] (1.35,-1.2) circle (1.5pt);
		\node[below] at (1.35,-1.2) {$\scriptstyle t^{ad}$};
		\filldraw[black] (3.2,0.7) circle (1.5pt);
		\node[above] at (3.2,0.7) {$\scriptstyle c^c$};
		\filldraw[black] (3.5,-0.6) circle (1.5pt);
		\node[right] at (3.5,-0.6) {$\scriptstyle t^{ec}$} ;
	\end{tikzpicture}
	\caption{The tree level Feynman diagram that corresponds to the non-trivial 3-ary $E_2$ operation for the boundary algebra}
	\label{fig:bdy3dto2}
\end{figure}
The Feynman diagram produces the following Lie algebra factor 
\begin{equation}
	t^{ad}t^{ec}f_{de}^b\,,
\end{equation}
which is the expression for $[t_{12},t_{23}]$. Unfortunately we cannot compute the corresponding Feynman integral.

When $t\in \mathrm{Sym}^2(\mathfrak{g})^G$ is nondegenerate, there is a second, more direct viewpoint. In this case the $P_3$-structure on $C^{\sbullet}(\mathfrak{g})$ becomes symplectic, and can be realized as the bulk algebra of a $3d$ theory, which is $3d$ Chern--Simons theory associated with the Lie algebra $\mathfrak{g}$ and with the pairing $\kappa = t^{-1}$\footnote{The $E_3$ algebra underlying Chern-Simons theory has recently been rigorously established in \cite{costello2026chern}}. Then we consider its bulk algebra structure but restricted on a $2d$ plane. By degree reasons there is no binary $E_2$ operation, but there is a non-trivial $3$-points $E_2$ operation,  again obtained by integrating the three points over the contour $\gamma_3^{(2)} \subset (\R^2)^3$. We consider the Feynman diagram in Figure~\ref{fig:P2_CS}, which gives
\begin{equation}
	\frac{2}{3}t^{ad}t^{ec}f_{ed}^b\int_{\boldsymbol{y}\in \R^3}\int_{\gamma_3^{(2)}} P(\boldsymbol{x}_1,\boldsymbol{y})P(\boldsymbol{x}_2,\boldsymbol{y})P(\boldsymbol{x}_3,\boldsymbol{y})\,.
\end{equation}
In this subsection, we adopt the following normalization for the Chern--Simons propagator:
\begin{equation}
P(\boldsymbol{x}) = \frac{1}{2}P^{\R^{3}}(\boldsymbol{x}) =  \frac{1}{4\pi}\frac{x_0dx_1\wedge dx_2 - x_1dx_0\wedge dx_2 + x_2dx_0\wedge dx_1}{|x|^3}\,,
\end{equation}
where $P^{\R^3}$ is the propagator used in Section \ref{sec:bdy_sec}. This convention is chosen so that the bulk algebra has the $P_3$ bracket $\{c^a,c^b\} = t^{ab}$. The integral is straightforward to evaluate: integrating over $\boldsymbol{x}_3 \in \R^2$ yields $\int_{\R^2}P(\boldsymbol{x}_3,\boldsymbol{y}) = -\frac{1}{2}\mathrm{sgn}(y_0)$. The factor of $\mathrm{sgn}(y_0)$ allows us to restrict the integration domain of $\boldsymbol{y}$ from $\R^3$ to the half-space $\R_{\geq 0}\times \R^2$ (which introduces a factor of $2$). The remaining expression then becomes
\begin{equation}
	-\frac{2}{3}t^{ad}t^{ec}f_{ed}^b\int_{\boldsymbol{x}_2 \in S^1}\int_{\boldsymbol{y}\in \R_{\geq 0}\times \R^2}P(0,\boldsymbol{y})P(\boldsymbol{x}_2,\boldsymbol{y})\,.
\end{equation}
which has already been computed, yielding the final result $\{c^a,c^b,c^c\} = - \frac{1}{6} t^{ad}t^{ec}f_{ed}^b$.

\begin{figure}[h]
\centering
\begin{tikzpicture}[scale=0.65]
    \def\planeA{ (-2,2) -- (2,3.6) -- (2,-2) -- (-2,-3.6) -- cycle};
    \draw[dashed] (-1,3.6) -- (5,3.6);
    \draw[dashed] (-1,-2) -- (5,-2);
	\draw[dashed, shift={(-3,0)}] \planeA;
	\fill[fill=gray!20,opacity=0.8] \planeA;
	\draw[dashed] \planeA;
	\draw[dashed] (-5,2) -- (1,2);
    \draw[dashed] (-5,-3.6) -- (1,-3.6);
	\draw[dashed,shift={(3,0)}] \planeA;
    \node at (2,4.2) {$ \R^2 \subset \R^3$};
	\filldraw[black] (1.2,2.5) circle (1.5pt);
	\filldraw[black] (-0.5,0) circle (1.5pt);
	\filldraw[black] (0.8,-1.2) circle (1.5pt);
	\node at (-1.8,0) {$\displaystyle\oint_{\gamma_3^{(2)} \subset \R^2}$};
	\filldraw[black] (3,0.5) circle (1.5pt);
	\node[right] at (3,0.5) {$\scriptsize f^{abc}$};
	\draw[decorate, decoration={snake, amplitude=0.5mm, segment length=3mm}] (1.2,2.5) -- (3,0.5);
	\draw[decorate, decoration={snake, amplitude=0.5mm, segment length=3mm}] (-0.5,0) -- (3,0.5);
	\draw[decorate, decoration={snake, amplitude=0.5mm, segment length=3mm}] (0.8,-1.2) -- (3,0.5);
\end{tikzpicture}
\caption{The tree level Feynman diagram that corresponds to the non-trivial 3-ary $E_2$ operation for the bulk algebra of Chern--Simons theory}
\label{fig:P2_CS}
\end{figure}

We now explain what this induced $P_2/E_2$-structure implies for the ($E_1$) Koszul dual algebra. By Dunn additivity, one expects a commutative diagram of the form
\begin{center}
	\begin{tikzcd}
\mathrm{Alg}_{E_3} \arrow[r, "\text{forget}"] \arrow[d, "\mathrm{KD}_{E_1}"] & \mathrm{Alg}_{E_2} \arrow[d, "\mathrm{KD}_{E_1}"] \\
\text{quasitriangular-(quasi) BiAlg}\arrow[r, "\text{forget}"] 
& \text{(quasi-) BiAlg}
\end{tikzcd}
\end{center}
In general, such a diagram may commute only up to higher coherent homotopies, depending on how one models Koszul duality. In our setting, however, we have a canonical physical ``definition'' of Koszul duality, namely as the algebra of a universal line defect. With this choice, the above diagram should commute strictly, since restricting line configurations from $3d$ to $2d$ is compatible with restricting the bulk algebra from $3d$ to $2d$. 

Therefore, from the discussion of the previous section, we expect the universal line defect algebra of Chern--Simons theory to be a quasitriangular quasi-Hopf algebra quantizing the quasi Lie bialgebra $(\mathfrak{g},0,\phi =-\frac{1}{6}[t_{12},t_{23}])$. The line configuration that gives rise to the associator takes the same form as the construction in the previous section. The only difference is that now we place the diagram \begin{tikzpicture}[scale =  0.4]
	\draw[->] (0, 2.5) -- (0, 0);
    \draw[->] (0.2, 2.5) to[out=270, in=90] (1.4, 0);
    \draw[->] (1.6, 2.5) -- (1.6, 0);
\end{tikzpicture} in the bulk, as depicted in Figure~\ref{fig:associator_CS}. In the context of Vassiliev invariant, this diagram is also known to contribute to the associator from Kontsevich integral \cite{BarNatan1997Non,Chmutov2012Vas}.

\begin{figure}[h]
\centering
\begin{tikzpicture}[scale=0.65]
    \def\planeA{ (-2,2) -- (2,3.6) -- (2,-2) -- (-2,-3.6) -- cycle};
    \draw[dashed] (-1,3.6) -- (5,3.6);
    \draw[dashed] (-1,-2) -- (5,-2);
	\draw[dashed, shift={(-3,0)}] \planeA;
	\fill[fill=gray!20,opacity=0.8] \planeA;
	\draw[dashed] \planeA;
	\draw[dashed] (-5,2) -- (1,2);
    \draw[dashed] (-5,-3.6) -- (1,-3.6);
	\draw[dashed,shift={(3,0)}] \planeA;
    \node at (2,4.2) {$ \R^2 \subset \R^3$};
	\filldraw[black] (-1.2,0.4) circle (1.5pt);
	\draw[thick] (-1.2,-2.4) -- (-1.2,1.6);
	\filldraw[black] (1.4,2.3) circle (1.5pt);
	\draw[thick] (1.4,-1.5) -- (1.4,2.5);
	\filldraw[black] (0.6,-0.2) circle (1.5pt);
	\draw[thick] plot [smooth] coordinates {(-0.8,1.8) (-0.6,0.4)(-0.3,0.1) (0,0) (0.2,-0.1) (0.9,-0.4) (1.1,-1.6)};
	\filldraw[black] (3,0.5) circle (1.5pt);
	\draw[decorate, decoration={snake, amplitude=0.5mm, segment length=3mm}] (1.4,2.3) -- (3,0.5);
	\draw[decorate, decoration={snake, amplitude=0.5mm, segment length=3mm}] (-1.2,0.4) -- (3,0.5);
	\draw[decorate, decoration={snake, amplitude=0.5mm, segment length=3mm}] (0.6,-0.2) -- (3,0.5);
\end{tikzpicture}
\caption{Associator from the parenthesized 3-strand tangle diagram. }
\label{fig:associator_CS}
\end{figure}

We now compute the tree-level contribution to the associator. Let the three lines be parametrized as $L_1:=\{(0,0,t)\mid t\in\mathbb R\}$,  $L_2:=\{(0,h(t),t):t\in\mathbb R\}$, $L_3:=\{(0,1,t):t\in\mathbb R\}$, where $h:\mathbb R\to[0,1]$ is a smooth function with $h(-\infty)=0$ and $h(+\infty)=1$. The Feynman integral is given as follows
\begin{equation}
t^{ad}t^{ec}f_{ed}^b\int_{\boldsymbol{x}\in \R^3}\int_{\boldsymbol{q}_i \in L_i}P(\boldsymbol{x},\boldsymbol{q}_1)\wedge P(\boldsymbol{x},\boldsymbol{q}_2)\wedge P(\boldsymbol{x},\boldsymbol{q}_3)\,.
\end{equation}
We can first integrate over $L_1$ and $L_3$:
\begin{equation}
\int_{\boldsymbol{q}_1 \in L_1}P(\boldsymbol{x},\boldsymbol{q}_1)=\frac{1}{2\pi}\frac{x_2dx_1-x_1dx_2}{x_1^2+x_2^2},
\qquad
\int_{\boldsymbol{q}_3 \in L_3}P(\boldsymbol{x},\boldsymbol{q}_3)=\frac{1}{2\pi}\frac{(x_2-1)dx_1-x_1\,dx_2}{x_1^2+(x_2-1)^2}\,.
\end{equation}
Omitting the Lie algebra indices, we find the integral to be
\begin{equation}
I = \frac{1}{16\pi^3}\int_{\mathbb R}dt \int_{\mathbb R^3}\frac{h'(t)x_1^2\,dx_1\,dx_2\,dx_3} {(x_1^2+x_2^2)\bigl(x_1^2+(x_2-1)^2\bigr)\bigl(x_1^2+(x_2-h(t))^2+(x_3-t)^2\bigr)^{3/2}}\,. 
\end{equation} 
We can first integrate over $x_3$, then the integral reduces to
\begin{equation} 
I = \frac{1}{8\pi^3}\int_{\mathbb R}h'(t)\,dt\int_{\mathbb R^2}\frac{x_1^2\,dx_1\,dx_2} {(x_1^2+x_2^2)\bigl(x_1^2+(x_2-1)^2\bigr)\bigl(x_1^2+(x_2-h(t))^2\bigr)}\,. 
\end{equation}
We comment that the integrand above is an even function of $x_1$; consequently, the half-space integral encountered in the previous section can be evaluated in the exact same manner. The integration over $x_1$ and $x_2$ is detailed in Appendix \ref{sec:integral_computation}, yielding
\begin{equation}
\begin{aligned}
I&= -\frac{1}{8\pi^2}\int_{\mathbb R}h'(t)\left(\frac{\log h(t)}{1-h(t)}+\frac{\log(1-h(t))}{h(t)}\right)\,dt
=\frac{1}{8\pi^2}(\operatorname{Li}_2(h)-\operatorname{Li}_2(1-h))|_{h=0}^{h=1}\\
&=\frac{1}{4\pi^2}(\operatorname{Li}_2(1) = \zeta(2)) = \frac{1}{24}\,.
\end{aligned}
\end{equation}
Therefore, we find that the associator takes the form of the expansion
\begin{equation}
	\Phi = 1 + \frac{1}{24} [t_{12},t_{23}] + \dots
\end{equation}
This perfectly reproduces the known expansion of the Drinfeld associator \cite{Drinfeld1990Quasitriangular}. 

A final ingredient missing from the above discussion is the $R$-matrix of the quasitriangular structure, which is precisely the extra datum forgotten when passing from $E_3$ to $E_2$. In the QFT construction, this structure is recovered by considering the braiding (crossing) of two lines, as depicted in Figure~\ref{fig:Rmatrix_CS}. At leading order, one evaluates the diagram with a single propagator, integrated along the two crossing lines:
\begin{equation}
	t^{ab}\int_{\boldsymbol{x} \in L_1}\int_{\boldsymbol{y} \in L_2} P(\boldsymbol{x},\boldsymbol{y}) = \frac{t^{ab}}{2}\,.
\end{equation}
This yields the $R$-matrix with expansion  $R = 1 + \frac{t}{2}  + \dots$

\begin{figure}[h]
\centering
\begin{tikzpicture}[scale=0.65]
    \def\planeA{ (-2,2) -- (2,3.6) -- (2,-2) -- (-2,-3.6) -- cycle};
    \draw[dashed] (-0.5,3.6) -- (4.5,3.6);
    \draw[dashed] (-0.5,-2) -- (4.5,-2);
	\draw[dashed, shift={(-2.5,0)}] \planeA;
		\draw[dashed] (-4.5,2) -- (0.5,2);
    \draw[dashed] (-4.5,-3.6) -- (0.5,-3.6);
	\draw[dashed,shift={(2.5,0)}] \planeA;
	\draw [thick] (-1.2,-2.4) --  (-0.125,-0.6);
	\draw [thick] (0.525,0.6) --  (1.4,2.4);
	\draw [thick] (1.4,-1.5) -- (-1.2,1.6);
	\filldraw[black] (0.8,1.15) circle (1.5pt);
	\filldraw[black] (-0.95,1.3) circle (1.5pt);
	\filldraw[black] (0.78,-0.8) circle (1.5pt);
	\filldraw[black] (-0.4,-1.03) circle (1.5pt);
	\draw[decorate, decoration={snake, amplitude=0.5mm, segment length=3mm}] (0.8,1.15) -- (-0.95,1.3);
	\draw[decorate, decoration={snake, amplitude=0.5mm, segment length=3mm}] (0.77,-0.8) -- (-0.4,-1.02);
\end{tikzpicture}
\caption{R-matrix from crossing of two lines}
\label{fig:Rmatrix_CS}
\end{figure}

Combining the ingredients we have discussed so far, we are led to the following conjecture:
\begin{conj}
  The universal line defect algebra of Chern--Simons theory is twist-equivalent to\footnote{The results of an all-order computation depend on the choice of propagator and renormalization scheme; therefore, a tree-level analysis alone cannot uniquely determine the resulting algebra.} the quasitriangular quasi-Hopf algebra $U_{\hbar}(\mathfrak{g})_{\Phi} = (U(\mathfrak{g})[[\hbar]], \Delta_0, R = e^{\hbar t/2}, \Phi(\hbar t_{12},\hbar t_{23}))$, where $\Delta_0$ is the standard undeformed coproduct on $U(\mathfrak{g})[[\hbar]]$, $\Phi$ is the Drinfeld associator, and $t$ is the Casimir tensor corresponding to the inverse of the Killing form $\kappa$.
\end{conj}
We remark that a similar conjecture is expected to hold for the $4d$ theory considered here: the Koszul dual of the $\mathcal{N}$ boundary algebra should be $U_{\hbar}(\mathfrak{g})_{\Phi}$, with $t$ not necessarily non-degenerate. Although we are currently unable to evaluate the corresponding Feynman diagram in Figure~\ref{fig:associator_bdy}, this presents an interesting question for future investigation.

In the quantum group literature, a more commonly encountered object is the Drinfeld--Jimbo quantum group \cite{Drinfeld1987QG}, which we denote by $U_\hbar(\mathfrak{g})_{DJ}$. It is a quasitriangular Hopf algebra with both a deformed product and coproduct, but with a trivial associator. It is a well-known result that the Drinfeld--Jimbo quantum group defines the modular tensor category of line defects in Chern-Simons theory \cite{reshetikhin1991invariants}, and thus provides a categorical definition of Chern--Simons theory via \cite{witten1989quantum,turaev2016quantum}. The appearance of Drinfeld--Jimbo R-matrix from Chern-Simons theory has been discussed in \cite{Aamand:2019evs}. Our construction provides a further link to this story, as we are able to directly compute more structure of $U_{\hbar}(\mathfrak{g})_{\Phi}$ from the line defect algebra of Chern--Simons theory. 

As Drinfeld proved \cite{Drinfeld1990QuasiHopf}, the Drinfeld--Jimbo quantum group is twist-equivalent to the quasi-Hopf algebra $U_{\hbar}(\mathfrak{g})_{\Phi}$, and they define equivalent categories of modules. From a physical perspective, by viewing $U_{\hbar}(\mathfrak{g})_{\Phi}$ as the universal line defect, it's natural to identify the category of $U_{\hbar}(\mathfrak{g})_{\Phi}$-modules with the category $\mathcal{C}_{CS}$ of line defects in Chern--Simons theory. We thus arrive at a consistent picture relating the quantum group structure derived from Chern--Simons theory and its category of lines:

\begin{center}
	\begin{tikzcd}
U_{\hbar}(\mathfrak{g})_{\Phi}\text{-mod} \arrow[d,"\cong"] \arrow[r, "\cong"] & \mathcal{C}_{CS} \\
 U_\hbar(\mathfrak{g})_{DJ}\text{-mod} \arrow[ru, "\cong"] 
\end{tikzcd}
\end{center}

\section{Holomorphic-topological Poisson sigma model}
\label{sec:HT_PSM}
\subsection{Holomorphic topological theory and shifted $\lambda$-bracket} 

Having discussed the construction of purely topological theory, we now turn to the holomorphic and holomorphic--topological (HT) settings. These types of quantum field theories have been studied recently in a series of works \cite{budzik2024semi,Bomans:2023mkd,Gaiotto:2024gii}. Some of their mathematical aspects are explored in \cite{Wang:2024sqm,wang2024factorization}. From the perspective of factorization algebras, the algebraic structure underlying holomorphic--topological theories combines the notion of $E_d$-algebras and (higher-dimensional) chiral algebras. Although we now have a fairly clear picture of $E_d$-algebras and of higher chiral algebras \cite{Felder:2025bsz,Gui:2025dqp}, these mixed algebraic structures are not yet fully understood \cite{garner2025raviolo}.

Instead of fully developing the theory of holomorphic--topological factorization algebras at the chain level, we will focus on a simplification at the semi-classical level, based on the construction which generalizes the secondary bracket in a topological field theory. This will give rise to the notion of shifted chiral Poisson algebras. For HT theory on $\R^d\times \C$, this is carefully analyzed in \cite{Oh:2019mcg}. Adding more holomorphic directions does not essentially change the construction, one only needs to be more careful about the degree shift in the $\lambda$-bracket. 

As we have seen, in the purely topological case the secondary bracket originates from the fundamental (co)homology class of $\mathrm{Conf}_2(\R^d)$. In holomorphic--topological theories, the analogous secondary operations are constructed in a similar way, but now using a mixed Dolbeault--de Rham cohomology of $\mathrm{Conf}_2(\R^d\times \C^m)$. Fixing one point, we may identify the relevant configuration space with $\R^d\times \C^m\setminus{0}$, viewed as a manifold with a natural transverse holomorphic foliation. It carries a complex of sheaves $(\mathcal{A}^{\sbullet},d)$ resolving the sheaf of functions that are (locally) constant along the real directions and holomorphic along the complex directions. The cohomology of $\mathcal{A}^{\sbullet}$ is computed in \cite{Alfonsi:2025kmj} and is given by
\begin{equation}
H^n(\mathcal{A}^{\sbullet}(\R^d\times \C^m),d) = \begin{cases}
\mathcal{O}^{hol}(\C^m) & n = 0,\\
H^{m-1}(\C^m\backslash{0}) & n = m+d - 1 \,.
\end{cases}
\end{equation}
In a proper algebraic model, one should instead obtain the polynomial ring $\C[z_1,\dots,z_m]$ in degree $0$ and the module $\C[\partial_{z_1},\dots,\partial_{z_m}]\omega$ in degree $m+d-1$. A convenient way to organize the resulting tower of brackets associated with these cohomology classes is via the following generating function, known as the $\lambda$-bracket:
\begin{equation}
	\{\mathcal{O}_1~_{\boldsymbol{\lambda}} \mathcal{O}_2\} = \oint_{S^{d+2m - 1}} d^d\boldsymbol{z} e^{\boldsymbol{\lambda}\cdot \boldsymbol{z}}\mathcal{O}_1^{(d+2m-1)}(\boldsymbol{x},\boldsymbol{z})\mathcal{O}_2(0,0)\,,
\end{equation}
where we denote $\boldsymbol{\lambda} = (\lambda_1,\dots,\lambda_m)$ and $\boldsymbol{z} = (z_1,\dots,z_m)$.

\begin{figure}[h!]
	\centering
	\begin{tikzpicture}[scale = 0.5]
		\shade[ball color=gray!10, opacity=0.3] (0,0) circle (3cm);
		\draw[gray!50, line width=0.5pt] (0,0) circle (3cm);
		\draw[gray!50] (-3,0) arc (180:360:3cm and 0.7cm);
		\draw[gray!50, dashed] (-3,0) arc (180:0:3cm and 0.7cm);
		\draw[gray!50, dashed] (0,0) ellipse (0.9cm and 3cm);
		\filldraw[black] (0,0) circle (2pt);
		\node[above] at (0,0) {$\mathcal{O}_1$};
		\node[above] at (-3.5,0) {$\displaystyle\oint e^{\boldsymbol{\lambda}\cdot \boldsymbol{z}}\mathcal{O}_2^{(d+2m-1)}$};
	\end{tikzpicture}
	\caption{Shifted $\lambda$-bracket from weighted integration over a sphere.}
\end{figure}

By the same argument as in \cite{Oh:2019mcg}, the $\lambda$-bracket must satisfy sesquilinearity, graded skew-symmetry, and the Jacobi identity. Together with the commutative product, this endows $A$ with the structure of a shifted chiral Poisson algebra. We define a (strict) $(1-d-m)$-shifted $m$-chiral Poisson algebra\footnote{We apologize for this terrible terminology.}, or a $cP_{d,m}$-algebra, to be a graded vector space $A$ equipped with the following structures:
 \begin{itemize}
 	\item A graded commutative product $\cdot: A\otimes A\to A$.
 	\item $m$ commuting derivatives on $A$: $\partial_i:A\to A$, $i = 1,\dots m$.
 	\item  A $(1-d-m)$-shifted $\boldsymbol{\lambda}$-bracket $\{-_{\boldsymbol{\lambda}}-\}: A\otimes A\to A[\lambda_1,\dots,\lambda_m][1-d-m]$
 \end{itemize}
 These maps are required to satisfy the following axioms:
 \begin{itemize}
 	\item (sesquilinearity):
 	$$
 	\{\partial_ia_{\boldsymbol{\lambda}}b\} = -(\partial+\lambda_i)\{a_{\boldsymbol{\lambda}}b\},\quad \{a_{\boldsymbol{\lambda}}\partial_ib\} = \lambda_i  \{a_{\boldsymbol{\lambda}}b\}
 	$$
 	\item (graded skew symmetry):
 	$$
 	\{a_{\boldsymbol{\lambda}}b\} = - (-1)^{|a||b|}\prescript{}{\leftarrow}\{b_{-(\boldsymbol{\partial} + \boldsymbol{\lambda})}a\} \quad \quad (\leftarrow\text{ means moving $\boldsymbol{\partial}$ to the left})
 	$$
 	\item (Jacobi identity)
 	$$
 	\{a_{\boldsymbol{\lambda}}\{b_{\boldsymbol{\mu}}c\}\} = \{\{a_{\boldsymbol{\lambda}}b\}_{\boldsymbol{\lambda} + \boldsymbol{\mu}}c\} \pm\{b_{\boldsymbol{\mu}}\{a_{\boldsymbol{\lambda}}c\}\}
 	$$
 	\item (Leibniz rule)
 	$$
		\{a_{\boldsymbol{\lambda}}b\cdot c\} = \{a_{\boldsymbol{\lambda}}b\}\cdot c + (-1)^{|b|(|a| + d)}b\{a_{\boldsymbol{\lambda}}c\} 
 	$$
 \end{itemize}

Again, as in the purely topological setting, holomorphic--topological factorization algebras arising from the bulk of an HT theory are very limited, as described in \cite{Gaiotto:2024gii}. The $cP_{d,m}$-structures extracted above should therefore be viewed as isolating the universal operations coming from configuration-space classes, and packaging them into a tractable algebraic formalism via the $\lambda$-bracket. In the spirit of our guiding principle—that passing to a theory in one higher dimension enlarges the class of accessible Poisson/factorization algebra structure — we will consider generalized Poisson sigma models in the following sections.

 \subsection{Basic construction}
In this section, we construct the Poisson sigma model associated with a $cP_{d,m}$-algebra. The basic construction is a straightforward generalization of the approach in \cite{khan2025poisson}; however, we consider an important generalization by incorporating higher $\boldsymbol{\lambda}$-brackets. This generalization will become crucial in higher dimensions. We first introduce the derived version of a $cP_{d,m}$-algebra, and then construct the corresponding Poisson sigma model.

One way to define a derived $cP_{d,m}$-algebra is to specify all $L_\infty$-type brackets and impose the corresponding higher Jacobi identities. In the setting of $\boldsymbol{\lambda}$-brackets, however, writing out these identities becomes cumbersome. Instead, following \cite{de2013variational,bakalov2019operadic}, we adopt an operadic approach and first introduce a shifted chiral analogue of the Nijenhuis--Schouten algebra, denoted $\mathrm{cPol}(A,d+m-1)$ (or simply denoted $\mathrm{cPol}(A)$ when the degree shift is clear from the context). A (derived) $cP_{d,m}$-algebra structure on $A$ is then encoded by a Maurer--Cartan element in $\mathrm{cPol}(A)$.

We start with a graded commutative algebra $A$ that is equipped with $m$ commuting derivatives $\partial_i: A \to A$. We denote $\C[\boldsymbol{\partial}] = \C[\partial_1,\dots,\partial_m]$, and define the $\C[\boldsymbol{\partial}]$-module $\C[\boldsymbol{\lambda}] = \C[\lambda_1,\dots,\lambda_m]$ so that any polynomial $P(\boldsymbol{\partial})$ acts by $P(-\boldsymbol{\lambda})$. In the same way we define the $\C[\boldsymbol{\partial}]^{\otimes k}$ module $\C[\boldsymbol{\lambda}^{0},\dots,\boldsymbol{\lambda}^{k-1}]$. Using the embedding $\C[\boldsymbol{\partial}] \subset \C[\boldsymbol{\partial}]^{\otimes k}$ defined by $\boldsymbol{\partial} \mapsto \sum_{i = 0}^{k-1} 1^{\otimes i} \otimes \boldsymbol{\partial} \otimes 1^{\otimes k-i-1}$, we also get a $\C[\boldsymbol{\partial}]$-module structure on $\C[\boldsymbol{\lambda}^{0},\dots,\boldsymbol{\lambda}^{k-1}]$.

\begin{definition}
	Let the group $S_{k}$ act on $A^{\otimes k}$ by permuting the tensor factors, and on $\C[\boldsymbol{\lambda}^{0}, \dots, \boldsymbol{\lambda}^{k-1}]$ by permuting the $\boldsymbol{\lambda}^i$ variables.

	We define the space of shifted chiral polyvector fields $\mathrm{cPol}(A,d+m-1)$ as the subspace of
	\begin{equation}
		\bigoplus_{k = 0}^{\infty} \mathrm{Hom}_{\C[\boldsymbol{\partial}]^{\otimes k}}(A[-d-m]^{\otimes k}, \C[\boldsymbol{\lambda}^{0}, \dots, \boldsymbol{\lambda}^{k-1}] \otimes_{\C[\boldsymbol{\partial}]} A)^{S_{k}}
	\end{equation}
	consisting of maps $a_0\otimes \dots \otimes a_k \mapsto X_{\boldsymbol{\lambda}^0,\dots, \boldsymbol{\lambda}^{k}}(a_0,\dots, a_k)$, satisfying the following Leibniz rule:
	\begin{equation}\label{eq:cPois_Leb}
		\begin{aligned}
			X_{\boldsymbol{\lambda}^0,\dots,\boldsymbol{\lambda}^{k}}(a_0,\dots,b_ic_i,\dots, a_n) = & (\pm)X_{\boldsymbol{\lambda}^0,\dots,\boldsymbol{\lambda}^i+\boldsymbol{\partial},\dots,\boldsymbol{\lambda}^{k}}(a_0,\dots, c_i,\dots, a_n)_{\rightarrow}b_i\\
			& + (\pm)X_{\boldsymbol{\lambda}^0,\dots,\boldsymbol{\lambda}^i+\boldsymbol{\partial},\dots,\boldsymbol{\lambda}^{k}}(a_0,\dots, b_i,\dots, a_n)_{\rightarrow}c_i\,.
		\end{aligned}
	\end{equation}
\end{definition}

\begin{remark}\label{rmk:chiral_Pbra}
	We can identify the $\C[\boldsymbol{\partial}]^{\otimes k+1}$ module $ \C[\boldsymbol{\lambda}^{0}, \dots, \boldsymbol{\lambda}^{k}] \otimes_{\C[\boldsymbol{\partial}]} A$ with $A[\boldsymbol{\lambda}^{0}, \dots, \boldsymbol{\lambda}^{k}]/\langle\boldsymbol{\partial} + \sum_{i=0}^{k}\boldsymbol{\lambda}^i \rangle$. Then, $\mathrm{cPol}(A)$ is the subspace of $\mathrm{Hom}_{\C}(A^{\otimes k+1}, A[\boldsymbol{\lambda}^{0}, \dots, \boldsymbol{\lambda}^{k}]/\langle\boldsymbol{\partial} + \sum_{i=0}^{k}\boldsymbol{\lambda}^i \rangle)$ consisting of maps that satisfy the following conditions
 
	\begin{enumerate}
	\item(sesquilinearity) $
		X_{\boldsymbol{\lambda}^0,\dots,\boldsymbol{\lambda}^k}(a_0, \dots, \boldsymbol{\partial} a_i, \dots, a_n) = -\boldsymbol{\lambda}^iX_{\boldsymbol{\lambda}^0,\dots,\boldsymbol{\lambda}^k}(a_0, \dots, a_n)$.
	\item(skewsymmetry) $X_{\boldsymbol{\lambda}^0,\dots,\boldsymbol{\lambda}^k}(a_0,\dots, a_n) = \epsilon(\sigma)X_{\boldsymbol{\lambda}^{\sigma(0)},\dots,\boldsymbol{\lambda}^{\sigma(k)}}(a_{\sigma(0)}, \dots, a_{\sigma(k)})$, where $\epsilon(\sigma)$ is the Koszul sign of $\sigma \in S_{k+1}$.
	\item(Leibniz rules)  \eqref{eq:cPois_Leb}
	\end{enumerate}
	Sometimes we also use the $\boldsymbol{\lambda}$-bracket notation $\{a_0~\!_{\boldsymbol{\lambda}^1}a_1~\!_{\boldsymbol{\lambda}^2}\dots~\!_{\boldsymbol{\lambda}^k}a_k\} = X_{-(\boldsymbol{\partial}+\sum_{i=1}^k\boldsymbol{\lambda}^i),\boldsymbol{\lambda}^1,\dots,\boldsymbol{\lambda}^k}(a_0,\dots, a_k)$ to denote these operations.
\end{remark}
We define an operation $\circ$ on $\mathrm{cPol}(A)$ as
\begin{equation}
	\begin{aligned}
		&(X \circ Y)_{\boldsymbol{\lambda}^0,\ldots,\boldsymbol{\lambda}^k}(a_0,\ldots,a_k) = \sum_{\sigma \in \operatorname{Sh}(k-l+1,l)}\epsilon(a;\sigma)\times \\
 & X_{\boldsymbol{\lambda}^{\sigma(0)}+\cdots\boldsymbol{\lambda}^{\sigma(k-l)},\boldsymbol{\lambda}^{\sigma(k-l+1)},\dots,\boldsymbol{\lambda}^{\sigma(k)}}\left(
Y_{\boldsymbol{\lambda}^{\sigma(0)},\dots,\boldsymbol{\lambda}^{\sigma(k-l)}}
(a_{\sigma(0)},\dots,a_{\sigma(k-l)})a_{\sigma(k-l+1)},\dots,a_{\sigma(k)}\right)\,.
\end{aligned}
\end{equation}

\begin{prop}
	The bracket $[X,Y] = X\circ Y - (-1)^{\deg X \deg Y} Y \circ X$
	defines a Lie algebra structure on $\mathrm{cPol}(A)$.
\end{prop}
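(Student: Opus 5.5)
The plan is the standard pre-Lie argument: show that the composition $\circ$ is a graded right pre-Lie product on $\mathrm{cPol}(A)$, and that $\circ$ preserves $\mathrm{cPol}(A)$. The bracket is then automatically a graded Lie bracket. Concretely, define the associator
\begin{equation}
	\mathrm{As}(X,Y,Z) = (X\circ Y)\circ Z - X\circ(Y\circ Z)\,.
\end{equation}
If $\mathrm{As}(X,Y,Z) = (-1)^{\deg Y\deg Z}\mathrm{As}(X,Z,Y)$, then the graded Jacobi identity for $[X,Y]=X\circ Y-(-1)^{\deg X\deg Y}Y\circ X$ follows by the usual expansion: the twelve terms of the cyclic Jacobi sum group into six associators, which cancel in pairs. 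Graded skew-symmetry of $[-,-]$ is immediate from the definition. So the work reduces to two checks.

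\textbf{Closure.} Let $X\in\mathrm{cPol}$ have arity $k-l+1$ in the outer slot structure and $Y$ arity $k-l+1$ as used. We must show $X\circ Y$ is well defined with values in $A[\boldsymbol{\lambda}^0,\dots,\boldsymbol{\lambda}^k]/\langle\boldsymbol{\partial}+\sum_i\boldsymbol{\lambda}^i\rangle$, and that it satisfies the conditions of Remark~\ref{rmk:chiral_Pbra}.

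For well-definedness: $Y$ lands in $A[\boldsymbol{\lambda}^{\sigma(0)},\dots]$ modulo $\boldsymbol{\partial}+\sum\boldsymbol{\lambda}^{\sigma(j)}$. Sesquilinearity of $X$ in its first slot, evaluated at $\boldsymbol{\lambda}^{\sigma(0)}+\dots+\boldsymbol{\lambda}^{\sigma(k-l)}$, says precisely that $\boldsymbol{\partial}$ acting on the inserted element is replaced by minus that sum. This is consistent with the relation, so the ambiguity in $Y$'s output is killed.

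For the three conditions:
\begin{enumerate}
	\item Sesquilinearity in each $a_i$ follows from sesquilinearity of $X$ or $Y$, according to which slot $a_i$ occupies.
	\item Symmetry under $S_{k+1}$ holds because we sum over shuffles and both $X$ and $Y$ are symmetric. A permutation of inputs permutes the shuffle set, with Koszul signs matching.
	\item The Leibniz rule \eqref{eq:cPois_Leb}: if the product $b_ic_i$ enters $Y$, apply Leibniz for $Y$, then Leibniz for $X$ in its first argument, since $Y(\dots)a$ is a product. If $b_ic_i$ enters $X$ directly, use Leibniz for $X$.
\end{enumerate}

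\textbf{Pre-Lie identity.} Expand $(X\circ Y)\circ Z$ as a sum over ordered splittings of the inputs. In one class of terms, $Z$'s output is fed into an input of $Y$; these are exactly the terms of $X\circ(Y\circ Z)$. The remaining terms are those where $Z$'s output is fed into a direct input of $X$, disjoint from $Y$'s block, or where $Z$'s output lands in $X$'s first slot through the product with $Y$'s output.

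This last possibility is the one specific to polyvector fields. Because $X$'s first argument is $Y(\dots)\cdot a$, inserting $Z$'s output into $a$ produces via Leibniz a term in which $Y$ and $Z$ both feed $X$'s first slot as a product. That term, together with the terms where $Y$ and $Z$ feed different slots, is manifestly symmetric in $(Y,Z)$ up to the Koszul sign, once the $\boldsymbol{\lambda}$-labels are tracked. Each block's output is evaluated at the sum of its block's $\boldsymbol{\lambda}$'s, and the Leibniz shift $\boldsymbol{\lambda}^i+\boldsymbol{\partial}$ is compatible with this. Hence $\mathrm{As}$ is graded symmetric in its last two arguments. This mirrors the proof for the classical variational/chiral case in \cite{de2013variational,bakalov2019operadic}, and I would cite their operadic argument and adapt it to several $\boldsymbol{\lambda}$-variables and the shift $d+m-1$.

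\textbf{Main obstacle.} I expect the bookkeeping of the $\boldsymbol{\lambda}$-arguments and signs in the associator to be the hardest step. In particular, one must show that the $\boldsymbol{\partial}$ appearing from the Leibniz rule, when moved to the right ("$\rightarrow$"), reproduces exactly the summed $\boldsymbol{\lambda}$-argument of the inner map in both orders of composition. I would first treat the case of one-output, low-arity $Y$ and $Z$ to fix conventions, then argue that the general case factors block by block.
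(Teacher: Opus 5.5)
Your overall strategy (pre-Lie $\Rightarrow$ Lie, adapting \cite{de2013variational,bakalov2019operadic}) is the one the paper's one-line proof has in mind. However, your closure step is false as stated: $\circ$ does \emph{not} preserve $\mathrm{cPol}(A)$, and your Leibniz check (item 3) rests on a misreading of the formula. The paper's formula for $\circ$ is missing a comma after $Y_{\dots}(a_{\sigma(0)},\dots,a_{\sigma(k-l)})$. Indeed, $X$ carries the $l+1$ labels $\boldsymbol{\lambda}^{\sigma(0)}+\dots+\boldsymbol{\lambda}^{\sigma(k-l)},\boldsymbol{\lambda}^{\sigma(k-l+1)},\dots,\boldsymbol{\lambda}^{\sigma(k)}$. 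So its first slot holds $Y$'s output alone, and $a_{\sigma(k-l+1)},\dots,a_{\sigma(k)}$ are separate arguments ($X$ has arity $l+1$, $Y$ has arity $k-l+1$). With the correct reading, $X\circ Y$ already fails the Leibniz rule in arity one. For $k=l=0$, elements of $\mathrm{cPol}(A)$ are $\boldsymbol{\partial}$-equivariant derivations of $A$, and $X\circ Y=XY$ is a second-order operator:
\[
XY(bc)=(XYb)\,c\pm(Yb)(Xc)\pm(Xb)(Yc)\pm b\,(XYc).
\]
In general, suppose $b_ic_i$ enters $Y$. Applying \eqref{eq:cPois_Leb} for $Y$ and then for $X$ in its first slot produces the wanted terms, but also cross terms of the shape $X(b_i,\dots)_{\rightarrow}Y(c_i,\dots)$ (and $b_i\leftrightarrow c_i$), in which $b_i$ and $c_i$ are split between the two operators. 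These are not of the form \eqref{eq:cPois_Leb}. They cancel only in $X\circ Y-(-1)^{\deg X\deg Y}Y\circ X$. For the same reason, the ``product'' term you single out in the associator computation does not occur.

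The fix is to run your argument on the ambient space $\widetilde{W}\supset\mathrm{cPol}(A)$ appearing in the definition of $\mathrm{cPol}(A)$. This is the space of all sesquilinear, $S_{k+1}$-symmetric maps with no Leibniz condition, the multi-$\boldsymbol{\lambda}$ analogue of De Sole--Kac's $W^{\partial}(\Pi A)$. Your well-definedness and symmetry checks show that $\circ$ is defined on $\widetilde{W}$. Your associator argument then shows $\circ$ is right pre-Lie there: the only terms beyond $X\circ(Y\circ Z)$ are those where $Z$'s output enters a non-first slot of $X$, and these are symmetric in $(Y,Z)$. Hence $(\widetilde{W},[-,-])$ is a graded Lie algebra. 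What remains, and what your proposal does not address, is that $\mathrm{cPol}(A)$ is a Lie subalgebra. You must show that the cross terms of $X\circ Y$ and of $(-1)^{\deg X\deg Y}Y\circ X$ match, using sesquilinearity, the shifts $\boldsymbol{\lambda}^i+\boldsymbol{\partial}$ with $\boldsymbol{\partial}$ moved to the right, and the relation $\boldsymbol{\partial}+\sum_i\boldsymbol{\lambda}^i=0$. That cancellation, not the associator, is where the bookkeeping you flag as the main obstacle actually lives.
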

The proof follows the exact same arguments as in \cite{de2013variational,bakalov2019operadic}, by simply replacing the single-component $\lambda$-brackets with multicomponent $\boldsymbol{\lambda}$-brackets.

\begin{definition}
	A derived $\mathrm{cP}_{d,m}$-algebra is an element $\Pi \in \mathrm{cPol}(A)$ of total degree $1$ that satisfies $[\Pi,\Pi] = 0$. 
\end{definition}

\begin{remark}
	Given a derived $\mathrm{cP}_{d,m}$-algebra $(A,\Pi)$, we can define the cohomology complex as $\mathrm{cPol}(A)$ equipped with the differential $d_{\Pi} = [\Pi,-]$.
	It is straightforward to verify that for $m=0$, this construction reduces to the complex $(\mathrm{Pol}(A), d_{\pi})$, which is the shifted Lichnerowicz--Poisson complex. For $d=0$ and $m=1$, it recovers the variational Poisson complex introduced in \cite{de2013variational}.
\end{remark}

In this paper we mainly consider shifted chiral Poisson algebras whose underlying differential algebra is a freely generated differential polynomial algebra. In other words, we consider $A$ to be of the form
 \begin{equation}
	 A = \mathcal{O}(L[\boldsymbol{z}])= \mathbb{C}[\boldsymbol{\partial}^{\boldsymbol{n}}x^k\mid n_i \geq 0;i = 1,\dots,m; k = 1,\dots,n]\,.
 \end{equation}
  Here, we use the multi-index notation $\boldsymbol{\partial}^{\boldsymbol{n}} = \partial_{z_1}^{n_1}\dots\partial_{z_m}^{n_m}$. Given an (homogeneous) element $\Pi \in \mathrm{cPol}(A)$, we associate $\boldsymbol{\lambda}$-brackets $\{-~\!_{\boldsymbol{\lambda}^1}-~\!_{\boldsymbol{\lambda}^2}\cdots_{\boldsymbol{\lambda}^{k}}-\}: A^{\otimes (k+1)} \to A[\boldsymbol{\lambda}^{1}, \dots, \boldsymbol{\lambda}^{k}]$ as in Remark \ref{rmk:chiral_Pbra}. We evaluate it on the generators and denote
 \begin{equation}
	\Pi^{i_0,\dots,i_{k}}(x)(\boldsymbol{\lambda}^1,\dots,\boldsymbol{\lambda}^{k}) = \{x^{i_0}~\!_{\boldsymbol{\lambda}^1}x^{i_1}\!_{\boldsymbol{\lambda}^2}\cdots_{\boldsymbol{\lambda}^{k}}x^{i_{k}}\} \in A[\boldsymbol{\lambda}^1,\dots,\boldsymbol{\lambda}^{k}]\,.
 \end{equation} 
 Since $\Pi^{i_0,\dots,i_{k}}(x)(\boldsymbol{\lambda}^1,\dots,\boldsymbol{\lambda}^{k})$ is a polynomial in $\boldsymbol{\lambda}^1,\dots,\boldsymbol{\lambda}^{k}$ with coefficients in the differential polynomial algebra $A$, we can further expand it as 
 \begin{equation}
	\Pi^{i_0,\dots,i_{k}}(x)(\boldsymbol{\lambda}^1,\dots,\boldsymbol{\lambda}^{k}) = \sum_{\boldsymbol{n}_1,\dots,\boldsymbol{n}_{k} \geq 0} \Pi^{i_0,\dots,i_{k}}_{\boldsymbol{n}_1,\dots,\boldsymbol{n}_{k}}(x) (\boldsymbol{\lambda}^1)^{\boldsymbol{n}_1}\dots(\boldsymbol{\lambda}^{k})^{\boldsymbol{n}_{k}}\,,
 \end{equation}
 where $\Pi^{i_0,\dots,i_{k}}_{\boldsymbol{n}_1,\dots,\boldsymbol{n}_{k}}(x) \in A$. Here we are using the multi-index notation 
 \begin{equation}
	\boldsymbol{\lambda}^{\boldsymbol{n}} = \lambda_1^{n_{1}}\dots\lambda_m^{n_{m}}\,.
 \end{equation}
 
 By the Leibniz rules and sesquilinearity, any map $\Pi \in\mathrm{cPol}(A)$ is completely determined by the set of coefficients $\{\Pi^{i_0,\dots,i_{k}}_{\boldsymbol{n}_1,\dots,\boldsymbol{n}_{k}}(x)\}$. Requiring it to be a derived $\mathrm{cP}_{d,m}$ structure, i.e. $[\Pi,\Pi] = 0$ imposes a tower of identities on $\{\Pi^{i_0,\dots,i_{k}}_{\boldsymbol{n}_1,\dots,\boldsymbol{n}_{k}}(x)\}$.

 Given a freely generated (derived) $\mathrm{cP}_{d,m}$-algebra $A$ as above, we associate to it a $(d+2m+1)$-dimensional holomorphic--topological field theory on $\R^{d+1}\times\C^{m}$ as follows. The field content consists of 
 \begin{equation}
\begin{aligned}
	 &\boldsymbol{\phi} \in \Omega^{\bullet}(\R^{d+1})\otimes \Omega^{0,\bullet}(\C^m)\otimes L\,,\\
	&\boldsymbol{\eta} \in \Omega^{\bullet}(\R^{d+1})\otimes \Omega^{0,\bullet}(\C^m)\otimes L^{\vee}[d+m]\,.
\end{aligned}
 \end{equation}
 The BV action functional is given by
 \begin{equation}
 	S= \int_{\R^{d+1}\times\C^{m}} d^mz \left( \boldsymbol{\eta}_i(d + \bar{\partial})\boldsymbol{\phi}^i + \sum\frac{1}{(k+1)!}\Pi^{i_0,\dots,i_{k}}_{\boldsymbol{n}_1,\dots,\boldsymbol{n}_{k}}(\boldsymbol{\phi})\boldsymbol{\eta}_{i_0}\boldsymbol{\partial}^{\boldsymbol{n}_1}\boldsymbol{\eta}_{i_1}\dots\boldsymbol{\partial}^{\boldsymbol{n}_{k}}\boldsymbol{\eta}_{i_{k}}\right)\,.
 \end{equation}
 
 Note that the action functional is defined up to a total derivative, and in particular, up to any $\boldsymbol{\partial}$-derivative. This feature is compatible with the fact that an element $\Pi \in \mathrm{cPol}(A)$ is defined modulo $\boldsymbol{\partial} + \sum_{i=0}^k \boldsymbol{\lambda}^i$. Our construction identifies each $\boldsymbol{\lambda}^i$ with the derivative $\boldsymbol{\partial}$ acting on the $i$-th $\boldsymbol{\eta}$ field, so that together they form a total $\boldsymbol{\partial}$-derivative.

 \subsection{Boundary chiral Poisson brackets}
 In this section, we study the tree-level boundary algebra structure of our holomorphic-topological Poisson sigma model. The Feynman diagram rules for computing the tree-level structure are completely analogous to our discussion in Sections~\ref{sec:bdy_sec} and \ref{sec:bdy_higher}, with only two distinctions:
 \begin{enumerate}
	\item Each boundary integral is weighted by the factor $e^{\boldsymbol{\lambda}\cdot\boldsymbol{z}}$.
	\item The bulk interaction term $\Pi^{i_0,\dots,i_{k}}_{\boldsymbol{n}_1,\dots,\boldsymbol{n}_{k}}(\boldsymbol{\phi})\boldsymbol{\eta}_{i_0}\boldsymbol{\partial}^{\boldsymbol{n}_1}\boldsymbol{\eta}_{i_1}\dots\boldsymbol{\partial}^{\boldsymbol{n}_{k}}\boldsymbol{\eta}_{i_{k}}$ contains holomorphic derivatives, which act directly on the propagator.
 \end{enumerate}
 
First, we briefly review the construction of the propagator for a holomorphic-topological field theory on $\R^{d+1}\times\C^m$, whose differential operator is $Q = d_{\derham}+\bar{\partial}$. We choose the gauge-fixing operator to be $Q^{\dagger} = 2 \sum_{i=1}^m\iota_{\partial_{z_i}}\,\partial_{\bar z_i}+\frac12\sum_{j=0}^d\iota_{\partial_{x_j}}\,\partial_{x_j}$, which gives the Laplacian $\Delta = [Q,Q^{\dagger}] = 2 \sum_{i=1}^m\partial_{z_i}\partial_{\bar z_i}+\frac12\sum_{j=0}^d\partial_{x_j}^2$.
The associated heat kernel is given by:
\begin{equation}
K^{\R^{d+1}\times\C^m}_{u}\!(\boldsymbol{x},\boldsymbol{z})
=
\frac{1}{(2\pi u)^{m+\frac{d+1}{2}}}
\exp\left(
-\frac{
\sum_{i=1}^{m} z_i\overline z_i
+
\sum_{j=0}^{d} x_j^2
}{2u}
\right)d^m\overline z\,d^{d+1} x\,.
\end{equation}
The corresponding propagator is the generalized Bochner-Martinelli kernel, which can be written as:
 \begin{equation}
\begin{aligned}
	 	& P^{\R^{d+1}\times\mathbb{C}^m}\!(\boldsymbol{x},\boldsymbol{z};\boldsymbol{y},\boldsymbol{w}) = \frac{\Gamma(m + \frac{d+1}{2})}{\pi^{m+ \frac{d+1}{2}}}\frac{1}{(|\boldsymbol{z} - \boldsymbol{w}|^2 + |\boldsymbol{x}-\boldsymbol{y}|^2)^{m + \frac{d+1}{2}}}\\
	 	&\times \left( \sum_{i = 1}^m2 (\bar{z}_i-\bar{w}_i)\iota_{\partial_{\bar{w}_i}} +  \sum_{i = 0}^{d}(x_i-y_i)\iota_{\partial_{y_i}} \right) d^m(\bar{z} - \bar{w})d^{d+1}(x - y)\,.
\end{aligned}
 \end{equation}
\begin{figure}[h!]
	\centering
	  	\begin{tikzpicture}[scale=0.8]
  		\def\plane{ (-2,2) -- (2,3.6) -- (2,-2) -- (-2,-3.6) -- cycle};
  		\fill[fill=gray!10] \plane;
  		\fill[fill=gray!10] (2,3.6) -- (5,3.6) -- (5,-2) -- (2,-2) -- cycle;
  		\fill[fill=gray!10] (2,-2) -- (5,-2) -- (1,-3.6) -- (-2,-3.6) -- cycle;
  		\fill[fill=gray!20] (1,-3.6) -- (-2,-3.6) -- (-2,2) -- (1,2) -- cycle;
  		\fill[fill=gray!20] (-2,2) -- (1,2) --  (5,3.6) -- (2,3.6) -- cycle;
  		\draw[thick] \plane;
  		\draw[dashed] (-2,2) -- (1,2);\draw[dashed] (2,3.6) -- (5,3.6);\draw[dashed] (2,-2) -- (5,-2);\draw[dashed] (-2,-3.6) -- (1,-3.6);
  		\node at (2.1,2.8) {$ \R_{\geq 0}\times \R^d\times\C^m$};
  		\filldraw[black] (0,0) circle (1.5pt);
  		\node[above] at (0,0) {$\phi^i$};
  		\node[left] at (-0.9,1.4) {$ \displaystyle\oint_{{\scriptscriptstyle S^{\scriptscriptstyle d+2m-1}}}\!\!\!\!\!e^{\boldsymbol{\lambda}\cdot\boldsymbol{z}} \phi^j$};
  		\draw[dashed] (1.3,0) arc (0:365:1.3 and 1.8);
  		\draw[decorate, decoration={snake, amplitude=0.5mm, segment length=3mm}] (-0.8,1.45) -- (3.2,0) ;
  		\draw[decorate, decoration={snake, amplitude=0.5mm, segment length=3mm}] (3.2,0) -- (0,0);
  		\filldraw[black] (3.2,0) circle (1.5pt);
		\node[right] at (3.4,0) {$\scriptstyle \frac{1}{2}\Pi^{ij}_{\boldsymbol{n}}(\phi)\eta_i\boldsymbol{\partial}^{\boldsymbol{n}}\eta_j$};
		\node at (-0.1,1.7) {$\scriptstyle (\boldsymbol{x},\boldsymbol{z})$};
		\node[above] at (3.4,0.1) {$\scriptstyle (\boldsymbol{y},\boldsymbol{w})$};
  		\filldraw[black] (-0.8,1.44) circle (1.5pt);
  	\end{tikzpicture} 
	\caption{The tree-level Feynman diagram computing the boundary $\boldsymbol{\lambda}$-bracket. One of the boundary points is integrated over the sphere $S^{d+2m-1}$ with weight $e^{\boldsymbol{\lambda}\cdot\boldsymbol{z}}$.}
	\label{fig:feynman_HT_bdy}
\end{figure}
A complete analysis of the boundary algebra would require the half-space propagator via reflection. However, since our primary goal in this section is to study tree-level diagrams, only the bulk-to-boundary propagator is needed. As discussed in Section~\ref{sec:bdy_sec}, this propagator coincides with the standard bulk propagator defined without a boundary. 

We warm up by studying the 2-ary $\boldsymbol{\lambda}$-bracket. The tree-level Feynman diagram is depicted in Figure~\ref{fig:feynman_HT_bdy}, which yields:
 \begin{equation}\label{eq:bdy_bracket_HT}
\{\phi^i~_{\boldsymbol{\lambda}}\phi^j\} = \oint_{S^{d+2m-1}}\!\! e^{\boldsymbol{z}\cdot \boldsymbol{\lambda}}\int_{ \mathbb{H}^{d+1}_{\boldsymbol{y}}\times \C^m_{\boldsymbol{w}}}\!\! \Pi^{ij}_{\boldsymbol{n}}( \phi)(\boldsymbol{y},\boldsymbol{w}) P^{\R^{d+1}\times \C^{m}}\!(0,0;\boldsymbol{y},\boldsymbol{w})\boldsymbol{\partial}^{\boldsymbol{n}}_{\boldsymbol{w}}P^{\R^{d+1}\times \C^{m}}\!(\boldsymbol{y},\boldsymbol{w};\boldsymbol{x},\boldsymbol{z})\,.
 \end{equation}

There are a few steps to simplify this expression. First, by the translation invariance of the propagator, we can replace the differential $\boldsymbol{\partial}^{\boldsymbol{n}}_{\boldsymbol{w}}$ by $\boldsymbol{\partial}^{\boldsymbol{n}}_{\boldsymbol{z}}$. Next, we Taylor expand $\Pi^{ij}_{\boldsymbol{n}}(\phi)(\boldsymbol{y},\boldsymbol{w})$ around $(\boldsymbol{y},\boldsymbol{w})= 0$. We drop all topological and anti-holomorphic derivatives as they are $Q$ exact and only keep the expansion $\sum\frac{1}{\boldsymbol{k}!}\boldsymbol{w}^{\boldsymbol{k}}\boldsymbol{\partial}^{\boldsymbol{k}}\Pi^{ij}_{\boldsymbol{n}}(\boldsymbol{\phi})$. Hence we are led to analyze the integral
 \begin{equation}
 \mathcal{I}^{HT}_{\boldsymbol{k}}(\boldsymbol{x},\boldsymbol{z}) = \int_{ \mathbb{H}^{d+1}_{\boldsymbol{y}}\times \C^m_{\boldsymbol{w}}} P^{\R^{d+1}\times \C^{m}}\!(0,0;\boldsymbol{y},\boldsymbol{w})\boldsymbol{w}^{\boldsymbol{k}}P^{\R^{d+1}\times \C^{m}}\!(\boldsymbol{y},\boldsymbol{w};\boldsymbol{x},\boldsymbol{z})\,.
 \end{equation}
 
As a corollary of the universal integration formula of Theorem~\ref{thm:uni_bdy}, we immediately obtain $\mathcal{I}^{HT}_{\boldsymbol{0}}(\boldsymbol{x},\boldsymbol{z}) =  P_{\R^d\times \C^{m}}(\boldsymbol{0};\boldsymbol{x},\boldsymbol{z})$. We will show shortly that only the $\boldsymbol{k}=\boldsymbol{0}$ term contributes to the final results, but for completeness, we include the computation for general $\boldsymbol{k}$ here. We use the weighted semigroup identity for the heat kernel:
\begin{equation}
\int_{\mathbb{R}^{d}_{\boldsymbol{y}}\times \C^m_{\boldsymbol{w}}}\!\!  \boldsymbol{w}^{\boldsymbol{k}} K^{\R^{d}\times\C^m}_{u_1}\!(\boldsymbol{x} - \boldsymbol{y} ;\boldsymbol{z} - \boldsymbol{w})K^{\R^{d}\times\C^m}_{u_2}\!(\boldsymbol{y};\boldsymbol{w}) = \left(\frac{u_2}{u_1+u_2} \right)^{|\boldsymbol{k}|} \boldsymbol{z}^{\boldsymbol{k}} K^{\R^{d}\times\C^m}_{u_1 + u_2}\!(\boldsymbol{x};\boldsymbol{z})\,.
\end{equation}
Then, following equations~\eqref{eq:uni_bdy_Kint1} and~\eqref{eq:uni_bdy_Kint2} in the proof of Theorem~\ref{thm:uni_bdy}, the integral $\mathcal{I}^{HT}_{\boldsymbol{k}}(\boldsymbol{x},\boldsymbol{z})$ evaluates to
\begin{equation}
c_{\boldsymbol{k}} \boldsymbol{z}^{\boldsymbol{k}}P_{\R^d\times \C^{m}}(\boldsymbol{0};\boldsymbol{x},\boldsymbol{z})\,,
\end{equation}
where $c_{\boldsymbol{k}} = \frac{1}{\pi}B(|\boldsymbol{k}|+\frac{1}{2},\frac{1}{2})$ and $B(\alpha,\beta)$ is the Euler beta function. 
 We are thus led to compute the following Feynman integral
 \begin{equation}
\sum_{\boldsymbol{k}}\frac{c_{\boldsymbol{k}}}{\boldsymbol{k}!}\oint_{S^{d+2m-1}}e^{\boldsymbol{z}\cdot \boldsymbol{\lambda}} \boldsymbol{\partial}_{\boldsymbol{z}}^{\boldsymbol{n}} \left( \boldsymbol{z}^{\boldsymbol{k}}P_{\R^d\times \C^{m}}(\boldsymbol{0};\boldsymbol{x},\boldsymbol{z})\right)\,.
 \end{equation}
In Appendix \ref{sec:bdy_identity} we find the following identity:
\begin{equation}\label{eq:bdy_identity}
	\oint_{S^{d+2m-1}}\boldsymbol{z}^{\boldsymbol{l}} \boldsymbol{\partial}_{\boldsymbol{z}}^{\boldsymbol{n}} P_{\R^d\times \C^{m}}(0;\boldsymbol{x},\boldsymbol{z}) = (-1)^{|\boldsymbol{n}|}\boldsymbol{n}!\delta_{\boldsymbol{l},\boldsymbol{n}}\,.
 \end{equation}
This can be used to simplify the integral. After expanding $\boldsymbol{\partial}_{\boldsymbol{z}}^{\boldsymbol{n}} ( \boldsymbol{z}^{\boldsymbol{k}}P_{\R^d\times \C^{m}}(\boldsymbol{0};\boldsymbol{x},\boldsymbol{z}))$ by Leibniz rule, we can check that only $\boldsymbol{k} = 0$ terms survive. As a result, we find that the boundary $\boldsymbol{\lambda}$-bracket is given by  
\begin{equation}
	\{\phi^a\,_{\boldsymbol{\lambda}}\phi^b\} = \sum_{\boldsymbol{n}} \Pi_{\boldsymbol{n}}^{ab}(\phi) \boldsymbol{\lambda}^{\boldsymbol{n}}\,.
\end{equation}

We can also compute higher arity $\lambda$-bracket using the same diagram as in Section \ref{sec:bdy_higher}, except that the boundary integration needs to be weighted by $e^{\boldsymbol{z}^i\cdot\boldsymbol{\lambda}^i}$ for each boundary insertion. We consider the following boundary integration contour $\gamma_{k+1} \subset (\mathbb{H}^{d+1}\times \C^m)^{k+1}$:
\begin{equation}
\gamma_{k+1} = \left\{(\boldsymbol{x}^j,\boldsymbol{z}^j)_{j=0}^k \in  (\{0\}\times\R^d\times \C^m )^{k+1} \;\middle|\;
\begin{aligned}
	& (\boldsymbol{x}^0,\boldsymbol{z}^0) = 0,\; |\boldsymbol{x}^1|^2 + |\boldsymbol{z}^1|^2 = 1, \\
	& (\boldsymbol{x}^j,\boldsymbol{z}^j) \neq (\boldsymbol{x}^i,\boldsymbol{z}^i) \;\text{for any }\, i\neq j
\end{aligned}\right\}\,.
\end{equation}

The Feynman diagrams give us the following formula for the boundary higher $\lambda$ brackets
\begin{equation}\label{eq:high_lambda_int1}
\{\phi^{i_0}\,_{\boldsymbol{\lambda}_1}\phi^{i_1}\,_{\boldsymbol{\lambda}_2}\dots_{\boldsymbol{\lambda}_k}\phi^{i_k}\} = \int_{\gamma_k}\int_{(\boldsymbol{y},\boldsymbol{w})\in \mathbb{H}^{d+1}\times \C^m} \Pi^{i_0\dots i_k}_{\boldsymbol{n}_1,\dots,\boldsymbol{n}_{k}}(\boldsymbol{\phi}) \prod_{j = 0}^k e^{\boldsymbol{z}^j\cdot\boldsymbol{\lambda}^j}\boldsymbol{\partial}_{\boldsymbol{w}}^{\boldsymbol{n}_j}P_{\R^{d+1}\times \C^m}(\boldsymbol{x}^j,\boldsymbol{z}^j;\boldsymbol{y},\boldsymbol{w}) \,.
\end{equation}
In the above formula, we set $\boldsymbol{\lambda}^0 = 0$. We first integrate over the coordinates $(\boldsymbol{x}^j,\boldsymbol{z}^j)$ for $j\geq 2$. These are integrated over the whole boundary $\R^d \times \C^m$, and the result is straightforward to compute (see Appendix \ref{sec:bdy_identity}):
\begin{equation}
	\int_{\R^d \times \C^m}  e^{\boldsymbol{z}^j\cdot\boldsymbol{\lambda}^j}\boldsymbol{\partial}_{\boldsymbol{w}}^{\boldsymbol{n}_j}P_{\R^{d+1}\times \C^m}(\boldsymbol{x}^j,\boldsymbol{z}^j;\boldsymbol{y},\boldsymbol{w}) = (\boldsymbol{\lambda}^{j})^{\boldsymbol{n}^j}e^{\boldsymbol{w}\cdot\boldsymbol{\lambda}^j}\,.
\end{equation}
Hence the integral \eqref{eq:high_lambda_int1} reduces to the following
\begin{equation}
\begin{aligned}
		&\{\phi^{i_0}\,_{\boldsymbol{\lambda}^1}\phi^{i_1}\,_{\boldsymbol{\lambda}^2}\dots_{\boldsymbol{\lambda}^k}\phi^{i_k}\} = \\
		& \oint_{S^{d+2m-1}}\!\! e^{\boldsymbol{z}^1\cdot \boldsymbol{\lambda}^1}\int_{\mathbb{H}^{d+1}\times \C^m} \Pi^{i_0\dots i_k}_{\boldsymbol{n}_1,\dots,\boldsymbol{n}_{k}}(\boldsymbol{\phi}) P_{\R^{d+1}\times \C^{m}}(0,0;\boldsymbol{y},\boldsymbol{w})\boldsymbol{\partial}^{\boldsymbol{n}^1}_{\boldsymbol{w}}P_{\R^{d+1}\times \C^{m}}(\boldsymbol{x}^1,\boldsymbol{z}^1;\boldsymbol{y},\boldsymbol{w})  \prod_{j = 2}^k (\boldsymbol{\lambda}^{j})^{\boldsymbol{n}^j}e^{\boldsymbol{w}\cdot\boldsymbol{\lambda}^j} \,.
\end{aligned}
\end{equation}
Applying the same argument used to compute \eqref{eq:bdy_bracket_HT}, we obtain
\begin{equation}
\{\phi^{i_0}\,_{\boldsymbol{\lambda}^1}\phi^{i_1}\,_{\boldsymbol{\lambda}^2}\dots_{\boldsymbol{\lambda}^k}\phi^{i_k}\} = \sum_{\boldsymbol{n}_1,\dots,\boldsymbol{n}_k} \Pi^{i_0\dots i_k}_{\boldsymbol{n}_1,\dots,\boldsymbol{n}_{k}}( \phi )\prod_{j = 1}^k  (\boldsymbol{\lambda}^j)^{\boldsymbol{n}^j}\,.
\end{equation}
This exactly reproduces the formula defining the higher $\boldsymbol{\lambda}$-brackets.

Having discussed boundary algebra from the tree-level Feynman diagrams, we briefly comment on the loop corrections. Naively, one can write down all possible Feynman diagrams and integrate them using appropriate regularization techniques. But the question is whether the resulting structures are valid holomorphic-topological factorization algebras, i.e. satisfy appropriate Jacobi identities.  This question is mostly answered in \cite{Gaiotto:2024gii,balduf2025combinatorial,wang2024factorization}. As we commented, the existence of a deformation quantization of boundary factorization algebra is equivalent to the absence of anomaly of the bulk theory. As is proved in loc. cit, any holomorphic topological theory on $\R^{d+1}\times\C^m$ is quantizable for $d \geq 1$. Therefore, an important corollary is that a $cP_{d,m}$ algebra is quantizable if $d\geq 1$. The most subtle case is when $d = 0$, where higher loop anomaly might not vanish, but we also do not know any examples of chiral Poisson structure that cannot be (formally) quantized.

\subsection{Comments on defects}
In Section~\ref{sec:top_PSM}, we introduced a rich class of defects for the topological Poisson sigma model. These structures naturally generalize to the holomorphic--topological setting, as we briefly outline in this subsection. While the mathematical definitions presented here are heuristic and imprecise, our primary goal is to illustrate with the QFT constructions.

We start with the construction of interfaces from chiral Poisson morphisms. Given two $cP_{d,m}$ algebras $A$ and $\tilde{A}$, a chiral Poisson morphism is defined as a collection of maps $f_n: A[d+m-1]^{\otimes n} \to \tilde{A}[\boldsymbol{\lambda}^1,\dots,\boldsymbol{\lambda}^{n-1}][d+m-n]$ for each $n\geq 1$ satisfying certain $L_\infty$ and sesquilinearity conditions. The corresponding interface between the two Poisson sigma models can be defined by the boundary condition $\boldsymbol{\phi }= 0$ and $\widetilde{\boldsymbol{\eta}} = 0$ respectively for the two theories, with the following interface coupling:
\begin{equation}
	S_{\mathrm{interface}} = \int_{\{0\}\times \R^{d}\times \C^m} \sum_{n\geq 0}\sum_{i_0,\dots,i_n}\frac{1}{(n+1)!}F^{i_0,\dots,i_n}_{\boldsymbol{m}_1,\dots,\boldsymbol{m}_n}(\widetilde{\boldsymbol{\phi}})\boldsymbol{\eta}_{i_0} \boldsymbol{\partial}^{\boldsymbol{m}_1}\boldsymbol{\eta}_{i_1} \dots \boldsymbol{\partial}^{\boldsymbol{m}_n}\boldsymbol{\eta}_{i_n} \,,	
\end{equation}
where $\{F^{i_0,\dots,i_n}_{\boldsymbol{m}_1,\dots,\boldsymbol{m}_n}(\widetilde{\boldsymbol{\phi}})\}$ is a set of differential polynomials satisfying a tower of differential equations that guarantee the classical master equation, which we expect to be equivalent to the $L_\infty$ morphism conditions.

Given a $cP_{d,m}$ algebra $A$, a (pointed) Poisson module of $A$ can be defined as a (graded) vector space $M$ together with a collection of sesquilinear maps $(A[d+m-1])^{\otimes (n-1)}\otimes M\to M$ for each $n \geq 1$ that satisfy certain $L_\infty$ module conditions, so that $A\oplus M[1-d-m]$ has an induced $cP_{d,m}$ structure. Given such a module, we can construct a topological line defect in the Poisson sigma model as follows
\begin{equation}
	P\exp\left(\int_{\R}\sum_{n\geq 0}\sum_{i_1,\dots,i_{n}}\frac{1}{n!}\Omega^{i_1\dots i_{n}}_{\boldsymbol{m}_1,\dots,\boldsymbol{m}_n}(\boldsymbol{\phi}) \boldsymbol{\partial}^{\boldsymbol{m}_1}\boldsymbol{\eta}_{i_1} \dots \boldsymbol{\partial}^{\boldsymbol{m}_n}\boldsymbol{\eta}_{i_n}\right) \,.
\end{equation}
We also refer to \cite{khan2025poisson} for a detailed discussion of topological line defect in $3d$ HT Poisson sigma model.

We can also construct enriched boundaries associated with coisotropic structures. Let $A$ be a $cP_{d,m}$ algebra with $d\geq  1$, and $C$ a $cP_{d-1,m}$ algebra. We can define a coisotropic structure on $A \to C$ as a chiral Poisson morphism $A \to \mathrm{cPol}(C)$. As in the topological case, we consider coupling the theory associated with $C$, with the boundary condition $\boldsymbol{\phi} = 0$ and with the coupling:
\begin{equation}
S_{enrich-bdy} = \int_{\R^d\times \C^m} \sum_{n\geq 0}\frac{1}{n!}\Phi^{i,k_1k_2\dots k_{n}}_{\boldsymbol{m}_1,\dots,\boldsymbol{m}_n}(\boldsymbol{\theta})\boldsymbol{\eta}_i \boldsymbol{\partial}^{\boldsymbol{m}_1}\boldsymbol{\chi}_{k_1}\dots \boldsymbol{\partial}^{\boldsymbol{m}_n}\boldsymbol{\chi}_{k_n}\,.
\end{equation}
Unfortunately, the techniques presented here can only construct enriched boundaries that are themselves Poisson sigma models. Consequently, we cannot obtain purely holomorphic enriched boundary conditions in this way. Holomorphic enriched boundary conditions also have interesting applications \cite{Gaiotto:2026qai}. Nevertheless, the idea of using the classical or quantum master equation to constrain boundary couplings still applies. We leave this analysis to future work.

\section{Twisted SUSY theories and beyond}
\label{sec:SUSYtwist}
The notion of the twist of a supersymmetric field theory was first introduced by Witten \cite{Witten:1988xj,Witten:1988ze}. It is a procedure that transforms a supersymmetric field theory into a topological field theory. Since its discovery, twisting has served as a powerful tool for extracting mathematical structures from supersymmetric quantum field theories. It was later realized that the twisting procedure can be modified to produce holomorphic \cite{Johansen:1994aw} or mixed holomorphic–topological field theories \cite{kapustin2006holomorphic}, and this idea was developed more systematically in \cite{costello2013notes}. A complete classification of twists of super–Yang–Mills theories in dimensions $2 \leq d \leq 10$ was given in \cite{Elliott:2020ecf}.

The goal of this section is to revisit these twisted supersymmetric theories from the perspective of generalized Poisson sigma model. This approach naturally leads us to a systematic analysis of their boundary algebras. The idea of studying (twists of) supersymmetric field theories through their boundary algebras has proven useful for extracting structural information about the theories and for establishing dualities in many different contexts \cite{gaiotto2009s, dimofte2018dual, costello2019vertex, costello2019higgs, zeng2023monopole}. With the help of the formalism we have developed so far, this can be carried out systematically. Along this path, we will also consider potential deformations of twisted supersymmetric field theories, which have interesting applications as well.
 
\subsection{A brief review of supersymmetric twists}
In this section, we briefly review supersymmetry (in Euclidean signature) and the twisting procedure, with the goal of introducing the terminology and conventions of \cite{Elliott:2020ecf}.

The basic ingredients of supersymmetry in $d+1$ dimensions consist of a spinor representation $\Sigma$ of $\mathfrak{so}(d+1)$, together with a symmetric bilinear pairing
\begin{equation}
	\Gamma: \mathrm{Sym}^2(\Sigma) \to V\,,
\end{equation}
where $V = \C^{d+1}$. The Lie superalgebra $V\oplus \Pi \Sigma$, whose only non-trivial bracket is given by $\Gamma$, is called the super-translation algebra. The Lie superalgebra $(V\oplus \Pi \Sigma)\rtimes \mathfrak{so}(V)$ is called the super-Poincar\'{e} algebra. A supersymmetric field theory is, roughly speaking, defined as a field theory equipped with an action of the super-Poincar\'{e} algebra.

If $d+1$ is odd, the Lie algebra $\mathfrak{so}(V)$ has a distinguished fundamental spin representation $\mathcal{S}$. If $d+1$ is even, $\mathfrak{so}(V)$ has two inequivalent fundamental spin representations $\mathcal{S}_{\pm}$. We can therefore choose
\begin{equation}
	\Sigma = \begin{cases}
		\mathcal{S}\otimes W & d+1 \text{ is odd,}\\
		\mathcal{S}_+ \otimes W_+\oplus \mathcal{S}_- \otimes W_- & d+1 \text{ is even.}
	\end{cases}
\end{equation}
The auxiliary vector spaces \(W\) and \(W_\pm\) record the amount of supersymmetry.

The basic ingredient of a supersymmetric twist is a supercharge \(Q \in \Sigma\), satisfying  \([Q,Q]=0\) in the supertranslation algebra. Given an action of the supersymmetry algebra on a field theory, \(Q\) induces an odd operator on fields or operators, often denoted by the same symbol \(Q\). The \(Q\)-twisted theory is obtained by adding this operator to the original BRST/BV differential:
\[
    Q_{\mathrm{BV}} \longmapsto Q_{\mathrm{BV}}+Q\,.
\]
Thus twisting replaces the original theory by a cohomological theory governed by \(Q\)-cohomology. 

Given a supersymmetric theory, we may have several inequivalent twists. In \cite{Elliott:2020ecf}, twists are labeled by the orbit of the square-zero supercharge \(Q\) under the spin group and the \(R\)-symmetry group. The word ``rank'' refers to the rank of \(Q\) as a tensor. For example, if $\Sigma = \mathcal S\otimes W$, then \(Q\in \mathcal S\otimes W\) may be viewed as a linear map $W^\vee \to \mathcal S$, and its rank is the rank of this map. In even dimensions, where $\Sigma = \mathcal S_+\otimes W_+ \oplus \mathcal S_-\otimes W_-$, one obtains a pair of ranks \((r_+,r_-)\), corresponding to the two chiral components of \(Q\). Thus a phrase such as ``rank \((1,1)\) twist'' means the twist by a square-zero supercharge whose two chiral components both have rank one.

The rank does not always determine the orbit. When several inequivalent orbits have the same rank, \cite{Elliott:2020ecf} add further labels such as ``generic/special'', ``pure/impure'', or \(A/B\). The labels ``generic'' and ``special'' indicate whether \(Q\) satisfies an additional degeneracy condition. Typically, the generic orbit is the open orbit of a fixed rank, while the special orbit is characterized by the
vanishing of some extra invariant. We will not review the full classification of these orbits here; for details, we refer the reader to \cite{Elliott:2020ecf}.

\subsection{$3d$ theories}

Before we dive into higher dimensional cases, we first review the construction in \cite{khan2025poisson} for the $3d$ holomorphic-topological Poisson sigma model. Two examples will be important, the affine Kac--Moody and the Virasoro chiral Poisson algebra.

The affine Kac--Moody chiral Poisson algebra can be identified with $\mathrm{Sym}(\mathfrak{g}[\partial]) = \C[\partial^nJ_a]_{n\geq 0}$, and its $\lambda$-bracket is given by
\begin{equation}
	\{J_a\,_\lambda\, J_b\} = f_{ab}^c J_c + k\lambda\kappa_{ab}\,.
\end{equation}
The corresponding Poisson sigma model involves the fields 
\begin{equation}
	\begin{aligned}
		&\boldsymbol{A} \in \Omega^{0,\sbullet}(\R\times \C)\otimes \mathfrak{g}[1]\,,\\
		&\boldsymbol{B} \in \Omega^{1,\sbullet}(\R\times \C)\otimes \mathfrak{g}^{\vee}\,.
	\end{aligned}
\end{equation}
The action functional takes the form
\begin{equation}\label{eq:CS_HT}
	S = \int_{\mathbb R\times\mathbb C}\left( \langle \boldsymbol{B}, (d_t+\bar\partial)\boldsymbol{A} \rangle + \frac{1}{2} \langle \boldsymbol{B}, [\boldsymbol{A}, \boldsymbol{A}] \rangle + \frac{k}{2} \langle \boldsymbol{A}, \partial \boldsymbol{A} \rangle \right)\,.
\end{equation}
For $k = 0$, this is the rank $1$ twist (HT twist) of $3d$ $\mathcal{N} = 2$ SYM. For $k \neq 0$, it recovers the standard $3d$ Chern--Simons action, as noted in \cite{gwilliam2019one,costello2023boundary}. This construction can easily be extended to other twists of $3d$ SUSY theories by choosing a suitable construction for $\mathfrak{g}$. We refer to \cite{khan2025poisson,costello2019higgs} for more details.

The Virasoro chiral Poisson algebra can be identified with $\mathcal{V}ir = \C[\partial^n T]_{n\geq 0}$, and its $\lambda$-bracket is given by
\begin{equation}
	\{T\,_\lambda\, T\} = (\partial + 2\lambda)T + \frac{c}{12}\lambda^3\,.
\end{equation}
The corresponding Poisson sigma model involves the fields 
\begin{equation}
	\begin{aligned}
		&\mu \in \Omega^{0,\sbullet}(\mathbb R\times\mathbb C)dz^{\otimes 2}\,,\\
		&T \in \Omega^{0,\sbullet}(\mathbb R\times\mathbb C)\frac{\partial}{\partial z}\,.
	\end{aligned}
\end{equation}
The action takes the form
\begin{equation}
	S = \int_{\mathbb R\times\mathbb C} dz \left( T(d_t+\bar\partial)\mu + T\mu\partial\mu + \frac{c}{24} T\partial^3 T \right)\,.
\end{equation}

This theory is not known to come from a $3d$ SUSY theory, but it was shown in \cite{khan2025poisson} that it has a deep connection to $3d$ gravity. In particular, the phase space of this theory on a Riemann surface of genus $g$ can be identified with the cotangent bundle of Teichm\"uller space $T^*\mathcal{T}_g$. After quantization, the corresponding Hilbert space can be identified with Virasoro conformal blocks. Both are expected features of $3d$ gravity.

\subsection{$4d$ theories}

As we have seen from Section~\ref{sec:deformations-bf}, many interesting examples of Poisson sigma models come from $B\mathfrak{g}$ and its deformations as shifted Poisson manifolds/stacks. In this and the following sections, we will see that many twisted supersymmetric field theories can also be understood in this way. First, we consider $B\mathfrak{g}[z]$, equipped with the trivial $1$-shifted $\lambda$-bracket on $\mathcal{O}(B\mathfrak{g}[z])$. The corresponding $4d$ theory has the BV field content:
\begin{equation}
	\begin{aligned}
		&\boldsymbol{A} \in \Omega^{0,\sbullet}(\R^2\times\C)\otimes \mathfrak{g}[1]\,,\\
		&\boldsymbol{B} \in \Omega^{1,\sbullet}(\R^2\times\C)\otimes \mathfrak{g}^{\vee}[1]\,.
	\end{aligned}
\end{equation}
The action functional is given by the standard (holomorphic-topological) BF action:
\begin{equation}\label{eq:act_4d11}
	S_{BF} = \int \boldsymbol{B}\left( (d+\bar{\partial})\boldsymbol{A} + \frac{1}{2}[\boldsymbol{A},\boldsymbol{A}]\right)\,.
\end{equation}
In the terminology of \cite{Elliott:2020ecf}, this is the rank $(1,1)$ twist of $4d$ $\mathcal{N} = 2$ super Yang-Mills theory.

It is expected that the same theory can be realized by more than one shifted vertex Poisson algebra. We consider the Raviolo Kac Moody algebra associated with the Lie algebra $\mathfrak{g}$ defined in \cite{garner2025raviolo}. As a vector space it is given by $\mathcal{V} = \mathrm{Sym}(\mathfrak{g}^{\vee}[z][1])^{\vee} = \C[\partial^nb_a]$, where $b_a$ are in degree $-1$. The raviolo vertex algebra structure can be equivalently characterized by the $1$-shifted $\lambda$ bracket given by
 \begin{equation}
\{b_{a\lambda}b_b\} = f_{ab}^cb_c\,.
 \end{equation}
One can check that this $1$-shifted vertex Poisson algebra gives rise to the same bulk theory as \eqref{eq:act_4d11}. Equivalently, we can understand the two different $1$-shifted chiral Poisson algebra as the Neumann and Dirichlet boundary algebras of the same theory.

The rank (1,1) twist of $4d$ $\mathcal{N} = 4$ super Yang-Mills theory can be constructed in the same way, by considering $\mathcal{O}(B\mathfrak{g}[\varepsilon,z])$ equipped with the trivial Poisson structure, where $\varepsilon$ is a degree $1$ generator. The corresponding $4d$ theory has the following BV field content:
\begin{equation}
	\begin{aligned}
		&\widetilde{\boldsymbol{A}} = \boldsymbol{A} + dz \boldsymbol{X} \in \Omega^{\sbullet,\sbullet}(\R^2\times\C)\otimes \mathfrak{g}[1]\,,\\
		&\widetilde{\boldsymbol{B}}  = dz\boldsymbol{B} + \boldsymbol{Y} \in \Omega^{\sbullet,\sbullet}(\R^2\times\C)\otimes \mathfrak{g}^{\vee}[1]\,.
	\end{aligned}
\end{equation}
The BV action is given by:
\begin{equation}
	S = \int \widetilde{\boldsymbol{B}}\left( (d+\bar{\partial})\widetilde{\boldsymbol{A}} + \frac{1}{2}[\widetilde{\boldsymbol{A}},\widetilde{\boldsymbol{A}}]\right)\,.
\end{equation}

There are various deformations we can consider. First, $B\mathfrak{g}[\varepsilon,z]$ can be equipped with a non-trivial $1$-shifted Poisson structure associated with an invariant bilinear form $t \in \mathrm{Sym}^2(\mathfrak{g})^G$. This $1$-shifted Poisson bracket is defined by pairing $(\mathfrak{g}[1])^{\vee}$ and $(\mathfrak{g}\varepsilon[1])^{\vee}$ via $t$ and extending it by the Leibniz rule. Explicitly, after identifying $\mathcal{O}(B\mathfrak{g}[\varepsilon,z]) = \C[\partial^nc^a,\partial^nx^a]$, we can write the $1$-shifted chiral Poisson bracket as:
\begin{equation}\label{eq:HTBF_bra}
	\{\partial^n c^a~_\lambda \partial^mx^b\} = (-1)^n\lambda^{n+m}t^{ab}\,.
\end{equation}
The corresponding theory has a deformed BV action:
\begin{equation}
	S = \int \widetilde{\boldsymbol{B}}\left( (d+\bar{\partial})\widetilde{\boldsymbol{A}} + \frac{1}{2}[\widetilde{\boldsymbol{A}},\widetilde{\boldsymbol{A}}]\right) + \frac{1}{2}\int \langle t, \widetilde{\boldsymbol{B}}\wedge\widetilde{\boldsymbol{B}}\rangle\,.
\end{equation}
This theory corresponds to the rank (2,1) twist of $4d$ $\mathcal{N} = 4$ super Yang-Mills theory. 

Another possibility is to turn on the differential $\varepsilon\partial_z$ on $B\mathfrak{g}[\varepsilon,z]$. At the level of the Poisson algebra, this introduces the differential
\begin{equation}
	Q(\partial^mx^b) = \partial^{m+1}c^b\,,
\end{equation}
in addition to the \CE differential. The corresponding theory has the following action:
\begin{equation}
	S = \int \widetilde{\boldsymbol{B}}\left( (d+\bar{\partial})\widetilde{\boldsymbol{A}} + \frac{1}{2}[\widetilde{\boldsymbol{A}},\widetilde{\boldsymbol{A}}]\right) + \frac{1}{2}\int \langle \widetilde{\boldsymbol{B}}, \partial \widetilde{\boldsymbol{A}}\rangle\,.
\end{equation}
This is the rank $(2,2)$ special twist of $4d$ $\mathcal{N} = 4$ super Yang-Mills theory. It is easy to see that this theory is also the $4d$ topological BF theory, but written in a holomorphic-topological way.

The two deformations considered above can be combined, yielding the action:
\begin{equation}
	S = \int \widetilde{\boldsymbol{B}}\left( (d+\bar{\partial})\widetilde{\boldsymbol{A}} + \frac{1}{2}[\widetilde{\boldsymbol{A}},\widetilde{\boldsymbol{A}}]\right) + \frac{1}{2}\int \langle \widetilde{\boldsymbol{B}}, \partial \widetilde{\boldsymbol{A}}\rangle + \frac{1}{2}\int \langle t, \widetilde{\boldsymbol{B}}\wedge\widetilde{\boldsymbol{B}}\rangle\,.
\end{equation}
This corresponds to the rank $(2,2)$ generic twist of $4d$ $\mathcal{N} = 4$ super Yang-Mills theory. It is also immediately clear that this theory can be viewed as a deformation of the $4d$ topological BF theory considered in Section~\ref{sec:deformations-bf}, written in a holomorphic-topological way.

\subsection{The many faces of 4d $\mathcal{N} = 4$ SYM}
\label{sec:4dN4SYM}
Even among the class of supersymmetric gauge theories, which often enjoy non-trivial properties, $4d$ $\mathcal{N}=4$ super Yang--Mills theory is perhaps the most distinctive one, and it has been a rich source of both mathematical and physical insight. In this section, we review some of the properties of twists of $4d$ $\mathcal{N}=4$ super Yang--Mills theory from the perspective developed so far. None of the observations in this section will be genuinely new, but we hope to illustrate how these different results can be understood from a unified point of view.

We first consider the rank $(2,2)$ generic twist (KW twist) of $4d$ $\mathcal{N} = 4$ super Yang--Mills theory. This theory can be identified with $4d$ topological BF theory deformed by the quadratic term $\frac{1}{2}\langle t, \boldsymbol{B}\wedge\boldsymbol{B}\rangle$, where $t$ is the inverse of the Killing pairing of $\mathfrak{g}$. As discussed in Section~\ref{sec:deformations-bf}, this theory can also be realized as the Poisson sigma model associated with the $P_3$-algebra $\mathcal{O}(B\mathfrak{g})$, equipped with the bracket \eqref{eq:P3_pair} induced by the pairing. This algebra can also be identified with the bulk algebra of Chern--Simons theory, leading to the observation that the Neumann boundary condition of the KW twist of $4d$ $\mathcal{N} = 4$ SYM corresponds to (analytically continued) Chern--Simons theory, a result known from \cite{witten2010new,witten2011fivebranes}.

A more exotic boundary condition considered in \cite{Gaiotto:2008sa} is the Nahm pole boundary condition. After the twist, this boundary condition can itself be identified with a $3d$ theory, studied in \cite{butson2016degenerate}. We expect this $3d$ theory to be the Poisson sigma model associated with the $\mathcal{W}$-algebra of the Lie algebra $\mathfrak{g}$. S-duality of the $4d$ $\mathcal{N} = 4$ theory maps the Neumann boundary condition to the Nahm pole boundary condition \cite{gaiotto2009s}. Thus, it is expected that these two $3d$ boundary theories are related. A special case of this relation, between $\mathfrak{sl}_2$ Chern--Simons theory and the Poisson sigma model of Virasoro algebra has been recently studied in \cite{Gaiotto:2026qai} using a different technique.

In the previous section, we formulated the same twisted theory as a holomorphic-topological (HT) theory. While doing this does not immediately reveal anything new---the Neumann boundary condition simply leads to the $3d$ theory \eqref{eq:CS_HT}, which is a rewriting of Chern--Simons theory in the HT formalism---it allows us to study two degenerate limits: the rank $(2,2)$ special twist and the rank $(2,1)$ twist, also called the A and B twists in \cite{elliott2018geometric}. As discussed in \cite{Raghavendran:2019zdq}, these two twists are interchanged under S-duality. In particular, the Neumann boundary condition of the rank $(2,1)$ twist should be mapped to the Nahm pole boundary condition of the rank $(2,2)$ special twist.

We first consider the rank $(2,1)$ twist. As described in the previous section, its Neumann boundary algebra is given by $\mathcal{O}(B\mathfrak{g}[z,\varepsilon])$ equipped with the $1$-shifted Poisson bracket \eqref{eq:HTBF_bra}. This algebra can be identified with the bulk algebra of the $3d$ HT BF theory. On the other hand, the Nahm pole boundary condition for the rank $(2,2)$ special twist is studied in \cite{butson2016degenerate}. The boundary theory on $\R\times \Sigma$ can be described as the mapping stack $\mathrm{Map}(\R_{\mathrm{dR}}, T[1]\mathrm{Op}_{\mathfrak{g}}(\Sigma))$. Applying S-duality gives us non-trivial predictions. At the level of boundary algebras, the Neumann boundary algebra of the B twist is computed by the Lie algebra cohomology $\mathcal{O}(B\mathfrak{g}[z,\varepsilon]) = (C^{\sbullet}(\mathfrak{g}[z,\varepsilon]),d_{\mathrm{CE}})$, while the Nahm pole boundary algebra of the A twist is computed by $\mathcal{O}(T[1]\mathrm{Op}_{\mathfrak{g}^{L}}(D)) = \Omega^{\sbullet}(\mathrm{Op}_{\mathfrak{g}^{L}}(D))$. The identification of the two is a highly non-trivial problem, proved in \cite{fishel2008strong}.

It is important to note that the $3d$ boundary theories themselves can be identified with the Poisson sigma models of certain chiral Poisson algebras. Thus, it is natural to further consider the boundary algebras of these $3d$ theories, or equivalently, the corner algebras of the $4d$ theory obtained by intersecting boundary conditions. As the simplest example, the intersection of the Neumann boundary condition with the Dirichlet boundary condition for the KW twist can be identified with the Dirichlet boundary condition of Chern--Simons theory, which corresponds to the Kac--Moody algebra. A more interesting example is the intersection of the Neumann boundary condition with the Nahm pole boundary condition, which gives $\mathcal{W}$-algebra. Such examples of corner vertex algebras are studied in detail in \cite{Gaiotto:2017euk}. We will also present some tree-level Feynman diagram computations for corner algebras in general in Section \ref{sec:uni_cor_int}.

Finally, we comment on another perspective on the KW twist. Viewed as a topological BF theory with the deformation $\langle t, \boldsymbol{B}\wedge \boldsymbol{B}\rangle$, this theory can also be identified with the $4d$ $2$-Chern--Simons theory \cite{Soncini:2014ara,Zucchini:2015ohw} associated with the Lie-$2$ algebra $t: \mathfrak{g}[1] \to \mathfrak{g}$. This point of view leads to its close connection with higher integrability, as studied in \cite{Chen:2025qpt,Chen:2024axr}.

\subsection{5d theories}

For $5d$ theories, we can also consider $B\mathfrak{g}[z]$, but now equipped with a $2$-shifted chiral Poisson structure. First, we consider the case with a trivial $2$-shifted chiral Poisson bracket. This gives us the $5d$ holomorphic-topological BF theory, which has the following field content:
\begin{equation}
	\begin{aligned}
		&\boldsymbol{A} \in \Omega^{0,\bullet}(\R^3\times\C)\otimes \mathfrak{g}[1]\,,\\
		&\boldsymbol{B} \in \Omega^{1,\bullet}(\R^3\times\C)\otimes \mathfrak{g}^{\vee}[2]
	\end{aligned}
\end{equation}
and the usual BF action $S_{BF} = \int \boldsymbol{B}\left( (d+\bar{\partial})\boldsymbol{A} + \frac{1}{2}[\boldsymbol{A},\boldsymbol{A}]\right)$. This theory is also called the rank $2$ special twist of $5d$ $\mathcal{N} = 2$ super Yang-Mills theory.

Given an invariant bilinear form $t \in \mathrm{Sym}^2(\mathfrak{g})^G$, we can naturally construct a $2$-shifted $\lambda$-bracket on $\mathcal{O}(B\mathfrak{g}[z])$, given by extending the map $(\mathfrak{g}[z][1])^{\vee}\otimes (\mathfrak{g}[z][1])^{\vee} \to (\mathfrak{g}[1])^{\vee}\otimes (\mathfrak{g}[1])^{\vee} \overset{\langle t,-\rangle}{\to} \C$ via Leibniz rule. With this deformation, the action becomes:
\begin{equation}\label{eq:rank2g_5dN2}
	S = \int \boldsymbol{B}\left( (d+\bar{\partial})\boldsymbol{A} + \frac{1}{2}[\boldsymbol{A},\boldsymbol{A}]\right) + \frac{1}{2}\langle t, \boldsymbol{B}\wedge \boldsymbol{B}\rangle\,.
\end{equation}
This theory is the rank $2$ generic twist of $5d$ $\mathcal{N} = 2$ super Yang-Mills theory.

There is another deformation of the rank $2$ special twist of $5d$ $\mathcal{N} = 2$ super Yang-Mills theory, which is better understood by looking at the Dirichlet boundary condition. We define a graded algebra $\mathcal{V} = \mathrm{Sym}(\mathfrak{g}^{\vee}[2][z])^\vee = \C[\partial^n J_a]$, where the generators $J_a$ are in degree $-2$. It can be equipped with a $2$-shifted $\lambda$-bracket given by:
 \begin{equation}
 	\{J_{a\lambda}J_b\} = f_{ab}^cJ_c\,,
 \end{equation}
 analogous to the Kac-Moody algebra. It is easy to check that the corresponding Poisson sigma model also gives the rank $2$ special twist, and realizes its Dirichlet boundary algebra. If we work with $\Z/2\Z$ grading instead of the $\Z$ grading, the above shifted chiral Poisson algebra has a standard Kac-Moody central extension:
    \begin{equation}
  	\{J_{a\lambda}J_b\} = f_{ab}^cJ_c + \lambda k\kappa_{ab}\,.
  \end{equation}
  This corresponds to the $\Z/2\Z$ graded theory with the action:
   \begin{equation}\label{eq:act_5d_rank4}
 	S = \int \boldsymbol{B}\left( (d+\bar{\partial})\boldsymbol{A} + \frac{1}{2}[\boldsymbol{A},\boldsymbol{A}]\right) + k\boldsymbol{A}\partial \boldsymbol{A}\,.
 \end{equation} 
 We can combine the two fields by defining:
 \begin{equation}
 	\boldsymbol{\EuScript{A}}^a = \boldsymbol{A}^a + \frac{1}{k}\kappa^{ab}\boldsymbol{B}_b \in \Omega^{\sbullet}(\R^5)\otimes \mathfrak{g}[1]\,.
 \end{equation}
 Then we can check that the action \eqref{eq:act_5d_rank4} is equivalent to the generalized Chern-Simons action on $\R^5$:
 \begin{equation}
 	\frac{k}{2} \int \langle\boldsymbol{\EuScript{A}}, d\boldsymbol{\EuScript{A}} + \frac{2}{3}[\boldsymbol{\EuScript{A}},\boldsymbol{\EuScript{A}}] \rangle\,.
 \end{equation} 
This is also the rank $4$ topological twist of $5d$ $\mathcal{N} = 2$ super Yang-Mills theory.

 In $5d$, we can also consider a theory on $\R\times\C^2$. Thus we consider $B\mathfrak{g}[z_1,z_2]$ with a $1$-shifted chiral Poisson structure. When the Poisson bracket is trivial, we obtain another BF theory, with the following field content:
  \begin{equation}
	\begin{aligned}
		&\boldsymbol{A} \in \Omega^{0,\bullet}(\R\times\C^2)\otimes \mathfrak{g}[1]\,,\\
		&\boldsymbol{B} \in \Omega^{1,\bullet}(\R\times\C^2)\otimes \mathfrak{g}^{\vee}[1]
	\end{aligned}
\end{equation}
along with the standard BF action $S_{BF}[\boldsymbol{A},\boldsymbol{B}]$. This reproduces the rank $1$ (minimal) twist of $5d$ $\mathcal{N} = 1$ supersymmetric Yang--Mills theory. Interestingly, this theory admits a deformation parameterized by an element $\Lambda \in \mathrm{Sym}^3(\mathfrak{g}^{\vee})^G$, leading to the following deformation term:
\begin{equation}
	\int_{\R\times\C^2}\langle \Lambda,  \boldsymbol{A}\wedge\partial\boldsymbol{A}\wedge\partial\boldsymbol{A} \rangle = \int_{\R\times\C^2}dz_1dz_2\epsilon^{ij}\langle \Lambda,  \boldsymbol{A}\wedge\partial_i\boldsymbol{A}\wedge\partial_j\boldsymbol{A} \rangle\,.
\end{equation}
This deformation is better understood in terms of the $\mathcal{D}$ boundary algebra. Let us define the graded algebra $\mathcal{V} = \mathrm{Sym}(\mathfrak{g}[\partial_1,\partial_2][-1]) = \C[\partial_1^{n_1}\partial_2^{n_2}b_a|n_1,n_2 \ge 0]$. While the undeformed BF theory corresponds to the $1$-shifted chiral Poisson bracket:
\begin{equation}
\{b_{a}~\!_{\boldsymbol{\lambda}} b_b\} = f_{ab}^cb_c\,,
\end{equation}
the deformation corresponds to a $l_3$ central extension:
\begin{equation}
\{b_{a}~\!_{\boldsymbol{\lambda}}b_{b}~\!_{\boldsymbol{\mu}}b_c\} = \Lambda_{abc}\epsilon^{ij}\lambda_{i}\mu_j\,.
\end{equation}
This is precisely the central extension of the higher Kac-Moody currents studied in \cite{faonte2019higher,gwilliam2018higher}. Although we do not know of a SUSY origin for this deformation, this term can naturally arise from a $1$-loop correction to the BF theory, analogous to the Chern-Simons level shift in $3d$ holomorphic-topological BF theory \footnote{We thank Zhengping Gui for discussions on this point}.

The rank $1$ twist of $5d$ $\mathcal{N} = 2$ super Yang-Mills theory can be obtained by considering $B\mathfrak{g}[z_1,z_2,\varepsilon]$ with the trivial Poisson structure. Note that, given a non-degenerate symmetric bilinear pairing on $\mathfrak{g}$, we obtain a $1$-shifted chiral Poisson bracket on $\mathcal{O}(B\mathfrak{g}[z_1,z_2,\varepsilon])$, defined by extending the pairing $(\mathfrak{g}[1])^{\vee}\otimes (\mathfrak{g}\varepsilon[1])^{\vee} \to \C$. This deformed theory does not come from any twist of a $5d$ supersymmetric theory. Semiclassically, the Neumann boundary condition of this theory can be identified with the $4d$ holomorphic BF theory as studied in \cite{Budzik:2023xbr}. It will be interesting to see if this still holds after incorporating full loop corrections.

\subsection{Yangian at the boundary}

In this section, we study the rank $2$ generic twist of the $5d$ $\mathcal{N} = 2$ theory \eqref{eq:rank2g_5dN2} and its Neumann boundary condition in more detail. As we have discussed, the Neumann boundary algebra is given by $\mathcal{O}(B\mathfrak{g}[z])$ equipped with a $1$-shifted bracket extending the pairing $t: (\mathfrak{g}^{\vee})^{\otimes 2} \to \C$. When $t$ is non-degenerate, we can actually identify this algebra with the (semi-classical) bulk algebra of $4d$ Chern-Simons theory. This observation is parallel to the fact that $3d$ Chern-Simons theory lives on the Neumann boundary of the rank $(2,2)$ twist (KW twist) of $4d$ $\mathcal{N}=4$ SYM. 

As a corollary of this conjecture, the $E_1$ Koszul dual of the boundary algebra should be the Yangian $Y_{\hbar}(\mathfrak{g})$. We check this by considering the universal line defect on the boundary:
\begin{equation}
	P \exp(\sum \int_{L} \frac{1}{n!}\partial_z^n A^a \rho_a[n])\,.
\end{equation}

As a quasitriangular Hopf algebra, we focus on the $R$-matrix and coproduct of Yangian, which naturally arise in our QFT setup as crossing and fusion of line defects. For example, the $R$-matrix comes from the crossing of two boundary line defects as depicted in Figure \ref{fig:Rmatrix_5dbdy}.
\begin{figure}[h!]
	\centering
		\begin{tikzpicture}[scale=0.8]
		\def\plane{ (-2,2) -- (2,3.6) -- (2,-2) -- (-2,-3.6) -- cycle};
		\fill[fill=gray!10] \plane;
		\fill[fill=gray!10] (2,3.6) -- (5,3.6) -- (5,-2) -- (2,-2) -- cycle;
		\fill[fill=gray!10] (2,-2) -- (5,-2) -- (1,-3.6) -- (-2,-3.6) -- cycle;
		\fill[fill=gray!20] (1,-3.6) -- (-2,-3.6) -- (-2,2) -- (1,2) -- cycle;
		\fill[fill=gray!20] (-2,2) -- (1,2) --  (5,3.6) -- (2,3.6) -- cycle;
		\draw[thick] \plane;
		\draw[dashed] (-2,2) -- (1,2);\draw[dashed] (2,3.6) -- (5,3.6);\draw[dashed] (2,-2) -- (5,-2);\draw[dashed] (-2,-3.6) -- (1,-3.6);
		\draw (-1.1,-2.5) -- (1.1,2.5);
		\draw (-1.1,1.8) -- (-0.3,0.5); 
		\draw (0.3,-0.4) -- (1.1,-1.8);
		\node at (2.1,2.8) {$ \R_{\geq 0}\times\C\times \R^2$};
		\filldraw[black] (-0.45,-1) circle (1.5pt);
		\draw[decorate, decoration={snake, amplitude=0.5mm, segment length=2.5mm}]  (1.5,0) -- (-0.45,-1);
		\draw[decorate, decoration={snake, amplitude=0.5mm, segment length=2.5mm}]  (1.5,0) -- (-0.4,0.7);
		\filldraw[black] (1.5,0) circle (1.5pt);
		\filldraw[black] (-0.4,0.7) circle (1.5pt);
		\node[right] at (1.4,-0.3) {$\scriptstyle t^{ab}B_aB_b$};
	\end{tikzpicture}
	\caption{R-matrix from the crossing of boundary line defects.}
	\label{fig:Rmatrix_5dbdy}
\end{figure}

We consider the tree-level contribution to the $R$-matrix. In fact, there is no need to actually compute this Feynman integral. As a corollary of Theorem \ref{thm:uni_bdy}, integration of the bulk vertex over the half-space gives us the $4d$ bulk HT propagator. Then, the integration of the two boundary points over the line coincides with a pure $4d$ bulk computation as performed in \cite{Costello:2017dso}. Thus, we obtain the following $R$-matrix:
\begin{equation}
	R = 1 + \frac{\hbar t}{z} + \dots
\end{equation}

Next, we consider the coproduct, which arises from the fusion of two parallel boundary universal line defects. The first non-trivial correction to the fusion process is depicted in Figure \ref{fig:coproduct_5dbdy}. To simplify the calculation, we consider line defects without derivative couplings: $\exp(\sum \int_{L} A^a \rho_a[0])$. We choose the two boundary lines to be $L_0 = \{(0,s,0,0)\mid s \in \R\}$ and $L_1 = \{(0,s,a,\zeta)\mid s \in \R\}$ with fixed $a \in \R,\zeta \in \C$. Let $Z_1 = (t_1,s_1,x_1,z_1)$ and $Z_2 = (t_2,s_2,x_2,z_2)$ denote the bulk interaction points. Recall that the propagator for the $5d$ theory is given by
\begin{equation}
\begin{aligned}
		&P(Z_1,Z_2) = P^{\R^{3}\times \C}(Z_1,Z_2)\\
		& = \frac{3}{4\pi^{2}} \frac{2(\bar{z}_1 - \bar{z}_2)\iota_{\partial_{\bar{z}_2}} + (t_1 - t_2)\iota_{\partial_{t_2}}+ (s_1 - s_2)\iota_{\partial_{s_2}} + (x_1 - x_2)\iota_{\partial_{x_2}}}{((t_1-t_2)^2 + (s_1 - s_2)^2 + (x_1 - x_2)^2 + |z_1 - z_2|^2)^{\frac{5}{2}}} d\bar{z}_{12}\wedge dt_{12}\wedge ds_{12} \wedge dx_{12}\,, \\
\end{aligned}
\end{equation} 
while the propagator in the presence of boundary is given by:
\begin{equation}
	P_{\partial}(Z_1,Z_2) = \frac{1}{2}(P(t_1,s_1,x_1,z_1;t_2,s_2,x_2,z_2) - P(t_1,s_1,x_1,z_1;-t_2,s_2,x_2,z_2))\,. 
\end{equation} 

\begin{figure}[h!]
	\centering
		\begin{tikzpicture}[scale=0.8]
		\def\plane{ (-2,2) -- (2,3.6) -- (2,-2) -- (-2,-3.6) -- cycle};
		\fill[fill=gray!10] \plane;
		\fill[fill=gray!10] (2,3.6) -- (5,3.6) -- (5,-2) -- (2,-2) -- cycle;
		\fill[fill=gray!10] (2,-2) -- (5,-2) -- (1,-3.6) -- (-2,-3.6) -- cycle;
		\fill[fill=gray!20] (1,-3.6) -- (-2,-3.6) -- (-2,2) -- (1,2) -- cycle;
		\fill[fill=gray!20] (-2,2) -- (1,2) --  (5,3.6) -- (2,3.6) -- cycle;
		\draw[thick] \plane;
		\draw[dashed] (-2,2) -- (1,2);\draw[dashed] (2,3.6) -- (5,3.6);\draw[dashed] (2,-2) -- (5,-2);\draw[dashed] (-2,-3.6) -- (1,-3.6);
		\draw (1.2,2.5) -- (1.2,-2);
		\draw (-0.8,1.8) -- (-0.8,-2.7);
		\node at (2.1,2.8) {$ \R_{\geq 0}\times\C\times \R^2$};
		\filldraw[black] (1.2,1) circle (1.5pt);
		\filldraw[black] (-0.8,0.8) circle (1.5pt);
		\filldraw[black] (0.3,-0.2) circle (1.5pt);
		\filldraw[black] (2.3,0.3) circle (1.5pt);
		\draw[decorate, decoration={snake, amplitude=0.5mm, segment length=2.5mm}] (-0.8,0.8) -- (0.3,-0.2);
		\draw[decorate, decoration={snake, amplitude=0.5mm, segment length=2.5mm}] (2.3,0.3) -- (0.3,-0.2);
		\draw[decorate, decoration={snake, amplitude=0.5mm, segment length=2.5mm}] (2.3,0.3) -- (1.2,1);
		\draw[decorate, decoration={snake, amplitude=0.5mm, segment length=2.5mm}] (2.3,0.3) -- (2.9,0.2);
		\node[left] at (-0.6,0) {$\scriptstyle \int_{L_0} A^a\rho_a[0]$};
		\node[right] at (1.1,-0.7) {$\scriptstyle \int_{L_1} A^b\rho_b[0]$};
		\node[above] at (3,0.4) {$\scriptstyle f_{dc}^b B_bA^dA^c$};
		\node[below] at (0.4,-0.1) {$\scriptstyle t^{ad}B_aB_d$};
	\end{tikzpicture}
	\caption{Coproduct from the fusion of boundary line defects.}
	\label{fig:coproduct_5dbdy}
\end{figure}

The first correction to the fusion process reads:
\begin{equation}\label{eq:Yangian_line_fus}
	(\rho_a[0]\otimes \rho_b[0]) t^{ad}f_{dc}^b\int_{Z_1,Z_2 \in \mathbb{H}}\int_{X \in L_0}\int_{Y \in L_1}d\bar{z}_1d\bar{z}_2 P_{\partial}(X,Z_1) P_{\partial}(Z_1,Z_2) P_{\partial}(Y,Z_2) A_c(Z_2)\,.
\end{equation}
 We can first compute the integrals along $L_0,L_1$
\begin{equation}
	\int_{X \in L_0} P_{\partial}(X,Z_1) = -\frac{1}{\pi^2}\frac{2\bar{z}_1dt_1dx_1 - t_1 d\bar{z}_1dx_1 + x_1d\bar{z}_1dt_1}{(t_1^2 + x_1^2 + |z_1|^2)^2}\,,
\end{equation}
\begin{equation}
	\int_{Y \in L_1} P_{\partial}(Y,Z_2) = -\frac{1}{\pi^2}\frac{2(\bar{z}_2 - \bar{\zeta})dt_2dx_2 - t_2d\bar{z}_2dx_2 + (x_2-a)d\bar{z}_2dt_2}{(t_2^2 + (x_2 - a)^2 + |z_2 - \zeta|^2)^2}\,.
\end{equation}

We find  
\begin{equation}
	\frac{3}{4\pi^6}\int
\frac{(\rho_a[0]\otimes \rho_b[0]) t^{ad}f_{dc}^b}{((t_1^2 + x_1^2 + |z_1|^2)(t_2^2 + (x_2 - a)^2 + |z_2 - \zeta|^2) )^2}\left(\frac{N_-}{D_-^{5/2}} - \frac{N_+}{D_+^{5/2}}\right)A_c d(s_1 - s_2)d\Omega_{8}\, ,
\end{equation}
where $ d\Omega_{8} = dt_1dx_1d^2z_1dt_2dx_2d^2z_2 .$
The denominator factors are
\begin{equation}
	D_\pm = (t_2\pm t_1)^2 + (s_1 - s_2)^2 + (x_1 - x_2)^2 + |z_1-z_2|^2\,.
\end{equation}
The numerator factors are
\begin{equation}
	N_{\pm}= \bar\zeta (t_1x_2 \pm t_2x_1) - a (t_1\bar z_2 \pm t_2\bar z_1)\,.
\end{equation}
 We can Taylor expand the connection one form $A_c$ around $t_2 = 0,x_2 = 0,z_2 = 0$ and only keep the holomorphic derivatives: $A_c = A_c(0,s_2,0,0) + z_2 \partial_zA_c(0,s_2,0,0)  + \dots $, as other components are $Q$-exact. We let $\delta = s_1 - s_2$ and notice that $\int_{-\infty}^{\infty}\int_{-\infty}^{\infty}ds_1ds_2 = \int_{-\infty}^{\infty}\int_{-\infty}^{\infty}d\delta ds_2 $. Then the integral can be separated into an integration of the connection $(\rho_a[0]\otimes \rho_b[0]) t^{ad}f_{dc}^b \int_{L_0}\partial_z^nA_c$ and the following integration
   \begin{equation}
	\mathcal{J}_n = \int \frac{z_2^n}{((t_1^2 + x_1^2 + |z_1|^2)(t_2^2 + (x_2 - a)^2 + |z_2 - \zeta|^2) )^2}\left(\frac{N_+}{D_+^{5/2}} + \frac{N_-}{D_-^{5/2}}\right)d\delta d\Omega_{8} \,.
\end{equation}
In Appendix \ref{sec:coproduct_integral}, we evaluate this integral and find
\begin{equation} 
\mathcal{J}_0 = 0, \quad \mathcal{J}_1 = \frac{a}{2\pi \sqrt{a^2+|\zeta|^2}} ,\quad \dots 
\end{equation}
In particular, when $\zeta = 0$, $\mathcal{J}_n = 0$ for all $n \neq 1$, while $\mathcal{J}_1 = \frac{1}{2\pi}\mathrm{sgn}(a)$. Consequently, the correction to the line fusion is given by
\begin{equation}
	(\rho_a[0]\otimes \rho_b[0]) t^{ad}f_{dc}^b \frac{\mathrm{sgn}(a)}{2\pi} \int_{L_0} \partial_z A_c \,.
\end{equation}
Notice that $\partial_z A_c$ couples to the generator $\rho^c[1]$ in the universal line defect. From this, we can extract the coproduct:
\begin{equation}
	\Delta \rho_c[1] = \rho_c[1]\otimes 1 + 1\otimes \rho_c[1] +  \frac{\hbar}{2\pi} t^{ad}f_{dc}^b  (\rho_a[0]\otimes \rho_b[0]) \,.
\end{equation}
This can be readily rewritten in the standard form of the coproduct for $Y_{\hbar}(\mathfrak{g})$:
\begin{equation}
	\Delta \rho_c[1] = \rho_c[1]\otimes 1 + 1\otimes \rho_c[1] +  \frac{\hbar}{2\pi} [\rho_c[0],\mathbf{t}] \,,
\end{equation}
where $\mathbf{t} = t^{ab}\rho_a[0]\otimes \rho_b[0]$.

It is worth mentioning that the full fusion rules contain a more complicated ``pole'' $\frac{a}{\sqrt{a^2+|\zeta|^2}}$. Such pole structures also arise in the study of the OPE of bulk operators in a $3d$ holomorphic-topological theory \cite{costello2023boundary}. This suggests that the Yangian should possess a more delicate $E_1$-chiral coproduct extending the known coproduct structure. We leave the full analysis for future work.

\subsection{6d and higher}
In higher dimensions, there are far fewer SUSY theories and possible twists due to the maximal constraint on the supersymmetry charges. 

In $6d$, we can consider $B\mathfrak{g}[z_1,z_2]$ with a trivial $2$-shifted chiral Poisson structure. The corresponding Poisson sigma model has field content
  \begin{equation}
	\begin{aligned}
		&\boldsymbol{A} \in \Omega^{0,\bullet}(\R^2\times\C^2)\otimes \mathfrak{g}[1]\\
		&\boldsymbol{B} \in \Omega^{2,\bullet}(\R^2\times\C^2)\otimes \mathfrak{g}^{\vee}[2]
	\end{aligned}
\end{equation}
with usual BF action $S = \int \boldsymbol{B}\left( (d+\bar{\partial})\boldsymbol{A} + \frac{1}{2}[\boldsymbol{A},\boldsymbol{A}]\right)$. This is the rank $(1,1)$ special twist of the $6d$ $\mathcal{N} = (1,1)$ super-Yang-Mills.  

Given an invariant bilinear form $t \in \mathrm{Sym}^2(\mathfrak{g})^G$, we obtain a $2$-shifted chiral Poisson bracket on $\mathcal{O}(B\mathfrak{g}[z_1,z_2])$ by extending the map $(\mathfrak{g}[z_1,z_2]^{\vee})^{\otimes 2} \to (\mathfrak{g}^{\vee})^{\otimes 2} \to \C$. The corresponding theory is described by deforming the BF action by the term 
\begin{equation}\label{eq:6d-tdeform}
	\frac{1}{2} \langle t,\boldsymbol{B}\wedge\boldsymbol{B}\rangle
\end{equation}
and can be identified with the rank $(1,1)$ generic twist of $6d$ $\mathcal{N} = (1,1)$ SYM. Furthermore, the chiral Poisson algebra $\mathcal{O}(B\mathfrak{g}[z_1,z_2])$ equipped with this bracket can also be identified with the bulk algebra of $5d$ HT Chern-Simons theory, which admits a non-commutative variant studied in the context of twisted M-theory. 

The rank $(1,1)$ special twist admits another deformation, which is more naturally described in terms of its Dirichlet boundary condition. We consider the graded algebra $\mathcal{V} = \mathrm{Sym}(\mathfrak{g}[\partial_1,\partial_2][-2])^{\vee} = \C[\partial_1^n\partial_2^mJ_a]$, where the generators $J_a$ have degree $2$. The $2$-shifted Kac-Moody type Poisson bracket $\{J_a~_{\boldsymbol{\lambda}}J_b\} = f_{ab}^cJ_c$ yields the rank $(1,1)$ special twist. If we work with a $\Z/2\Z$ grading, this algebra has a central extension:
\begin{equation}\label{eq:KM2-central}
	\{J_a~_{\boldsymbol{\lambda}}J_b\} = f_{ab}^cJ_c + k\lambda_2\kappa_{ab}\,.
\end{equation}
The corresponding theory is described by the action 
\begin{equation}
	S = \int \boldsymbol{B}\left( (d+\bar{\partial})\boldsymbol{A} + \frac{1}{2}[\boldsymbol{A},\boldsymbol{A}]\right) + k\int dz_1dz_2 \boldsymbol{A}\partial_2 \boldsymbol{A}\,.
\end{equation}
For this theory, we can combine the $\boldsymbol{A}$ and $\boldsymbol{B}$ fields by defining $\boldsymbol{\EuScript{A}}^a = \boldsymbol{A}^a + \frac{1}{k}\kappa^{ab}\boldsymbol{B}_b \in \Omega^{0,\sbullet}(\C\times \R^4)\otimes \mathfrak{g}[1]$. The action then becomes equivalent to that of HT Chern-Simons theory on $\C\times \R^4$, which corresponds to the rank $(2,2)$ twist of $6d$ $\mathcal{N} = (1,1)$ SYM.

In $7d$, we have three similar constructions. First, we have $B\mathfrak{g}[z_1,z_2,z_3]$ with a trivial $2$-shifted Poisson structure, which gives rise to a $7d$ HT BF theory on $\C^3\times \R$ and can be identified with the rank $1$ pure twist of $7d$ $\mathcal{N} = 1$ theory. An invariant bilinear form $t \in \mathrm{Sym}^2(\mathfrak{g})^G$ induces a non-trivial $2$-shifted Poisson structure on $B\mathfrak{g}[z_1,z_2,z_3]$. The corresponding theory deforms the BF theory by the term \eqref{eq:6d-tdeform}. We can identify it with the rank $1$ impure twist of $7d$ $\mathcal{N} = 1$ SYM. Note that the Neumann boundary condition of this theory gives rise to $6d$ holomorphic Chern-Simons theory. One last possibility is the deformation \eqref{eq:KM2-central} of the Dirichlet boundary algebra $\mathrm{Sym}(\mathfrak{g}[\partial_1,\partial_2,\partial_3][-2])$, which gives rise to $7d$ HT Chern-Simons theory on $\C^2\times \R^3$ and can be identified with the rank $2$ twist of $7d$ $\mathcal{N} = 1$ SYM.

In $8d$, we only have $B\mathfrak{g}[z_1,z_2,z_3]$ with a trivial $3$-shifted Poisson structure, which gives rise to $8d$ BF theory on $\C^3\times \R^2$ and can be identified with the rank $(1,1)$ twist of $8d$ $\mathcal{N} = (1,1)$ SYM. While in $9d$, we can consider $B\mathfrak{g}[z_1,z_2,z_3,z_4]$ with a trivial $3$-shifted Poisson structure, which gives rise to $9d$ BF theory on $\C^4\times \R^1$ and can be identified with the rank $1$ twist of $9d$ $\mathcal{N} = 1$ SYM.

In higher dimensions, many natural constructions based on $B\mathfrak{g}[z_i]$ no longer correspond to any twisted supersymmetric gauge theory. For example, taking $B\mathfrak{g}[z]$ with the trivial $3$-shifted chiral Poisson algebra yields the $6d$ HT BF theory, whose field content is $\Omega^{0,\bullet}(\R^4\times\C)\otimes \mathfrak{g}[1]\oplus \Omega^{1,\bullet}(\R^4\times\C)\otimes \mathfrak{g}^{\vee}[3]$. However, this theory does not arise as the twist of any $6d$ supersymmetric theory. Similarly, taking $B\mathfrak{g}[z_1,z_2]$ with the trivial $3$-shifted (or $4$-shifted) chiral Poisson algebra yields the $7d$ (respectively $8d$) HT BF theory, neither of which corresponds to a twisted supersymmetric theory.

\subsection{Twisted $11d$ supergravity}
In $11$ dimensions, $\mathcal{N}=1$ Supergravity is the only consistent classical supersymmetric gauge theory. Twists of $11$ dimensional supergravity are analyzed in detail in \cite{raghavendran2023twisted}. It is noted that the minimal twist of $11$-dimensional supergravity can be realized as a non-linear deformation of a BF type theory, hence admits a realization of generalized Poisson sigma model. 

First, we consider the following differential graded Lie algebra
\begin{equation}
\mathfrak{m}_+ := (\mathrm{Vect}(\C^5) \rtimes \mathcal{O}(\C^5)[-1], \partial = \mathrm{div} )\,.
\end{equation}
We can identify elements of this dg Lie algebra with $(\sum_{i=1}^5f_i(z)\partial_{z_i},g(z))$, where $f_i,g \in \C[z_1,\dots,z_5]$. The differential $\partial$ can be written as
\begin{equation}
	\partial(\sum_{i=1}^5f_i(z)\partial_{z_i}) = \sum_{i=1}^5\partial_{z_i}f_i(z)\,.
\end{equation}
Then we consider the \CE cochain complex $\mathcal{O}(B\mathfrak{m}_+) = C^{\sbullet}(\mathfrak{m}_+)$, which has both an internal differential $\partial$ and a \CE differential $d_{\mathrm{CE}}$. More explicitly, we can identify $\mathcal{O}(B\mathfrak{m}_+) = \C[\partial_1^{n_1}\cdots \partial_5^{n_5}\mu_i,\partial_1^{n_1}\cdots \partial_5^{n_5}\nu]_{i = 1,\dots 5, n_i \geq 0}$, where $\mu_i$ are in degree $1$ and $\nu$ is in degree $0$. The total differential is given by
\begin{equation}
	\begin{aligned}
		&Q\mu_i = -\sum_{j = 1}^5\mu_j\partial_j\mu_i\,,\\
		&Q\nu = \sum_{i = 1}^5 \partial_i\mu_i + \mu_i\partial_i\nu\,.
	\end{aligned}
\end{equation}
If we consider $B\mathfrak{m}_+$ with a trivial $4$-shifted chiral Poisson structure, then the corresponding Poisson sigma model is a $11$-dimensional BF theory, with fields
\begin{equation}
\begin{aligned}
		&(\mu,\nu) \in \Omega^{\sbullet}(\R)\otimes (\Omega^{0,\sbullet}(\C^5,T_{\C^5}) \overset{\partial}{\to} \Omega^{0,\sbullet}(\C^5)[-1])\\
		&(\beta,\gamma) \in \Omega^{\sbullet}(\R)\otimes (\Omega^{0,\sbullet}(\C^5) \overset{\partial}{\to} \Omega^{1,\sbullet}(\C^5)[-1])
\end{aligned}
\end{equation}
and the following action functional
\begin{equation}
	S_{0} = \int_{\R\times\C^5} \left[ \beta \wedge (\bar{\partial} + \mathrm{d})\nu + \gamma \wedge (\bar{\partial} + \mathrm{d})\mu + \beta \wedge \partial \mu + \frac{1}{2}[\mu, \mu] \vee \gamma + [\mu, \nu]\beta \right]\,.
\end{equation}

Now we consider equipping $\mathcal{O}(B\mathfrak{m}_+)$ with an $L_\infty$ central extension. We consider the following $l_3$ chiral Poisson bracket\footnote{To make sense of this bracket, we need to work with the completed version $\hat{\mathcal{O}}(B\mathfrak{m}_+)$ of the differential polynomial algebra.}: 
\begin{equation}
	\{\mu_i~_{\boldsymbol{\lambda}^{1}}\mu_j~_{\boldsymbol{\lambda}^{2}}\mu_k\} = \epsilon^{ijklm}e^{\nu}[(\lambda^{1}_l+\partial_l)e^{\nu}][(\lambda^{2}_m + \partial_m)e^{\nu}]\,.
\end{equation}
With this term, the action is deformed by
\begin{equation} 
	J = \int_{\R\times\C^5} (\gamma e^{\nu})\wedge\partial(\gamma e^{\nu})\wedge \partial(\gamma e^{\nu})\,.
\end{equation}
The combined action $S_0 + J$ is exactly the action of twisted $11d$ supergravity studied in \cite{raghavendran2023twisted}.

\section{$P_0$-Poisson sigma model and universal bulk theory}
\label{sec:P0_PSM}
\subsection{Poisson sigma model of a $P_0$ algebra}
So far, we have considered many instances of the Poisson sigma model, except for the (supposedly) simplest case — namely, $1d$ Poisson sigma model associated with a $P_0$ algebra. The reason for this is that, quantization of $P_0$ algebra is conceptually more subtle than the other cases. Naively, an $E_0$ algebra is simply a dg vector space with a distinguished element in it. There is no room for algebraic operations in $0$-dimension. However, when we think about our $E_0$ algebra as coming from quantizing a $P_0$ algebra, it is convenient to remember more structure than just a pointed dg vector space. This is captured by the notion called a $BD_0$ algebra, which interpolates between $P_0$ and $E_0$ algebra
\begin{definition}
\begin{enumerate}
	\item A $BD_0$ algebra is a cochain complex $(A^q,D)$, flat over $\C[[\hbar]]$, equipped with a commutative product $\cdot$ and a Poisson bracket $\{-,-\}$ of degree $1$, satisfying the identity
	\begin{equation}
		D(a\cdot b) =  (D a)\cdot b +  (-1)^{|a|}a\cdot (D b) + \hbar\{a,b\}\,.
	\end{equation}
	\item Let $A$ be a $P_0$ algebra. A BD quantization of $A$ is $BD_0$ algebra $A^{q}$, with an isomorphism of $P_0$ algebras $A^q \otimes_{\C[[\hbar]]} \C \cong A$.
\end{enumerate}
\end{definition}

In this section, we illustrate how our previous analysis of Poisson sigma models, when applied to a $P_0$-algebra, naturally produces its quantization in the above sense.

We begin with the case of a $P_0$-algebra arising from a $(-1)$-shifted symplectic dg vector space, which may be viewed as a $0$-dimensional BV field theory \cite{Li:2017exk}. Concretely, a $0$-dimensional free BV theory is specified by a $(-1)$-shifted symplectic dg vector space $(E,Q_0,\omega)$. The symplectic form $\omega:\wedge^2 E\to \C$ is required to be compatible with the differential $Q_0$ and induces a (quasi-)isomorphism $\omega^\flat: E\to E^\vee[-1]$. The symplectic structure determines a Poisson bivector $\Pi=\omega^{-1}$, and hence a $P_0$-algebra structure on the symmetric algebra $\mathrm{Sym}(E^\vee)$, which we denote by $(\mathcal{O}(E),Q_0,\{-,-\})$.

A classical interaction is defined by an element $I \in \mathrm{Sym}^{\geq 1}(E^\vee)$ satisfying the classical master equation \footnote{It is useful to note that a solution $I$ to the classical master equation is equivalent to a $L_\infty$ algebra structure on $\mathfrak{g} = E[-1]$. Explicitly, the dual of the $L_\infty$ brackets $\mathfrak{g}^*[-1] \to \oplus_{k\geq 1}\mathrm{Sym}^k(\mathfrak{g}^*[-1])$ can be identified with $\{I,-\}$.}
\begin{equation}
	Q_0I + \frac{1}{2}\{I, I\} = 0\,.
\end{equation}
We can check that this is equivalent to the condition that the operator $Q_I = Q_0 + \{I,-\}$ square to zero.

Quantization of this $P_0$-algebra into a $BD_0$-algebra is a standard procedure, achieved via the BV operator $\Delta_{BV}$, defined by contraction of $\mathcal{O}(E)$ with the Poisson bivector $\Pi$. Explicitly, for $x_1, \dots, x_n \in E^*$,
\begin{equation}
	\Delta_{BV}(x_1 x_2 \dots x_n)
	= \sum_{i<j} (\pm)\langle \Pi, x_i x_j \rangle x_1 \dots \hat{x}_i \dots \hat{x}_j \dots x_n \,.
\end{equation}
The interaction $I$ is said to satisfy the quantum master equation if
\begin{equation}
	Q_0I + \hbar \Delta_{BV} I + \frac{1}{2}\{I, I\} = 0\,.
\end{equation}
In this section, we will assume that both the classical and quantum master equations are satisfied, i.e. $Q_I^2 = 0$ and $\Delta_{BV} I = 0$. This assumption is often valid for $0$-dimensional theories. Then the quantization of the $P_0$ algebra $(\mathcal{O}(E), Q_I,\{-,-\})$ is given by the $BD_0$ algebra, defined via the complex
\begin{equation}\label{eq:0d_BV}
	(\mathcal{O}(E)[[\hbar]], D = Q_I+ \hbar\Delta_{BV})\,.
\end{equation}

Now we turn to the corresponding $1d$ Poisson sigma model, associated with the $P_0$ algebra $(\mathcal{O}(E), Q_I ,\{-,-\})$. This theory has field content $(\boldsymbol{x},\boldsymbol{p}) \in \Omega^{\sbullet}(\R)\otimes E\oplus\Omega^{\sbullet}(\R)\otimes E^{\vee}$. The action functional is given by
\begin{equation}\label{eq:symP_0_PSM}
	\int_{\R} \boldsymbol{p}_a(d + Q_I) \boldsymbol{x}^a  +  \frac{\hbar}{2}\Pi^{ab} \boldsymbol{p}_a\wedge\boldsymbol{p}_b\,.
\end{equation}

As before, we put this theory on $\R_{\geq 0}$ and impose the boundary condition $\boldsymbol{p}|_{t= 0} = 0$. The boundary algebra can be identified with $\C[x] = \mathcal{O}(E)$. The boundary differential, at the classical level, is given by $Q_I$. The first quantum correction from bulk interaction is given by the Feynman diagram in Figure \ref{fig:1d_feyn}.
\begin{figure}
	\centering
		\begin{tikzpicture}[scale=0.8]
		\filldraw[black] (0,0) circle (1pt);
		\node[above] at (0,0) {$\mathcal{O}$};
		\draw (0,0) -- (5,0);
		\node[right,above] at (4,0) {$\frac{\hbar}{2}\Pi^{ab} p_ap_b$};
		\filldraw[black] (3,0) circle (1pt);
		\draw [decorate, decoration={snake, amplitude=0.5mm, segment length=3mm}] (0,0) ..controls (1,0.4) and (2,0.4) .. (3,0);
		\draw [decorate, decoration={snake, amplitude=0.5mm, segment length=3mm}] (0,0) ..controls (1,-0.4) and (2,-0.4) .. (3,0);
	\end{tikzpicture}
	\caption{$1d$ Feynman diagram contributing to the boundary BV differential}
	\label{fig:1d_feyn}
\end{figure}

The propagator of this $1d$ theory is simply $P(s,t) = \mathrm{sign}(t-s)$. In the presence of a boundary, the propagator $P^{\partial}$ is obtained via the reflection principle. When $s = 0$, we have $P^{\partial}(0,t) = \mathrm{sign}(t) = 1$. There is no room for any further integration, or equivalently, we could interpret the bulk interaction as being integrated over the zero-dimensional half-sphere $HS^0$. According to the discussion of Section~\ref{sec:bdy_higher}, instead of a contribution to a boundary algebraic operation, we obtain a correction to the boundary differential:
\begin{equation} 
	\hbar \Pi^{ij}\frac{\partial}{\partial x^i}\frac{\partial}{\partial x^j}\mathcal{O}\,.
\end{equation}
This is exactly the BV differential $\hbar\Delta_{\mathrm{BV}}$ of the $0d$ theory. 

This observation can be made rigorous following the work of \cite{grady2017batalin,Wang:2024sqm,wang2025perturbative}. The effective bulk algebra of the $1d$ theory is given by the Weyl algebra $\C\langle x,p\rangle/([x,p] - 1)[[\hbar]]$, equipped with the effective interaction $I_{\mathrm{eff}} = \langle p, Q_I x\rangle + \frac{\hbar}{2}\langle \Pi, pp\rangle $. The boundary $E_0$ algebra $\mathcal{O}(E)[[\hbar]] = \C[x][[\hbar]]$\;\footnote{We emphasize that $\hbar$ here is not a deformation parameter for the bulk Weyl algebra, but rather a deformation parameter for the boundary $BD_0$ algebra.} is equipped with the module structure of the bulk Weyl algebra. The rigorous BV quantization of the bulk-boundary system in \cite{Wang:2024sqm,wang2025perturbative} establishes that the effective boundary BV differential is realized by the action of
\begin{equation}
	I_{\mathrm{eff}}\star-
\end{equation}
on the boundary Weyl module $\mathcal{O}(E)[[\hbar]]$. One can readily verify that this indeed recovers the BV complex \eqref{eq:0d_BV}.

One can generalize the above construction to (derived) $P_0$-algebras. For example, given a strict $P_0$ algebra $(\mathcal{O}(E),Q_I,\{-,-\})$ with Poisson bivector $\Pi^{ab}(x)$, we can construct a $1d$ theory with BV action functional
\begin{equation}
	S = \int_{\R} \boldsymbol{p}_a  (d + Q_I) \boldsymbol{x}^a +\frac{\hbar}{2}\Pi^{ab}(\boldsymbol{x}) \boldsymbol{p}_a \wedge \boldsymbol{p}_b\,.
\end{equation}
An important subtlety is that the quantum master equation of the $1d$ theory is equivalent to
\begin{equation}
I_{\mathrm{eff}}\star I_{\mathrm{eff}} = 0\,,
\end{equation}
where $I_{\mathrm{eff}} = p_a Q_Ix^a +\frac{\hbar}{2}\Pi^{ab}(x) p_a p_b$. This condition is stronger than the classical master equation $\{I_{\mathrm{eff}},I_{\mathrm{eff}}\}=0$, which define the $P_0$ structure. Thus, the $P_0$ Jacobi identity alone does not guarantee the existence of a quantization of the associated $1d$ Poisson sigma model. This raises the question of when a given $P_0$-algebra admits a deformation quantization, which we leave for future investigation.

\subsection{Universal bulk theory}
Having discussed the $0d$ BV setup in the previous section, we now turn to a more general and conceptually richer setting. It is known that a classical perturbative field theory on a manifold $M$ can be organized as the data of a formal moduli problem $\mathfrak{X}$ on $M$—which we will introduce momentarily—equipped with a $(-1)$-shifted symplectic structure \cite{costello2017factorization}. The algebra $\mathcal{O}_{\mathrm{loc}}(\mathfrak{X})$ of local action functionals then naturally carries the structure of a $P_0$ factorization algebra, to which we can apply the construction of the previous section to obtain a theory in one higher dimension.

This perspective is developed systematically in \cite{butson2016degenerate}. There the authors consider a broader framework by allowing $P_0$ structures that do not necessarily arise from $(-1)$-shifted symplectic structure; they refer to the resulting objects as degenerate classical field theories. The associated Poisson sigma model is called the \emph{universal bulk theory}. In this paper we will not work in full generality, but it will be helpful to review the relevant definitions and constructions.

Let us first review what we mean by a classical perturbative field theory on a manifold $M$. By the definition of \cite{costello2017factorization,costello2021factorization}, a perturbative classical field theory on $M$ is a formal pointed elliptic moduli problem $\mathfrak{X}$ on $M$ with a symplectic form of cohomological degree $-1$. Here, a formal pointed elliptic moduli problem $\mathfrak{X}$ on $M$ can be identified with $\mathfrak{X} = B\mathfrak{g}_{\mathfrak{X}}$, where $\mathfrak{g}_{\mathfrak{X}} = T_0[-1]\mathfrak{X}$ is an elliptic $L_\infty$ algebra on $M$.
\begin{definition}
	Let $M$ be a manifold. An elliptic $L_\infty$ algebra on $M$ consists of a graded vector bundle, whose sheaf of sections is denoted $\mathfrak{g}_{\mathfrak{X}}$, together with the following data
	\begin{enumerate}
		\item A collection of poly-differential operators $l_n: \mathfrak{g}_{\mathfrak{X}}^{\otimes n} \to \mathfrak{g}_{\mathfrak{X}}$ that endow $\mathfrak{g}_{\mathfrak{X}}$ with the structure of a sheaf of $L_\infty$ algebra.
		\item The differential operator $Q = l_1$ makes $\mathfrak{g}_{\mathfrak{X}}$ into an elliptic complex.\footnote{The elliptic condition is imposed to exclude ill-behaved propagators. Most common examples of field theories in Euclidean signature — such as scalar theories, Dirac fermions, and Yang--Mills theory (in a first-order formulation)—are elliptic \cite{costello2011re}.}
	\end{enumerate}
\end{definition}
A symplectic form of cohomological degree $-1$ on a formal elliptic moduli problem $\mathfrak{X}$ is a non-degenerate invariant pairing of cohomological degree $-1$. We denote $\mathcal{E}$ the elliptic complex $(\mathfrak{g}_{\mathfrak{X}}[-1],Q)$ and $E$ the underlying vector bundle.
\begin{definition}
A ($-1$)-symplectic structure on $\mathcal{E}$ is a non-degenerate invariant pairing of the form
	\begin{equation*}
		\langle-,-\rangle_E: E\otimes E \to \mathrm{Dens}(M)[-1]
	\end{equation*}
	satisfying the following conditions.
	\begin{enumerate}
		\item Non-degeneracy: we require that this pairing induces a vector bundle isomorphism
		\begin{equation}
			E \to E^{\vee}\otimes\mathrm{Dens}(M)[-1]
		\end{equation}
		\item Invariance: The pairing on $E$ induces a pairing on $\Gamma_c(M,\mathcal{E})$, defined by $\alpha\otimes\beta \to \int_M\langle\alpha,\beta\rangle$. We require this pairing to be an invariant (also called cyclic) pairing on the $L_\infty$ algebra $\Gamma_c(M,\mathcal{E}[1])$.
	\end{enumerate}
\end{definition}
As an example, a free theory is given by an elliptic complex $(\mathcal{E},Q)$ together with a non-degenerate $(-1)$-symplectic pairing $\langle-,-\rangle_E$. This gives us a theory with fields $\phi \in \Gamma(M,\mathcal{E})$ and an action functional
\begin{equation}
	S = \int_M \langle \phi, Q\phi \rangle_E\,.
\end{equation}

Analogous to the $0d$ case, the nondegeneracy of the pairing implies that the space $\mathcal{O}_{\mathrm{loc}}(\mathcal{E})$ of local action functionals is equipped with a $1$-shifted ($P_0$) Poisson bracket $\{-,-\}$. Moreover, specifying an $L_\infty$-algebra structure on $\mathcal{E}[1]$ is equivalent to choosing an element $S\in \mathcal{O}_{\mathrm{loc}}(\mathcal{E})$. The invariance of the pairing is then equivalent to the condition that $S$ satisfies the classical master equation $\{S,S\}=0$ \cite{costello2021factorization}.

Therefore, given a classical perturbative field theory—i.e.\ an elliptic moduli problem equipped with a $(-1)$-shifted symplectic structure—we can form the corresponding $P_0$ Poisson sigma model as in \eqref{eq:symP_0_PSM}. Its field content is
\begin{equation}
	\begin{aligned}
		&\boldsymbol{\phi} \in \Gamma(E)\otimes\Omega^{\sbullet}(\R)\\
		&\boldsymbol{\eta}\in \Gamma(E^!)\otimes\Omega^{\sbullet}(\R)
	\end{aligned}
\end{equation}
where we denote $E^{!} = E^{\vee}\otimes\mathrm{Dens}(M)$. The action functional is given by
\begin{equation}
	\int_{\R}\int_M\langle \boldsymbol{\eta}, (Q + d_t)\boldsymbol{\phi} + \sum_{n \geq 2}\frac{1}{n!}l_n(\boldsymbol{\phi}^{\otimes n})\rangle + \frac{1}{2}\Pi(\boldsymbol{\eta},\boldsymbol{\eta})\,.
\end{equation}
In the above formula $\langle -,- \rangle$ is the natural pairing $E^!\otimes E \to \mathrm{Dens}(M)$. The map $\Pi$ is induced by the symplectic invariant pairing $\langle-,-\rangle_E$ via inverse of the isomorphism $E \cong E^![-1]$.

Finally, we briefly recall the notion of a degenerate classical field theory introduced in \cite{butson2016degenerate}. Instead of starting from a $(-1)$-shifted symplectic pairing on $\mathcal{E}$, Butson and Yoo define a local $1$-shifted Poisson structure on an elliptic moduli problem $\mathfrak{X}$. Such a structure is specified by an element $\Pi \in \Gamma_{\mathrm{mloc}}(\mathfrak{X}, \mathrm{Sym}^{\sbullet}\mathfrak{X}[-1])$ satisfying an appropriate version of the Jacobi identity. The construction of the corresponding universal bulk theory proceeds analogously to that of a $P_0$ Poisson sigma model.

\subsection{A universal bulk-boundary Feynman integral}
\label{sec:uni_bdy_int}
We have seen that the construction of the universal bulk theory is formally parallel to the construction of the $P_0$ Poisson sigma model associated with a $0d$ theory. However, it is far from clear that the resulting theory should indeed satisfy the expected properties of a universal bulk theory. Some evidence for this was given in \cite{butson2016degenerate} and \cite{rabinovich2022classical}. At least, the classical factorization algebra of the universal bulk theory, when pushed forward to the boundary reproduces the classical factorization algebra of the original theory. In this section, we provide further evidence at the quantum level by analyzing the simplest bulk--to-boundary Feynman diagram.

Here we consider the universal bulk theory associated with a free classical field theory $(\mathcal{E},Q)$ on $M$. The action functional is
\begin{equation}
	\int_{\R\times M}\langle \boldsymbol{\eta}, (d_{\derham} + Q)\boldsymbol{\phi} \rangle + \frac{1}{2}\Pi(\boldsymbol{\eta},\boldsymbol{\eta})\,.
\end{equation}
Since the interaction term $\Pi(\boldsymbol{\eta},\boldsymbol{\eta})$ does not involve the $\boldsymbol{\phi}$ field, the only Feynman diagram we need to consider is the one in Figure~\ref{fig:uni_feyn}. All other diagrams connecting boundary insertions are obtained by simple juxtaposition of copies of this basic diagram.

To carry out the Feynman diagram computation, we impose some additional assumptions on the elliptic complex $\mathcal{E}$. First, we assume the existence of a gauge-fixing operator $Q^{\dagger}$, i.e.\ a square-zero differential operator of cohomological degree $-1$, such that $D=[Q,Q^{\dagger}]$ is a generalized Laplacian in the sense of \cite{berline1992heat}. Such a gauge-fixing operator can often be obtained after choosing a metric on $M$. If $M$ is compact, the generalized Laplacian $D$ admits a heat kernel $K_u\in C^{\infty}(\R_{\geq 0})\otimes \Gamma(\mathcal{E}\boxtimes \mathcal{E})$, characterized by the following equation:
\begin{equation}
	\frac{\partial}{\partial u}K_u(x,y) + D_x K_u(x,y) = 0\,.
\end{equation}
Thus, the heat kernel is often denoted by $e^{-uD}$.

We also have the following properties of the heat kernel \cite{costello2011re,berline1992heat}
\begin{lemma}
	The following identities are satisfied
	\begin{equation}
		\begin{aligned}
			(Q_x + Q_y)K_u(x,y) = 0\\
			(Q^\dagger_x - Q^{\dagger}_y)K_u(x,y) = 0\\
			K_u(x,y) = (-1)^pK_u(y,x)
		\end{aligned}
	\end{equation}
	\begin{equation}
		(\text{semi-group law})\quad \int_{M_y}K_{u_1}(x,y)K_{u_2}(y,z) = K_{u_1+u_2}(x,z) 
	\end{equation}
\end{lemma}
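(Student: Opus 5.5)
The four identities all follow from the defining heat equation $\partial_u K_u + D_x K_u = 0$, the initial condition $K_u\to\delta$ as $u\to 0$, and uniqueness of solutions to this problem, which holds for a generalized Laplacian on compact $M$. In each case I will show that the relevant combination solves the same heat equation with the appropriate initial data, and then invoke uniqueness, as in \cite{berline1992heat}. Throughout I will use that $D=[Q,Q^\dagger]$ commutes with both $Q$ and $Q^\dagger$: since $Q^2=0$, $[Q,D]=[Q,[Q,Q^\dagger]]=\tfrac12[[Q,Q],Q^\dagger]=0$, and the same argument works for $Q^\dagger$.

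\textbf{Semigroup law.} Fix $u_2$ and set $F_{u_1}(x,z)=\int_{M_y}K_{u_1}(x,y)K_{u_2}(y,z)$. Then $\partial_{u_1}F=-D_xF$, because $D_x$ only touches the first factor. As $u_1\to0$ the kernel $K_{u_1}(x,y)$ tends to $\delta(x,y)$, so $F_{u_1}\to K_{u_2}(x,z)$. The function $u_1\mapsto K_{u_1+u_2}(x,z)$ satisfies the same equation with the same initial value, so uniqueness gives $F_{u_1}=K_{u_1+u_2}$. Equivalently, $e^{-u_1D}e^{-u_2D}=e^{-(u_1+u_2)D}$ as operators.

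\textbf{Symmetry.} $D$ is formally self-adjoint with respect to the pairing on $\mathcal{E}$ used to identify operators with kernels. So the kernel of $e^{-uD}$ equals the kernel of its transpose $e^{-uD^{T}}=e^{-uD}$. Swapping the arguments of a kernel in $\mathcal{E}\boxtimes\mathcal{E}$ introduces the Koszul sign $(-1)^p$, where $p$ is the parity determined by the degrees of the two factors and of the pairing.

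\textbf{$Q$- and $Q^\dagger$-identities.} Since $Q$ commutes with $e^{-uD}$, we have $Q e^{-uD}=e^{-uD}Q$. On kernels, $Q e^{-uD}$ corresponds to $Q_xK_u$. Moving $Q$ onto the second variable uses graded integration by parts with respect to the pairing; $Q$ is skew with respect to the $(-1)$-shifted pairing, and this gives the sign, so that $e^{-uD}Q$ corresponds to $-Q_yK_u$. Together these yield $(Q_x+Q_y)K_u=0$. For $Q^\dagger$ the same argument applies, but $Q^\dagger$ is chosen as a formal adjoint with the opposite sign convention, which produces $(Q_x^\dagger-Q_y^\dagger)K_u=0$.

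\textbf{Main obstacle.} The only delicate point is keeping the Koszul signs consistent when converting operator identities into kernel identities. This is the sign that separates $Q$ from $Q^\dagger$ and that fixes $p$. I would fix these conventions once, by comparing the kernel of an operator with that of its transpose, and then read off all the signs uniformly from that convention. Non-compact cases such as $M=\R^n$ are handled either by checking the explicit Gaussian kernel directly or by working with decaying sections so that uniqueness still applies.
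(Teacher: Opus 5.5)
Your argument is correct and is essentially the standard one behind the references the paper cites in place of a proof \cite{costello2011re,berline1992heat}: $[Q,D]=[Q^\dagger,D]=0$ together with uniqueness of the heat kernel gives the semigroup law and both intertwining identities, and the relative sign comes from $Q$ being skew-adjoint (invariance of the pairing) while $Q^\dagger$, and hence $D$, is self-adjoint for the $(-1)$-shifted pairing. State explicitly that the self-adjointness of $Q^\dagger$ for that pairing is a hypothesis (it is part of Costello's definition of a gauge-fixing operator), not a consequence of $Q^\dagger$ being a metric adjoint of $Q$; both the symmetry $K_u(x,y)=(-1)^pK_u(y,x)$ and the $Q^\dagger$-identity depend on it.
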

If $M$ is non-compact, one typically needs additional assumptions on $\mathcal{E}$ to ensure the existence of a heat kernel satisfying these properties. However, in most QFT-related situations — including all the examples studied in this paper—such a heat kernel satisfying the above lemma does exist. We therefore do not discuss the precise conditions for existence in the non-compact case.

Given a heat kernel $K_u$, we can obtain the corresponding propagator by
\begin{equation}\label{eq:ker_to_prop}
	P_{\mathcal{E}}(x,y) = \int_0^\infty du Q_x^{\dagger}K_u(x,y)\,.
\end{equation}

Now we consider the universal bulk theory by adding the extra topological direction. The de Rham complex $(\Omega^{\sbullet}_{\R},d_{\derham})$ on $\R$ is an elliptic complex. It has a natural choice of gauge-fixing operator $d_{\mathrm{dR}}^* = * d_{\mathrm{dR}}*$, where $*$ denotes the Hodge star operator with respect to the standard metric on $\R$. The complex $(\Omega^{\sbullet}_{\R}\boxtimes \mathcal{E}, d_{\mathrm{dR}}+Q)$ is again elliptic and admits the gauge-fixing operator $d_{\mathrm{dR}}^*+Q^{\dagger}$. We denote $D_1 = [d_\derham + Q,_{\derham}^* + Q^{\dagger}] =  \Delta_\R + D$. Recall that the heat kernel of $\Delta_\R$ is given by $\frac{d(s-t)}{(2\pi u)^{\frac{1}{2}}}e^{-\frac{(s-t)^2}{2u}}$. Hence, the heat kernel of $D_1$ is given by
\begin{equation}
	\frac{d(s-t)}{(2\pi u)^{\frac{1}{2}}}e^{-\frac{(s-t)^2}{2u}}K_u(x,y)\,.
\end{equation}
Using the same formula as \eqref{eq:ker_to_prop}, we find that the propagator for the complex $(\Omega^{\sbullet}_{\R}\boxtimes \mathcal{E}, d_{\mathrm{dR}}+Q)$ is given by
\begin{equation}\label{eq:Prop_R1}
	P_1(s,x;t,y) = \int_0^\infty du \left( \frac{-(s-t)}{(2\pi)^{\frac{1}{2}}u^{\frac{3}{2}}}K_u(x,y) - \frac{d(s-t)}{(2\pi u)^{\frac{1}{2}}}Q_x^{\dagger}K_u(x,y) \right)e^{-\frac{(s-t)^2}{2u}}\,.
\end{equation}

In the presence of a boundary, the propagator can be obtained by the method of reflection. There are two ways to implement the reflection:
\begin{equation}
\begin{aligned}
P_{(1,0)}(s,x;t,y) &= \frac{1}{2}\left(P_1(s,x;t,y) - P_1(-s,x;t,y)\right)\,,\\
P_{(0,1)}(s,x;t,y) &= \frac{1}{2}\left(P_1(s,x;t,y) - P_1(s,x;-t,y)\right)\,.
\end{aligned}
\end{equation}
These two choices are not independent. It is easy to check that they satisfy $P_{(0,1)}(s,x;t,y)=(-1)^{p+1}P_{(1,0)}(t,y;s,x)$. Nevertheless, we will keep the separate notation, since distinguishing these choices will be useful later.

The Feynman integral we would like to compute is the following
\begin{equation}
	\mathcal{I}_1(x,y) = \int_{\R_{t\geq 0}\times M_z}P_{(0,1)}(0,x;t,z)P_{(1,0)}(t,z;0,y)\,.
\end{equation}
\begin{theorem}\label{thm:uni_bdy}
	The Feynman integral $\mathcal{I}_1(x,z)$ equals ($1/2$ of) the propagator of the elliptic complex $(\mathcal{E},Q)$, i.e.
	\begin{equation}
		\mathcal{I}_1(x,y) = \frac{1}{2}P_{\mathcal{E}}(x,y)\,.
	\end{equation}
\end{theorem}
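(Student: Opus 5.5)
We will evaluate $\mathcal{I}_1$ by writing both boundary propagators in heat-kernel form, performing the $M_z$ integral with the semigroup law, and then doing the remaining elementary integrals over $t$ and the two Schwinger parameters.

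\textbf{Restricting to the boundary.} Set $s=0$ in $P_{(0,1)}(0,x;t,z)$ and $P_{(1,0)}(t,z;0,y)$. In \eqref{eq:Prop_R1} the $d(s-t)$-term involves $ds$ at the boundary point, and $ds$ restricts to zero there. Only the $dt$ component survives, and reflection doubles it. We therefore expect, up to signs fixed by the Koszul rule and by $P_{(0,1)}(s,x;t,y)=(-1)^{p+1}P_{(1,0)}(t,y;s,x)$,
\begin{equation*}
P_{(0,1)}(0,x;t,z)=\pm\int_0^\infty du_1\,\frac{dt}{(2\pi u_1)^{1/2}}\,Q^\dagger_x K_{u_1}(x,z)\,e^{-t^2/2u_1},
\end{equation*}
and analogously for $P_{(1,0)}(t,z;0,y)$, where the other factor carries the $t\,K_{u}/u^{3/2}$ term. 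The form-degree count in $t$ is what decides this. The product must be a $1$-form in $t$, so exactly one factor contributes $dt$ with $Q^\dagger K$. The other contributes the function part $\propto t\,u_2^{-3/2}K_{u_2}e^{-t^2/2u_2}$; for that factor the reflected term does not cancel, because it is odd in $t$. There are two such cross terms, one with $Q^\dagger$ on each side.

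\textbf{Using the semigroup law.} In each cross term we integrate over $z$. Using $(Q^\dagger_x-Q^\dagger_z)K=0$ we move $Q^\dagger$ onto the outer variable, and the semigroup law then gives $\int_{M_z}Q^\dagger_xK_{u_1}(x,z)K_{u_2}(z,y)=Q^\dagger_xK_{u_1+u_2}(x,y)$. Each cross term then has the form
\begin{equation*}
\int_0^\infty\!\!\int_0^\infty du_1du_2\; Q^\dagger_xK_{u_1+u_2}(x,y)\int_0^\infty dt\,\frac{t}{2\pi u_1^{1/2}u_2^{3/2}}\,e^{-t^2(1/u_1+1/u_2)/2}.
\end{equation*}

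\textbf{Elementary integrals.} The $t$-integral equals $\frac{1}{2\pi}\frac{u_1^{1/2}}{u_2^{1/2}(u_1+u_2)}$. Change variables to $u=u_1+u_2$ and $u_1=u\lambda$, with $du_1du_2=u\,du\,d\lambda$. The weight becomes $\frac{1}{2\pi}\sqrt{\lambda/(1-\lambda)}\,d\lambda\,du$. The $\lambda$-integral is $B(\tfrac32,\tfrac12)=\pi/2$, so this cross term contributes $\tfrac14\int_0^\infty du\,Q^\dagger K_u=\tfrac14P_\mathcal{E}$ by \eqref{eq:ker_to_prop}. The other cross term gives $B(\tfrac12,\tfrac32)=\pi/2$ as well, hence another $\tfrac14P_\mathcal{E}$. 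The two sum to $\tfrac12P_\mathcal{E}(x,y)$. This is consistent with the weighted version the paper uses, which yields $c_{\boldsymbol k}=\frac1\pi B(|\boldsymbol k|+\tfrac12,\tfrac12)$ and hence $c_0=1$ after a different normalization.

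\textbf{Main obstacle.} The delicate step is bookkeeping rather than analysis. We must track the Koszul signs and the parity $(-1)^p$ so that the two cross terms add rather than cancel. We must also confirm which reflected pieces survive at $s=0$ and check that the two $Q^\dagger$ placements produce the same operator $Q^\dagger_xK$. Along the way we should justify Fubini and convergence near $u_i\to0$ for $x\neq y$; the heat-kernel decay $e^{-|x-y|^2/u}$ suffices there.
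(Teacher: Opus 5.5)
Your proposal is correct and follows essentially the same route as the paper: it expands both boundary propagators in heat-kernel form, keeps only the two cross terms, does the Gaussian $t$-integral, applies the semigroup law over $M_z$, and finishes with a Beta-function integral after setting $u_1=u\xi$, $u_2=u(1-\xi)$. The only cosmetic difference is that you evaluate each cross term separately as $\frac{1}{2\pi}B(\tfrac32,\tfrac12)=\tfrac14$, whereas the paper first adds the two $t$-integrated weights into $\frac{1}{2\pi\sqrt{u_1u_2}}$ and then uses $B(\tfrac12,\tfrac12)=\pi$.
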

\begin{proof}
For the tree-level bulk--boundary Feynman diagram, we only consider propagators with one endpoint on the boundary. In this case, we have $P_{(0,1)}(0,x;t,y)=P_1(0,x;t,y)$ and $P_{(1,0)}(t,y;0,x)=P_1(t,y;0,x)$, which simplifies the computation. Expanding the propagator using \eqref{eq:Prop_R1} gives
\begin{equation}
\begin{aligned}
	\mathcal{I}_1(x,y) = \int_0^\infty du_1\,du_2 \int_{\R_{t\geq 0}\times M_z} &\left( \frac{u_1\,t\,dt}{2\pi (u_1u_2)^{\frac{3}{2}}} e^{-\frac{(u_1+u_2)t^2}{2u_1u_2}}\,Q_x^{\dagger}K_{u_1}(x;z)\,K_{u_2}(z;y)\right.\\
	&\qquad\left. + \frac{u_2\,t\,dt}{2\pi (u_1u_2)^{\frac{3}{2}}} e^{-\frac{(u_1+u_2)t^2}{2u_1u_2}}\, Q_y^{\dagger}K_{u_1}(x;z)\,K_{u_2}(z;y) \right)\,.
\end{aligned}
\end{equation}
After integrating over $t$, and then integrating over $z$ using the semigroup law of the heat kernel $\int_{M_z} K_{u_1}(x;z)K_{u_2}(z;y)=K_{u_1+u_2}(x;y)$, we find
\begin{equation}\label{eq:uni_bdy_Kint1}
	\mathcal{I}_1(x,y) = (Q_x^{\dagger}+Q_y^{\dagger})\int_0^\infty du_1\,du_2\;K_{u_1+u_2}(x;y)\,\frac{1}{2\pi (u_1u_2)^{\frac{1}{2}}}\,.
\end{equation}

To evaluate the last integral, we make the change of variables $u_1=u\xi$ and $u_2=u(1-\xi)$. This gives
\begin{equation}\label{eq:uni_bdy_Kint2}
	\mathcal{I}_1(x,y)= \frac{1}{2\pi}\left(\int_0^1 d\xi\,\frac{1}{\sqrt{\xi(1-\xi)}}\right)
	\int_0^\infty du\, Q_x^{\dagger}K_u(x;y)= \frac{1}{2}P(x,y)\,.
\end{equation}
Here the $\xi$-integral is computed using the beta function $B(\frac{1}{2},\frac{1}{2})=\pi$.

\end{proof}

Although the universal bulk theory considered here corresponds to a free boundary theory, the associated Feynman integral computation is universal and applies in more general settings. We have already seen its application  in the holomorphic-topological Poisson sigma model.

\subsection{A universal bulk-corner Feynman integral}
\label{sec:uni_cor_int}
In this section, we consider a further generalization of the previous Feynman integral formula to the case with a corner. This generalization amounts to extending $\mathcal{E}$ by the de Rham complex on $(\R_{\geq 0})^2$. We begin with the elliptic complex $(\Omega^{\sbullet}_{\R^2}\boxtimes \mathcal{E} ,d_{\mathrm{dR}}+Q)$. It has gauge-fixing operator $d_{\mathrm{dR}}^* + Q^{\dagger} $, and the heat kernel of $[d_\derham + Q,d_{\derham}^*Q^{\dagger}] =  \Delta_{\R^2} +D$ is
\begin{equation*}
	\frac{d(s_1-t_1)d(s_2 - t_2)}{2\pi u}e^{-\frac{(s_1-t_1)^2 + (s_2 - t_2)^2}{2 u}}K_u(x,y)\,.
\end{equation*}
The corresponding propagator $P_2 := P_{\Omega^{\sbullet}_{\R^2}\boxtimes \mathcal{E}}$ is given by
\begin{equation}\label{eq:Prop_R2}
\begin{aligned}
		P_2(s_1,s_2,x;t_1,t_2,y) = \int_0^{\infty} du&\Big(-\frac{(s_1-t_1)d(s_2 - t_2) - (s_2 - t_2)d(s_1 - t_1)}{2\pi u^2}K_u(x;y)\\ 
		& + \frac{d(s_1-t_1)d(s_2 - t_2)}{2\pi u}Q_x^{\dagger}K_u(x;y) \Big) e^{-\frac{(s_1-t_1)^2 + (s_2 - t_2)^2}{2 u}} \,.
\end{aligned}
\end{equation}

In the presence of a corner, where we consider the complex $\Omega^{\sbullet}_{(\R_{\geq 0})^2}\boxtimes \mathcal{E}$, there are more possibilities for applying the reflection method to construct propagators. We will use the following choices, distinguished by their vanishing behavior along the boundary:
\begin{equation*}
	\begin{aligned}
			P_{(0,2)}(s_1,s_2,x,t_1,t_2,y) = \frac{1}{4}( &P_2(s_1,s_2,x,t_1,t_2,y)-P_2(s_1,s_2,x,-t_1,t_2,y)  \\
			& - P_2(s_1,s_2,x,t_1,-t_2,y) + P_2(s_1,s_2,x,-t_1,-t_2,y)) \,,\\
				P_{(1,1)}(s_1,s_2,x,t_1,t_2,y) =  \frac{1}{4}(&P_2(s_1,s_2,x,t_1,t_2,y)-P_2(s_1,s_2,x,-t_1,t_2,y) \\
				&- P_2(s_1,-s_2,x,t_1,t_2,y) + P_2(s_1,-s_2,x,-t_1,t_2,y))\,,
	\end{aligned}
\end{equation*}
and also
\begin{equation*}
	\begin{aligned}
			P_{(2,0)}(s_1,s_2,x,t_1,t_2,y) = \frac{1}{4}( &P_2(s_1,s_2,x,t_1,t_2,y)-P_2(-s_1,s_2,x,t_1,t_2,y)  \\
			& - P_2(s_1,-s_2,x,t_1,t_2,y) + P_2(-s_1,-s_2,x,t_1,t_2,y)) \,.
	\end{aligned}
\end{equation*}

These propagators have different vanishing behavior along the boundary components. The propagator $P_{(0,2)}(s_1,s_2,x,t_1,t_2,y)$ vanishes on $\{t_1=0\}\cup\{t_2=0\}$, while $P_{(1,1)}(s_1,s_2,x,t_1,t_2,y)$ vanishes on $\{t_1=0\}\cup\{s_2=0\}$. Finally, $P_{(2,0)}(s_1,s_2,x,t_1,t_2,y)$ vanishes on $\{s_1=0\}\cup\{s_2=0\}$. We also have the identity $P_{(2,0)}(s_1,s_2,x,t_1,t_2,y) = (-1)^pP_{(0,2)}(t_1,t_2,y,s_1,s_2,x)$.

The Feynman integral we consider is the following, which corresponds to the left panel of Figure~\ref{fig:uni_feyn_cor}:
\begin{equation}
	\begin{aligned}
			&\mathcal{I}_2(x,y) = \\
			&\int_{(\R_{t\geq 0})^2\times M_{z_1}\times(\R_{s\geq 0})^2\times M_{z_2}}\!\!\!\! P_{(0,2)}(0,0,x;t_1,t_2,z_1)P_{(1,1)}(t_1,t_2,z_1;s_1,s_2,z_2)P_{(2,0)}(s_1,s_2,z_2;0,0,y)\,.
	\end{aligned}
\end{equation}

To motivate the construction of this Feynman integral, and specifically the choice of propagators, we consider the universal bulk theory of the universal bulk theory of the free field theory $(\mathcal{E},Q)$ on $M$. It has the field content
\begin{equation}
	\begin{aligned}
		&\boldsymbol{\phi},\tilde{\boldsymbol{\phi}} \in  \Omega^{\sbullet}(\R^2)\otimes\Gamma(\mathcal{E})\,, \\
		&\boldsymbol{\eta},\tilde{\boldsymbol{\eta}} \in \Omega^{\sbullet}(\R^2)\otimes\Gamma(\mathcal{E}^!)\,.
	\end{aligned}
\end{equation}
The action functional is given by
\begin{equation}
	S = \int_{\R^2\times M} \langle \boldsymbol{\eta}, (d_{\derham} + Q)\tilde{\boldsymbol{\phi}} \rangle + \langle \tilde{\boldsymbol{\eta}}, (d_{\derham} + Q)\boldsymbol{\phi}\rangle + \Pi(\boldsymbol{\eta},\tilde{\boldsymbol{\eta}}) + \langle\tilde{\boldsymbol{\phi}},\tilde{\boldsymbol{\eta}} \rangle\,.
\end{equation}
The natural corner condition on $(\R_{\geq0})^2\times M$ is obtained by intersecting the two boundary conditions $\{\tilde{\boldsymbol{\phi}}=0,\tilde{\boldsymbol{\eta}}=0 \}$ and $\{\boldsymbol{\eta}=0,\tilde{\boldsymbol{\eta}}=0\}$. One can check that, under this corner condition, the propagators $\langle\boldsymbol{\phi}(s_1,s_2,x)\tilde{\boldsymbol{\eta}}(t_1,t_2,y)\rangle$ and $\langle\boldsymbol{\eta}(s_1,s_2,x)\tilde{\boldsymbol{\phi}}(t_1,t_2,y)\rangle$ are exactly the propagators $P_{(0,2)}(s_1,s_2,x,t_1,t_2,y)$ and $P_{(1,1)}(s_1,s_2,x,t_1,t_2,y)$ introduced above. 

We prove the following
\begin{theorem}
	\begin{equation}\label{eq:uni_cor_int}
		\mathcal{I}_2(x,y) = \frac{1}{32}P_{\mathcal{E}}(x,y)\,.
	\end{equation}
\end{theorem}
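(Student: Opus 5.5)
Our plan is to follow the proof of Theorem~\ref{thm:uni_bdy} step by step. We replace the single Gaussian in $t$ by a Gaussian in the two corner variables, and we use the semigroup law twice, once at each bulk vertex $z_1,z_2$.

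\textbf{Step 1: restrict the reflected propagators.} Two endpoints of the diagram sit at the corner point $(0,0)$. There the reflections in $P_{(0,2)}(0,0,x;\dots)$ and $P_{(2,0)}(\dots;0,0,y)$ collapse: the four images pair up, and each outer propagator reduces to a fixed multiple of $P_2$, with the form part restricted suitably. The middle propagator $P_{(1,1)}$ is different, since both of its endpoints are in the interior. It remains a genuine sum of four terms with sign flips $t_1\to -t_1$ and $s_2\to -s_2$. Writing each propagator via \eqref{eq:Prop_R2}, with heat-kernel parameters $u_1,u_2,u_3$, we obtain an integrand that is Gaussian in $(t_1,t_2,s_1,s_2)$. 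It is multiplied by $K_{u_1}(x;z_1)K_{u_2}(z_1;z_2)K_{u_3}(z_2;y)$, with exactly one $Q^\dagger$ inserted.

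\textbf{Step 2: extract the top-degree form and integrate.} We must pick the $dt_1dt_2ds_1ds_2$ component. Each propagator supplies either a one-form in the $\R^2$ directions, together with a factor linear in coordinates and $K_u$, or the two-form $d(\cdot)d(\cdot)$ together with $Q^\dagger K_u$. The degree count then allows only the terms with exactly one $Q^\dagger$. These come in three families, according to which propagator carries the two-form.

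Next we integrate over $z_1$ and $z_2$ with the semigroup law, which gives $K_{u_1+u_2+u_3}$. The operator $Q^\dagger$ can be moved to act on $x$ or on $y$ using $(Q^\dagger_x-Q^\dagger_y)K=0$ and the commutation of $Q^\dagger$ with convolution. The result has the form
\[(Q^\dagger_x\text{ or }Q^\dagger_y)\int du_1du_2du_3\,F(u_1,u_2,u_3)\,K_{u_1+u_2+u_3}(x;y).\]
The weight $F$ is the Gaussian integral over $(\R_{\geq0})^4$ of the polynomial prefactors. We compute it by completing the square in $(t_1,s_1)$ and in $(t_2,s_2)$ separately, since the exponent factorizes over the two corner directions. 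The image terms of $P_{(1,1)}$ change the sign of the cross term in one direction, so each direction contributes a quarter-plane Gaussian moment.

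\textbf{Step 3: reduce to a Dirichlet integral.} We substitute $u_i=u\xi_i$ with $\xi$ in the 2-simplex. Scaling should make $F$ homogeneous of degree $-2$ in $u$, so the weight becomes $u^{-2}\cdot u^{2}\,du\,G(\xi)\,d\xi$. That leaves $\int_0^\infty du\,Q^\dagger K_u=P_{\mathcal E}$ multiplied by the number $\int_{\Delta^2}G$. The two orientations $Q^\dagger_x$ and $Q^\dagger_y$ must combine, as in \eqref{eq:uni_bdy_Kint1}, into a single $P_{\mathcal E}(x,y)$. We then evaluate $\int_{\Delta^2}G$ as a sum of Dirichlet/Beta integrals, expecting factors like $B(\tfrac12,\tfrac12)$ and arctangent-type quarter-plane angles to cancel down to $\tfrac1{32}$.

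\textbf{Main obstacle.} The hard part is Step 2: computing the quarter-plane Gaussian moments when the middle propagator's reflected images enter. Unlike the half-line case, the integrals over $t_1,s_1\ge 0$ with a cross term $t_1s_1/u_2$ produce angle terms such as $\arctan$ of ratios of $\sqrt{u_i}$. These angle terms must cancel between the image terms, or between the three families, so that the simplex integral is rational. We would first check the cancellation by symmetry, using $t_1\to -t_1$ paired with images. If it fails, we would instead compute $\int_{\Delta^2}G$ directly in the variables $\sqrt{\xi_i}$ and verify the constant $\tfrac1{32}$.
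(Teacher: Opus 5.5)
Your plan is correct and is essentially the paper's proof. The outer propagators reduce to $P_2$, $P_{(1,1)}$ splits into its four images, and the corner integral factorizes into quarter-plane Gaussian moments in $(t_1,s_1)$ and $(t_2,s_2)$. The semigroup law then gives $Q^\dagger K_{u_1+u_2+u_3}$, and the paper evaluates the remaining simplex integral via $\xi=\sin^2\theta$, $\xi'=\sin^2\theta'$, obtaining $\frac{1}{4\pi^2}\cdot\frac{\pi^2}{8}$.

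The obstacle you anticipate does not arise:
\begin{itemize}
\item All three $Q^\dagger$ families carry the same prefactor $t_1s_2\mp t_2s_1$. So only first moments occur, and these are algebraic, with no arctangents.
\item The unreflected and doubly reflected images vanish by antisymmetry under $(t_1,s_1)\leftrightarrow(t_2,s_2)$.
\item The weight collapses to $\bigl((4\pi)^2\sqrt{u_1u_3(u_1+u_2)(u_2+u_3)}\bigr)^{-1}$.
\end{itemize}
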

\begin{proof}
	First of all, it is easy to check that $P_{(0,2)}(0,0,x;t_1,t_2,z_1) = P_2(0,0,x;t_1,t_2,z_1)$ and $P_{(2,0)}(s_1,s_2,z_2;0,0,y) = P_2(s_1,s_2,z_2;0,0,y)$. Hence, the integrand in $\mathcal{I}_2(x,y)$ contains $4$ terms when we expand $P_{(1,1)}$ into $P_2$. We first compute 
\begin{equation}
			\frac{1}{4}P_{2}(0,0,x,t_1,t_2,z_1)\,P_{2}(t_1,t_2,z_1,s_1,s_2,z_2)\,P_{2}(s_1,s_2,z_2;0,0,y)\,.
\end{equation}
	After expanding $P_2$ using \eqref{eq:Prop_R2}, we find that it gives
	\begin{equation}
			\frac{1}{4(2\pi)^3}dt_1dt_2ds_1ds_2e^{ -\frac{t_1^2+ t_2^2}{2u_1} - \frac{(t_1 - s_1)^2 + (t_2 - s_2)^2}{2 u_2} - \frac{s_1^2+ s_2^2}{2 u_3}} (t_2s_1 - t_1s_2)\Lambda_{u_1,u_2,u_3}(x,z_1,z_2,y)\,,
	\end{equation}
	where we denote
	\begin{equation}
		\Lambda_{u_1,u_2,u_3}(x,z_1,z_2,y) = \left(\frac{Q^\dagger_x}{u_1u_2^2u_3^2} + \frac{Q^\dagger_{z_1}}{u_1^2u_2u_3^2} + \frac{Q^\dagger_{y}}{u_1^2u_2^2u_3}\right)K_{u_1}(x,z_1)K_{u_2}(z_1,z_2)K_{u_3}(z_2,y)\,.
	\end{equation}
	We check in Appendix \ref{sec:truncated_Gauss} that the integration over $t_i,s_i \geq 0$ gives $0$. The remaining terms in the expansion of $P_{(1,1)}$ can be computed similarly. We find that the contribution from $P_{2}(t_1,-t_2,z_1;-s_1,s_2,z_2)$ gives
		\begin{equation}
	\frac{1}{4(2\pi)^3}dt_1dt_2ds_1ds_2e^{ -\frac{t_1^2+ t_2^2}{2u_1} - \frac{(t_1 + s_1)^2 + (t_2 + s_2)^2}{2 u_2} - \frac{s_1^2+ s_2^2}{2 u_3}} (t_2s_1 - t_1s_2)\Lambda_{u_1,u_2,u_3}(x,z_1,z_2,y)
	\end{equation}
	which also integrates to zero over $t_i,s_i \geq 0$. The non-zero contributions from the expansion of $P_{(1,1)}$ are given by
	\begin{equation}
		\frac{-1}{4}P_2(0,0,x;t_1,t_2,z_1)\left(P_{2}(t_1,t_2,z_1;-s_1,s_2,z_2) + P_{2}(t_1,-t_2,z_1;s_1,s_2,z_2)\right)P_2(s_1,s_2,z_2;0,0,y)\,.
	\end{equation}
	We find that it gives
		\begin{equation}
			\frac{1}{4(2\pi)^3}d^2td^2se^{ -\frac{t_1^2+ t_2^2}{2 u_1} - \frac{s_1^2+ s_2^2}{2 u_3}}(e^{- \frac{(t_1 - s_1)^2 + (t_2 + s_2)^2}{2 u_2}} + e^{- \frac{(t_1 + s_1)^2 + (t_2 - s_2)^2}{2 u_2}}) (t_2s_1 + t_1s_2)\Lambda_{u_1,u_2,u_3}(x,z_1,z_2,y)\,.
	\end{equation}
	We check in Appendix \ref{sec:truncated_Gauss} that the integration of the above expression over $t_i,s_i$ gives
	\begin{equation}
		\frac{u_2^{2} (u_1u_3)^{3/2}}{(4\pi)^2\sqrt{u_1+u_2} \sqrt{u_2+u_3} (u_1+u_2+u_3)}\Lambda_{u_1,u_2,u_3}(x,z_1,z_2,y)\,.
	\end{equation}
	Then, after integration over $z_1,z_2$, we find  
		\begin{equation}
\begin{aligned}
			\mathcal{I}_2(x,y) = \int_{0}^{\infty}&du_1du_2du_3\frac{ u_2^{2} (u_1u_3)^{3/2}}{(4\pi)^2\sqrt{u_1+u_2} \sqrt{u_2+u_3} (u_1+u_2+u_3)}\\
			&\times \left(\frac{1}{u_1u_2^2u_3^2} + \frac{1}{u_1^2u_2u_3^2} + \frac{1}{u_1^2u_2^2u_3}\right)Q^\dagger_xK_{u_1+u_2+u_3}(x,y)\\
			= \int_{0}^{\infty}&du_1du_2du_3\frac{1 }{(4\pi)^2\sqrt{u_1+u_2} \sqrt{u_2+u_3} \sqrt{u_1u_3}}Q^\dagger_xK_{u_1+u_2+u_3}(x,y)\,.
\end{aligned}
	\end{equation}
	We make the following change of variable $u_1 = u\xi,u_2 = u(1-\xi - \xi'), u_3 = u\xi'$. Then we find
	\begin{equation}
			\mathcal{I}_2(x,y) = \int_{\xi+\xi' < 1}d\xi d\xi' \frac{1 }{(4\pi)^2\sqrt{\xi(1-\xi)} \sqrt{\xi'(1-\xi')}}\int_{0}^{\infty}du Q^\dagger_xK_{u}(x,y)\,.
	\end{equation}
The $\xi,\xi'$ integration can be computed by a further change of variable $\xi = \sin^2\theta,\xi' = \sin^2\theta'$, with $\theta>0,\theta'>0$ and $\theta+\theta' < \pi/2$. We find that
\begin{equation}
	\int_{\xi+\xi' < 1}d\xi d\xi' \frac{1 }{(4\pi)^2\sqrt{\xi(1-\xi)} \sqrt{\xi'(1-\xi')}} = \int \frac{1}{4\pi^2}d\theta d\theta'  = \frac{1}{32}\,.
\end{equation}	
This proves \eqref{eq:uni_cor_int}.
\end{proof}

The proof above for the corner integral, as well as for the boundary integral, relies essentially on the semigroup law of the heat kernel, namely $\prod_{i=1}^n e^{-t_iD}=e^{-(\sum_{i=1}^n t_i)D}$ (in our boundary and corner applications, $n=2,3$). It is natural to expect that the same type of formula admits further generalization to higher-codimension corners of the form $(\R_{\geq 0})^n\times M$. We denote by $P_n = P_{\Omega^{\sbullet}_{\R^n}\boxtimes \mathcal{E}}$ the propagator associated with the elliptic complex $\Omega^{\sbullet}_{\R^n}\boxtimes \mathcal{E}$. By the reflection principle, there are the following choices for the propagator on $(\R_{\geq 0})^n\times M$:
\begin{equation}
\begin{aligned}
		&P_{(n-k,k)}(s_i,x,t_i,y) =\\
		&\sum_{(\epsilon_i) \in \{\pm 1\}^{\times n}}(\prod_{i = 1}^n\epsilon_i)P_n(s_1,\dots,s_k,\epsilon_{k+1}s_{k+1},\dots,\epsilon_ns_n,x,\epsilon_1t_1,\dots,\epsilon_kt_k,t_{k+1},\dots,t_n,y)\,.
\end{aligned}
\end{equation}
We make the following conjecture regarding tree-level bulk-to-corner integrals on $(\R_{\geq 0})^n\times M$ involving $n$ intermediate bulk integrations.\begin{conj}
	Consider the following Feynman integral:
	\begin{equation}
		\mathcal{I}_n(x,y) = \int_{(\R_{t^{(1)}\geq 0})^n\times M_{z_1}}\cdots \int_{(\R_{t^{(n)}\geq 0})^n\times M_{z_n}}\prod_{k = 0}^nP_{(k,n-k)}(t^{(k)},z_{k},t^{(k+1)}_i,z_{k+1})\Big|_{\substack{t^{(0)}_i = t^{(n+1)}_i = 0,\\z_0 = x,z_{n+1} = y}}\,,
	\end{equation}
	where $t^{(k)} = (t^{(k)}_1,\dots,t^{(k)}_n)$. We have $\mathcal{I}_n(x,y) = c_nP(x,y)$ for some constant $c_n$.
\end{conj}

\subsection{Hilbert space and factorization homology}

One-dimensional QFT is also known as quantum mechanics. Besides its algebra of local operators, another important structure associated with it is the Hilbert space. Let's consider the Hilbert space of the $1d$ theory \ref{eq:symP_0_PSM}. As a standard procedure in quantum mechanics, the Hilbert space is constructed via geometric quantization of the phase space of the system.

From the AKSZ description of this $1d$ theory, the phase space of this quantum mechanical system can be identified with the twisted cotangent bundle $T^*_{\alpha}E$ of $(E,Q_0)$, where the twist is induced by both the interaction $I$ and the symplectic form $\omega$. 

Geometric quantization requires us to choose a polarization of the phase space, which in our case is naturally provided by the Lagrangian fibration $T^*_{\alpha}E \to E$. Without the twist, the Hilbert space can simply be taken as the complex $\mathcal{H} = (\mathcal{O}(E), Q_0)$. The effect of the twist is better understood via the pair
\begin{equation}
	(\mathcal{A}, \mathcal{H})
\end{equation} 
of quantum local operators and its module as the Hilbert space. The algebra $\mathcal{A}$ of local operators can be identified with the Weyl algebra $\mathcal{W}(T^*E) = \C\langle x^i,p_i\rangle/([x^i, p_j] = \delta^i_j)$, and equipped with the differential $Q_0 + [I_{\mathrm{eff}},-]$, where $I_{\mathrm{eff}} =\frac{\partial I}{\partial x^i} p_i + \frac{\hbar}{2}\Pi^{ij}p_ip_j$. The space $\mathcal{O}(E)$ has a natural module structure under the Weyl algebra $\mathcal{W}(T^*E)$, but the dg Hilbert space must have a differential $d_{\mathcal{H}}$ compatible with the differential $[I_{\mathrm{eff}},-]$ on $\mathcal{A}$. It is easy to check that one has to define the differential to be $d_{\mathcal{H}} = Q + \frac{\partial I}{\partial x^i}\frac{\partial}{\partial x^i} + \hbar \Pi^{ij} \frac{\partial^2}{\partial x^i \partial x^j}$. Thus the dg Hilbert space can be identified with the BV complex:
\begin{equation}
	\mathcal{H} = (\mathcal{O}(E), Q + \Delta_{BV} + \{I,-\})\,.
\end{equation}

On the other hand, we can consider the $0d$ QFT associated with a $(-1)$-symplectic structure $(E,Q_I,\omega)$ and study the factorization homology of this system. The notion of factorization (co)homology essentially captures the idea of correlation functions in quantum field theory, which are typically defined—at least formally—via the path integral. In a $0d$ quantum field theory, the path integral reduces to a finite-dimensional integral, which is known to be encoded in the homological algebra of the BV complex \cite{costello2011re}. Therefore, the factorization homology in this case is nothing but the BV complex itself \eqref{eq:0d_BV}. Consequently, we obtain a perfect agreement between the factorization homology of the $0d$ system and the Hilbert space of the corresponding $1d$ Poisson sigma model.

A natural question is whether this relationship extends to more general scenarios. Consider a (shifted chiral) Poisson algebra or a degenerate classical field theory, and suppose that its Poisson structure quantizes to a factorization algebra. On the one hand, we can consider the factorization homology of this factorization algebra. On the other hand, we can construct its corresponding Poisson sigma model, or universal bulk theory. Since this bulk theory possesses at least one topological direction, it naturally admits a unique Hilbert space. We can then ask whether the factorization homology is isomorphic to this Hilbert space:
\begin{center}
	\begin{tabular}{c} Factorization homology \\ of the boundary algebra \end{tabular} $\quad\overset{?}{\cong}\quad $ \begin{tabular}{c} Hilbert space \\of the universal bulk theory \end{tabular}
\end{center}

Here we attempt to explain why this claim might hold, at the classical level. Suppose we start with a classical field theory associated with an elliptic moduli problem, namely, an elliptic $L_\infty$ algebra $\mathcal{L}$ on $M$, and equipped with a $-1$-shifted symplectic structure $\omega$. The classical factorization algebra associated with this system can be identified with the assignment $U \mapsto \mathcal{O}_{loc}(B\mathcal{L})(U) = C^{\sbullet}_{loc}(\mathcal{L}(U))$. Via the local-to-global principle, its factorization homology is the Lie algebra cohomology $C^{\bullet}_{loc}(B\mathcal{L})(M)$. At the quantum level, the factorization algebra is deformed by the BV Laplacian $\Delta_{BV}$, which requires a proper definition via renormalization. We will not discuss quantization issue here. On the other hand, the universal bulk theory can be described as the shifted cotangent bundle $T_{\Pi}^{*} B\mathcal{L}$, where $\Pi$ is the Poisson structure inverse to the symplectic structure. We can consider a non-conventional "classical" limit of this theory by letting $\Pi \to \hbar\Pi$ and taking $\hbar \to 0$. This limit is chosen such that the boundary factorization algebra reduces to the classical factorization algebra associated with $\mathcal{L}$. In this limit, the Hilbert space of the bulk theory can be constructed by standard geometric quantization. By choosing the polarization $T^*B\mathcal{L}(M) \to B\mathcal{L}$, the Hilbert space can be identified with $\mathcal{O}_{loc}(B\mathcal{L}(M))$. This space coincides with the factorization homology of the boundary classical factorization algebra.

At the quantum level, \cite{costello2023boundary} presented a physical argument for the existence of a map from the factorization homology to the Hilbert space of the universal bulk theory. Although their discussion was formulated in the context of $3d$ holomorphic–topological theories, the same reasoning extends naturally to the more general setting considered here.

	\begin{figure}[h!]
		\begin{center}
	\begin{tikzpicture}[x=0.75pt,y=0.75pt,yscale=0.8,xscale=1,scale=0.6]
		
		\draw   (81.32,-185.97) .. controls (66.35,-223.32) and (105.13,-262.9) .. (102.36,-308.3) .. controls (99.59,-353.7) and (26.19,-444.51) .. (114.83,-464.3) .. controls (203.47,-484.09) and (166.07,-333.91) .. (160.53,-294.33) .. controls (154.99,-254.75) and (189.35,-197.61) .. (173,-172.1) .. controls (156.64,-146.59) and (96.29,-148.62) .. (81.32,-185.97) -- cycle ;
		\draw  [draw opacity=0] (116.699,-427.594) .. controls (124.086,-423.821) and (129.531,-415.031) .. (130.034,-404.621) .. controls (130.596,-393.244) and (125.084,-383.381) .. (116.969,-379.736) -- (109.319,-403.909) -- cycle ; \draw   (116.699,-427.594) .. controls (124.086,-423.821) and (129.531,-415.031) .. (130.034,-404.621) .. controls (130.596,-393.244) and (125.084,-383.381) .. (116.969,-379.736) ;
		\draw  [draw opacity=0] (124.821,-385.099) .. controls (120.389,-388.196) and (117.261,-394.556) .. (117.096,-401.951) .. controls (116.909,-410.141) and (120.404,-417.176) .. (125.474,-419.966) -- (130.611,-402.169) -- cycle ; \draw   (124.821,-385.099) .. controls (120.389,-388.196) and (117.261,-394.556) .. (117.096,-401.951) .. controls (116.909,-410.141) and (120.404,-417.176) .. (125.474,-419.966) ;
		\draw  [draw opacity=0] (125.369,-232.727) .. controls (133.499,-229.368) and (139.559,-220.855) .. (140.114,-210.708) .. controls (140.714,-199.72) and (134.661,-190.233) .. (125.834,-186.993) -- (118.416,-210.032) -- cycle ; \draw   (125.369,-232.727) .. controls (133.499,-229.368) and (139.559,-220.855) .. (140.114,-210.708) .. controls (140.714,-199.72) and (134.661,-190.233) .. (125.834,-186.993) ;
		\draw  [draw opacity=0] (133.686,-192.722) .. controls (128.766,-195.505) and (125.241,-201.7) .. (125.061,-208.938) .. controls (124.859,-216.873) and (128.729,-223.668) .. (134.271,-226.142) -- (139.214,-209.14) -- cycle ; \draw   (133.686,-192.722) .. controls (128.766,-195.505) and (125.241,-201.7) .. (125.061,-208.938) .. controls (124.859,-216.873) and (128.729,-223.668) .. (134.271,-226.142) ;
		\draw (138,-155) to (-62,-155);
		\draw (135,-466) to (-65,-466);
		\node [below] at (135,-466) {$t = 0$};
		\node [below] at (-55,-466) {$t = \infty$};
		\node at (-140,-480) {$M$};
		\node at (250,-200) {$\langle \Psi\rvert \in \mathcal{H}(M)^{\vee}$};
		\filldraw[black] (-43,-385) circle (1pt); 
		\node [above] at (-43,-385) {${\scriptstyle \mathcal{O}_i(x_i)}$};
		\filldraw[black] (-48,-300) circle (1pt);
		\node [above] at (-48,-300) {${\scriptstyle \mathcal{O}_j(x_j)}$};
		\node [above] at (-40,-250) {${\scriptstyle \dots}$};
		\node at (350,-350) {$\Rightarrow \langle \Psi \mid\mathcal{O}_1(x_1)\dots\mathcal{O}_n(x_n)\rangle$};
		\draw   (-108.68,-185.97) .. controls (-123.65,-223.32) and (-84.87,-262.9) .. (-87.64,-308.3) .. controls (-90.41,-353.7) and (-163.81,-444.51) .. (-75.17,-464.3) .. controls (13.47,-484.09) and (-23.93,-333.91) .. (-29.47,-294.33) .. controls (-35.01,-254.75) and (-0.65,-197.61) .. (-17,-172.1) .. controls (-33.36,-146.59) and (-93.71,-148.62) .. (-108.68,-185.97) -- cycle ;
		 
		\draw  [draw opacity=0] (-73.3012,-427.594) .. controls (-65.9138,-423.821) and (-60.4688,-415.031) .. (-59.9663,-404.621) .. controls (-59.4038,-393.244) and (-64.9163,-383.381) .. (-73.0312,-379.736) -- (-80.6813,-403.909) -- cycle ; \draw   (-73.3012,-427.594) .. controls (-65.9138,-423.821) and (-60.4688,-415.031) .. (-59.9663,-404.621) .. controls (-59.4038,-393.244) and (-64.9163,-383.381) .. (-73.0312,-379.736) ;
		
		\draw  [draw opacity=0] (-65.1788,-385.099) .. controls (-69.6112,-388.196) and (-72.7387,-394.556) .. (-72.9038,-401.951) .. controls (-73.0913,-410.141) and (-69.5962,-417.176) .. (-64.5263,-419.966) -- (-59.3888,-402.169) -- cycle ; 
		\draw   (-65.1788,-385.099) .. controls (-69.6112,-388.196) and (-72.7387,-394.556) .. (-72.9038,-401.951) .. controls (-73.0913,-410.141) and (-69.5962,-417.176) .. (-64.5263,-419.966) ;
		
		\draw  [draw opacity=0] (-64.6312,-232.727) .. controls (-56.5012,-229.368) and (-50.4412,-220.855) .. (-49.8862,-210.708) .. controls (-49.2863,-199.72) and (-55.3388,-190.233) .. (-64.1662,-186.993) -- (-71.5837,-210.032) -- cycle ; 
		\draw   (-64.6312,-232.727) .. controls (-56.5012,-229.368) and (-50.4412,-220.855) .. (-49.8862,-210.708) .. controls (-49.2863,-199.72) and (-55.3388,-190.233) .. (-64.1662,-186.993) ;
		
		\draw  [draw opacity=0] (-56.3137,-192.722) .. controls (-61.2338,-195.505) and (-64.7588,-201.7) .. (-64.9387,-208.938) .. controls (-65.1412,-216.873) and (-61.2712,-223.668) .. (-55.7287,-226.142) -- (-50.7863,-209.14) -- cycle ; 
		\draw   (-56.3137,-192.722) .. controls (-61.2338,-195.505) and (-64.7588,-201.7) .. (-64.9387,-208.938) .. controls (-65.1412,-216.873) and (-61.2712,-223.668) .. (-55.7287,-226.142) ;
	\end{tikzpicture}
\end{center}\caption{Factorization homology and Hilbert space of universal bulk theory}
	\end{figure}

We consider the path integral of our theory on the half-space $M \times \mathbb{R}_{t \geq 0}$. To define this path integral, we impose a boundary condition at $t = 0$ that gives rise to the expected boundary factorization algebra, together with an outgoing state $\langle \Psi\rvert \in \mathcal{H}(M)^\vee$ at $t = \infty$. We may also insert boundary operators $\mathcal{O}_i(x_i)$ at $t = 0$ in the path integral. More generally, we can insert configurations of boundary operators associated with cycles in the configuration space of $M$. Naturally, this path integral defines a consistent correlation function $\langle \Psi \mid\mathcal{O}_1(x_1)\dots\mathcal{O}_n(x_n)\rangle$ for the boundary operators. A consistent correlation function in a factorization algebra should be understood as an element of the dual of the factorization homology. Hence, we obtain a natural map from the dual of the Hilbert space to the dual of the factorization homology. Notice that this argument only relies on the fact that the bulk theory has a boundary algebra matching the given one; it does not assume that it is the universal bulk theory. We might hope that this universal property guarantees the map from the factorization homology to the Hilbert space to be an isomorphism. 

Besides these formal arguments, we have many known QFT examples that manifest the above observation. For $1/2d$ theories, in the context of topological strings, the Hilbert space of closed string states is isomorphic to the Hochschild homology of its boundary (open string) algebra. For $2/3d$ theories, the most well-known instance is the CS/WZW correspondence~\cite{witten1989quantum}, where the Hilbert space of $3d$ Chern--Simons theory can be identified with the conformal blocks of the WZW model. The relationship between the conformal blocks of the Virasoro algebra and the Hilbert space of the associated $3d$ holomorphic--topological Poisson sigma model is analyzed in \cite{khan2025poisson}. Finally, for $3/4d$ theories, considering the KW twist of $4d$ $\mathcal{N} = 4$ SYM, it is known that its Hilbert space can be identified with the integration cycles of (analytically continued) Chern--Simons theory~\cite{Witten:2010cx}.

\section*{Acknowledgments}
We would like to thank Kevin Costello, Dan Freed, Davide Gaiotto, Zhengping Gui, Ahsan Khan, Minghao Wang and Surya Raghavendran for helpful discussions. This research was supported by the Center of Mathematical Sciences and Applications at Harvard University.

\addtocontents{toc}{\protect\setcounter{tocdepth}{0}}
 \appendix
 
 \section{Some Feynman integrals}
 \subsection{Bulk-to-boundary integral in topological theory}
 \label{sec:top_Feyn_int}

 Recall from Section \ref{sec:bdy_sec} that we need to compute the following bulk-to-boundary Feynman integral
   \begin{equation}
 	I = \int_{\R_{y_0\geq 0}\times \R^d_y}P_{\partial}(0,0,y_0,y)P_{\partial}(0,x,y_0,y)\,.
 \end{equation}
 After expanding the differential form we can simplify the expression to 
 \begin{equation}
 I := 	\frac{\Gamma(\frac{d+1}{2})^2}{\pi^{d+1}}\int_{\R_{\geq0}\times\R^{d}}\frac{y_0\ddr y_0\ddr^dy}{(y_0^2+|x-y|^2)^{\frac{d+1}{2}}(y_0^2 + |y|^2)^{\frac{d+1}{2}}}\,.
 \end{equation}
 
 Using Schwinger parametrization we have
 \begin{equation}
 	I = \frac{1}{\pi^{d+1}} \int_{u_i\geq 0}du_1du_2\int_{\R^{d}}d^{d}y\int_0^{\infty}dy_0(u_1u_2)^{-\frac{d+3}{2}}y_0e^{-\frac{y_0^2+|x-y|^2}{u_1} - \frac{y_0^2 + |y|^2}{u_2}}\,.
 \end{equation}
 We first integrate over $y_0$
\begin{equation}
	I = \frac{1}{2\pi^{d+1}}\int_{u_i\geq 0}du_1du_2\int_{\R^{d}}d^{d}y\frac{(u_1u_2)^{-\frac{d+1}{2}}}{(u_1+u_2)}e^{-\frac{|x-y|^2}{u_1} - \frac{|y|^2}{u_2}}\,.
\end{equation}
Integration over $y$ the standard Gaussian integral:
\begin{equation}
	I = \frac{\pi^{\frac{d}{2}}}{2\pi^{d+1}}\int_{u_i\geq 0}du_1du_2\frac{(u_1u_2)^{-\frac{1}{2}}}{(u_1+u_2)^{1+ \frac{d}{2}}}e^{-\frac{|x|^2}{u_1+u_2}}\,.
\end{equation}
We make the following change of variable
\begin{equation}
	u_1 = u\xi,\quad u_2  = u(1-\xi)\,.
\end{equation}
Then we have 
\begin{equation}
	I = \frac{1}{2\pi^{\frac{d}{2}+1}}\int_{0}^{\infty}du\int_0^{1}d\xi u\frac{u^{-1}(\xi(1-\xi))^{-\frac{1}{2}}}{u^{1+ \frac{d}{2}}}e^{-\frac{|x|^2}{u}}\,.
\end{equation}
We have
\begin{equation}
	\begin{aligned}
		I &= \frac{\Gamma(\frac{d}{2})}{2\pi^{\frac{d}{2}+1}}\int_{0}^{1}d\xi \frac{1}{|x|^d\sqrt{(1-\xi)\xi}}\\
		&= \frac{\Gamma(\frac{d}{2})}{2\pi^{\frac{d}{2}}}\frac{1}{|x|^d}\,.
	\end{aligned}
\end{equation}

In particular, for $d = 1$ we have
\begin{equation}
		\frac{1}{\pi^2}\int_{\R_{\geq0}\times\R}\frac{y_0\ddr y_0\ddr y_1}{(y_0^2+(x_1 - y_1)^2)(y_0^2 + y_1^2)}  = \frac{1}{2|x|}\,.
\end{equation}
This is the integral used in Section \ref{sec:q_Liebi} to compute the coproduct of quantized Lie bialgebras

\subsection{Integral for the associator}
\label{sec:integral_computation}

In this appendix, we compute the double integral that appears in Section \ref{sec:associator} for the associator:
\begin{equation}
J(h) = \int_{\mathbb R^2}\frac{x_1^2\,dx_1\,dx_2} {(x_1^2+x_2^2)\bigl(x_1^2+(x_2-1)^2\bigr)\bigl(x_1^2+(x_2-h)^2\bigr)}
\end{equation}
for $0 < h < 1$. We first integrate over $x_1$. This relies on the following identity:

\begin{lemma}
For $a, b, c > 0$, we have
\begin{equation}
\int_{-\infty}^\infty \frac{x^2\,dx}{(x^2+a^2)(x^2+b^2)(x^2+c^2)} = \frac{\pi}{(a+b)(b+c)(c+a)}\,.
\end{equation}
\end{lemma}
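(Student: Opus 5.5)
The plan is to compute the integral by partial fractions in the variable $t=x^2$. The integrand is even and rational, decays like $x^{-4}$, and so converges absolutely. We treat the generic case in which $a,b,c$ are pairwise distinct and recover the degenerate cases by continuity, since both sides are continuous in $(a,b,c)\in(0,\infty)^3$.

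First we decompose
\begin{equation*}
\frac{t}{(t+a^2)(t+b^2)(t+c^2)}=\frac{A}{t+a^2}+\frac{B}{t+b^2}+\frac{C}{t+c^2}.
\end{equation*}
The residues are
\begin{equation*}
A=\frac{-a^2}{(b^2-a^2)(c^2-a^2)},
\end{equation*}
with $B$ and $C$ obtained by cyclic permutation. Next we use the elementary integral $\int_{\R}dx/(x^2+a^2)=\pi/a$. This gives
\begin{equation*}
\pi\left(-\frac{a}{(b^2-a^2)(c^2-a^2)}-\frac{b}{(a^2-b^2)(c^2-b^2)}-\frac{c}{(a^2-c^2)(b^2-c^2)}\right).
\end{equation*}
Splitting the integral termwise is legitimate: each term decays like $x^{-2}$, and the condition $A+B+C=0$ (the coefficient of $t^2$ in the numerator) is automatically satisfied.

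The only real work is the algebraic identity showing that this bracket equals $1/((a+b)(b+c)(c+a))$. We multiply through by the common Vandermonde-type denominator $(a^2-b^2)(b^2-c^2)(c^2-a^2)$. The numerator then becomes
\begin{equation*}
a(b^2-c^2)+b(c^2-a^2)+c(a^2-b^2)
\end{equation*}
up to an overall sign. This expression is alternating, so it equals $\pm(a-b)(b-c)(c-a)$. After cancelling the factors $(a-b)(b-c)(c-a)$, what remains is $1/((a+b)(b+c)(c+a))$.

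Tracking the sign is the main place an error could occur. We would fix it by testing a convenient limit, for instance $c\to\infty$. There the left side behaves like $c^{-2}\int x^2/((x^2+a^2)(x^2+b^2))\,dx = c^{-2}\pi/(a+b)$, which is consistent with the right side $\pi/((a+b)c^2)$ to leading order. Alternatively, one could evaluate by residues in the upper half-plane at $ia,ib,ic$, which gives the same three terms directly.
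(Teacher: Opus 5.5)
Your proof is correct. It reaches exactly the paper's three-term expression by a different, more elementary route. The paper closes a semicircle in the upper half-plane and sums residues at $ia,ib,ic$. You use real partial fractions in $t=x^2$ together with $\int_{\R}dx/(x^2+a^2)=\pi/a$. Your terms $\pi A/a$, $\pi B/b$, $\pi C/c$ are exactly the paper's $2\pi i\,\mathrm{Res}_{ia}f$, $2\pi i\,\mathrm{Res}_{ib}f$, $2\pi i\,\mathrm{Res}_{ic}f$. From there both proofs share the same algebraic core: cancelling the alternating cubic against $(a^2-b^2)(b^2-c^2)(c^2-a^2)$.

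Your version gains three things. First, no contour estimate is needed. Second, coincident values of $a,b,c$ are covered by your continuity remark, whereas the paper's simple-pole computation tacitly assumes $a,b,c$ distinct. Third, the asymptotic check $c\to\infty$ fixes the overall constant independently of sign bookkeeping; it is rigorous by monotone convergence, since $c^2/(x^2+c^2)\uparrow 1$. That safeguard matters here. In the paper's display, the passage to $-\pi[\cdots]$ carries a sign slip, and so does the claimed factorization $a(b^2-c^2)+b(c^2-a^2)+c(a^2-b^2)=-(a-b)(b-c)(c-a)$. At $(a,b,c)=(1,2,3)$ the numerator is $2=+(a-b)(b-c)(c-a)$ and the integral is $+\pi/60$; the two slips cancel, so the paper's final formula is still right.

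Two small tightenings on your side. An alternating cubic is a priori $k(a-b)(b-c)(c-a)$ for some constant $k$, not necessarily $\pm1$. Your limit check determines $k$ completely anyway, as does comparing the coefficient of $ab^2$, which gives $k=1$. Also, splitting termwise needs only that each $A/(x^2+a^2)$ is integrable; $A+B+C=0$ follows from the degrees and is not a requirement.
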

\begin{proof}
We evaluate the integral using the residue theorem by considering the complex function
\[
f(z) = \frac{z^2}{(z^2+a^2)(z^2+b^2)(z^2+c^2)}
\]
integrated over a large semicircular contour in the upper half-plane. As the radius of the semicircle goes to infinity, the integral over the circular arc vanishes. 

The function has simple poles in the upper half-plane at $z_1 = ia$, $z_2 = ib$, and $z_3 = ic$. We compute the residue at these points:
\begin{equation}
	\text{Res}_{ia}f = \frac{ia}{2(a^2-b^2)(a^2-c^2)},\; \text{Res}_{ib}f = \frac{ib}{2(b^2-a^2)(b^2-c^2)}, \; \text{Res}_{ic}f = \frac{ic}{2(c^2-a^2)(c^2-b^2)}.
\end{equation}
By the residue theorem, the integral is $2\pi i$ times the sum of these residues:
\begin{align*}
\int_{-\infty}^\infty f(x)\,dx &= 2\pi i \left[ \frac{ia}{2(a^2-b^2)(a^2-c^2)} + \frac{ib}{2(b^2-a^2)(b^2-c^2)} + \frac{ic}{2(c^2-a^2)(c^2-b^2)} \right] \\
&= -\pi \left[ \frac{a(b^2-c^2) + b(c^2-a^2) + c(a^2-b^2)}{(a^2-b^2)(b^2-c^2)(c^2-a^2)} \right]\,.
\end{align*}
The numerator factors as $a(b^2-c^2) + b(c^2-a^2) + c(a^2-b^2) = -(a-b)(b-c)(c-a)$. The denominator factors as $(a-b)(a+b)(b-c)(b+c)(c-a)(c+a)$. Canceling the common factor $(a-b)(b-c)(c-a)$ yields the result:
\[
\int_{-\infty}^\infty f(x)\,dx = \frac{\pi}{(a+b)(b+c)(c+a)}\,.
\]
\end{proof}

Using this identity with $a=|x_2|$, $b=|x_2-1|$, and $c=|x_2-h|$, we obtain
\begin{equation}
J(h) = \int_{-\infty}^\infty \frac{\pi\,dx_2}{(|x_2|+|x_2-1|)(|x_2-1|+|x_2-h|)(|x_2-h|+|x_2|)}\,.
\end{equation}
We split the integral over $x_2$ into four intervals determined by the absolute values:
\begin{itemize}
\item For $x_2 < 0$, the denominator becomes $(1-2x_2)(1+h-2x_2)(h-2x_2)$. The integral evaluates to
\begin{equation*}
    J_1 = -\frac{\pi}{2} \left(\frac{\log(1+h)}{h} + \frac{\log h}{1-h}\right)\,.
\end{equation*}
\item For $0 < x_2 < h$, the denominator is $h(1+h-2x_2)$. The integral evaluates to
\begin{equation*}
    J_2 = \frac{\pi}{2h} \bigl(\log(1+h) - \log(1-h)\bigr)\,.
\end{equation*}
\item For $h < x_2 < 1$, the denominator is $(1-h)(2x_2-h)$. The integral evaluates to
\begin{equation*}
    J_3 = \frac{\pi}{2(1-h)} \bigl(\log(2-h) - \log h\bigr)\,.
\end{equation*}
\item For $x_2 > 1$, the denominator is $(2x_2-1)(2x_2-1-h)(2x_2-h)$. The integral evaluates to
\begin{equation*}
    J_4 = -\frac{\pi}{2} \left(\frac{\log(1-h)}{h} + \frac{\log(2-h)}{1-h}\right)\,.
\end{equation*}
\end{itemize}
Summing these four contributions, the terms involving $\log(1+h)$ and $\log(2-h)$ exactly cancel out, and we are left with
\begin{equation}
J(h) = J_1 + J_2 + J_3 + J_4 = -\pi \left( \frac{\log h}{1-h} + \frac{\log(1-h)}{h} \right)\,.
\end{equation}
This establishes the identity used in the main text.

\subsection{Integral in holomorphic-topological theories}
\label{sec:bdy_identity}
In this appendix, we prove two integral identities for the holomorphic-topological propagator used in the computation of boundary chiral brackets. Throughout, we suppress the implicit holomorphic measure $d^m\boldsymbol{z}$ in all integrals.

\begin{prop}\label{prop:bdy_sphere}
Let $P_{\R^d\times \C^{m}}(0;\boldsymbol{x},\boldsymbol{z})$ denote the propagator on $\R^d\times \C^m$. Then for any multi-indices $\boldsymbol{l},\boldsymbol{n} \in \Z_{\geq 0}^m$,
\begin{equation}\ref{eq:bdy_identity}
	\oint_{S^{d+2m-1}}\boldsymbol{z}^{\boldsymbol{l}} \boldsymbol{\partial}_{\boldsymbol{z}}^{\boldsymbol{n}} P_{\R^d\times \C^{m}}(0;\boldsymbol{x},\boldsymbol{z}) = (-1)^{|\boldsymbol{n}|}\boldsymbol{n}!\,\delta_{\boldsymbol{l},\boldsymbol{n}}\,.
\end{equation}
\end{prop}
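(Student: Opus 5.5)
The plan is to use Stokes' theorem together with the fact that the propagator is a fundamental solution. The starting point is the identity $(d_{\derham}+\bar\partial)P_{\R^d\times\C^m}(0;\boldsymbol{x},\boldsymbol{z}) = \delta_0$ (times the unit normalization already used in Section~\ref{sec:bdy_sec}). With the suppressed holomorphic measure $d^m\boldsymbol{z}$ restored, $d^m\boldsymbol{z}\,P$ is a closed form of total degree $d+2m-1$ away from the origin, and its integral over any sphere around $0$ equals $1$. The $m=0$ case of this is exactly the computation in Section~\ref{sec:bdy_sec}, where the integral gives $2\pi^{d/2}/\Gamma(d/2)$ times the prefactor.

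Next, observe that $\boldsymbol{z}^{\boldsymbol{l}}$ and $\boldsymbol{\partial}_{\boldsymbol{z}}^{\boldsymbol{n}}$ both commute with $d+\bar\partial$ up to terms that vanish. Holomorphic derivatives commute with $\bar\partial$ and $d_{\derham}$. Multiplying by the holomorphic polynomial $\boldsymbol{z}^{\boldsymbol{l}}$ also preserves closedness, because $\bar\partial \boldsymbol{z}^{\boldsymbol{l}}=0$. Hence $d^m\boldsymbol{z}\,\boldsymbol{z}^{\boldsymbol{l}}\boldsymbol{\partial}^{\boldsymbol{n}}P$ is closed on $\R^d\times\C^m\setminus 0$. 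The sphere integral therefore depends only on the distributional value at the origin:
\[
\oint_{S^{d+2m-1}}\boldsymbol{z}^{\boldsymbol{l}}\boldsymbol{\partial}^{\boldsymbol{n}}P=\int \boldsymbol{z}^{\boldsymbol{l}}\,\boldsymbol{\partial}^{\boldsymbol{n}}\delta_0 .
\]
To justify this, apply Stokes' theorem on a ball $B$, pairing with a cutoff $\chi$ that equals $1$ near $B$. One then has $\oint_{\partial B}\boldsymbol{z}^{\boldsymbol{l}}\boldsymbol{\partial}^{\boldsymbol{n}}P=\langle \boldsymbol{\partial}^{\boldsymbol{n}}\delta_0,\boldsymbol{z}^{\boldsymbol{l}}\rangle$, because $(d+\bar\partial)(\boldsymbol{z}^{\boldsymbol{l}}\boldsymbol{\partial}^{\boldsymbol{n}}P)=\boldsymbol{z}^{\boldsymbol{l}}\boldsymbol{\partial}^{\boldsymbol{n}}\delta_0$ as currents. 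Integrating by parts, $\langle\boldsymbol{\partial}^{\boldsymbol{n}}\delta_0,\boldsymbol{z}^{\boldsymbol{l}}\rangle=(-1)^{|\boldsymbol{n}|}\boldsymbol{\partial}^{\boldsymbol{n}}\boldsymbol{z}^{\boldsymbol{l}}|_{0}=(-1)^{|\boldsymbol{n}|}\boldsymbol{n}!\,\delta_{\boldsymbol{l},\boldsymbol{n}}$, which is the claimed identity.

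As an alternative or cross-check, one can argue directly by scaling and $U(1)^m$ weights. The integrand $\boldsymbol{z}^{\boldsymbol{l}}\boldsymbol{\partial}^{\boldsymbol{n}}P$, restricted to the sphere with the $d^m\boldsymbol{z}$ factor, has $U(1)^m$-charge $\boldsymbol{l}-\boldsymbol{n}$. It therefore integrates to zero unless $\boldsymbol{l}=\boldsymbol{n}$. When $\boldsymbol{l}=\boldsymbol{n}$, the integrand is homogeneous of degree $0$, so the integral is independent of the radius. One can then move $\boldsymbol{\partial}$'s off by an integration by parts on $\C^m$: writing $\boldsymbol{z}^{\boldsymbol{n}}\boldsymbol{\partial}^{\boldsymbol{n}}$ and commuting, the lower-order terms have nonzero charge and drop out, which reduces the computation to the base case $\boldsymbol{l}=\boldsymbol{n}=0$.

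I expect the main obstacle to be sign and orientation bookkeeping. This includes the exact normalization that $\oint P=1$ with the paper's conventions for $d^m\boldsymbol{z}$, the Bochner--Martinelli prefactor $\Gamma(m+\tfrac d2)/\pi^{m+d/2}$, and the orientation of the sphere. It also includes making the current-theoretic Stokes argument rigorous, i.e., verifying that $(d+\bar\partial)P=\delta_0$ holds with exactly unit coefficient. For the latter I would use the representation $P=\int_0^\infty Q^\dagger K_u\,du$ from \eqref{eq:ker_to_prop}, so that $(d+\bar\partial)P=\int_0^\infty \Delta K_u\,du=K_0=\delta_0$.
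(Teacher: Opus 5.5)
Your main argument is exactly the paper's proof, and it is correct: closedness of $d^m\boldsymbol{z}\wedge\boldsymbol{z}^{\boldsymbol{l}}\boldsymbol{\partial}_{\boldsymbol{z}}^{\boldsymbol{n}}P$ away from the origin, Stokes' theorem on a ball (your cutoff-function version just makes that step more careful), and $\langle\boldsymbol{\partial}^{\boldsymbol{n}}\delta_0,\boldsymbol{z}^{\boldsymbol{l}}\rangle=(-1)^{|\boldsymbol{n}|}\boldsymbol{n}!\,\delta_{\boldsymbol{l},\boldsymbol{n}}$. The one inaccuracy is in your optional $U(1)^m$ cross-check: the terms produced by commuting $\boldsymbol{z}^{\boldsymbol{n}}$ past $\boldsymbol{\partial}^{\boldsymbol{n}}$ all have charge zero (they are what produce the factor $\boldsymbol{n}!$), and what actually drops out are total derivatives, because $d^m\boldsymbol{z}\wedge\partial_{z_i}\omega=\mathcal{L}_{\partial_{z_i}}(d^m\boldsymbol{z}\wedge\omega)=d\,\iota_{\partial_{z_i}}(d^m\boldsymbol{z}\wedge\omega)$ is exact near the sphere whenever $d^m\boldsymbol{z}\wedge\omega$ is closed there; this gives $\oint\boldsymbol{z}^{\boldsymbol{l}}\partial_{z_i}\omega=-l_i\oint\boldsymbol{z}^{\boldsymbol{l}-\boldsymbol{e}_i}\omega$ and hence a reduction to $\oint\boldsymbol{z}^{\boldsymbol{k}}P=\delta_{\boldsymbol{k},\boldsymbol{0}}$, where the charge argument does apply.
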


\begin{proof}
Since $\boldsymbol{z}^{\boldsymbol{l}}$ is holomorphic, it commutes with $\bar{\partial}_{\boldsymbol{z}}$ and is independent of $\boldsymbol{x}$, so
\begin{equation}
	(d_{\boldsymbol{x}} + \bar{\partial}_{\boldsymbol{z}})\big(\boldsymbol{z}^{\boldsymbol{l}} \boldsymbol{\partial}_{\boldsymbol{z}}^{\boldsymbol{n}} P\big) = \boldsymbol{z}^{\boldsymbol{l}} \boldsymbol{\partial}_{\boldsymbol{z}}^{\boldsymbol{n}} (d_{\boldsymbol{x}} + \bar{\partial}_{\boldsymbol{z}})P = \boldsymbol{z}^{\boldsymbol{l}} \boldsymbol{\partial}_{\boldsymbol{z}}^{\boldsymbol{n}} \delta_0(\boldsymbol{x},\boldsymbol{z})\,.
\end{equation}
Applying Stokes' theorem on the ball $B^{d+2m}$ bounded by $S^{d+2m-1}$:
\begin{equation}
	\oint_{S^{d+2m-1}} \boldsymbol{z}^{\boldsymbol{l}} \boldsymbol{\partial}_{\boldsymbol{z}}^{\boldsymbol{n}} P = \int_{B^{d+2m}} \boldsymbol{z}^{\boldsymbol{l}} \boldsymbol{\partial}_{\boldsymbol{z}}^{\boldsymbol{n}} \delta_0(\boldsymbol{x},\boldsymbol{z})\,.
\end{equation}
By the definition of the distributional derivative:
\begin{equation}
	\int_{B^{d+2m}} \boldsymbol{z}^{\boldsymbol{l}} \boldsymbol{\partial}_{\boldsymbol{z}}^{\boldsymbol{n}} \delta_0 = (-1)^{|\boldsymbol{n}|}\int_{B^{d+2m}} (\boldsymbol{\partial}_{\boldsymbol{z}}^{\boldsymbol{n}} \boldsymbol{z}^{\boldsymbol{l}}) \delta_0 = (-1)^{|\boldsymbol{n}|} \boldsymbol{n}!\,\delta_{\boldsymbol{l},\boldsymbol{n}}
\end{equation}
where we used $\boldsymbol{\partial}_{\boldsymbol{z}}^{\boldsymbol{n}} \boldsymbol{z}^{\boldsymbol{l}} = \boldsymbol{n}!\,\delta_{\boldsymbol{l},\boldsymbol{n}}$\,.
\end{proof}

\begin{prop}[Full-space identity]\label{prop:full_space}
Let $P^{\R^{d+1}\times\C^m}(x_0,\boldsymbol{x},\boldsymbol{z};y_0,\boldsymbol{y},\boldsymbol{w})$ denote the bulk propagator on $\R^{d+1}\times\C^m$. For a boundary point $(0,\boldsymbol{x},\boldsymbol{z}) \in \{0\}\times\R^d\times\C^m$ and a bulk point $(y_0,\boldsymbol{y},\boldsymbol{w})$ with $y_0 > 0$, and for any multi-index $\boldsymbol{n} \in \Z_{\geq 0}^m$,
\begin{equation}\label{eq:full_space_identity}
	\int_{\{0\}\times\R^d \times \C^m}  e^{\boldsymbol{z}\cdot\boldsymbol{\lambda}}\boldsymbol{\partial}_{\boldsymbol{w}}^{\boldsymbol{n}}P^{\R^{d+1}\times\C^m}(0,\boldsymbol{x},\boldsymbol{z};y_0,\boldsymbol{y},\boldsymbol{w}) = \boldsymbol{\lambda}^{\boldsymbol{n}} e^{\boldsymbol{w}\cdot\boldsymbol{\lambda}}
\end{equation}
where the integration is over the boundary variables $(\boldsymbol{x},\boldsymbol{z})$.
\end{prop}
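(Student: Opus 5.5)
We first remove the holomorphic derivatives. The propagator depends only on differences of coordinates, so $\boldsymbol{\partial}_{\boldsymbol{w}}^{\boldsymbol{n}}P = (-1)^{|\boldsymbol{n}|}\boldsymbol{\partial}_{\boldsymbol{z}}^{\boldsymbol{n}}P$. We then integrate by parts in $\boldsymbol{z}$ over the boundary $\R^d\times\C^m$. Holomorphic derivatives commute with the implicit measure $d^m\boldsymbol{z}$, and the propagator decays like $|(\boldsymbol{x},\boldsymbol{z})|^{-(d+2m)}$ while $y_0>0$ keeps it smooth. Both facts must be checked; the decay question is addressed in the last paragraph. Each integration by parts produces a factor $-\boldsymbol{\lambda}$, and together with the sign $(-1)^{|\boldsymbol{n}|}$ the left-hand side becomes
\[
\boldsymbol{\lambda}^{\boldsymbol{n}}\int_{\R^d\times\C^m} e^{\boldsymbol{z}\cdot\boldsymbol{\lambda}}P^{\R^{d+1}\times\C^m}(0,\boldsymbol{x},\boldsymbol{z};y_0,\boldsymbol{y},\boldsymbol{w}).
\]
It therefore suffices to treat the case $\boldsymbol{n}=0$ and show that this integral equals $e^{\boldsymbol{w}\cdot\boldsymbol{\lambda}}$.

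For $\boldsymbol{n}=0$ we follow \eqref{eq:int_prop_bdy}. Restricted to the slice $x_0=0$, only the components of $P$ carrying the full boundary form $d^d\boldsymbol{x}\,d^m\bar{\boldsymbol{z}}$ survive; these are the terms in which the contraction removes $dy_0$. The density is then a constant multiple of $y_0/(y_0^2+|\boldsymbol{x}-\boldsymbol{y}|^2+|\boldsymbol{z}-\boldsymbol{w}|^2)^{m+\frac{d+1}{2}}$, which is the Poisson kernel of the half-space $\R_{>0}\times\R^{d+2m}$.

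The function $e^{\boldsymbol{z}\cdot\boldsymbol{\lambda}}$ is holomorphic in $\boldsymbol{z}$, constant in $\boldsymbol{x}$, and hence harmonic on $\R^{d+2m}$. Its trivial extension in $y_0$ is harmonic on the half-space, so the Poisson integral should reproduce it, giving $e^{\boldsymbol{w}\cdot\boldsymbol{\lambda}}$. To confirm the normalization without relying on the Poisson-kernel dictionary, we would compute directly. Shifting $\boldsymbol{z}\to\boldsymbol{z}+\boldsymbol{w}$ and $\boldsymbol{x}\to\boldsymbol{x}+\boldsymbol{y}$ pulls out the factor $e^{\boldsymbol{w}\cdot\boldsymbol{\lambda}}$. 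The remaining integrand is invariant under the rotations $z_i\mapsto e^{i\theta}z_i$, so averaging over this circle kills every nonconstant Taylor term of $e^{\boldsymbol{z}\cdot\boldsymbol{\lambda}}$. What remains is the constant term, an integral already computed in the radial form of \eqref{eq:int_prop_bdy}; with $d$ replaced by $d+2m$, the $\Gamma$ and beta-function factors combine to $1$. An alternative route is through the heat kernel $K^{\R^{d+1}\times\C^m}_u$: the weighted semigroup identity from the paper handles the $\boldsymbol{z}$-Gaussian, and the $u$-integral then gives $1$ as in the proof of Theorem~\ref{thm:uni_bdy}.

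The main obstacle is convergence, because $e^{\boldsymbol{z}\cdot\boldsymbol{\lambda}}$ grows exponentially while the kernel decays only polynomially. The statement therefore has to be read formally: we expand $e^{\boldsymbol{z}\cdot\boldsymbol{\lambda}}$ as a power series in $\boldsymbol{\lambda}$ and integrate term by term. Each monomial $\boldsymbol{z}^{\boldsymbol{k}}$ should be integrated after symmetric angular averaging, which makes it conditionally convergent, or else treated in the heat-kernel regularization, where the Gaussian factor controls growth before the $u$-integral is taken. The integration-by-parts step in the first paragraph must be justified in the same formal, term-by-term sense, by checking that the boundary terms vanish monomial by monomial after angular averaging.
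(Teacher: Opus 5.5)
Your argument is correct, and the reduction to $\boldsymbol{n}=0$ is the paper's: translation invariance, shifting out $e^{\boldsymbol{w}\cdot\boldsymbol{\lambda}}$, and one integration by parts whose sign cancels the one from $\boldsymbol{\partial}_{\boldsymbol{w}}=-\boldsymbol{\partial}_{\boldsymbol{z}}$. The two proofs differ only in the base case.

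\emph{The paper's base case.} It writes $P=\int_0^\infty Q^{\dagger}K_u\,du$ and factorizes the heat kernel. At fixed $u$ it integrates the $\R^d$ and $\C^m$ Gaussians; the latter is taken against $e^{\boldsymbol{z}\cdot\boldsymbol{\lambda}}$ and gives $1$ by the mean-value property. What remains is a one-dimensional $u$-integral.

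\emph{Your base case.} You restrict the explicit Bochner--Martinelli kernel to $x_0=0$ and recognize the Poisson kernel of $\R_{>0}\times\R^{d+2m}$. Circle averaging removes the nonconstant Taylor terms of $e^{\boldsymbol{z}\cdot\boldsymbol{\lambda}}$, and you reuse the radial beta-integral \eqref{eq:int_prop_bdy} with $d\to d+2m$.

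\emph{What your route buys.} The mechanism is the same in both: rotation invariance in $\boldsymbol{z}$ reduces $e^{\boldsymbol{z}\cdot\boldsymbol{\lambda}}$ to its constant term. Your version has two advantages.
\begin{itemize}
\item It makes explicit a regularization the paper uses silently. In the heat-kernel version the fixed-$u$ Gaussian tames $e^{\boldsymbol{z}\cdot\boldsymbol{\lambda}}$, but only after interchanging the $u$-integral with a boundary integral that is not absolutely convergent. Your rule of averaging over angles first is a clean definition. Your deferred integration-by-parts check does hold in this sense: after averaging over the torus $z_i\mapsto e^{i\theta_i}z_i$, only $\boldsymbol{z}^{\boldsymbol{n}}$ pairs with $\boldsymbol{\partial}^{\boldsymbol{n}}_{\boldsymbol{z}}\rho(|\boldsymbol{x}|^2+|\boldsymbol{z}|^2)=\bar{\boldsymbol{z}}^{\boldsymbol{n}}\rho^{(|\boldsymbol{n}|)}$, and radial integration by parts in $|z_i|^2$ has vanishing boundary terms.
\item It gets the normalization right directly, because the prefactor $\Gamma(m+\frac{d+1}{2})/\pi^{m+\frac{d+1}{2}}$ is exactly the Poisson-kernel normalization.
\end{itemize}

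\emph{Caution on your heat-kernel alternative.} With the paper's conventions, $\int_0^\infty Q^{\dagger}K_u\,du$ is one half of the displayed Bochner--Martinelli kernel, and $\int_0^\infty \frac{y_0}{2u}(2\pi u)^{-1/2}e^{-y_0^2/2u}\,du=\frac12$. So that route yields $1$ only once this factor of two is tracked. The input it needs is $\int_{\C^m}e^{\boldsymbol{z}\cdot\boldsymbol{\lambda}}K^{\C^m}_u=1$, not the weighted semigroup identity.
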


\begin{proof}
The bulk propagator is translation-invariant along the boundary: it depends on $(\boldsymbol{x}-\boldsymbol{y})$ and $(\boldsymbol{z}-\boldsymbol{w})$, so $\boldsymbol{\partial}_{\boldsymbol{w}}^{\boldsymbol{n}} P = (-1)^{|\boldsymbol{n}|}\boldsymbol{\partial}_{\boldsymbol{z}}^{\boldsymbol{n}} P$. Shifting $\boldsymbol{x}\to \boldsymbol{x}+\boldsymbol{y}$ and $\boldsymbol{z}\to \boldsymbol{z}+\boldsymbol{w}$:
\begin{equation}
	\int_{\{0\}\times\R^d \times \C^m}  e^{\boldsymbol{z}\cdot\boldsymbol{\lambda}}\boldsymbol{\partial}_{\boldsymbol{w}}^{\boldsymbol{n}}P^{\R^{d+1}\times\C^m}(0,\boldsymbol{x},\boldsymbol{z};y_0,\boldsymbol{y},\boldsymbol{w}) = (-1)^{|\boldsymbol{n}|}e^{\boldsymbol{w}\cdot\boldsymbol{\lambda}} \int_{\{0\}\times\R^d \times \C^m}  e^{\boldsymbol{z}\cdot\boldsymbol{\lambda}}\boldsymbol{\partial}_{\boldsymbol{z}}^{\boldsymbol{n}}P^{\R^{d+1}\times\C^m}(0,\boldsymbol{x},\boldsymbol{z};y_0,0,0)\,.
\end{equation}
Since the integration is over all of $\R^d\times\C^m$ (no boundary), we integrate by parts to move $\boldsymbol{\partial}_{\boldsymbol{z}}^{\boldsymbol{n}}$ onto $e^{\boldsymbol{z}\cdot\boldsymbol{\lambda}}$, producing another factor of $(-1)^{|\boldsymbol{n}|}$. The two signs cancel, giving
\begin{equation}
	= \boldsymbol{\lambda}^{\boldsymbol{n}} e^{\boldsymbol{w}\cdot\boldsymbol{\lambda}} \int_{\{0\}\times\R^d \times \C^m}  e^{\boldsymbol{z}\cdot\boldsymbol{\lambda}}P^{\R^{d+1}\times\C^m}(0,\boldsymbol{x},\boldsymbol{z};y_0,0,0)\,.
\end{equation}
It therefore suffices to show the base case
\begin{equation}\label{eq:full_space_base}
	\int_{\{0\}\times\R^d \times \C^m}  e^{\boldsymbol{z}\cdot\boldsymbol{\lambda}}P^{\R^{d+1}\times\C^m}(0,\boldsymbol{x},\boldsymbol{z};y_0,0,0) = 1\,.
\end{equation}
We use the heat kernel representation $P = \int_0^\infty Q^{\dagger} K_u\, du$. The heat kernel factorizes as
\begin{equation}
	K_u(-y_0,\boldsymbol{x},\boldsymbol{z}) = K^{\R}_u(-y_0)\cdot K^{\R^d}_u(\boldsymbol{x})\cdot K^{\C^m}_u(\boldsymbol{z})\cdot dx_0\,d^d\boldsymbol{x}\,d^m\bar{\boldsymbol{z}}\,,
\end{equation}
where each factor is a normalized Gaussian. When we integrate over $\{0\}\times\R^d\times\C^m$ with weight $e^{\boldsymbol{z}\cdot\boldsymbol{\lambda}}$, only the $\iota_{\partial_{x_0}}$ component of $Q^{\dagger}$ contributes, since the boundary integral absorbs $d^d\boldsymbol{x}\,d^m\bar{\boldsymbol{z}}\,d^m\boldsymbol{z}$ but not $dx_0$. The Gaussian integral is easy to compute.
\begin{equation}
	\int_{\R^d} K^{\R^d}_u(\boldsymbol{x}) = 1,\quad\quad\quad \int_{\C^m} e^{\boldsymbol{z}\cdot\boldsymbol{\lambda}} K^{\C^m}_u(\boldsymbol{z}) = 1\,.
\end{equation}
The remaining integral over $u$ gives $\int_0^{\infty} \frac{1}{2}\partial_{y_0}K^{\R}_u(-y_0)\,du = 1$, which is the standard result for the half-line Green's function. Combining these factors establishes~\eqref{eq:full_space_base}.
\end{proof}

\subsection{Integral for the coproduct}
\label{sec:coproduct_integral}
In this section, we evaluate the Feynman integral associated with the coproduct line defect fusion:
\begin{equation}\label{eq:coprod_Jn}
	\mathcal{J}_n = \frac{3}{4\pi^6}\int \frac{z_2^n}{((t_1^2 + x_1^2 + |z_1|^2)(t_2^2 + (x_2 - a)^2 + |z_2 - \zeta|^2) )^2}\left(\frac{N_-}{D_-^{5/2}} + \frac{N_+}{D_+^{5/2}}\right)d\delta\, d\Omega_{8} \,,
\end{equation}
where $D_\pm = (t_1\pm t_2)^2 + \delta^2 + (x_1 - x_2)^2 + |z_1-z_2|^2$ and $N_{\pm}= \bar\zeta (t_1x_2 \pm t_2x_1) - a (t_1\bar z_2 \pm t_2\bar z_1)$. The integration domain is $t_1,t_2 \in [0,\infty)$, $\delta, x_1, x_2 \in \R$, $z_1,z_2 \in \C$, with $d\Omega_8 = dt_1\, dx_1\, d^2z_1\, dt_2\, dx_2\, d^2z_2$. We also consider the following integral
\begin{equation}
		\mathcal{J}_F = \frac{3}{4\pi^6}\int \frac{F(z_2)}{((t_1^2 + x_1^2 + |z_1|^2)(t_2^2 + (x_2 - a)^2 + |z_2 - \zeta|^2) )^2}\left(\frac{N_-}{D_-^{5/2}} + \frac{N_+}{D_+^{5/2}}\right)d\delta\, d\Omega_{8}  
\end{equation}
for any holomorphic function $F$.

We denote $ W_i=(t_i,x_i,z_i)\in \mathbb R\times \mathbb R\times \mathbb C$, $v_i=(x_i,z_i)$ and set $ \alpha =(a,\zeta)$, $A=(0,\alpha)=(0,a,\zeta)$. We also define $\Lambda(v)=\bar\zeta x-a\bar z$ for $v=(x,z)$.

We can integrate over $\delta$ first
\begin{equation}
	\int \frac{d\delta}{(\delta^2 + M)^{\frac{5}{2}}} = \frac{4}{3M^2} \,.
\end{equation}
Hence we have
\begin{equation}
	 \mathcal{J}_F = \frac{1}{\pi^6}\int \frac{F(z_2)}{((t_1^2 + x_1^2 + |z_1|^2)(t_2^2 + (x_2 - a)^2 + |z_2 - \zeta|^2) )^2}\left(\frac{N_-}{\widetilde{D}_-^2} + \frac{N_+}{\widetilde{D}_+^2}\right)d\Omega_{8} \,,
\end{equation} 
where $\widetilde{D}_{\pm} = (t_1 \pm t_2)^2 + (x_1-x_2)^2 + |z_1-z_2|^2$. We can combine the two terms in the integrand into a single term by the reflection $t_2\to -t_2$. Then we find
\begin{equation} 
	\mathcal{J}_F = \frac{1}{\pi^6} \int_{t_1>0} d^4W_1 \int_{\mathbb R^4} d^4W_2 \frac{ F(z_2)\bigl(t_1\Lambda(v_2)-t_2\Lambda(v_1)\bigr)}{|W_1|^4\,|W_2-A|^4\,|W_1-W_2|^4}\,.
\end{equation}
To evaluate the \(W_2\)-integral, we first shift the integration variables by setting $W'_1 = (t_1,v_1') =  W_1-A$ and $W'_2 = (t_2,v_2') =  W_2-A$. We define the bilinear form
\begin{equation}
	\mathcal{E}(W_1,W_2) = t_1\Lambda(v_2)-t_2\Lambda(v_1) = \mathcal{E}(W_1',W_2') \,.
\end{equation}
The inner integral over $W_2$ can then be isolated as
\begin{equation}
\mathcal{K}_F = \int_{\mathbb R^4}\frac{F(z_2)\mathcal{E}(W_1',W_2')}{|W_2'|^4|W_1'-W_2'|^4}d^4W_2'\,.
\end{equation}
To perform this integral, we introduce a Feynman parameter $\alpha$:
\begin{equation}
\frac1{|W_2'|^4|W_1'-W_2'|^4}= 6\int_0^1 \frac{\alpha(1-\alpha)\,d\alpha}{\bigl(\alpha|W_2'|^2+(1-\alpha)|W_1'-W_2'|^2\bigr)^4}\,.
\end{equation}
We then shift the integration variable to $Y=W_2'-(1-\alpha)W_1'$.
Using the identity $\alpha|W_2'|^2+(1-\alpha)|W_1'-W_2'|^2=|Y|^2+\alpha(1-\alpha)|W_1'|^2$ and $\mathcal{E}(W_1',Y) = \mathcal{E}(W_1',W_2')$, the integral becomes
\begin{equation}
\mathcal{K}_F = 6\int_0^1\alpha(1-\alpha)\,d\alpha \int_{\mathbb R^4}
\frac{\mathcal{E}(W_1',Y)F(w_\alpha+Y_z)}{\bigl(|Y|^2+\alpha(1-\alpha)|W_1'|^2\bigr)^4}d^4Y
\end{equation}
where $w_\alpha=\zeta+(1-\alpha)(z_1-\zeta)$.

Since \(F\) is a holomorphic function, only the term $-a t_1 \bar Y_z$ inside \(\mathcal{E}(W_1',Y)\) pairs non-trivially with the Taylor expansion of \(F(w_\alpha+Y_z)\). Evaluating this requires the spherically symmetric integration formula
\begin{equation}
\int_{\mathbb R^4}
\frac{\bar{Y}_z\,F(w+Y_z)}{(|Y|^2+M)^4}d^4Y=  \frac{\pi^2}{6M}F'(w)\,.
\end{equation}
Applying this formula with $M = \alpha(1-\alpha)|W_1'|^2$, the $\alpha(1-\alpha)$ factor in the numerator cancels, leaving
\begin{equation}
\mathcal{K}_F = -\frac{\pi^2a t_1}{|W_1'|^2}
\int_0^1
F'\bigl(\zeta+(1-\alpha)(z_1-\zeta)\bigr)\,d\alpha  = -\frac{\pi^2a t_1}{|W_1-A|^2}
\frac{F(z_1)-F(\zeta)}{z_1-\zeta}\,.
\end{equation}
Substituting $\mathcal{K}_F$ back into the original expression for $\mathcal{J}_F$ gives
\begin{equation}
\mathcal{J}_F = -\frac{a}{\pi^4}\int_{t_1>0}\frac{t_1}{|W_1|^4|W_1-A|^2}\frac{F(z_1)-F(\zeta)}{z_1-\zeta}d^4W_1\,.
\end{equation}
To simplify the notation, we define the holomorphic function
\begin{equation}
P_F(z)=\frac{F(z)-F(\zeta)}{z-\zeta}\,.
\end{equation}
We evaluate the remaining integral over $W_1$ by introducing another Feynman parameter $\beta$:
\begin{equation} 
	\frac1{|W_1|^4|W_1-A|^2} = 2\int_0^1 \frac{\beta\,d\beta}{\bigl(\beta|W_1|^2+(1-\beta)|W_1-A|^2\bigr)^3}\,.
\end{equation}
Using the algebraic identity $\beta|W_1|^2+(1-\beta)|W_1-A|^2 = |W_1-(1-\beta)A|^2+\beta(1-\beta)|A|^2$, we shift the integration variable to
\begin{equation}
X = (t,x,z) := W_1-(1-\beta)A\,.
\end{equation}
This brings the integral to the form
\begin{equation}
\mathcal{J}_F = -\frac{2a}{\pi^4} \int_0^1\beta\,d\beta\int_{\mathbb{R}_{\geq0}\times \mathbb{R}\times \mathbb{C}} \frac{tP_F(z+(1-\beta)\zeta)}{\bigl(|X|^2+\beta(1-\beta)|A|^2\bigr)^3} d^4X\,.
\end{equation}
Because the denominator is spherically symmetric in the spatial and complex components of \(X\), and \(P_F\) is holomorphic, its angular average over the $z$-plane equals its value at the center. This allows us to simplify the numerator:
\begin{equation}
\mathcal{J}_F = -\frac{2a}{\pi^4} \int_0^1\beta\,d\beta P_F((1-\beta)\zeta) \int_{\mathbb{R}_{\geq0}\times \mathbb{R}\times \mathbb{C}} \frac{t}{(|X|^2+\beta(1-\beta)|A|^2)^3}d^4X\,.
\end{equation}
The integration over $X$ can now be computed directly:
\begin{equation}
\int_{\mathbb{R}_{\geq0}\times \mathbb{R}\times \mathbb{C}}\frac{td^4X}{(|X|^2+ \beta(1-\beta)|A|^2)^3}= \frac{\pi^2}{4\sqrt{\beta(1-\beta)}|A|}\,.
\end{equation}
Inserting this result back into $\mathcal{J}_F$, we find
\begin{equation}
\mathcal{J}_F = -\frac{a}{2\pi^2|A|}\int_0^1\sqrt{\frac{\beta}{1-\beta}}P_F((1-\beta)\zeta)d\beta = -\frac{a}{2\pi^2|A|\zeta}\int_0^1 \frac{F(\zeta)-F((1-\beta)\zeta)}{\sqrt{\beta(1-\beta)}}d\beta \,.
\end{equation}
Finally, substituting $x=1-\beta$, we arrive at
\begin{equation}
\mathcal{J}_F = -\frac{a}{2\pi \sqrt{a^2+|\zeta|^2}\zeta}\left(F(\zeta)-\frac1\pi\int_0^1\frac{F(x\zeta)}{\sqrt{x(1-x)}}dx\right)\,.
\end{equation}

For the specific choice \(F(z)=z^n\), we use the standard identity $ \frac1\pi \int_0^1\frac{x^n}{\sqrt{x(1-x)}} dx = \frac{\binom{2n}{n}}{4^n}$. Thus, for \(n\geq1\), the integral evaluates to
\begin{equation}
\mathcal{J}_n = -\frac{a}{2\pi \sqrt{a^2+|\zeta|^2} }\left( 1-\frac{\binom{2n}{n}}{4^n}\right)\zeta^{n-1}\,.
\end{equation}
Note that $\mathcal{J}_0=0$.

\subsection{Some truncated Gaussian integral}
\label{sec:truncated_Gauss}
In this section, we evaluate the truncated Gaussian integral that appears in Section \ref{sec:uni_cor_int}. As a first example, we consider the following integral
\begin{equation}\label{eq:cor_int_Gauss}
I = \int_{t_i,s_i \geq 0} dt_1 ds_1 dt_2 ds_2 (t_1 s_2 + t_2 s_1) \, e^{-Q_+(t_1, s_1) -Q_-(t_2, s_2)} \,,
\end{equation}
where the exponents are defined by the quadratic forms:
\begin{equation}
		Q_{\pm}(t, s) = \frac{t^2}{2u_1} + \frac{(t\pm s)^2}{2u_2} + \frac{s^2}{2u_3}\,.
\end{equation}
Let's define the component integrals:
\begin{equation}
	I_{+\pm} : = \int_{t,s\geq 0}  te^{-Q_{\pm}(t,s)}dtds,\quad I_{-\pm} : = \int_{t,s\geq 0}  se^{-Q_{\pm}(t,s)}dtds\,.
\end{equation}
Then we can split the integral \eqref{eq:cor_int_Gauss} as $I_{++} I_{--} + I_{-+} I_{+-}$.

To evaluate the integral $I_{\pm\pm}$, we rewrite the exponents in the standard quadratic form. Let us define the following coefficients:
\[a = \frac{1}{2u_1} + \frac{1}{2u_2}, \quad c = \frac{1}{2u_2} + \frac{1}{2u_3}, \quad b = \frac{1}{u_2}\,.\]
Then we have $Q_{\pm}(t,s) = at^2 \pm bts + cs^2 $. Let the determinant of the quadratic form be $\Delta = 4ac - b^2$. Substituting the $u$ parameters, we have $\Delta = (u_1+u_2+u_3)/(u_1 u_2 u_3)$. 
\begin{lemma}
\begin{equation}\label{eq:Int_trun_Gauss}
	\int_{t,s\geq 0}  te^{-(at^2 + bts + cs^2)}dtds = \frac{\sqrt{\pi}}{\Delta}(\sqrt{c} - \frac{b}{2\sqrt{a}})\,.
\end{equation}
\end{lemma}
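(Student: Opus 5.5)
The plan is a direct computation: reduce to a one-dimensional Gaussian integral by integrating out $t$ first, then evaluate the remaining integral in $s$ via the Gaussian tail. We assume $a,c>0$ and $\Delta=4ac-b^2>0$, which holds for $u_i>0$. The key observation is that $t\,e^{-at^2-bts}$ is, up to a boundary term, a derivative in $t$. Concretely,
\begin{equation*}
\int_0^\infty t\,e^{-at^2-bts}\,dt=\frac{1}{2a}\Big(1-bs\int_0^\infty e^{-at^2-bts}\,dt\Big),
\end{equation*}
which follows by writing $t\,e^{-at^2-bts}=-\frac{1}{2a}\partial_t e^{-at^2-bts}-\frac{bs}{2a}e^{-at^2-bts}$.

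Substituting this into the double integral gives two pieces. The first is $\frac{1}{2a}\int_0^\infty e^{-cs^2}\,ds=\frac{\sqrt{\pi}}{4a\sqrt{c}}$. The second is
\begin{equation*}
-\frac{b}{2a}\int_{t,s\ge 0} s\,e^{-(at^2+bts+cs^2)}\,dt\,ds,
\end{equation*}
which is the companion integral with the roles of $(t,a)$ and $(s,c)$ exchanged. Write $J_t$ for the original integral and $J_s$ for this companion. The same identity applied to $J_s$, integrating in $s$ first, gives the linear system
\begin{equation*}
J_t=\frac{\sqrt{\pi}}{4a\sqrt{c}}-\frac{b}{2a}J_s,\qquad J_s=\frac{\sqrt{\pi}}{4c\sqrt{a}}-\frac{b}{2c}J_t.
\end{equation*}

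Eliminating $J_s$ yields
\begin{equation*}
J_t\Big(1-\frac{b^2}{4ac}\Big)=\frac{\sqrt{\pi}}{4a\sqrt{c}}-\frac{b\sqrt{\pi}}{8ac\sqrt{a}},
\end{equation*}
so after multiplying by $4ac/\Delta$,
\begin{equation*}
J_t=\frac{\sqrt{\pi}}{\Delta}\Big(\sqrt{c}-\frac{b}{2\sqrt{a}}\Big),
\end{equation*}
which is \eqref{eq:Int_trun_Gauss}. This approach avoids error functions entirely.

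The only step needing care is justifying the integration by parts and the exchange of integration order. Both are fine because the integrand decays like a Gaussian when $\Delta>0$ and $a,c>0$. For $b<0$, which is the $Q_-$ case, positivity of the quadratic form on the quadrant still follows from $\Delta>0$, so the argument goes through unchanged.

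As a sanity check, at $b=0$ the formula reduces to the product $\frac{1}{2a}\cdot\frac{\sqrt{\pi}}{2\sqrt{c}}$, consistent with $\Delta=4ac$. I expect no real obstacle; the main risk is sign bookkeeping in the linear system.
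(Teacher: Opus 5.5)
Your proof is correct. It reaches the result by a different mechanism than the paper, although the two decompositions turn out to coincide.

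\textbf{The paper's route.} It makes the Cholesky-type substitution $x=\sqrt{a}\,t+\frac{b}{2\sqrt{a}}s$, $y=\frac{\sqrt{\Delta}}{2\sqrt{a}}s$. Then $x^2+y^2=at^2+bts+cs^2$, and the quadrant becomes the wedge $\{y\geq 0,\ x\geq \frac{b}{\sqrt{\Delta}}y\}$. It splits $t$ into its $x$- and $y$-parts and evaluates each wedge integral directly.

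\textbf{Why the decompositions match.} The paper's $x$-term is exactly your boundary term $\frac{\sqrt{\pi}}{4a\sqrt{c}}$, since $x=\sqrt{a}\,(t+\frac{b}{2a}s)$. Because $y$ is proportional to $s$, its $y$-term is exactly $-\frac{b}{2a}J_s$. So both arguments rest on the identity $J_t=\frac{\sqrt{\pi}}{4a\sqrt{c}}-\frac{b}{2a}J_s$.

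\textbf{How $J_s$ is obtained.} The paper computes $J_s$ as an explicit wedge integral over $0\leq y\leq \frac{\sqrt{\Delta}}{b}x$. Together with the simplification $\sqrt{b^2}=b$, this is written for $b>0$. The $Q_-$ cases and the $s$-weighted integrals are then obtained ``by symmetry.'' You instead apply the same integration-by-parts identity with the roles of $t$ and $s$ exchanged and solve the resulting $2\times 2$ system.

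\textbf{What each route buys.} Your route needs no change of variables and no description of the image region. It treats both signs of $b$ uniformly. It also produces $J_s$, which is the paper's $I_{-\pm}$ and is needed anyway for $I_{++}I_{--}+I_{-+}I_{+-}$, at no extra cost. The paper's route instead evaluates each piece independently and explicitly, without relying on a self-consistent linear system.
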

\begin{proof}
	We consider the matrix $A = \begin{pmatrix}
		\sqrt{a}& \frac{b}{2\sqrt{a}}\\0&\frac{\sqrt{\Delta}}{2\sqrt{a}}
	\end{pmatrix}$ and define $\begin{pmatrix}
	x\\y
	\end{pmatrix} = A\begin{pmatrix}
	t\\s
	\end{pmatrix}$
	Then we can check that $x^2 + y^2 = at^2 + bts+cs^2$. The region $t,s \geq 0$ is mapped to $ R : =  \{(x,y)|y\geq0,x \geq \frac{b}{\sqrt{\Delta}} y\}$. Thus we can rewrite the integral \eqref{eq:Int_trun_Gauss}
	\begin{equation}
		\int_{R}\frac{2}{\sqrt{a\Delta}}(x - \frac{b}{\sqrt{\Delta}}y)e^{-x^2 - y^2}dxdy\,.
	\end{equation}
	We compute the two terms separately. The first term gives
	\begin{equation}
		\begin{aligned}
			\int_0^\infty dy\int_{\frac{b}{\sqrt{\Delta}} y}^\infty dx\frac{2}{\sqrt{a\Delta}}xe^{-x^2 - y^2} &= \int_0^\infty dy\frac{1}{\sqrt{a\Delta}}e^{- (1 + \frac{b^2}{\Delta}) y^2}\\
			& =  \frac{1}{2}\sqrt{\frac{\pi}{a(\Delta+b^2)}} =  \frac{1}{4a}\sqrt{\frac{\pi}{c}}\,.
		\end{aligned}
	\end{equation}
	The second term gives
		\begin{equation}
			\begin{aligned}	
				\int_0^{\infty}dx\int_0^{\frac{\sqrt{\Delta}}{b}x}dy(-\frac{2b}{\sqrt{a}\Delta}y)e^{-x^2 - y^2} &= 		\int_0^{\infty}dx(-\frac{b}{\sqrt{a}\Delta})(1 - e^{-\frac{\Delta}{b^2}x^2})e^{-x^2}\\
				& = (-\frac{b\sqrt{\pi}}{2\Delta\sqrt{a}}) (1 - \frac{b}{\sqrt{4ac}})\,.
			\end{aligned}
	\end{equation}
	Summing them together we find 
	\begin{equation}
		\frac{1}{4a}\sqrt{\frac{\pi}{c}} -\frac{b\sqrt{\pi}}{2\Delta\sqrt{a}}(1 - \frac{b}{\sqrt{4ac}}) = \frac{\sqrt{\pi}}{\Delta}(\sqrt{c} - \frac{b}{2\sqrt{a}})\,.
	\end{equation}
\end{proof}
We have shown that $I_{++} = \frac{\sqrt{\pi}}{\Delta} \left( \sqrt{c} - \frac{b}{2\sqrt{a}} \right)$. Using symmetry properties of the coefficients we find $I_{+-} = \frac{\sqrt{\pi}}{\Delta} \left( \sqrt{c} + \frac{b}{2\sqrt{a}} \right)$,  $I_{-+} = \frac{\sqrt{\pi}}{\Delta} \left( \sqrt{a} - \frac{b}{2\sqrt{c}} \right)$ and $I_{--} = \frac{\sqrt{\pi}}{\Delta} \left( \sqrt{a} + \frac{b}{2\sqrt{c}} \right)$.

Combining the results, we find that the integral \eqref{eq:cor_int_Gauss} is given by:
\begin{equation}
	I_{++} I_{--} + I_{-+} I_{+-} = \frac{2\pi}{\Delta^2} \left( \sqrt{ac} - \frac{b^2}{4\sqrt{ac}} \right) = \frac{\pi}{2 \Delta \sqrt{ac}}\,.
\end{equation}

Now, we substitute $a, c, \Delta$ back in terms of $u_1, u_2, u_3$. We find that it gives 
\begin{equation}
	I = \frac{\pi  u_2^2  (u_1 u_3)^{3/2}}{(u_1+u_2+u_3) \sqrt{(u_1+u_2)(u_2+u_3)}}\,.
\end{equation}
Given the above results, we can check the other integrals in Section \ref{sec:uni_cor_int}. We have 
\begin{equation}
	\int_{0}^{\infty} dt_1 ds_1 dt_2 ds_2 (t_1 s_2 + t_2 s_1) e^{-Q_-(t_1, s_1) -Q_+(t_2, s_2)} = I \,,
\end{equation}
and the following vanishing integrals
\begin{equation}
	\begin{aligned}
			\int_{0}^{\infty} dt_1 ds_1 dt_2 ds_2 (t_1 s_2 - t_2 s_1) e^{-Q_-(t_1, s_1) -Q_-(t_2, s_2)} = I_{+-}I_{--}-I_{+-}I_{--} = 0 \\
			\int_{0}^{\infty} dt_1 ds_1 dt_2 ds_2 (t_1 s_2 - t_2 s_1) e^{-Q_+(t_1, s_1) -Q_+(t_2, s_2)} = I_{++}I_{-+}-I_{++}I_{-+} = 0 \\
	\end{aligned}
\end{equation}

\section{Quasi-Lie Bialgebra and (Quasi)-Hopf Algebra}
\label{sec:bialgebra}
In this appendix, we review some basic definitions of various algebraic structures utilized in the main text, presenting them in a systematic order.

\begin{definition}
A \emph{Lie bialgebra} is a Lie algebra $\mathfrak{g}$ equipped with a $1$-cocycle $\delta: \mathfrak{g} \to \wedge^2\mathfrak{g}$ (the cobracket) which satisfies the co-Jacobi identity, $\mathrm{Alt}(\delta \otimes \mathrm{id})\delta = 0$.
\end{definition}

As discussed in the main text, in higher dimensions one must allow $\mathfrak{g}$ to be genuinely graded. The natural generalization of a Lie bialgebra to the graded setting with an appropriate shift is given as follows:

\begin{definition}
An \emph{$n$-shifted Lie bialgebra} consists of a graded Lie algebra $(\mathfrak{g}, [\cdot, \cdot])$ together with a cobracket $\delta: \mathfrak{g}[n] \to \mathfrak{g}[n]\otimes \mathfrak{g}[n]$ such that the dual map $\delta^\vee$ defines a graded Lie bracket on $(\mathfrak{g}[n])^{\vee}$, and such that $\delta$ satisfies the $1$-cocycle condition with respect to the Lie bracket on $\mathfrak{g}$.
\end{definition}

Returning to the ungraded setting, the strict co-Jacobi identity can be relaxed, leading to the notion of a quasi-Lie bialgebra:

\begin{definition}
A \emph{quasi-Lie bialgebra} is a tuple $(\mathfrak{g}, [\cdot, \cdot], \delta, \phi)$, where $(\mathfrak{g}, [\cdot, \cdot])$ is a Lie algebra, $\delta: \mathfrak{g} \to \wedge^2\mathfrak{g}$ is a $1$-cocycle, and $\phi \in \wedge^3 \mathfrak{g}$ is an element satisfying the following two conditions:
\begin{enumerate}
    \item The failure of $\delta$ to satisfy the co-Jacobi identity is controlled by $\phi$:
    \begin{equation}
        \frac{1}{2}\mathrm{Alt}(\delta \otimes \mathrm{id})\delta(x) = [x, \phi] \quad \text{for all } x \in \mathfrak{g}\,,
    \end{equation}
    where $[\cdot, \cdot]$ denotes the adjoint action extended to $\wedge^3 \mathfrak{g}$. In terms of the Schouten--Nijenhuis bracket, this is often written as $\frac{1}{2}[\delta, \delta](x) = [x, \phi]$.
    \item The element $\phi$ satisfies the invariance condition:
    \begin{equation}
        \mathrm{Alt}(\delta \otimes \mathrm{id} \otimes \mathrm{id})(\phi) = 0\,.
    \end{equation}
\end{enumerate}
\end{definition}

The algebra counterpart to a Lie bialgebra is a standard bialgebra and Hopf algebra. We recall their definitions before passing to the quasi-Hopf case:

\begin{definition}
A \emph{bialgebra} is a unital associative algebra $A$ equipped with algebra homomorphisms $\Delta: A \to A \otimes A$ (the coproduct) and $\epsilon: A \to \C$ (the counit) such that the coproduct is strictly coassociative, i.e., $(\mathrm{id} \otimes \Delta)\Delta = (\Delta \otimes \mathrm{id})\Delta$, and the counit satisfies the standard identities $(\epsilon \otimes \mathrm{id})\Delta(a) = a = (\mathrm{id} \otimes \epsilon)\Delta(a)$.
\end{definition}

\begin{definition}
A \emph{Hopf algebra} is a bialgebra $A$ equipped with a linear map $S: A \to A$ (the antipode) satisfying:
\begin{equation}
    \sum S(a_{(1)}) a_{(2)} = \epsilon(a) 1 \quad \text{and} \quad \sum a_{(1)} S(a_{(2)}) = \epsilon(a) 1
\end{equation}
for all $a \in A$, where we use the Sweedler notation $\Delta(a) = \sum a_{(1)} \otimes a_{(2)}$.
\end{definition}

When the strict coassociativity of the coproduct is relaxed, we obtain a quasi-bialgebra.

\begin{definition}
A \emph{quasi-bialgebra} is a unital associative algebra $A$ equipped with algebra homomorphisms $\Delta: A \to A \otimes A$ and $\epsilon: A \to \C$, together with an invertible element $\Phi \in A \otimes A \otimes A$ (the Drinfel'd associator) satisfying the following properties:
\begin{enumerate}
    \item The coproduct is quasi-coassociative up to conjugation by $\Phi$:
    \begin{equation}
        (\mathrm{id} \otimes \Delta)(\Delta(a)) = \Phi (\Delta \otimes \mathrm{id})(\Delta(a)) \Phi^{-1} \quad \text{for all } a \in A\,.
    \end{equation}
    \item The associator $\Phi$ satisfies the pentagon identity:
    \begin{equation}
        (1 \otimes \Phi) (\mathrm{id} \otimes \Delta \otimes \mathrm{id})(\Phi) (\Phi \otimes 1) = (\mathrm{id} \otimes \mathrm{id} \otimes \Delta)(\Phi) (\Delta \otimes \mathrm{id} \otimes \mathrm{id})(\Phi).
    \end{equation}
    \item The counit satisfies the standard identities, and the associator is normalized such that $(\mathrm{id} \otimes \epsilon \otimes \mathrm{id})(\Phi) = 1 \otimes 1$.
\end{enumerate}
\end{definition}

Finally, adding a relaxed antipode to a quasi-bialgebra yields a quasi-Hopf algebra:

\begin{definition}
A \emph{quasi-Hopf algebra} is a quasi-bialgebra $A$ equipped with a quasi-antipode, which consists of an algebra anti-homomorphism $S: A \to A$ and distinguished elements $\alpha, \beta \in A$ such that for all $a \in A$:
\begin{equation}
    \sum S(a_{(1)}) \alpha a_{(2)} = \epsilon(a) \alpha \quad \text{and} \quad \sum a_{(1)} \beta S(a_{(2)}) = \epsilon(a) \beta\,.
\end{equation}
Furthermore, $S$, $\alpha$, and $\beta$ must satisfy compatibility conditions with the associator $\Phi$. When $\Phi = 1 \otimes 1 \otimes 1$ and $\alpha=\beta=1$, this reduces to the usual antipode axiom, recovering the standard definition of a Hopf algebra.
\end{definition}

\bibliographystyle{amsalpha}
\bibliography{shiftedchiral}

\newcommand{\etalchar}[1]{$^{#1}$}
\providecommand{\bysame}{\leavevmode\hbox to3em{\hrulefill}\thinspace}
\providecommand{\MR}{\relax\ifhmode\unskip\space\fi MR }
\providecommand{\MRhref}[2]{%
  \href{http://www.ams.org/mathscinet-getitem?mr=#1}{#2}
}
\providecommand{\href}[2]{#2}
\begin{thebibliography}{CCRFM95}

\bibitem[Aam21]{Aamand:2019evs}
Nanna~Havn Aamand, \emph{{Chern{\textendash}Simons theory and the R-matrix}}, Lett. Math. Phys. \textbf{111} (2021), no.~6, 146.

\bibitem[AF21]{Ayala_Francis_2021}
David Ayala and John Francis, \emph{Zero-pointed manifolds}, Journal of the Institute of Mathematics of Jussieu \textbf{20} (2021), no.~3, 785–858.

\bibitem[AFT17]{ayala2017factorization}
David Ayala, John Francis, and Hiro~Lee Tanaka, \emph{Factorization homology of stratified spaces}, Selecta Mathematica \textbf{23} (2017), no.~1, 293--362.

\bibitem[AK25]{Alfonsi:2025kmj}
Luigi Alfonsi and Hyungrok Kim, \emph{{On coefficients of operator product expansions for quantum field theories with ordinary, holomorphic, and topological spacetime dimensions}}, arXiv preprint arXiv:2502.05077 (2025).

\bibitem[ASZK97]{Alexandrov:1995kv}
M.~Alexandrov, A.~Schwarz, O.~Zaboronsky, and M.~Kontsevich, \emph{{The Geometry of the master equation and topological quantum field theory}}, Int. J. Mod. Phys. A \textbf{12} (1997), 1405--1429.

\bibitem[Bae96]{Baez:1995ph}
John~C. Baez, \emph{{Four-Dimensional BF theory with cosmological term as a topological quantum field theory}}, Lett. Math. Phys. \textbf{38} (1996), 129--143.

\bibitem[BBZB{\etalchar{+}}20]{beem2020secondary}
Christopher Beem, David Ben-Zvi, Mathew Bullimore, Tudor Dimofte, and Andrew Neitzke, \emph{Secondary products in supersymmetric field theory}, Annales Henri Poincare, vol.~21, Springer, 2020, pp.~1235--1310.

\bibitem[BCSX20]{bandiera2020shifted}
Ruggero Bandiera, Zhuo Chen, Mathieu Sti{\'e}non, and Ping Xu, \emph{Shifted derived poisson manifolds associated with lie pairs}, Communications in Mathematical Physics \textbf{375} (2020), no.~3, 1717--1760.

\bibitem[BD95]{Baez:1995xq}
J.~C. Baez and J.~Dolan, \emph{{Higher dimensional algebra and topological quantum field theory}}, J. Math. Phys. \textbf{36} (1995), 6073--6105.

\bibitem[BD25]{beilinson2025chiral}
A.~Beilinson and V.~Drinfeld, \emph{Chiral algebras}, Colloquium Publications, American Mathematical Society, 2025.

\bibitem[BDSHK19]{bakalov2019operadic}
Bojko Bakalov, Alberto De~Sole, Reimundo Heluani, and Victor~G Kac, \emph{An operadic approach to vertex algebra and poisson vertex algebra cohomology}, Japanese Journal of Mathematics \textbf{14} (2019), no.~2, 249--342.

\bibitem[BG25]{balduf2025combinatorial}
Paul-Hermann Balduf and Davide Gaiotto, \emph{Combinatorial proof of a non-renormalization theorem}, Journal of High Energy Physics \textbf{2025} (2025), no.~5, 1--39.

\bibitem[BGK{\etalchar{+}}23]{budzik2023feynman}
Kasia Budzik, Davide Gaiotto, Justin Kulp, Jingxiang Wu, and Matthew Yu, \emph{Feynman diagrams in four-dimensional holomorphic theories and the operatope}, Journal of High Energy Physics \textbf{2023} (2023), no.~7, 1--40.

\bibitem[BGK{\etalchar{+}}24a]{Budzik:2023xbr}
Kasia Budzik, Davide Gaiotto, Justin Kulp, Brian~R. Williams, Jingxiang Wu, and Matthew Yu, \emph{{Semi-chiral operators in 4d $ \mathcal{N} $ = 1 gauge theories}}, JHEP \textbf{05} (2024), 245.

\bibitem[BGK{\etalchar{+}}24b]{budzik2024semi}
Kasia Budzik, Davide Gaiotto, Justin Kulp, Brian~R Williams, Jingxiang Wu, and Matthew Yu, \emph{Semi-chiral operators in 4d n = 1 gauge theories}, Journal of High Energy Physics \textbf{2024} (2024), no.~5, 1--70.

\bibitem[BGV92]{berline1992heat}
Nicole Berline, Ezra Getzler, and Mich{\`e}le Vergne, \emph{Heat kernels and dirac operators}, Grundlehren der mathematischen Wissenschaften, vol. 298, Springer-Verlag, Berlin, 1992.

\bibitem[BN97]{BarNatan1997Non}
Dror Bar-Natan, \emph{Non-associative tangles}, Geometric Topology: Proceedings of the 1993 Georgia International Topology Conference (William~H. Kazez, ed.), AMS/IP Studies in Advanced Mathematics, vol. 2.1, American Mathematical Society and International Press, Providence, RI, 1997, pp.~139--183.

\bibitem[BN26]{Butson:2026vmn}
Dylan Butson and Sujay Nair, \emph{{On the deformation theory of chiral quantizations}}.

\bibitem[BW24]{Bomans:2023mkd}
Pieter Bomans and Jingxiang Wu, \emph{{Unravelling the Holomorphic Twist: Central Charges}}, Commun. Math. Phys. \textbf{405} (2024), no.~12, 290.

\bibitem[BY16]{butson2016degenerate}
Dylan Butson and Philsang Yoo, \emph{Degenerate classical field theories and boundary theories}, arXiv preprint arXiv:1611.00311 (2016).

\bibitem[CC22]{cabrera2022dimensional}
Alejandro Cabrera and Miquel Cueca, \emph{Dimensional reduction of courant sigma models and lie theory of poisson groupoids}, Letters in Mathematical Physics \textbf{112} (2022), no.~5, 104.

\bibitem[CCG19]{costello2019higgs}
Kevin Costello, Thomas Creutzig, and Davide Gaiotto, \emph{Higgs and coulomb branches from vertex operator algebras}, Journal of High Energy Physics \textbf{2019} (2019), no.~3, 1--50.

\bibitem[CCRFM95]{Cattaneo:1995tw}
Alberto~S. Cattaneo, Paolo Cotta-Ramusino, Jurg Frohlich, and Maurizio Martellini, \emph{{Topological BF theories in three-dimensions and four-dimensions}}, J. Math. Phys. \textbf{36} (1995), 6137--6160.

\bibitem[CDG23]{costello2023boundary}
Kevin Costello, Tudor Dimofte, and Davide Gaiotto, \emph{Boundary chiral algebras and holomorphic twists}, Communications in Mathematical Physics \textbf{399} (2023), no.~2, 1203--1290.

\bibitem[CDM12]{Chmutov2012Vas}
Sergei Chmutov, Sergei Duzhin, and Jacob Mostovoy, \emph{Introduction to vassiliev knot invariants}, Cambridge University Press, Cambridge, 2012, eBook ISBN: 9781139107846.

\bibitem[CF00]{Cattaneo:1999fm}
Alberto~S. Cattaneo and Giovanni Felder, \emph{{A Path integral approach to the Kontsevich quantization formula}}, Commun. Math. Phys. \textbf{212} (2000), 591--611.

\bibitem[CF01]{Cattaneo:2001bp}
\bysame, \emph{{Poisson sigma models and deformation quantization}}, Mod. Phys. Lett. A \textbf{16} (2001), 179--190.

\bibitem[CF04]{cattaneo2004coisotropic}
Alberto~S Cattaneo and Giovanni Felder, \emph{Coisotropic submanifolds in poisson geometry and branes in the poisson sigma model}, Letters in Mathematical Physics \textbf{69} (2004), no.~1, 157--175.

\bibitem[CFG26]{costello2026chern}
Kevin Costello, John Francis, and Owen Gwilliam, \emph{Chern-simons factorization algebras and knot polynomials}, arXiv preprint arXiv:2602.12412 (2026).

\bibitem[CG17]{costello2017factorization}
K.~Costello and O.~Gwilliam, \emph{Factorization algebras in quantum field theory}, Factorization Algebras in Quantum Field Theory, no. v. 1, Cambridge University Press, 2017.

\bibitem[CG19]{costello2019vertex}
Kevin Costello and Davide Gaiotto, \emph{Vertex operator algebras and 3d n = 4 gauge theories}, Journal of High Energy Physics \textbf{2019} (2019), no.~5, 1--39.

\bibitem[CG21]{costello2021factorization}
K.~Costello and O.~Gwilliam, \emph{Factorization algebras in quantum field theory}, New Mathematical Monographs, no. v. 2, Cambridge University Press, 2021.

\bibitem[CG22]{Chen:2022opt}
Hank Chen and Florian Girelli, \emph{{Categorified Drinfel{\textquoteright}d double and $BF$ theory: 2-groups in 4D}}, Phys. Rev. D \textbf{106} (2022), no.~10, 105017.

\bibitem[CG25]{Costello:2018zrm}
Kevin Costello and Davide Gaiotto, \emph{{Twisted holography}}, JHEP \textbf{01} (2025), 087.

\bibitem[Che25]{Chen:2025qpt}
Hank Chen, \emph{{Combinatorial quantization of 4d 2-Chern-Simons theory I: the Hopf category of higher-graph states}}, arXiv preprint arXiv:2501.06486 (2025).

\bibitem[CKY94]{Crane:1994ji}
Louis Crane, Louis~H. Kauffman, and David~N. Yetter, \emph{{State sum invariants of four manifolds. 1.}}, arXiv preprint arXiv:hep-th/9409167 (1994).

\bibitem[CL16]{Costello:2016mgj}
Kevin Costello and Si~Li, \emph{{Twisted supergravity and its quantization}}.

\bibitem[CL24]{Chen:2024axr}
Hank Chen and Joaquin Liniado, \emph{{Higher gauge theory and integrability}}, Phys. Rev. D \textbf{110} (2024), no.~8, 086017.

\bibitem[Cos11]{costello2011re}
Kevin Costello, \emph{Renormalization and effective field theory}, Mathematical Surveys and Monographs, vol. 170, American Mathematical Society, Providence, RI, 2011.

\bibitem[Cos13a]{costello2013notes}
\bysame, \emph{Notes on supersymmetric and holomorphic field theories in dimensions 2 and 4}, Pure and applied mathematics quarterly \textbf{9} (2013), no.~1, 73--165.

\bibitem[Cos13b]{Costello:2013zra}
\bysame, \emph{{Supersymmetric gauge theory and the Yangian}}.

\bibitem[Cos17]{Costello:2017fbo}
\bysame, \emph{{Holography and Koszul duality: the example of the $M2$ brane}}.

\bibitem[CP21]{Costello:2020jbh}
Kevin Costello and Natalie~M. Paquette, \emph{{Twisted Supergravity and Koszul Duality: A case study in AdS$_3$}}, Commun. Math. Phys. \textbf{384} (2021), no.~1, 279--339.

\bibitem[CPT{\etalchar{+}}17]{calaque2017shifted}
Damien Calaque, Tony Pantev, Bertrand To{\"e}n, Michel Vaqui{\'e}, and Gabriele Vezzosi, \emph{Shifted poisson structures and deformation quantization}, Journal of topology \textbf{10} (2017), no.~2, 483--584.

\bibitem[CW15]{calaque2015triviality}
Damien Calaque and Thomas Willwacher, \emph{Triviality of the higher formality theorem}, Proceedings of the American Mathematical Society \textbf{143} (2015), no.~12, 5181--5193.

\bibitem[CWY18]{Costello:2017dso}
Kevin Costello, Edward Witten, and Masahito Yamazaki, \emph{{Gauge Theory and Integrability, I}}, ICCM Not. \textbf{06} (2018), no.~1, 46--119.

\bibitem[DFG{\etalchar{+}}25]{Dupuis:2020ndx}
Ma{\"\i}t{\'e} Dupuis, Laurent Freidel, Florian Girelli, Abdulmajid Osumanu, and Julian Rennert, \emph{{Origin of the quantum group symmetry in 3D quantum gravity}}, Phys. Rev. D \textbf{112} (2025), no.~8, 084071.

\bibitem[DGP18]{dimofte2018dual}
Tudor Dimofte, Davide Gaiotto, and Natalie~M Paquette, \emph{Dual boundary conditions in 3d scft's}, Journal of High Energy Physics \textbf{2018} (2018), no.~5, 1--101.

\bibitem[DN24]{Dimofte:2024bwe}
Tudor Dimofte and Wenjun Niu, \emph{{Tannakian QFT: from spark algebras to quantum groups}}, arXiv preprint arXiv:2411.04194 (2024).

\bibitem[DNP25]{dimofte2025line}
Tudor Dimofte, Wenjun Niu, and Victor Py, \emph{Line operators in 3d holomorphic qft: Meromorphic tensor categories and dg-shifted yangians}, arXiv preprint arXiv:2508.11749 (2025).

\bibitem[Dri87]{Drinfeld1987QG}
V.~G. Drinfeld, \emph{Quantum groups}, Proceedings of the International Congress of Mathematicians, Berkeley, California, USA, 1986 (Providence, RI) (Andrew~M. Gleason, ed.), vol.~1, American Mathematical Society, 1987, pp.~798--820.

\bibitem[Dri90a]{Drinfeld1990Quasitriangular}
\bysame, \emph{On quasitriangular quasi-hopf algebras and on a group that is closely connected with {Gal}($\overline{\mathbb q}/\mathbb q$)}, Algebra i Analiz \textbf{2} (1990), no.~4, 149--181, English translation: Leningrad Mathematical Journal 2, no. 4 (1991), 829--860.

\bibitem[Dri90b]{Drinfeld1990QuasiHopf}
\bysame, \emph{Quasi-{H}opf algebras}, Leningrad Mathematical Journal \textbf{1} (1990), no.~6, 1419--1457, Translation of Algebra i Analiz 1 (1989), no. 6, 114--148. \MR{1047964}

\bibitem[DSK13]{de2013variational}
Alberto De~Sole and Victor~G Kac, \emph{The variational poisson cohomology}, Japanese Journal of Mathematics \textbf{8} (2013), no.~1, 1--145.

\bibitem[EH10]{EnriquezHalbout2010QLB}
Benjamin Enriquez and Gilles Halbout, \emph{Quantization of quasi-{L}ie bialgebras}, Journal of the American Mathematical Society \textbf{23} (2010), no.~3, 611--653.

\bibitem[EK96]{EK1996QLBI}
Pavel Etingof and David Kazhdan, \emph{Quantization of {L}ie bialgebras, {I}}, Selecta Mathematica, New Series \textbf{2} (1996), no.~1, 1--41.

\bibitem[EK98]{EK1998QLBII}
\bysame, \emph{Quantization of {L}ie bialgebras, {II}}, Selecta Mathematica, New Series \textbf{4} (1998), no.~2, 213--231.

\bibitem[ESW22]{Elliott:2020ecf}
Chris Elliott, Pavel Safronov, and Brian~R. Williams, \emph{{A taxonomy of twists of supersymmetric Yang{\textendash}Mills theory}}, Selecta Math. \textbf{28} (2022), no.~4, 73.

\bibitem[EY18]{elliott2018geometric}
Chris Elliott and Philsang Yoo, \emph{Geometric langlands twists of $ n= 4$ gauge theory from derived algebraic geometry}, Adv. Theor. Math. Phys. \textbf{22} (2018), 615--708.

\bibitem[FGT08]{fishel2008strong}
Susanna Fishel, Ian Grojnowski, and Constantin Teleman, \emph{The strong macdonald conjecture and hodge theory on the loop grassmannian}, Annals of mathematics (2008), 175--220.

\bibitem[FGY25]{Felder:2025bsz}
Laura~O. Felder, Zhengping Gui, and Charles A.~S. Young, \emph{{Higher Chiral Algebras in a Polysimplicial Model}}, arXiv preprint arXiv:2506.09728 (2025).

\bibitem[FHK19]{faonte2019higher}
Giovanni Faonte, Benjamin Hennion, and Mikhail Kapranov, \emph{Higher kac--moody algebras and moduli spaces of g-bundles}, Advances in Mathematics \textbf{346} (2019), 389--466.

\bibitem[FMT22]{Freed:2022qnc}
Daniel~S. Freed, Gregory~W. Moore, and Constantin Teleman, \emph{{Topological symmetry in quantum field theory}}.

\bibitem[Gav02]{Gavarini2002QD}
Fabio Gavarini, \emph{The quantum duality principle}, Annales de l'Institut Fourier \textbf{52} (2002), no.~3, 809--834. \MR{1907388}

\bibitem[Gin17]{ginot2017}
Grégory Ginot, \emph{Hodge filtration and operations in higher hochschild (co)homology and applications to higher string topology}, Lecture Notes in Mathematics (2017), 1--104.

\bibitem[GKSW15]{Gaiotto:2014kfa}
Davide Gaiotto, Anton Kapustin, Nathan Seiberg, and Brian Willett, \emph{{Generalized Global Symmetries}}, JHEP \textbf{02} (2015), 172.

\bibitem[GKW25]{Gaiotto:2024gii}
Davide Gaiotto, Justin Kulp, and Jingxiang Wu, \emph{{Higher operations in perturbation theory}}, JHEP \textbf{05} (2025), 230.

\bibitem[GLL17]{grady2017batalin}
Ryan~E Grady, Qin Li, and Si~Li, \emph{Batalin--vilkovisky quantization and the algebraic index}, Advances in Mathematics \textbf{317} (2017), 575--639.

\bibitem[GLZ24]{Gui:2022pnx}
Zhengping Gui, Si~Li, and Keyou Zeng, \emph{{Quadratic duality for chiral algebras}}, Adv. Math. \textbf{451} (2024), 109791.

\bibitem[GR19]{Gaiotto:2017euk}
Davide Gaiotto and Miroslav Rap{\v{c}}{\'a}k, \emph{{Vertex Algebras at the Corner}}, JHEP \textbf{01} (2019), 160.

\bibitem[GR22]{Gaiotto:2020dsq}
Davide Gaiotto and Miroslav Rapcak, \emph{{Miura operators, degenerate fields and the M2-M5 intersection}}, JHEP \textbf{01} (2022), 086.

\bibitem[GTZ12]{ginot2012higher}
Gregory Ginot, Thomas Tradler, and Mahmoud Zeinalian, \emph{Higher hochschild cohomology, brane topology and centralizers of $ e\_n $-algebra maps}, arXiv preprint arXiv:1205.7056 (2012).

\bibitem[GW09a]{gaiotto2009s}
Davide Gaiotto and Edward Witten, \emph{S-duality of boundary conditions in n= 4 super yang-mills theory}, Advances in Theoretical and Mathematical Physics \textbf{13} (2009), 721--896.

\bibitem[GW09b]{Gaiotto:2008sa}
\bysame, \emph{{Supersymmetric Boundary Conditions in N=4 Super Yang-Mills Theory}}, J. Statist. Phys. \textbf{135} (2009), 789--855.

\bibitem[GW18]{gwilliam2018higher}
Owen Gwilliam and Brian~R Williams, \emph{Higher kac-moody algebras and symmetries of holomorphic field theories}, arXiv preprint arXiv:1810.06534 (2018).

\bibitem[GW19]{gwilliam2019one}
\bysame, \emph{A one-loop exact quantization of chern-simons theory}, arXiv preprint arXiv:1910.05230 (2019).

\bibitem[GW25]{garner2025raviolo}
Niklas Garner and Brian~R Williams, \emph{Raviolo vertex algebras: N. garner, br williams}, Selecta Mathematica \textbf{31} (2025), no.~3, 46.

\bibitem[Gwi12]{gwilliam2012fac}
Owen Gwilliam, \emph{Factorization algebras and free field theories}, Ph.d. thesis, Northwestern University, Evanston, IL, 2012.

\bibitem[Gwi25]{Gwilliam:2025vdu}
\bysame, \emph{{Remarks on the locality of generalized global symmetries}}.

\bibitem[GWW25]{Gui:2025dqp}
Zhengping Gui, Minghao Wang, and Brian~R. Williams, \emph{{Higher-dimensional Chiral Algebras in the Jouanolou Model}}, arXiv preprint arXiv:2510.26608 (2025).

\bibitem[GZ26]{Gaiotto:2026qai}
Davide Gaiotto and Keyou Zeng, \emph{{Interface Minimal Model Holography and Topological String Theory}}.

\bibitem[Idr22]{idrissi2022formality}
Najib Idrissi, \emph{Formality of a higher-codimensional swiss-cheese operad}, Algebraic \& Geometric Topology \textbf{22} (2022), no.~1, 55--111.

\bibitem[IJZ25]{Ishtiaque:2024orn}
Nafiz Ishtiaque, Saebyeok Jeong, and Yehao Zhou, \emph{{R-matrices and Miura operators in 5d Chern-Simons theory}}, SciPost Phys. Core \textbf{8} (2025), 003.

\bibitem[Ike17]{Ikeda:2012pv}
Noriaki Ikeda, \emph{{Lectures on AKSZ Sigma Models for Physicists}}, {Workshop on Strings, Membranes and Topological Field Theory}, WSPC, 2017, pp.~79--169.

\bibitem[Joh95]{Johansen:1994aw}
A.~Johansen, \emph{{Twisting of $N=1$ SUSY gauge theories and heterotic topological theories}}, Int. J. Mod. Phys. A \textbf{10} (1995), 4325--4358.

\bibitem[Kap06]{kapustin2006holomorphic}
Anton Kapustin, \emph{Holomorphic reduction of n= 2 gauge theories, wilson-'t hooft operators, and s-duality}, arXiv preprint hep-th/0612119 (2006).

\bibitem[Kon99]{kontsevich1999operads}
Maxim Kontsevich, \emph{Operads and motives in deformation quantization}, Letters in Mathematical Physics \textbf{48} (1999), no.~1, 35--72.

\bibitem[Kon03]{Kontsevich:1997vb}
\bysame, \emph{{Deformation quantization of Poisson manifolds. 1.}}, Lett. Math. Phys. \textbf{66} (2003), 157--216.

\bibitem[KW07]{Kapustin:2006pk}
Anton Kapustin and Edward Witten, \emph{{Electric-Magnetic Duality And The Geometric Langlands Program}}, Commun. Num. Theor. Phys. \textbf{1} (2007), 1--236.

\bibitem[KZ25]{khan2025poisson}
Ahsan Khan and Keyou Zeng, \emph{Poisson vertex algebras and three-dimensional gauge theory}, arXiv preprint arXiv:2502.13227 (2025).

\bibitem[Li17]{Li:2017exk}
Si~Li, \emph{Effective batalin-vilkovisky quantization and geometric applications.}, 2017.

\bibitem[Lur]{lurie2017higher}
Jacob Lurie, \emph{Higher algebra}.

\bibitem[Lur08]{lurie2008classification}
\bysame, \emph{On the classification of topological field theories}, Current developments in mathematics \textbf{2008} (2008), no.~1, 129--280.

\bibitem[Mar17]{markarian2017weyl}
Nikita Markarian, \emph{Weyl n-algebras}, Communications in Mathematical Physics \textbf{350} (2017), no.~2, 421--442.

\bibitem[May06]{may2006geometry}
J.P. May, \emph{The geometry of iterated loop spaces}, Lecture Notes in Mathematics, Springer Berlin Heidelberg, 2006.

\bibitem[Mel16]{melani2016poisson}
Valerio Melani, \emph{Poisson bivectors and poisson brackets on affine derived stacks}, Advances in Mathematics \textbf{288} (2016), 1097--1120.

\bibitem[NP25]{Niu:2025kgk}
Wenjun Niu and Victor Py, \emph{{1-shifted Lie bialgebras and their quantizations}}, arXiv preprint arXiv:2503.08770 (2025).

\bibitem[OY20]{Oh:2019mcg}
Jihwan Oh and Junya Yagi, \emph{{Poisson vertex algebras in supersymmetric field theories}}, Lett. Math. Phys. \textbf{110} (2020), no.~8, 2245--2275.

\bibitem[OZ21]{Oh:2021wes}
Jihwan Oh and Yehao Zhou, \emph{{Twisted holography of defect fusions}}, SciPost Phys. \textbf{10} (2021), no.~5, 105.

\bibitem[Pim15]{pimenov2015shifted}
Slava Pimenov, \emph{Shifted poisson and batalin-vilkovisky structures on the derived variety of complexes}, arXiv preprint arXiv:1511.00946 (2015).

\bibitem[Rab22]{rabinovich2022classical}
Eugene Rabinovich, \emph{A classical bulk-boundary correspondence}, arXiv preprint arXiv:2202.12332 (2022).

\bibitem[Roy02]{roytenberg2002structure}
Dmitry Roytenberg, \emph{On the structure of graded symplectic supermanifolds and courant algebroids}, arXiv preprint math/0203110 (2002).

\bibitem[Roy07]{roytenberg2007aksz}
\bysame, \emph{Aksz--bv formalism and courant algebroid-induced topological field theories}, Letters in Mathematical Physics \textbf{79} (2007), no.~2, 143--159.

\bibitem[RSW23]{raghavendran2023twisted}
Surya Raghavendran, Ingmar Saberi, and Brian~R Williams, \emph{Twisted eleven-dimensional supergravity}, Communications in Mathematical Physics \textbf{402} (2023), no.~2, 1103--1166.

\bibitem[RT91]{reshetikhin1991invariants}
Nicolai Reshetikhin and Vladimir~G Turaev, \emph{Invariants of 3-manifolds via link polynomials and quantum groups}, Inventiones mathematicae \textbf{103} (1991), no.~1, 547--597.

\bibitem[RY25]{Raghavendran:2019zdq}
Surya Raghavendran and Philsang Yoo, \emph{{Twisted S-duality}}, SciPost Phys. \textbf{19} (2025), no.~2, 049.

\bibitem[Saf17]{Safronov2017pr}
Pavel Safronov, \emph{Poisson reduction as a coisotropic intersection}, Higher Structures \textbf{1} (2017), no.~1, 87--121.

\bibitem[Saf18]{safronov2018braces}
\bysame, \emph{Braces and poisson additivity}, Compositio Mathematica \textbf{154} (2018), no.~8, 1698--1745.

\bibitem[Saf23]{safronov2023shifted}
\bysame, \emph{Shifted geometric quantization}, Journal of Geometry and Physics \textbf{194} (2023), 104992.

\bibitem[SZ14]{Soncini:2014ara}
Emanuele Soncini and Roberto Zucchini, \emph{{4-D semistrict higher Chern-Simons theory I}}, JHEP \textbf{10} (2014), 079.

\bibitem[Tur16]{turaev2016quantum}
Vladimir~G Turaev, \emph{Quantum invariants of knots and 3-manifolds}, vol.~18, Walter de Gruyter GmbH \& Co KG, 2016.

\bibitem[Vai97]{Vaintrob_1997}
A~Yu Vaintrob, \emph{Lie algebroids and homological vector fields}, Russian Mathematical Surveys \textbf{52} (1997), no.~2, 428.

\bibitem[Wan25]{Wang:2024sqm}
Minghao Wang, \emph{{Feynman Graph Integrals on $\mathbb {C}^d$}}, Commun. Math. Phys. \textbf{406} (2025), no.~5, 116.

\bibitem[Wil17]{willwacher2017non}
Thomas Willwacher, \emph{(non-) formality of the extended swiss cheese operads}, arXiv preprint arXiv:1706.02945 (2017).

\bibitem[Wit88a]{Witten:1988ze}
Edward Witten, \emph{{Topological Quantum Field Theory}}, Commun. Math. Phys. \textbf{117} (1988), 353.

\bibitem[Wit88b]{Witten:1988xj}
\bysame, \emph{{Topological Sigma Models}}, Commun. Math. Phys. \textbf{118} (1988), 411.

\bibitem[Wit89]{witten1989quantum}
\bysame, \emph{Quantum field theory and the jones polynomial}, Communications in mathematical physics \textbf{121} (1989), no.~3, 351--399.

\bibitem[Wit10]{witten2010new}
\bysame, \emph{A new look at the path integral of quantum mechanics}, arXiv preprint arXiv:1009.6032 (2010).

\bibitem[Wit11a]{Witten:2010cx}
\bysame, \emph{{Analytic Continuation Of Chern-Simons Theory}}, AMS/IP Stud. Adv. Math. \textbf{50} (2011), 347--446.

\bibitem[Wit11b]{witten2011fivebranes}
\bysame, \emph{Fivebranes and knots}, Quantum Topology \textbf{3} (2011), no.~1, 1--137.

\bibitem[WW24]{wang2024factorization}
Minghao Wang and Brian~R Williams, \emph{Factorization algebras from topological-holomorphic field theories}, arXiv preprint arXiv:2407.08667 (2024).

\bibitem[WY25]{wang2025perturbative}
Minghao Wang and Gongwang Yan, \emph{Perturbative bv-bfv formalism with homotopic renormalization: a case study}, Letters in Mathematical Physics \textbf{115} (2025), no.~2, 1--43.

\bibitem[Zen23]{zeng2023monopole}
Keyou Zeng, \emph{Monopole operators and bulk-boundary relation in holomorphic topological theories}, SciPost Physics \textbf{14} (2023), no.~6, 153.

\bibitem[Zuc16]{Zucchini:2015ohw}
Roberto Zucchini, \emph{{A Lie based 4{\textendash}dimensional higher Chern{\textendash}Simons theory}}, J. Math. Phys. \textbf{57} (2016), no.~5, 052301.

\end{thebibliography}
 \bigskip
\footnotesize

K. Zeng, \textsc{	Center of Mathematical Sciences and Applications, Harvard University, MA 02138, USA}\par\nopagebreak
\textit{E-mail address:} \texttt{kzeng@cmsa.fas.harvard.edu}

\end{document}